        \def\spacingset#1{\renewcommand{\baselinestretch}%
        {#1}\small\normalsize} \spacingset{1}        
\def\maxwidth{ %
  \ifdim\Gin@nat@width>\linewidth
    \linewidth
  \else
    \Gin@nat@width
  \fi
}
\definecolor{fgcolor}{rgb}{0.345, 0.345, 0.345}
\newenvironment{kframe}{%
 \def\at@end@of@kframe{}%
 \ifinner\ifhmode%
  \def\at@end@of@kframe{\end{minipage}}%
  \begin{minipage}{\columnwidth}%
 \fi\fi%
 \def\FrameCommand##1{\hskip\@totalleftmargin \hskip-\fboxsep
 \colorbox{shadecolor}{##1}\hskip-\fboxsep
     % There is no \\@totalrightmargin, so:
     \hskip-\linewidth \hskip-\@totalleftmargin \hskip\columnwidth}%
 \MakeFramed {\advance\hsize-\width
   \@totalleftmargin\z@ \linewidth\hsize
   \@setminipage}}%
 {\par\unskip\endMakeFramed%
 \at@end@of@kframe}
\definecolor{shadecolor}{rgb}{.97, .97, .97}
\definecolor{messagecolor}{rgb}{0, 0, 0}
\definecolor{warningcolor}{rgb}{1, 0, 1}
\definecolor{errorcolor}{rgb}{1, 0, 0}
\newenvironment{knitrout}{}{} % an empty environment to be redefined in TeX
\def\ind#1{\mathbb{I}\left(#1\right)}
\def\abs#1{\left\vert #1\right\vert }
\def\norm#1{\left\Vert #1\right\Vert }
\def\fracat#1#2#3{\left.\frac{#1}{#2}\right\vert_{#3}}
\def\iid{\overset{iid}{\sim}}
\DeclareMathOperator*{\argmax}{\mathrm{argmax}}
\def\trans{\intercal}   % transpose
\def\diag#1{\mathrm{diag}\left(#1\right)}
\def\trace#1{\mathrm{tr}\left(#1\right)}
\def\plim{\xrightarrow[N\rightarrow\infty]{prob}\,}
\def\dlim{\xrightarrow[N\rightarrow\infty]{dist}\,}
\def\expect#1#2{\underset{#1}{\mathbb{E}}\left[#2\right]}
\def\cov#1#2{\underset{#1}{\mathrm{Cov}}\left(#2\right)}
\def\var#1#2{\underset{#1}{\mathrm{Var}}\left(#2\right)}
\def\covhat#1#2{\underset{#1}{\widehat{\mathrm{Cov}}}\left(#2\right)}
\def\p{\mathbb{P}}
\def\post{\p(\theta | \xvec)}
\def\postf{\p(\theta | \fdist, N)}
\def\postfn{\p(\theta | \fndist, N)}
\def\postft{\p(\theta | \ftdist, N)}
\def\postw{\p(\theta | \xvec, \w)}
\def\postwtil{\p(\theta | \xvec, \wtil)}
\def\postpert#1{\p(\theta | #1)}
\def\gl{\gamma, \lambda}
\def\postglf{\p(\gamma, \lambda | \fdist, N)}
\def\postl{\p(\lambda | \xvec, \gamma)}
\def\postg{\p(\gamma | \xvec)}
\def\postglt{\p(\gamma, \lambda | \ftdist, N)}
\def\postlt{\p(\lambda | \ftdist, N, \gamma)}
\def\postlf{\p(\lambda | \fdist, N, \gamma)}
\def\postgt{\p(\gamma | \ftdist, N)}
\def\postgf{\p(\gamma | \fdist, N)}
\def\fdist{\mathbb{F}}
\def\xfdist{\xvec\iid\fdist}
\def\fdistnm{\fdist(\x_n) \fdist(\x_m)}
\def\prior{\pi}
\def\logprior{\log\pi}
\def\normdist{\mathcal{N}}
\def\normhatarg#1{\hat{\mathcal{G}}(#1)}
\def\normresid#1#2{\mathcal{G}_{#1}^{#2}}
\def\normhat{\normhatarg{\tau}}
\def\normalizer{\hat{Z}}  % Normalizing constant for the Normal approximation
\def\gdist{\mathbb{G}}
\def\fndist{\mathbb{F}_N}
\def\ftdist{\mathbb{F}_N^{\t}}
\def\sumn{\sum_{n=1}^N}
\def\summ{\sum_{m=1}^N}
\def\meann{\frac{1}{N}\sum_{n=1}^N}
\def\meanm{\frac{1}{N}\sum_{m=1}^N}
\def\sumnm{\sum_{n=1}^N \sum_{m=1}^N}
\def\meannm{\frac{1}{N^2} \sumnm}
\def\indexset{\mathcal{S}}
\def\means{\frac{1}{|\indexset|}\sum_{s \in \indexset}}
\def\sumg{\sum_{g=1}^G}
\def\g{g}   % Function of interest
\def\r{r}   % Residual
\def\y{y}   % Response variable
\def\x{x}   % A single IID datapoint
\def\zbar{\underline{z}}   % A single IID datapoint with mean zero
\def\a{a}   % For "assignment," i.e. group assignment
\def\xn{x_0}  % A generic datapoint for integrating out
\def\xvec{X}    % All the data
\def\z{z}   % Nothing anymore?
\def\w{w}   % Bootstrap weight
\def\wtil{\tilde{w}}   % Bootstrap weight in mean value theorem
\def\T{T}   % Stastistical functional
\def\t{t}   % Path variable for von Mises expansion
\def\ttil{\tilde{\t}}   % IVT value of t
\def\onevec{1_N}    % Bootstrap weights all ones
\newcommand\htil[1][\t]{r^{#1}}    % V-statistic kernel
\def\v{v}   % sufficient statistic squared expectation (old)
\def\m{m}   % sufficient statistic expectation
\def\thetadom{\Omega_\theta}
\def\gammadom{\Omega_\gamma}
\def\lambdadom{\Omega_\lambda}
\def\rdom#1{\mathbb{R}^{#1}}
\def\thetaball#1{B_{#1}}
\def\gammadim{D_{\gamma}}   % Dim of global parameters
\def\ydim{D_{y}}   % Dim of sufficient statistics
\def\lambdadim{D_{\lambda}} % Dim of local parameters
\def\gdim{D_g}  % Dimension of g(\theta)
\def\thetadim{D_\theta}  % This is not used everywhere, TODO(use it everywhere)
\def\ng{\mathscr{N}_g}  % Observation indices in group g
\def\deltaulln{\delta_{LLN}}
\def\info{\mathcal{I}}
\def\infoev{\lambda_\mathcal{I}}
\def\infoevhat{\hat{\lambda}_\mathcal{I}}
\def\infohat{\hat{\mathcal{I}}}
\def\scorecov{\Sigma}
\def\scorecovhat{\hat\Sigma}
\def\thetatrue{\accentset{\infty}{\theta}}
\def\thetahat{\hat\theta}
\def\thetatil{\tilde{\theta}}   % Taylor series intermediate value
\def\etabar{\overline{\eta}}   % Independent posterior draws
\def\mubar{\overline{\mu}}   % centered conditional expected natural parameter
\def\L{\mathbb{L}} % The posterior expected lambda covariance term
\def\M{\mathbb{M}} % The posterior expected mean outer product terms
\def\S{\mathbb{S}} % The sufficient statistic expected second moment
\def\J{\mathbb{J}} % The posterior expected lambda covariance term without the target function
\def\ghat{\hat{\g}} % G centered at MAP
\def\resid#1{\mathscr{R}^{#1}}
\def\tresid#1#2{\mathscr{T}_{(#1)}\left(#2\right)}
\def\regevent{\mathfrak{R}}
\def\gmeantrue{\mu^g}   % The true average posterior expectation
\def\gcovtrue{V^g}
\def\gcovmaphat{\hat{V}^{\mathrm{MAP}}}
\def\gcovhat{\hat{V}^{g}}
\def\gcovij{V^{\mathrm{IJ}}}
\def\gcovijhat{\hat{V}^{\mathrm{IJ}}}
\def\gcovboothat{\hat{V}^{\mathrm{Boot}}}
\def\gcovboot{V^{\mathrm{Boot}}}
\def\gcovbayeshat{\hat{V}^{\mathrm{Bayes}}}
\def\gcovbayes{V^{\mathrm{Bayes}}}
\def\infl{\psi}
\def\infltil{\psi^{0}}
\def\ggrad#1{g_{(#1)}}
\def\psigrad#1{\psi_{(#1)}}
\def\ellgrad#1{\ell_{(#1)}}
\def\phigrad#1{\phi_{(#1)}}
\def\psigrad#1{\psi_{(#1)}}
\def\ord#1{O\left(#1\right)}
\def\ordlog#1{\tilde{O}\left(#1\right)}
\def\ordlogp#1{\tilde{O}_p\left(#1\right)}
\def\ordp#1{O_p\left(#1\right)}
\def\littleop#1{o_p\left(#1\right)}
\def\lik{\mathscr{L}}
\def\likhat{\hat{\mathscr{L}}}
\def\likk#1{\mathscr{L}_{(#1)}}
\def\likhatk#1{\hat{\mathscr{L}}_{(#1)}}
\def\gbar{\bar{\g}}
\def\ghatgrad#1{\g_{(#1)}}
\def\ellbar{\bar{\ell}}
\def\ellunderbar{\underline{\ell}}
\def\ellbarbar{\underline{\ellbar}}
\def\ellunderbargrad#1{\ellunderbar_{(#1)}}
\def\ellhatunderbar{\hat{\underline{\ell}}} % X-demeaned, centered at \thetahat
\def\ellhatunderbargrad#1{\ellhatunderbar_{(#1)}}
\def\phibar{\underline{\phi}}
\def\seboot{\Xi^{\textrm{boot}}}
\def\seij{\Xi^{\textrm{IJ}}}
\def\sebayes{\Xi^{\textrm{bayes}}}
\def\zdiff{Z}
\def\normdiffij{\Delta^{\mathrm{IJ}}}
\def\normdiffbayes{\Delta^{\mathrm{Bayes}}}
\declaretheoremstyle[
  qed=$\blacktriangleleft$
]{mythmstyle}
\declaretheorem[style=mythmstyle,name=Example]{ex}
\declaretheorem[style=mythmstyle,name=Definition]{defn}
\theoremstyle{plain}
\newtheorem{lem}{Lemma}
\newtheorem{thm}{Theorem}
\newtheorem{prop}{Proposition}
\newtheorem{assu}{Assumption}
\newtheorem{cor}{Corollary}
\theoremstyle{definition}
\newcommand{\armNumModels}{56}
\newcommand{\armNumDatasets}{14}
\newcommand{\armNumCovsEstimated}{768}
\newcommand{\armNumMCMCSamples}{2,000}
\newcommand{\armNumBootstraps}{200}
\newcommand{\armMedianNumObs}{240}
\newcommand{\armMinNumObs}{40}
\newcommand{\armMaxNumObs}{3,020}
\newcommand{\armMinMCMCTimeSecs}{1.4}
\newcommand{\armMaxMCMCTimeMins}{42}
\newcommand{\armTotalMCMCTimeMins}{58}
\newcommand{\armTotalBootTimeHours}{267}
\newcommand{\armPilotMCMCTimeSecs}{6.2}
\newcommand{\armPilotBootTimeMins}{47}
\newcommand{\armPilotNumObs}{40}
\newcommand{\armPilotNumGroups}{5}
\newcommand{\armPilotNumScenarios}{8}
\newcommand{\armPilotNumBoots}{200}
\newcommand{\batsNumObs}{181}
\newcommand{\batsNumMCMCDraws}{4,000}
\newcommand{\batsNumTimes}{19}
\newcommand{\batsMCMCTime}{8}
\newcommand{\batsBootTime}{1,403}
\newcommand{\batsBootOverMCMCTime}{186}
\newcommand{\batsNumSEBlocks}{100}
\newcommand{\batsNumSEDraws}{200}
\newcommand{\batsNumBoots}{200}
\newcommand{\batsMeanMeanP}{0.747}
\newcommand{\batsMeanMeanPhi}{0.741}
\newcommand{\batsMeanLogSigma}{-0.495}
\newcommand{\batsOptMeanP}{0.0000191}
\newcommand{\batsOptMeanPhi}{1}
\newcommand{\electionNumObs}{361}
\newcommand{\electionNumPollsters}{44}
\newcommand{\electionNumMCMCDraws}{3,000}
\newcommand{\electionNumBoots}{100}
\newcommand{\electionMCMCHours}{27}
\newcommand{\electionBootHours}{1,231}
\newcommand{\electionBootOverMCMC}{45}
\newcommand{\electionWorstState}{DC}
\newcommand{\electionBestState}{OH}
\newcommand{\reNumSims}{300}
\newcommand{\reNumBoots}{300}
\newcommand{\reZPriorMean}{10}
\newcommand{\reZPriorSD}{2}
\def\ARMTable{
\begin{table}[h]
\begin{center}
% latex table generated in R 4.1.2 by xtable 1.8-4 package
% Tue May 14 10:38:57 2024
\begin{tabular}{|l|c|c|}
  \hline
 & Fixed effects only & Random and fixed effects \\ 
  \hline
Linear regression & 398 & 168 \\ 
   \hline
Logistic regression & 192 &  10 \\ 
   \hline
\end{tabular}

Models

\vspace{0.1in}% latex table generated in R 4.1.2 by xtable 1.8-4 package
% Tue May 14 10:38:57 2024
\begin{tabular}{|l|c|c|}
  \hline
 & Cross-covariance estimate & Variance estimate \\ 
  \hline
Log scale parameter & 191 &  62 \\ 
   \hline
Regression parameter & 325 & 190 \\ 
   \hline
\end{tabular}

Covariances to estimate

\end{center}

\caption{Models from \cite{gelman:2006:arm}\tablabel{arm_models}}
\end{table}
}
\def\ARMGraphZ{

\begin{knitrout}
\definecolor{shadecolor}{rgb}{0.969, 0.969, 0.969}\color{fgcolor}\begin{figure}[!h]

{\centering \includegraphics[width=0.98\linewidth,height=0.784\linewidth]{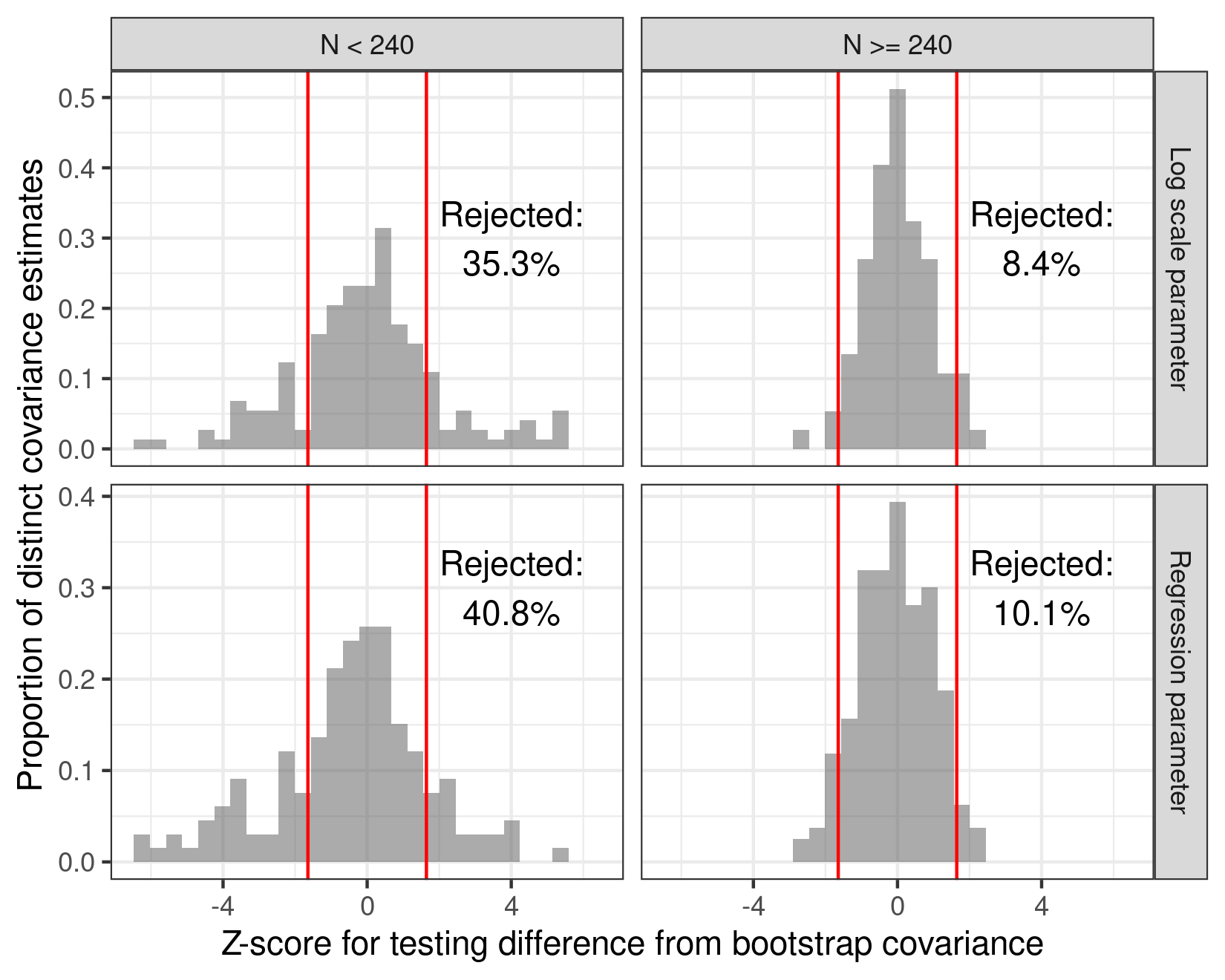} 

}

\caption[The distribution of the z-statistics $\zdiff$]{The distribution of the z-statistics $\zdiff$.  Red lines indicate the boundaries of a normal test for significance with level $0.1$, and ``Rejected'' counts the number of covariances in the rejection region.  Log scale parameters include all variances or covariances that involve at least one log scale parameters.}\label{fig:relerr_graph}
\end{figure}

\end{knitrout}
}
\def\ARMGraphDiff{

\begin{knitrout}
\definecolor{shadecolor}{rgb}{0.969, 0.969, 0.969}\color{fgcolor}\begin{figure}[!h]

{\centering \includegraphics[width=0.98\linewidth,height=0.784\linewidth]{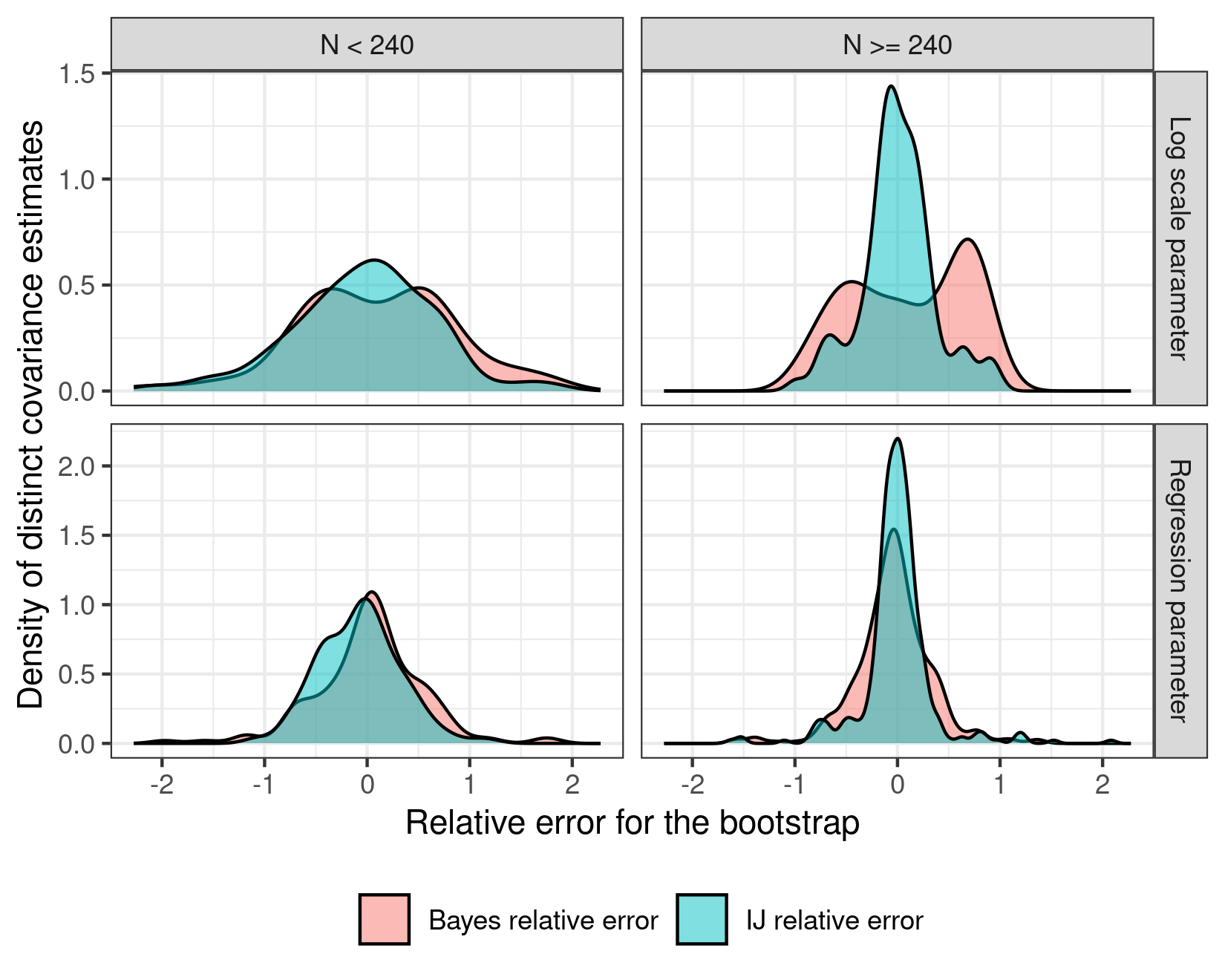} 

}

\caption[The distribution of the relative errors $\normdiffij$ and$\normdiffbayes$.Log scale parameters include all variances or covariances that involve at least one log scale parameters]{The distribution of the relative errors $\normdiffij$ and$\normdiffbayes$.Log scale parameters include all variances or covariances that involve at least one log scale parameters.}\label{fig:normerr_graph}
\end{figure}

\end{knitrout}
}
\def\PilotsIntervalsGraph{

\begin{knitrout}
\definecolor{shadecolor}{rgb}{0.969, 0.969, 0.969}\color{fgcolor}\begin{figure}[!h]

{\centering \includegraphics[width=0.98\linewidth,height=0.784\linewidth]{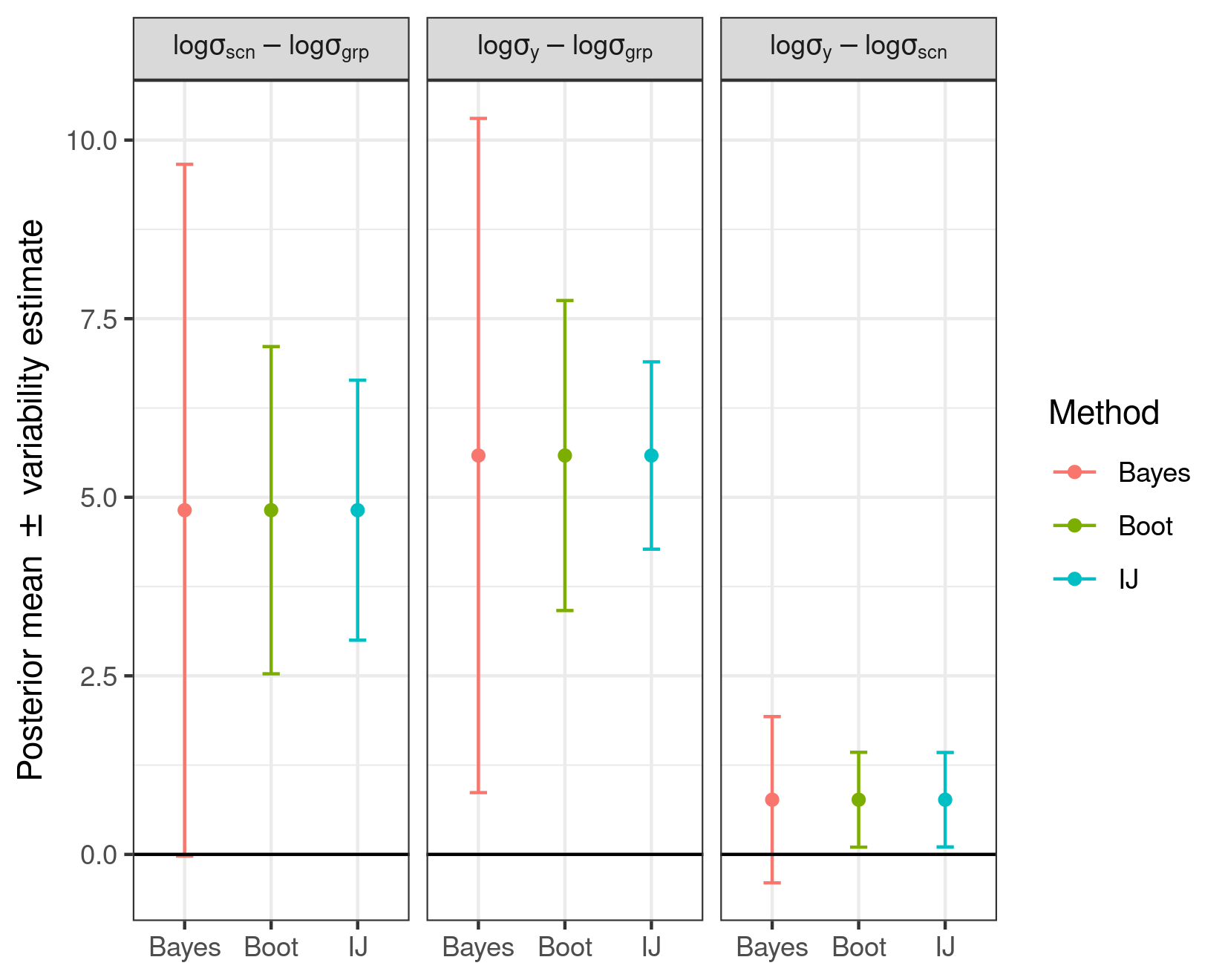} 

}

\caption[Means and standard errors for the pilots data]{Means and standard errors for the pilots data. Note that one may make different decisions about the evidence when using the frequentist versus the Bayesian credible intervals, particularly in the first panel.}\label{fig:interval_graph}
\end{figure}

\end{knitrout}
}
\def\PilotsSEGraph{

\begin{knitrout}
\definecolor{shadecolor}{rgb}{0.969, 0.969, 0.969}\color{fgcolor}\begin{figure}[!h]

{\centering \includegraphics[width=0.98\linewidth,height=0.784\linewidth]{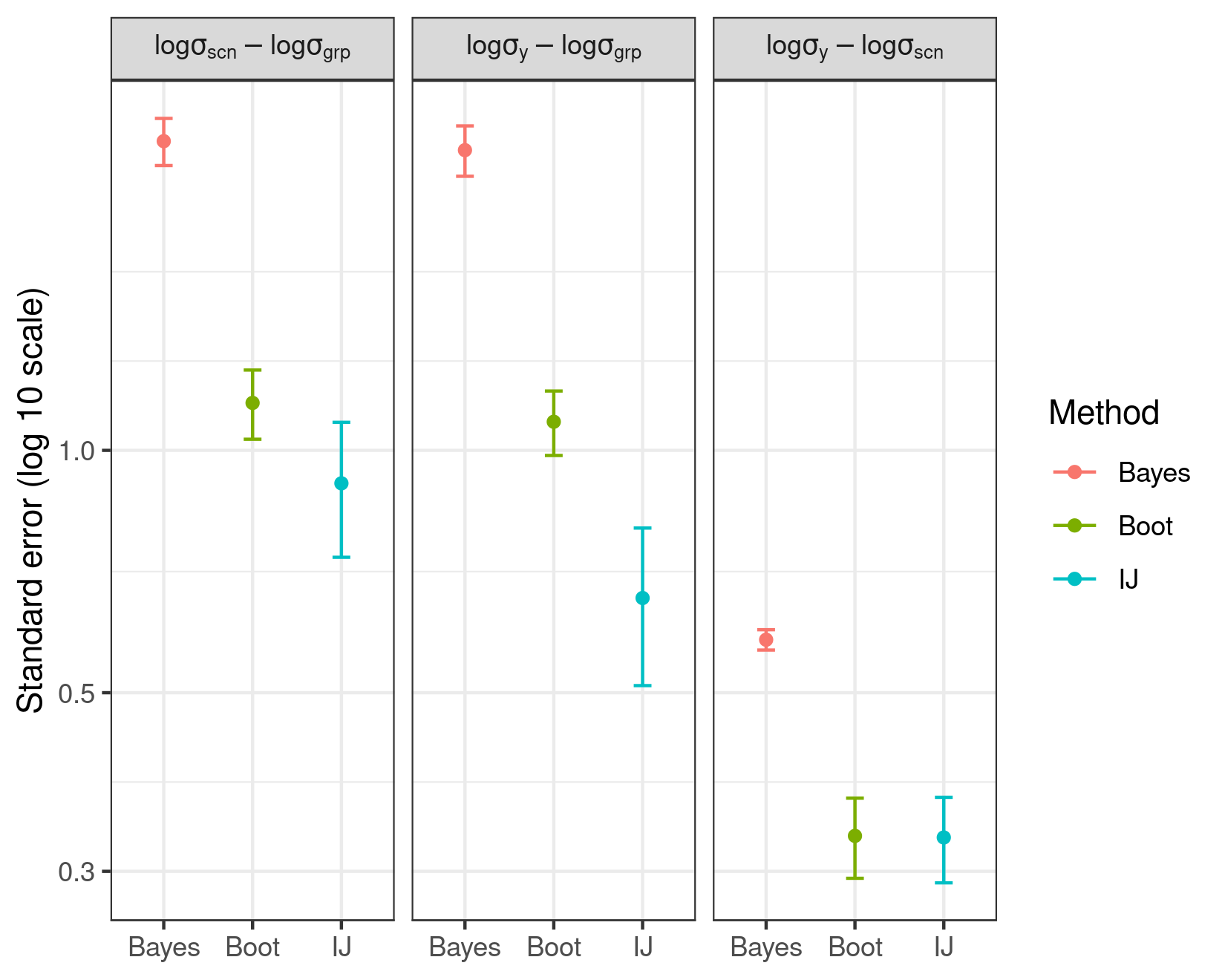} 

}

\caption[Standard errors (square root of variance estimates / $N$) for the pilots data]{Standard errors (square root of variance estimates / $N$) for the pilots data.}\label{fig:width_graph}
\end{figure}

\end{knitrout}
}
\def\BatsData{

\begin{knitrout}
\definecolor{shadecolor}{rgb}{0.969, 0.969, 0.969}\color{fgcolor}\begin{figure}[!h]

{\centering \includegraphics[width=0.98\linewidth,height=0.549\linewidth]{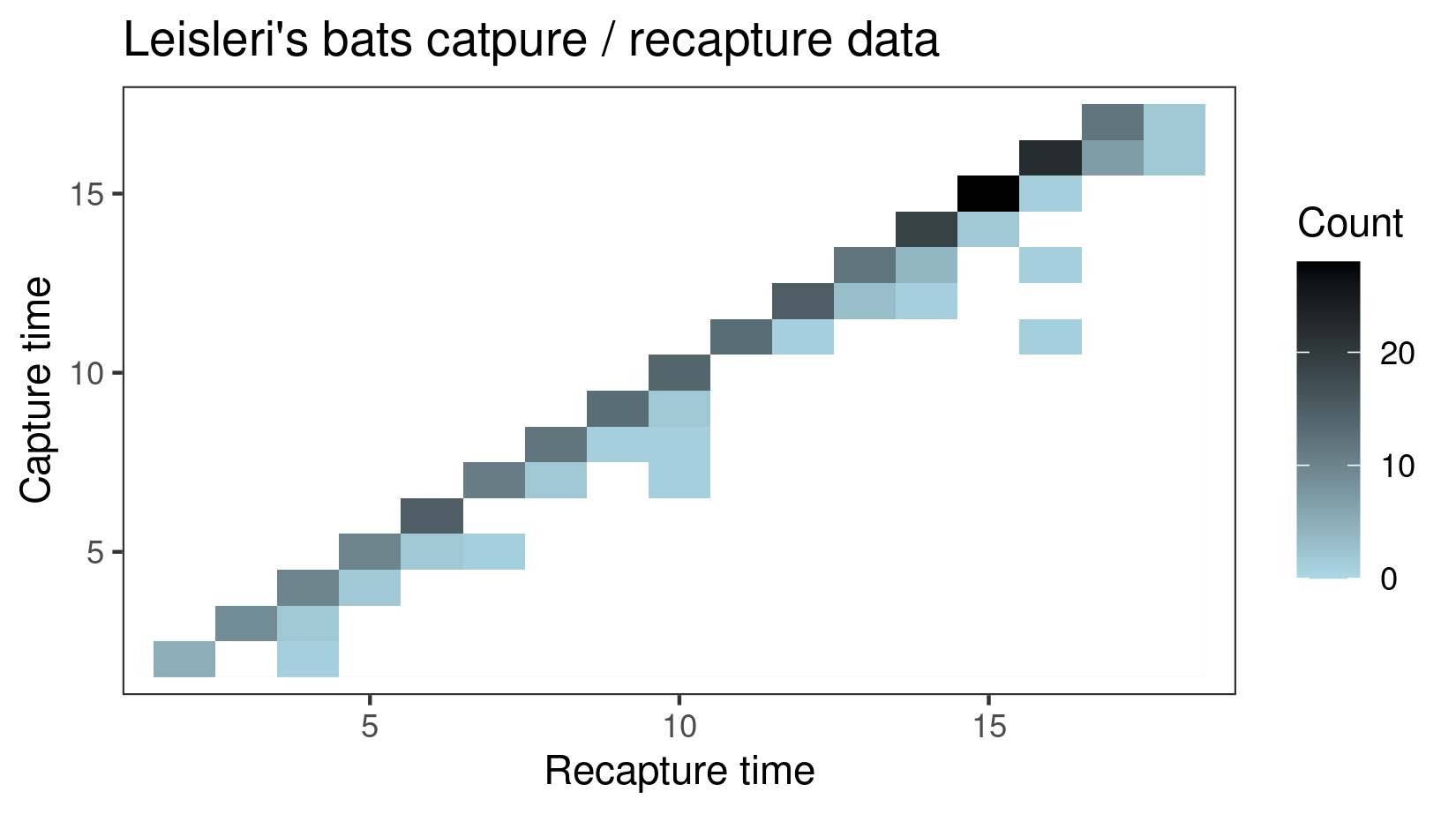} 

}

\caption[The ragged array counting the number of bats at re-observed a certain number of time periods after each time period]{The ragged array counting the number of bats at re-observed a certain number of time periods after each time period.}\label{fig:bats_data}
\end{figure}

\end{knitrout}
}
\def\BatsResults{

\begin{knitrout}
\definecolor{shadecolor}{rgb}{0.969, 0.969, 0.969}\color{fgcolor}\begin{figure}[!h]

{\centering \includegraphics[width=0.98\linewidth,height=0.784\linewidth]{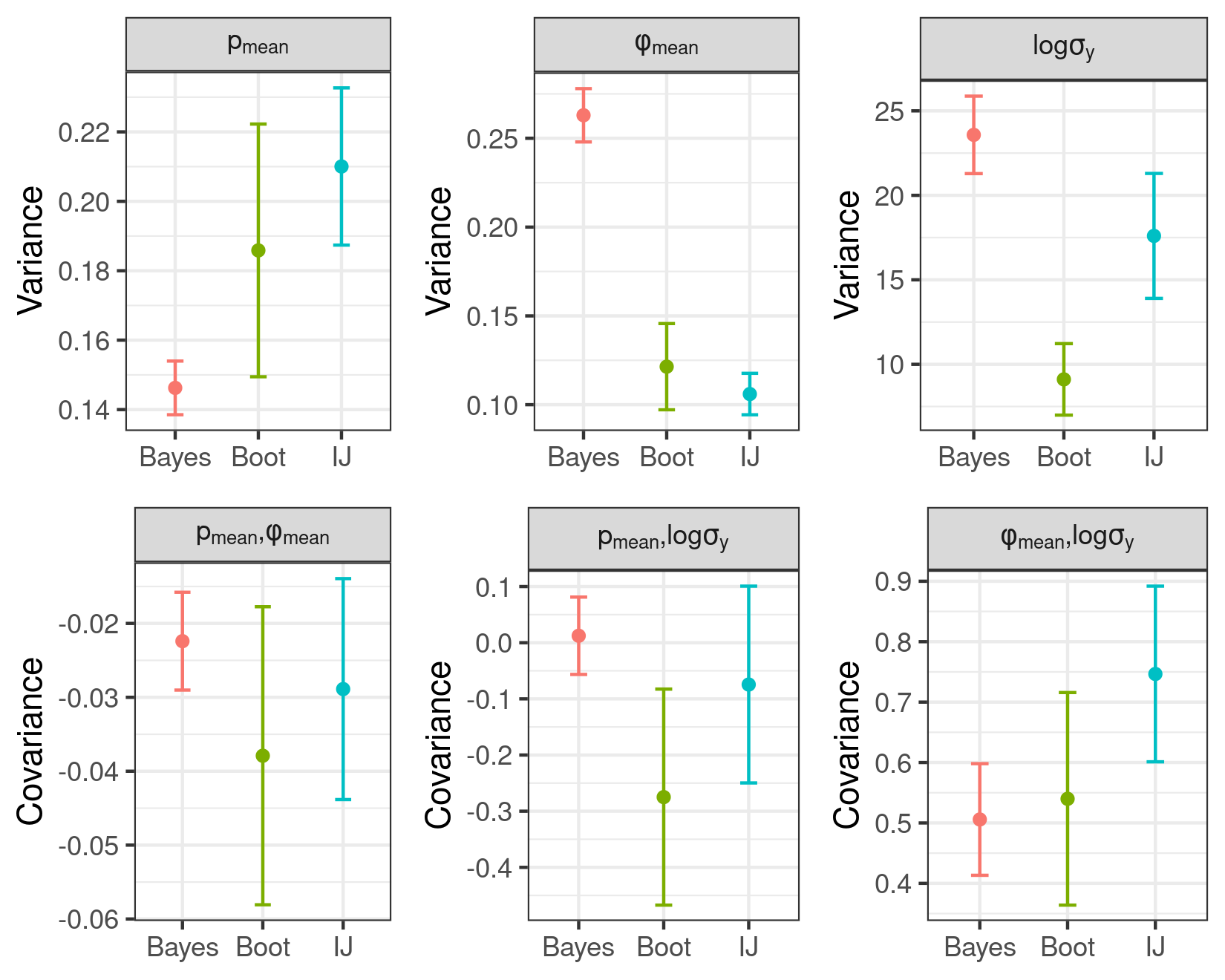} 

}

\caption[Comparison between $\gcovbayeshat$, $\gcovijhat$, and $\gcovboothat$ for the Leisleri Bats dataset]{Comparison between $\gcovbayeshat$, $\gcovijhat$, and $\gcovboothat$ for the Leisleri Bats dataset.  The first row shows the diagonal entries (variances), and the second row shows the off-diagonal entries (covariances).  Error bars are $2 \sebayes$, $2 \seij$, and $2 \seboot$, respectively.  Note that the scales of the y axes are all different to allow for joint plotting.}\label{fig:bats_result}
\end{figure}

\end{knitrout}
}
\def\POTUSStatesGraph{

\begin{knitrout}
\definecolor{shadecolor}{rgb}{0.969, 0.969, 0.969}\color{fgcolor}\begin{figure}[!h]

{\centering \includegraphics[width=0.98\linewidth,height=0.549\linewidth]{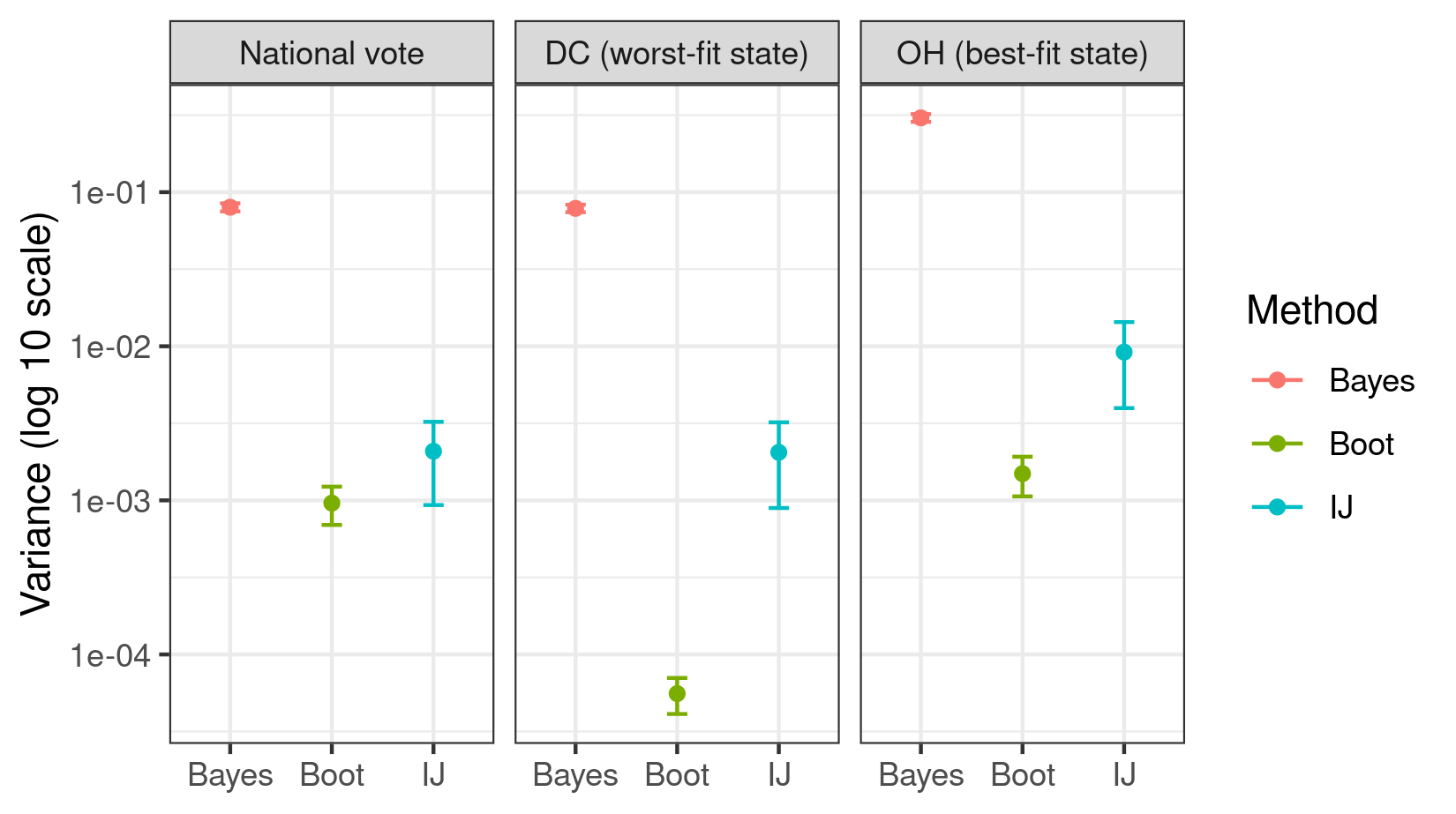} 

}

\caption[Selected for the national vote and the two states in which the IJ performed worst and best]{Selected for the national vote and the two states in which the IJ performed worst and best.  Note the log10 scale of the y-axis.  The upper and lower extent of the error bars were computed using $2 \seij$, $2 \sebayes$, and $2 \seboot$ before transforming by log10 for plotting.}\label{fig:election_result}
\end{figure}

\end{knitrout}
}
\def\POTUSScatterGraph{

\begin{knitrout}
\definecolor{shadecolor}{rgb}{0.969, 0.969, 0.969}\color{fgcolor}\begin{figure}[!h]

{\centering \includegraphics[width=0.98\linewidth,height=0.549\linewidth]{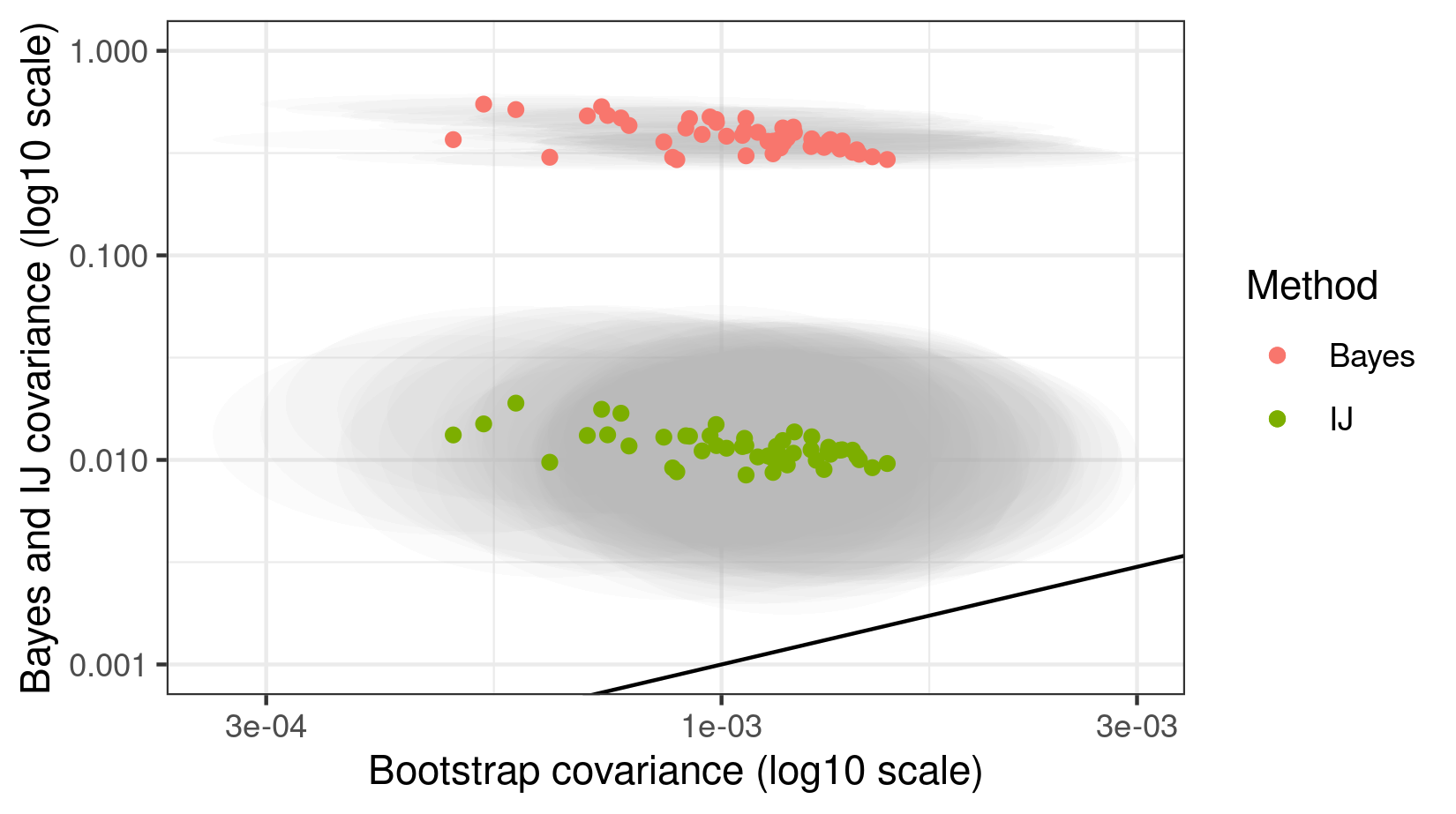} 

}

\caption[Scatterplot of the diagonals of $\gcovbayeshat$, $\gcovijhat$, and $\gcovboothat$ for all the states except DC]{Scatterplot of the diagonals of $\gcovbayeshat$, $\gcovijhat$, and $\gcovboothat$ for all the states except DC.  Note the log10 scale of the y-axis.  The extent of the confidence ellipses (shown in gray) were computed using $2 \seij$, $2 \sebayes$, and $2 \seboot$ before transforming by log10 for plotting.}\label{fig:election_state_result}
\end{figure}

\end{knitrout}
}
\newcommand{\PoissonREGraph}{

\begin{knitrout}
\definecolor{shadecolor}{rgb}{0.969, 0.969, 0.969}\color{fgcolor}\begin{figure}[!h]

{\centering \includegraphics[width=0.98\linewidth,height=0.980\linewidth]{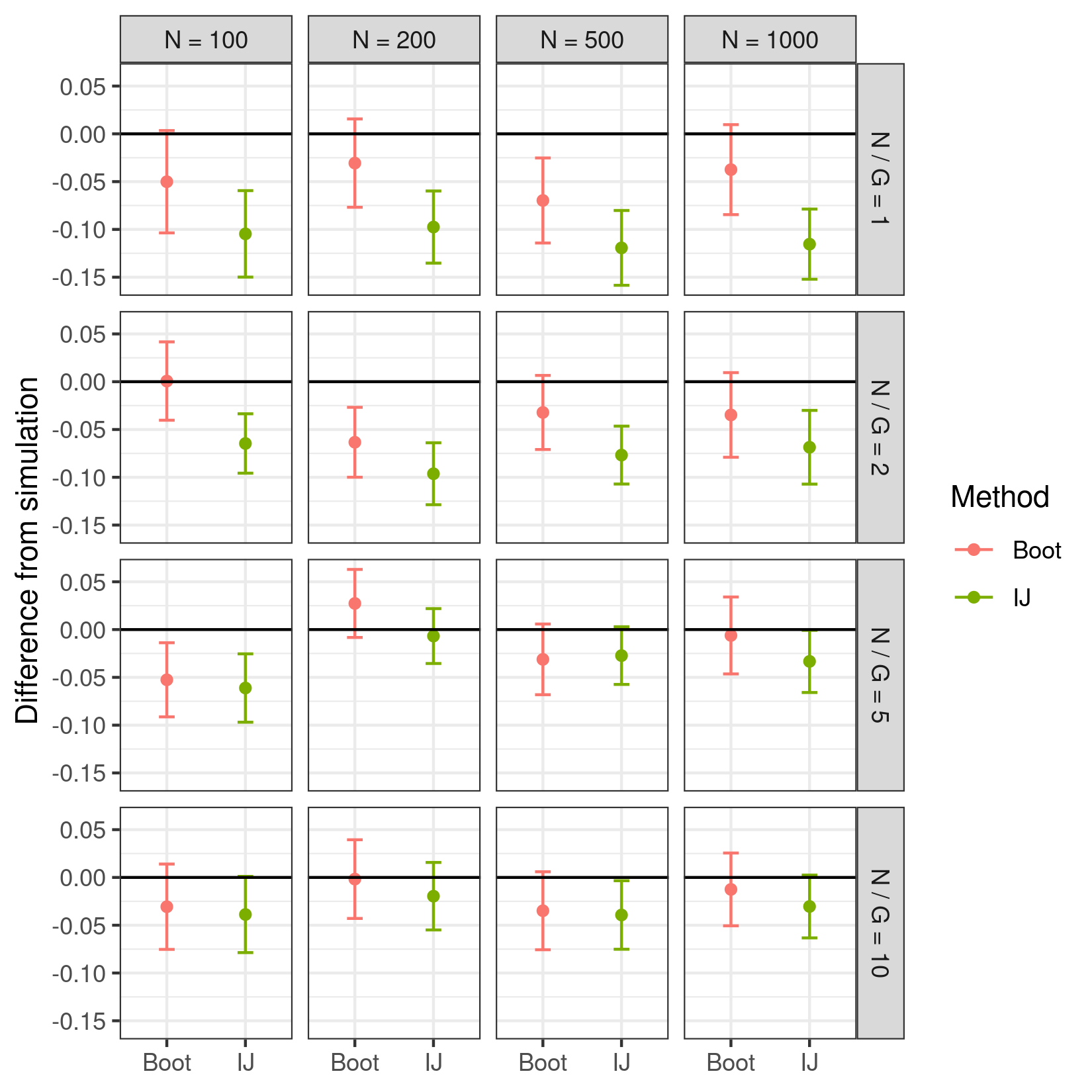} 

}

\caption[The error of the IJ and bootstrap covariances for different values of $N$ and $G$]{The error of the IJ and bootstrap covariances for different values of $N$ and $G$.  The y-axis shows the difference between $N (V - \hat{V}_{\mathrm{sim}})$, where $V$ is either $\gcovijhat$ or $\gcovboothat$.}\label{fig:poisson_re_graph}
\end{figure}

\end{knitrout}
}
\begin{document}

\title{The Bayesian Infinitesimal Jackknife for Variance}

\ifbool{arxiv}{
    \author{
        Ryan Giordano \\ \texttt{rgiordano@berkeley.edu}
        \and
        Tamara Broderick \\ \texttt{tbroderick@mit.edu}
      }
} {
    \ifbool{jrssb} {

    \journaltitle{JRSS-B}
    \DOI{DOI HERE}
    \copyrightyear{2023}
    \pubyear{2024}
    \access{Advance Access Publication Date: Day Month Year}
    \appnotes{Paper}
    
    \firstpage{1}

    \author[1]{Ryan Giordano \ORCID{0000-0002-5686-9210}}
    \author[2]{Tamara Broderick \ORCID{0000-0003-4704-5196}}

    \authormark{Giordano \& Broderick}
     
    \address[1]{\orgdiv{Statistics}, \orgname{University of California Berkeley}, 
                \orgaddress{\state{CA}, \country{USA}}}
    \address[2]{\orgdiv{EECS}, \orgname{Massachusetts Institute of Technology}, 
                \orgaddress{\state{MA}, \country{USA}}}

    % \corresp[$\dagger$]{Equal contribution primary authors.\\}
    %\corresp[$\ast$]{Corresponding authors. \href{r.meager@unsw.edu.au}{r.meager@unsw.edu.au}\\}
    \abstract{The frequentist variability of Bayesian posterior expectations can provide
meaningful measures of uncertainty even when models are misspecified. Classical
methods to asymptotically approximate the frequentist covariance of Bayesian
estimators such as the Laplace approximation and the nonparametric bootstrap can
be practically inconvenient, since the Laplace approximation may require an
intractable integral to compute the marginal log posterior, and the bootstrap
requires computing the posterior for many different bootstrap datasets. We
develop and explore the infinitesimal jackknife (IJ), an alternative method for
computing asymptotic frequentist covariance of smooth functionals of
exchangeable data, which is based on the ``influence function'' of robust
statistics.  We show that the influence function for posterior expectations has
the form of a simple posterior covariance, and that the IJ covariance estimate
is, in turn, easily computed from a single set of posterior samples. Under
conditions similar to those required for a Bayesian central limit theorem to
apply, we prove that the corresponding IJ covariance estimate is asymptotically
equivalent to the Laplace approximation and the bootstrap.  In the presence of
nuisance parameters that may not obey a central limit theorem, we argue using a
von Mises expansion that the IJ covariance is inconsistent, but can remain a
good approximation to the limiting frequentist variance.  We demonstrate the
accuracy and computational benefits of the IJ covariance estimates with
simulated and real-world experiments.
}

    \keywords{Bootstrap, influence function, Bernstein-von Mises theorem, random effects}

    } {
        % JASA
        \title{The Bayesian Infinitesimal Jackknife for Variance}
        \ifbool{blind} { 
            \author{(ANONYMIZED)}
        } {
            \author{
                Ryan Giordano \\ \texttt{rgiordan@mit.edu}
                \and
                Tamara Broderick \\ \texttt{tbroderick@mit.edu}
              }        
        }
    }
}

\maketitle

\ifbool{jrssb} {
    % JRSSB
} {
    \begin{abstract}
    \end{abstract}        
} 

\notbool{arxiv} {
    \notbool{jrssb} {
        % JASA

        % Only include keywords for the journal
        \noindent%
        {\it Keywords:}  Bootstrap, influence function, Bernstein-von Mises theorem,
            random effects
        \vfill    

        \newpage
        \spacingset{1.9} % DON'T change the spacing!
    }
}

% What is the frequentist variance and why would you care.
% The Laplace approximation and bootstrap have shortcomings that the IJ
% does not.
\section{Introduction}\seclabel{introduction}

% We care about posterior expectations, and these expectations have a
% frequentist variance.
Suppose we want to use Bayesian methods to infer some parameter or quantity of
interest from exchangeable data. The Bayesian posterior is often summarized by a
point estimate and measure of subjective uncertainty, which we will respectively
take to be the posterior mean and posterior variance.  The computed posterior
mean is a function of the observed data, and so has a well-defined frequentist
variance under re-sampling of the exchangeable data.  The Bayesian posterior
variance of a quantity and the frequentist variance of its mean are conceptually
distinct: the posterior variance can be thought of as describing the
uncertainty of subjective beliefs under the assumption of correct model
specification, and the frequentist variance captures the sensitivity of the
Bayesian point estimate to re-sampling of the observed data.  In general,
both may be of interest.

% You might think the two notions are the same, but they are not when you
% have misspecification
In general, the frequentist and Bayesian variances are not only conceptually
distinct but can be numerically very different.  It is true that, with
independent and identically distributed (IID) data, some reasonable regularity
conditions, a correctly specified model, and a parameter vector of fixed
dimension, the frequentist and Bayesian variances coincide asymptotically as a
consequence of the Bernstein-von Mises theorem, a.k.a the Bayesian central limit
theorem.  However, under \emph{model misspecification}, the equality between the
frequentist and Bayesian variances no longer holds in general, even
asymptotically \cite{kleijn:2012:bvm} --- and most models are misspecified.

% Or when you have random effects
Another practically relevant setting in which frequentist and Bayesian variances
may differ, even asymptotically, is when one wishes to compute the frequentist
variance under the assumption that the observed data is exchangeable, an
assumption which may make sense only \emph{conditional on unobserved quantities}
that are integrated out of the posterior.  For example, consider the use of
polling data to predict the outcome of the 2016 U.S. presidential election, an
example from \cite{economist:2020:election} which we discuss at length in
\secref{election}. The data consists of polling responses from randomly selected
individuals and the quantity of interest is an expected vote proportion on
election day for one of the two candidates.  Since the individuals were selected
randomly, and the pollsters might have plausibly selected different individuals
to survey, it is reasonable to ask how variable the final posterior election
prediction is under the random selection of individuals in a frequentist sense.
Under this sampling distribution, the individuals chosen for polls on different
days are independent of one another.  However, the Bayesian model contains
random effects for the biases of different polling agencies as well as
persistent time random effects, which correlate the polling results at different
time points, marginally over the random effects and data generating process. The
frequentist variability we are interested thus implicitly conditions on these
random effects, but the Bayesian posterior variability is marginal over these
random effects.  In general, the two variances will differ, even for very
large data sets.

% Even Bayesians should care about frequentist variances as a robustness measure.
Though the meanings are different, both the frequentist and Bayesian notions of
variance can be useful.  Even a strict Bayesian can meaningfully interpret the
frequentist variance of the posterior mean as a \emph{sensitivity analysis},
quantifying how much the Bayesian point estimate would vary had we received
different but realistic data.  Such a sensitivity analysis is particularly
germane when the data is known to have come from a random sample from a real
population.  Consider again the polling example.  Since the poll respondents
were randomly chosen in the first place, we hope that our observed responses
were representative, and that our model did not over-fit the particular dataset
we observed.  That is, we would hope that our election-day prediction would not
be too different if we had gone back and randomly chosen a different set of
respondents.  If, under re-sampling of the polling data, the expected vote
proportion varies much more than the width of a posterior credible interval,
then we would say that the model is \emph{not robust} to re-sampling the polling
data.  In other words, excess frequentist variability can be an indicator of
important non-robustness.

% Standard tools are inadequate
When posterior means are estimated using Markov Chain Monte Carlo (MCMC), the
two standard tools for estimating the frequentist variability of Bayesian
posterior expectations --- the nonparametric bootstrap  (henceforth simply
``bootstrap'') and the Laplace approximation --- can be problematic.
%
% The bootstrap is inadequate
To compute a non-parametric bootstrap estimate of
the frequentist variance of a posterior mean, one repeatedly draws ``bootstrap
datasets,'' with replacement from the original set of datapoints. For each
bootstrap dataset, one then computes the posterior mean for each set of
re-sampled data \cite{huggins:2019:bayesbag}.  Though straightforward to
implement and naively parallelizable, the non-parametric bootstrap remains
computationally expensive, since a separate posterior estimate (e.g., a
run of MCMC) must be computed for each bootstrap dataset.
%
% The Laplace approximation is inadequate
Alternatively, the Laplace approximation can be used to compute the frequentist
sandwich covariance using the maximum a-posteriori (MAP) estimate, in
computational mimicry of the asymptotic analysis of
\cite{kleijn:2012:bvm}.\footnote{The Laplace approximation is typically thought
of as the quadratic approximation to the posterior using the observed Fisher
information. Here, we hope we do no injustice to Laplace by interpreting the
procedure more broadly, using the quadratic approximation to compute an
asymptotic frequentist variance.}  However, in the presence of high-dimensional
random effects, the Laplace approximation must either fix the random effects at
their MAP estimates, which often results in a bad or even singular posterior
approximation, or perform an intractable integral to marginalize over the random
effects.\footnote{When using the Laplace approximation, one wants to integrate
out random effects out of the likelihood even when computing frequentist
variance conditional on said random effects, since the posterior mean whose
variance interests us is computed marginally over the random effects.  The
assumption that the data points are independent then enters in the computation
of the sampling variance of the score.  Indeed, it is often precisely the
intractability of the marginalizing integral that motivates the use of MCMC
rather than marginal MLE methods, rendering the Laplace approximation
intractable.}

% We solve these problems with the IJ covariance.
In the present paper, we provide an estimator of the frequentist variability of
posterior expectations which enjoys the model-agnostic properties of the
non-parametric bootstrap, but can be computed from a single run of MCMC. Our
method is based on the \emph{infinitesimal jackknife} (IJ) variance estimator, a
classical frequentist variance estimator derived from the \emph{empirical
influence function} of a statistical functional.  We show that the Bayesian
empirical influence function is easily computable as a particular set of
posterior covariances, the variance of which is a consistent estimator of the
frequentist covariance of a posterior expectation when a Bayesian central limit
theorem (BCLT) holds for all model parameters.  We also argue a negative result:
in the presence of high-dimensional random effects, the IJ variance is, in
general, an \emph{inconsistent estimator} of the frequentist variance
conditional on those random effects.  Nevertheless, we argue theoretically and
through experiments that the IJ variance remains useful in finite sample,
especially given its computational advantages.

% We potentially solve other problems.
Finally, though our present concern is frequentist variance estimates, the
empirical influence function is a powerful statistical tool, with applications
in robustness, cross validation, and bias estimation. Consequently, our results
for the influence function of Bayesian posteriors and a BCLT for expectations of
data-dependent functions are of broader interest.  We leave off exploring these
topics for future work.

\subsection{Problem statement}\seclabel{basics}
% Data and likelihood notation.
Suppose we observe $N$ data points, $\xvec = (\x_1, \ldots \x_N)$.  We will
assume that the entries of $\xvec$ are independently and identically distributed
(IID) according to a distribution $\fdist$.  In a slight abuse of notation, we
will write $\fdist(\xvec)$ for $\prod_{n=1}^N \fdist(\x_n)$.  Throughout, the
distribution $\fdist$ will be taken as fixed but unknown.  We assume that we
model the IID distribution of $\x_n$, perhaps incorrectly, by a parametric model
$\p(\xvec, \theta) = \exp\left(\sumn \ell(\x_n \vert \theta)\right)
\prior(\theta)$, where $\theta \in \thetadom \subseteq \mathbb{R}^D$ and
$\prior(\cdot)$ is a density with respect to the Lebesgue measure on
$\mathbb{R}^D$.  The model may be misspecified, i.e., there may be no $\theta
\in \thetadom$ such that $\p(\x_n \vert \theta)$ is the same distribution as
$\fdist$.  The posterior expectation of some vector-valued quantity $g(\theta)
\in \mathbb{R}^{\gdim}$ is given by given by Bayes' rule:
\begin{align*}
\expect{\post}{\g(\theta)} =
    \frac{
    \int_{\thetadom} g(\theta)
        \exp\left(N \meann \ell(\x_n \vert \theta) \right) \prior(\theta)
                        d\theta}
         {\int_{\thetadom}
         \exp\left(N \meann \ell(\x_n \vert \theta) \right) \prior(\theta)
         d\theta}.
\end{align*}

The quantity $\expect{\post}{\g(\theta)}$ is a function of the data $\xvec$,
through its dependence on the random function $\theta \mapsto \meann \ell(\x_n
\vert \theta)$.  We will be primarily concerned with the problem of how to
estimate the variability of $\expect{\post}{\g(\theta)}$ under sampling from
$\fdist(\xvec)$, when $N$ is large, using Markov Chain Monte Carlo (MCMC)
samples from the posterior $\post$, in the presence of possible
misspecification.
Using an estimate of its variability, one can use classical frequentist tests to
estimate how much the posterior mean might be expected to have varied were we to
have drawn a different sample $\xvec$.

We will study conditions under which, under sampling from $\fdist(\xvec)$,
% The mean and covariance (if they exist) of $\expect{\post}{\g(\theta)}$ under
% sampling from $\fdist(\xvec)$ can be written respectively as
%
\begin{align}\eqlabel{gtrue}
\gmeantrue := \expect{\fdist(\xvec)}{\expect{\post}{\g(\theta)}} < \infty
\quad\textrm{and}\quad
\sqrt{N}\left(\expect{\post}{\g(\theta)} - \gmeantrue \right)
\dlim
\normdist\left(\cdot \vert 0, \gcovtrue\right).
% \gcovtrue_{FS} := \cov{\fdist(\xvec)}{\expect{\post}{\g(\theta)}}.
%
\end{align}
%
% We will not actually attempt to prove anything about $\gcovtrue_{FS}$, but
% rather will study a weaker target: the variance of the limiting distribution of
% the posterior mean. Specifically, letting $\dlim$ denote convergence in
% distribution, and $\normdist(\cdot \vert \mu, \Sigma)$ denote the multivariate
% normal distribution with mean $\mu$ and covariance $\Sigma$, we will attempt to
% estimate $\gcovtrue$ under conditions when
% %
% \begin{align*}
% %
% \sqrt{N}\left(\expect{\post}{\g(\theta)} - \gmeantrue \right)
% \dlim
% \normdist\left(\cdot \vert 0, \gcovtrue\right).
% %
% \end{align*}
% %
% The quantities $\gcovtrue$ and $N \, \gcovtrue_{FS}$ coincide asymptotically
% when $\expect{\post}{\g(\theta)}^2$ is uniformly integrable as $N \rightarrow
% \infty$.  We will not attempt to rigorously state conditions under which such
% uniform integrability holds, but will investigate $\gcovtrue$ in our theoretical
% treatment and $N \, \gcovtrue_{FS}$ in our experiments, with the implicit
% expectation that the two are measuring roughly the same quantity.

We will consider two different regimes for the dimension $D$ of $\theta$ as $N$
grows: a regime in which $D$ remains fixed as $N$ grows, as in the classical
parametric Bayesian central limit theorem (BCLT), and a high-dimensional
``global--local'' regime in which $D$ grows with $N$, but a finite-dimensional
subspace of $\theta$ is expected to obey a BCLT marginally.  The simplest case
is when $D$ remains fixed as $N$ grows; we will call this a ``fixed-dimension''
model.  A more complex model (which is more typical of real-world MCMC
applications) is one in which $\theta$ can be decomposed into two parts: a
``global'' parameter $\gamma \in \gammadom \subseteq \rdom{\gammadim}$ whose
dimension $\gammadim$ stays fixed as $N$ grows, and a ``local'' parameter
$\lambda \in \lambdadom \subseteq \rdom{\lambdadim}$ where $\lambdadim$ may grow
with $N$.  We will refer to this case as a ``global--local'' model.   A classical
motivating example for global--local models is a generalized linear mixed model,
in which $\gamma$ represents ``fixed effects'' which influence the distribution
of each observation, and $\lambda = (\lambda_1, \ldots, \lambda_N)$ is a vector
of ``random effects,'' one for each datapoint $\x_n$
(\cite{mcculloch:2003:glmms, gelman:2006:arm}).  For a global--local model, we
write $\theta = (\gamma, \lambda)$ and $D = \gammadim + \lambdadim$, and assume
that $N$ is large relative to $\gammadim$ but perhaps not large relative to
$\lambdadim$. We assume that we are only interested in the expectations of
functions of $\gamma$, so that in a slight abuse of notation we can write
$\g(\theta) = \g(\gamma, \lambda) = \g(\gamma)$.  Thus, we assume that we are
only interested in the posterior expectations of fixed-dimensional quantities,
though we may be using MCMC to integrate out high-dimensional parameters in a
global--local model.

%%%%%%%%%%%%%%%%%%%%%%%%%%%%%%%%%%%%%
%%%%%%%%%%%%%%%%%%%%%%%%%%%%%%%%%%%%%

% Outline of the rest of the paper.
For the remainder of the paper we will investigate the accuracy of $\gcovij$.
First, in \secref{finite_dim}, we prove the consistency of $\gcovij$ in the
finite-dimensional setting.  In \secref{global_local}, we will introduce our
high-dimensional setting, and discuss why the analysis of \secref{finite_dim}
will not suffice.  We then introduce a von Mises expansion approach, providing
another proof of the consistency of $\gcovij$ in \secref{von_mises} under
stronger assumptions, but with a sufficiently general proof technique to
understand the high-dimensional setting.  In \secref{von_mises_high_dim}, we
show that $\gcovij$ is inconsistent in general in high dimensions, but provide
intuition for why we may still observe good results in practice. In
\secref{experiments}, we examine the performance of the IJ covariance relative
to the bootstrap on a range of real and simulated examples.

\subsection{Notation}\seclabel{notation}
We will begin by defining some notation. Given a distribution $\p(t)$ and an
integrable function $\phi(t)$, we will use the expectation operator as a formal
shorthand for the integral
%
%\begin{align*}
$\expect{\p(t)}{\phi(t)} = \int \phi(t) \p(dt)$.
%\end{align*}
%
Any quantities not explicitly part of the distribution beneath the expectation
operator should be thought of as fixed. Covariances and variances are treated
similarly, i.e,
%
%\begin{align*}
%
$\cov{\p(t)}{\phi(t), \psi(t)} =
    \expect{\p(t)}{\phi(t) \psi(t)^T} -
        \expect{\p(t)}{\phi(t)} \expect{\p(t)}{\psi(t)}^T$.
%
%\end{align*}
%
We will denote sample covariances over an index $n \in \{1,
\ldots, N \} = [N]$ as $\covhat{n \in [N]}{\cdot, \cdot}$.
% %
% \begin{align*}
% %
% \covhat{n \in [N]}{s_n, t_n} :={}& \frac{1}{N}
%     \sumn s_n t_n - \left( \meann s_n \right) \left( \meann t_n \right)^T.
% %
% \end{align*}

We denote partial derivatives with respect to $\theta$ with bracketed
subscripts.  For example, for a function $\psi(\theta)$ taking values in
$R^{D_\psi}$, let $\psigrad{k}(\theta)$ denote the array of $k-$th order
partial derivatives with respect to $\theta$, i.e.,
\begin{align*}
\psigrad{k}(\theta) :=
    \fracat{\partial^k \psi(\theta)}{\partial\theta^k}{\theta}.
\end{align*}
If $\psi$ depends on any variables other than $\theta$, those variables will be
taken as fixed in the partial differentiation.
In general, $\psigrad{k}(\theta)$ is an array with $D_\psi D^k$ entries, and we
will introduce special notation as needed when summing over the array's entries.
But we will mostly be able to avoid such special notation by interpreting a few
special cases in standard matrix notation:
\begin{itemize}
\item When $D_\psi = 1$ and $k = 1$, $\psigrad{1}(\theta)$ is taken to be a
$D$-column vector.
\item When $D_\psi > 1$ and $k = 1$, $\psigrad{1}(\theta)$ is taken to be a
$D_\psi \times D$ matrix.
\item When $D_\psi = 1$ and $k=2$, $\psigrad{2}(\theta)$ is taken to be a $D
\times D$ matrix.
\item When $k = 0$, $\psigrad{0}(\theta)$ is just $\psi(\theta)$.
\end{itemize}
%

% Define likelihood and related quantities
Our proposal and existing methods for estimating $\gcovtrue$ will depend on some key
quantities to which we give symbols in \defref{loglik_def}.  In the left hand
column of \defref{loglik_def} we have finite-sample quantities, and on the right
hand side we have the corresponding population quantities. In \secref{bayes_clt}
below we will state precise conditions under which all the quantities of
\defref{loglik_def} are well-defined.  We will use $\xn$ to denote a generic
single datapoint, in contrast to specific observed datapoints, $\x_n$.
\begin{defn}\deflabel{loglik_def}
\begin{align*}
&\textrm{Finite-sample quantity} && \textrm{Population quantity}\\
\likhat(\theta) :={}&
   \frac{1}{N} \sumn \ell(\x_n \vert \theta) +
   \frac{1}{N} \logprior(\theta)
&
\quad \lik(\theta) :={}& \expect{\fdist(\xn)}{\ell(\xn \vert \theta)}
\\
\thetahat :={}& \argmax_{\theta \in \thetadom} \likhat(\theta)
&
\thetatrue :={}& \argmax_{\theta \in \thetadom} \lik(\theta)
\\
\scorecovhat :={}&
   \frac{1}{N} \sumn \ellgrad{1}(\x_n \vert \thetahat)
                     \ellgrad{1}(\x_n \vert \thetahat)^T
&
\scorecov :={}&
    \cov{\fdist(\xn)}{\ellgrad{1}(\x_n \vert \thetatrue)}
\\
% \infohat :={}& -\fracat{\partial^2 \likhat(\theta)}
%                      {\partial\theta \partial\theta^T}{\thetahat}
\infohat :={}& -\likhatk{2}(\thetahat)
&
\info :={}& - \expect{\fdist(\xn)}{\ellgrad{2}(\x_n \vert \thetatrue)}
\end{align*}
\end{defn}
%

% We are studying fixed frequentist properties.
We will take the population quantities in the right column of
\defref{loglik_def} to be fixed (though, in general, unknown), and study the
sampling behavior of quantities in the left column. That is, we will study the
asymptotic behavior of formal Bayesian posterior quantities as a function of a
fixed data generating process, following much of the literature on the
frequentist behavior of Bayesian estimators (e.g., \cite{van:2000:asymptotic}).
% Note that the numeric vector $\thetatrue$ is
% conceptually distinct from the random variable $\theta$ of the Bayesian
% posterior, despite also existing in $\thetadom$.
%
Our notation for stochastic convergence reflects these assumptions. The notation
$\dlim$ will mean convergence in distribution, and the notation $\plim$ will
mean convergence in probability, both under IID sampling from $\xfdist$ as $N
\rightarrow \infty$.  Similarly, order notation $O(\cdot)$ applies as $N
\rightarrow \infty$, and probabilistic order notation $O_p(\cdot)$ denotes order
in probability over $\fdist(\xn)$.  Tilde order notation $\ordlogp{\cdot}$
neglects terms that are sub-polynomial in $N$.

Let $\norm{\cdot}_1$, $\norm{\cdot}_2$, and $\norm{\cdot}_\infty$ denote the
usual vector norms.  When applied to an array of derivatives, the norms mean the
the vector norm applied to the vectorized version of the array.

% The notation
% $\normdist(\theta | \mu, \Sigma)$ will denote a normal distribution on the
% random variable $\theta$ with mean $\mu$ and covariance $\Sigma$.

\subsection{Three methods for estimating $\gcovtrue$}

%%%%%%%%%%%%%%%%%%%%%%%%%%%%%%%%%%%%%%%%%%%%%%%%%%%%%%%%%%%%%%%%%%%%%%%%%%%
%%%%%%%%%%%%%%%%%%%%%%%%%%%%%%%%%%%%%%%%%%%%%%%%%%%%%%%%%%%%%%%%%%%%%%%%%%%
%%%%%%%%%%%%%%%%%%%%%%%%%%%%%%%%%%%%%%%%%%%%%%%%%%%%%%%%%%%%%%%%%%%%%%%%%%%

There are two well-known classical methods for estimating $\gcovtrue$: the
sandwich covariance and the nonparametric bootstrap.  The sandwich covariance
requires computation of the {\em maximum a posteriori} (MAP) estimate, and is
not available from MCMC samples.  The bootstrap requires running many MCMC
chains with slightly different data sets, which can be computationally
expensive. We describe a third method, the infinitesimal jackknife (IJ)
covariance estimate, which is also classical, but which has not been previously
applied to MCMC estimates of posterior means.  All three
are succinctly summarized in \figref{algorithm}.

% How would you use the Laplace approximation to estimate \gcovtrue?
The first well-known approach to estimating $\gcovtrue$, the ``sandwich
approximation,'' is given in \algrref{sandwich}. The sandwich covariance first
forms the {\em maximum a posteriori} (MAP) estimate, $\thetahat$, approximate
the posterior expectation $\expect{\post}{\g(\theta)}$ by $\g(\thetahat)$, and
use the standard frequentist ``sandwich covariance'' matrix for the covariance
of optima, together with the delta method (\cite{stefanski:2002:mestimation},
\cite[Chapter 7]{van:2000:asymptotic}).
When a sufficiently strong Bayesian CLT applies, it can be shown that
$g(\thetahat)$ consistently estimates $\expect{\post}{g(\theta)}$ and the delta
method sandwich covariance matrix consistently estimates $\gcovtrue$ (more
details will follow in \secref{finite_dim_clts} below).  However, when $D$ is
not small relative to $N$ (as in the global--local regime), or when the posterior
is not approximately normal, the MAP can be a poor approximation to the
posterior mean, and the sandwich covariance matrix estimate can even be
singular.

When the MAP and sandwich covariance are unreliable, one typically turns to MCMC
estimates.  To estimate $\gcovtrue$ from MCMC samples, a classical method is the
nonparametric bootstrap, which we will simply refer to as the bootstrap
(\cite{efron:1994:bootstrap,huggins:2019:bayesbag}).  The bootstrap is shown in
\algrref{boot}.  The primary difficulty with the bootstrap is that it requires
$B$ different MCMC chains, which can be computationally expensive; often, in
practice, even a single run of MCMC requires considerable computing resources.
When the MCMC noise can be made arbitrarily small, the bootstrap is a consistent
estimator of $\gcovtrue$ in the fixed-dimension case under conditions given in
\cite{huggins:2019:bayesbag}.
%
%%%%%%%%%%%%%%%%%%%%%%%%%%%%%%%%%%%%%
%%%%%%%%%%%%%%%%%%%%%%%%%%%%%%%%%%%%%
%\begin{defn}\deflabel{ij_terms}
    %
    % Define the Bayesian empirical influence function and IJ covariance estimate as
    % follows.
    %
    \begin{align}\eqlabel{ij_terms}
    \infl_n := N \cov{\post}{g(\theta), \ell(\x_n \vert \theta)}
    \textrm{ (Influence score)}
    \quad\quad
    \gcovij := \covhat{n=1,\ldots,N}{\infl_n}.
    % \textrm{(IJ covariance estimate)}
    % \gcovij &:= \meann(\infl(\x_n) \infl(\x_n)^T - \bar\infl \, \bar\infl^T)
    % &\textrm{(IJ covariance estimate)}\\
    % &\textrm{where }
    % \bar\infl := \meann \infl(\x_n).
    \end{align}
    %
%\end{defn}
%%%%%%%%%%%%%%%%%%%%%%%%%%%%%%%%%%%%%
%%%%%%%%%%%%%%%%%%%%%%%%%%%%%%%%%%%%%

% Define the population IJ estimator
We propose an estimator of $\gcovtrue$ that can be
computed from a single MCMC run using only the tractable $\ell(\x_n \vert \z,
\theta)$.  We call our estimator the {\em infinitesimal jackknife (IJ)
covariance estimator}, and denote it as $\gcovij$.  The IJ covariance estimator
is given by the sample variance of a set of simple posterior covariances, as
given in in \eqref{ij_terms}.

% Define the MCMC IJ estimator
In practice, the posterior covariances of \eqref{ij_terms} typically need to be
estimated with MCMC.  We denote  by $\gcovijhat$ the MCMC estimate of $\gcovij$,
and as described the computation of $\gcovijhat$ in \algrref{ijalg}.  For any
fixed $N$, the output $\gcovijhat$ of \algrref{ijalg} approaches $\gcovij$ under
a well-mixing MCMC chain as the number of MCMC samples $M \rightarrow \infty$.
We will theoretically analyze $\gcovij$, though our experiments will all employ
$\gcovijhat$. Importantly, \algrref{ijalg} requires only single run of MCMC
without relying on accuracy of the MAP, as required by the sandwich estimator,
or running $B$ different MCMC samplers, as required by the bootstrap estimator.

\begin{figure}
    % \begin{minipage}[t]{1.0\textwidth}
    
    %%%%%%%%%%%%%%%%%%%%%%%%%%%%%%%%%%%%%
    %%%%%%%%%%%%%%%%%%%%%%%%%%%%%%%%%%%%%
    
    \begin{algorithm}[H]
        \caption{Sandwich covariance}\algrlabel{sandwich}
        \textbf{Requires: } Differentiable $\theta \mapsto \likhat(\theta)$
        % \textbf{Inputs (global variables):}\\
        % $\theta \mapsto g(\theta)$ \\ %\Comment{Target function}\\
        % $\theta \mapsto \likhat(\theta)$ \\ %\Comment{Log joint distribution}\\
        \begin{algorithmic}
        \State $\thetahat \gets \argmax_{\theta \in \thetadom} \likhat(\theta)$
    %        \Comment{(MAP estimate)}
        \State $\hat{\Sigma}_\theta \gets \infohat^{-1} \scorecovhat \infohat^{-1}$
     %       \Comment{(sandwich estimate of $\cov{\fdist(\xvec)}{\sqrt{N} \thetahat}$)}
        \State \Return $\gcovmaphat \gets \ggrad{1}(\thetahat)
            \hat{\Sigma}_\theta \ggrad{1}(\thetahat)^\trans$
      %      \Comment{(delta method estimate of
       %         $\cov{\fdist(\xvec)}{\sqrt{N} \g(\thetahat)}$)}
        % \State \Return $\gcovmaphat$
        %
        \end{algorithmic}
    \end{algorithm}
    %
    % \end{minipage}
    %
    % \hfill\vline\hfill
    %
    % \begin{minipage}[t]{0.33\textwidth}
    %%%%%%%%%%%%%%%%%%%%%%%%%%%%%%%%%%%%%
    %%%%%%%%%%%%%%%%%%%%%%%%%%%%%%%%%%%%%
    
    \begin{algorithm}[H]
        \caption{Bootstrap covariance}\algrlabel{boot}
        % \textbf{Inputs (global variables):}\\
        % $\xvec = (\x_1, \ldots, \x_N)$ \Comment{Data}\\
        % $\theta \mapsto g(\theta)$ \Comment{Target function}\\
        % $M$ \Comment{Number of MCMC samples}\\
        % $B$ \Comment{Number of bootstrap samples}
        \textbf{Requires: } Posterior sampler $\xvec^* \mapsto \theta^1,\ldots,\theta^M \sim \p(\theta | \xvec^*)$
        (e.g. MCMC)
        \begin{algorithmic}
        \For{$b=1,\ldots,B$}
            \State $\xvec^* \gets$ Sample with replacement from the rows of $\xvec$
            \State Sample
                $\theta^1,\ldots,\theta^M \sim \p(\theta | \xvec^*)$
            % \Comment{Note that $\theta^m \in \rdom{D}$.}
            \State $\mathscr{G}^b \gets
                \frac{1}{M} \sum_{m=1}^M g(\theta^m)$
            % \Comment{$\mathscr{G}^b \approx \expect{p(\theta | \xvec^*)}{\g(\theta)}$.
                % Note that $\mathscr{G}^b \in \rdom{\gdim}$.}
        \EndFor
        \State \Return $\gcovboothat \gets \covhat{b \in [B]}{\sqrt{N} \mathscr{G}^b}$
        % \Comment{(Bootstrap estimate of $\cov{\fdist(\xvec)}{\sqrt{N} \expect{p(\theta | \xvec)}{\g(\theta)}}$)}
        % \State \Return $\gcovboothat$
        %
        \end{algorithmic}
    \end{algorithm}
    %
    %%%%%%%%%%%%%%%%%%%%%%%%%%%%%%%%%%%%%
    %%%%%%%%%%%%%%%%%%%%%%%%%%%%%%%%%%%%%
    
    \begin{algorithm}[H]
        \caption{IJ covariance (our proposal)}\algrlabel{ijalg}
        \textbf{Requires: } Log likelihood $\theta, \x_n \mapsto \ell(\x_n \vert \theta)$,
        posterior samples $\theta^1,\ldots,\theta^M \sim \p(\theta | \xvec)$
        % \textbf{Inputs (global variables):}\\
        % $\xvec = (\x_1, \ldots, \x_N)$ \Comment{Data}\\
        % $\theta^1, \ldots, \theta^M$
        %     \Comment{MCMC samples from $\p(\theta \vert \xvec)$}\\
        % $\theta \mapsto g(\theta)$ \Comment{Target function}\\
        % $x_n, \theta \mapsto \ell(\x_n | \theta)$ \Comment{Log likelihood function}\\
        
        \begin{algorithmic}
        % \algrlabel{IJ Covariance From MCMC}
        %
        \For{$n=1,\ldots,N$}
            \State $\hat{\infl}_n \gets N
                \covhat{m \in [M]}{\ell(\x_n \vert \theta^m), \g(\theta^m)}$
            % \Comment{Approximate posterior covariance}
        \EndFor
        \State \Return $\gcovijhat \gets \covhat{n\in [N]}{\hat{\infl}_n}$
            % \Comment{Approximate covariance across datapoints}
        % \State \Return $\gcovijhat$
        %
        \end{algorithmic}
    \end{algorithm}
    %
        
    %%%%%%%%%%%%%%%%%%%%%%%%%%%%%%%%%%%%%
    %%%%%%%%%%%%%%%%%%%%%%%%%%%%%%%%%%%%%
    % \end{minipage}
    
    \caption{Three algorithms for estimating $\gcovtrue$.}\figlabel{algorithm}
    \end{figure}

%%%%%%%%%%%%%%%%%%%%%%%%%%%%%%%%%%%%%%%%%%%%%%%%%%%
%%%%%%%%%%%%%%%%%%%%%%%%%%%%%%%%%%%%%%%%%%%%%%%%%%%
%
\section{The finite-dimensional case}
\seclabel{finite_dim}
In this section, we formally state theoretical conditions under which the IJ
covariance estimate is consistent in the finite-dimensional case.  For the
remainder of this section only, we assume that $\thetadim$ remains fixed as $N
\rightarrow \infty$.  Our assumptions will be slightly stronger than those
required for a classical finite-dimensional Bayesian central limit theorem
(BCLT) in order to control the data-dependence in the influence scores.

\subsection{Intuition}
\seclabel{bayes_clt}
Before we state our formal results in \secref{bayes_clt}, we will present some
high-level intuition to motivate the IJ covariance estimate, as well as
our data-dependent Bayesian central limit theorem of \secref{finite_dim_clts}.
%   Additionally, we
% will now motivate why the $O_p$-style bounds of existing Bayesian central limit
% theorems are insufficient to prove the consistency of the IJ covariance
% estimate.

For simplicity let us for the moment take the quantity of interest,
$g(\theta)$, to simply be $\g(\theta) = \theta$.
% Our
% formal results will be stated for generic $\g(\theta)$ and will follow
% from this
% the delta method for general differentiable $g(\theta)$.
%
% Classical CLT results, and the target variance.
According to \cite{kleijn:2012:bvm}, we might informally expect the following
central-limit theorem results to hold:
\begin{align*}
N \cov{\post}{\theta} \plim& \info^{-1}
    &\textrm{Bayesian CLT (BCLT)}\\
\sqrt{N}\left(\thetahat - \thetatrue\right) \dlim&
    \normdist\left(\cdot | 0, \info^{-1} \scorecov \info^{-1} \right)
    &\textrm{Frequentist CLT (FCLT)}\\
% \sqrt{N}\left(\expect{\post}{\theta} - \thetahat \right) \plim& 0
%     &\textrm{(Bayesian CLT)}\\
\norm{\expect{\post}{\theta} - \thetahat}_2 ={}& O_p(N^{-1}).
    &\textrm{Bayesian CLT (BCLT)}
\end{align*}
Here and throughout the paper, we somewhat loosely use
the term ``Bayesian CLT'' (BCLT) to refer to results concerning the asymptotic
properties of the posterior measure, and ``Frequentist CLT'' (FCLT) for results
concerning properties under data sampling.

The limiting variance we are trying to estimate is thus the sandwich covariance
matrix, $\gcovtrue = \info^{-1} \scorecov \info^{-1}$. Since $\info^{-1} \ne
\info^{-1} \scorecov \info^{-1}$ in general unless the model is correctly
specified, $\gcovtrue$ cannot simply be approximated by the Bayesian posterior
covariance.

% We could just use the plug-in estimator if it were available.
If one could compute $\thetahat$, and so $\scorecovhat$, and $\infohat$, then
one could form the sample estimator $\gcovhat := \infohat^{-1} \scorecov
\infohat^{-1}$, which would consistently estimate $\gcovhat$.  The IJ covariance
estimator is motivated by circumstances when it is not easy or desirable to
compute $\thetahat$, but when one can compute or approximate the posterior
expectations required to compute the IJ covariance estimator $\gcovij$.

% Taylor expanding is the trick
The reason that $\gcovij$ might be expected to approximate $\gcovtrue$
asymptotically can be seen by Taylor expanding the constituent terms of the
influence score $\infl_n$ and applying the CLT results given above.  Using
\eqref{ij_terms}, for any particular $n$, we expect that
\begin{align}\eqlabel{infl_heuristic}
    N^{-1} \infl_n = \cov{\post}{\theta, \ell(\x_n \vert \theta)} =
    \infohat^{-1} \ellgrad{1}(\x_n \vert \thetahat) + O_p(N^{-1}).
\end{align}
%
% (See \appref{bclt_uses} for a more detailed derivation.)
If \eqref{infl_heuristic} holds sufficiently uniformly in $n$
(in a sense which we analyze precisely below), then \eqref{infl_heuristic}
implies the asymptotic equivalence of $\gcovij$ and $\gcovtrue$:
\begin{align}
%
% \bar{\infl} \approx{}& \info^{-1} \meann \ellgrad{1}(\x_n \vert \thetahat) = 0
% \quad\quad \textrm{and} \nonumber\\
\infl(\x_n) \approx{}& \infohat^{-1} \ellgrad{1}(\x_n \vert \thetahat)
\quad\Rightarrow\quad
\gcovij ={} \meann \infl_n \infl_n^T - \bar{\infl} \, \bar{\infl}^T
% \nonumber\\
% \approx{}&
%     \infohat^{-1} \left( \meann
%         \ellgrad{1}(\x_n \vert \thetahat)
%         \ellgrad{1}(\x_n \vert \thetahat)^T \right)
%     \infohat^{-1}
% \nonumber\\
\approx{} \infohat^{-1} \scorecovhat \infohat^{-1}, \eqlabel{ij_heuristic}
% = \gcovhat \plim \gcovtrue.
%
\end{align}
and we know that $\infohat^{-1} \scorecovhat \infohat^{-1} = \gcovhat \plim
\gcovtrue$.

However, in general, the $\ordp{N^{-1}}$ term depends in \eqref{infl_heuristic}
on $\x_n$, and the IJ covariance depends on {\em every} $\infl(\x_n): n \in
[N]$, some of which may be pathologically behaved, especially under
misspecification.
The question, then, is whether the approximation \eqref{infl_heuristic} holds
sufficiently strongly for us to apply the reasoning of \eqref{ij_heuristic} ---
in particular, whether we can safely sum the squares of the $\ordp{N^{-1}}$
terms of \eqref{infl_heuristic}.  It is to this task we now turn in the next
section.
% Furthermore, it is not reasonable to assume that
% the BCLT residual is uniformly small for all datapoints $\x_n$. After all, the
% most extreme datapoints in a growing dataset might be quite extreme indeed.

%%%%%%%%%%%%%%%%%%%%%%%%%%%%%%%%%%%%%%%%%%%%%%%%%%%%

\subsection{Consistency theorem}
\seclabel{finite_dim_clts}

To prove the consistency of the IJ covariance,
we assume the following conditions on the model.

%%%%%%%%%%%%%%%%%%%%%%%%%%%%%%%%%%%%%%%%%
%%%%%%%%%%%%%%%%%%%%%%%%%%%%%%%%%%%%%%%%%
\begin{assu}\assulabel{bayes_clt}

Let the following conditions hold for the model and data generating process.
\begin{enumerate}
\item \itemlabel{finite_dim} The dimension of $\theta$ stays fixed as $N$ changes.
\item \itemlabel{prior_smooth}
The prior $\prior(\theta)$ is four times continuously differentiable in a
neighborhood of $\thetatrue$, and is non-zero for all $\theta \in \thetadom$.
\item \itemlabel{prior_proper}
The prior is proper.%
\footnote{Since our analysis is asymptotic, this assumption on the prior can
be replaced by the assumption that they hold for the posterior after some
finite number of observations, which would then be incorporated into a ``prior''
for the remaining, unboundedly large number of observations.}
\item \itemlabel{loglik_smooth} The functions $\lik(\theta)$ and $\theta \mapsto
\ell(\xn \vert \theta)$ and are four times continuously differentiable in a
neighborhood of $\thetatrue$, and the interchange of partial differentiation
with respect to $\theta$ and expectations with respect to $\xfdist$ is
justified.
\item \itemlabel{loglik_ulln} There exists a function $M(\xn)$ and a
$\deltaulln > 0$ such that
\begin{align*}
\norm{
    \sup_{\theta \in \thetaball{\deltaulln}}
        \ellgrad{4}(\xn \vert \theta)
}_2 \le M(\xn)
\quad\textrm{and}\quad
\expect{\fdist(\xn)}{M(\xn)^2} < \infty.
\end{align*}
\item \itemlabel{strict_opt}
The log likelihood has a strict optimum in the following sense:
    \begin{itemize}
    \item
    A unique maximum
    $\thetatrue = \mathrm{argmax}_{\theta \in \thetadom} \lik(\theta)$
    exists.
    \item The matrix $\info$ is positive definite.
    \item For any $\delta > 0$, with probability approaching one as $N
    \rightarrow \infty$,
    \begin{align*}
    \sup_{\theta: \thetadom \setminus \thetaball{\delta}}
    \left( \meann \ell(\x_n \vert \thetatrue) -
           \meann \ell(\x_n \vert \theta)
    \right) > 0.
    \end{align*}
    \end{itemize}
\end{enumerate}

\end{assu}
%%%%%%%%%%%%%%%%%%%%

% Under \assuref{bayes_clt} \itemref{loglik_ulln}, a ULLN will apply to
% $\ellgrad{k}(\x_n \vert \theta)$ for $k \in \{0, 1, 2, 3\}$ (see a proof in
% \lemref{ulln} of \appref{lemmas}).

Under \assuref{bayes_clt}, a BCLT will apply in total variation distance by
classical results (\cite{van:2000:asymptotic} Chapter 10), but we require an
expansion of posterior moments.  Series expansions of posterior moments are
studied by \cite{kass:1990:posteriorexpansions} (under stricter
differentiability conditions than ours), but with no control over the data
dependence in the error when the expectation depends on the data.

We now state conditions on functions $\phi(\theta)$ and $\phi(\theta, \x_n)$
sufficient to form a controllable series expansion of their posterior
expectations.

%%%%%%%%%%%%%%%%%%%%%%%%%%%%%%%%%%%%%%%%%%%%%%%%%%%%%%%%%%%%%%%%%%%%%%%%%%%%%%
%%%%%%%%%%%%%%%%%%%%%%%%%%%%%%%%%%%%%%%%%%%%%%%%%%%%%%%%%%%%%%%%%%%%%%%%%%%%%%
%%%%%%%%%%%%%%%%%%%%%%%%%%%%%%%%%%%%%%%%%%%%%%%%%%%%%%%%%%%%%%%%%%%%%%%%%%%%%%

\begin{defn}\deflabel{no_data_bclt_okay}
A function of $\theta$, $\phi(\theta)$, is ``$K$-th order
BCLT-okay'' if
\begin{enumerate}
\item \itemlabel{eprior_finite} The prior expectation $\expect{\prior(\theta)}{\norm{\phi(\theta)}_2}$
is finite.%
\footnote{As in \assuref{bayes_clt}, these assumptions may alternatively be taken
to hold almost surely after some finite number of observations without
affecting the asymptotic results.}
\item \itemlabel{cont_diffable} The map $\theta \mapsto \phi(\theta)$ is $K$ times continuously
differentiable.
\end{enumerate}
\end{defn}

When the function depends on the data, we make the following additional requirements.

%%%%%%%%%%%%%%%%%%%%%%%%%%%%%%%%%%%%%%%%%%%%%%%%%%%%%%%%%%%%%%%%%%%%%%%%%%%%%%

\begin{defn}\deflabel{bclt_okay}
A function $\phi(\theta, \x_n)$ depending on a single datapoint is $K$-th order
BCLT-okay if, for each $k = 1, \ldots, K$,
\begin{enumerate}
\item\itemlabel{bclt_okay_as}
    $\theta \mapsto \phigrad{k}(\theta, \x_n)$ is BCLT-okay
    $\fdist$-almost surely
\item\itemlabel{prior_op1}
    $\meann \expect{\prior(\theta)}{\norm{\phi(\theta, \x_n)}^2_2} = O_p(1)$.
\item\itemlabel{grad_op1}
For some $\delta > 0$,
$\sup_{\theta \in \thetaball{\delta}}
    \meann \norm{\phigrad{k}(\theta, \x_n)}^2_2 = O_p(1)$.
\end{enumerate}
\end{defn}

% First a classical FCLT
\Assuref{bayes_clt} and \defref{no_data_bclt_okay} are sufficient (indeed,
stronger than necessary) to prove a classical FCLT, which we now state in
\propref{freq_clt}. \propref{freq_clt} implies that our target covariance is the
delta method sandwich covariance $\gcovtrue = \ggrad{1}(\thetatrue)\info^{-1}
\scorecov \info^{-1} \ggrad{1}(\thetatrue)^T$.

%%%%%%%%%%%%%%%%%%%%%
%%%%%%%%%%%%%%%%%%%%%
\begin{prop}\proplabel{freq_clt}(\cite[Chapter 7]{van:2000:asymptotic},
    see also \appref{freq_clt_proof})
Let \assuref{bayes_clt} hold, and assume that $\g(\theta)$ is third-order
BCLT-okay. Then $\thetahat \plim \thetatrue$,and a misspecified FCLT applies in
the following sense:
\begin{align*}
\sqrt{N}\left(g(\thetahat) - g(\thetatrue)\right)
\dlim
\normdist\left(0, \gcovtrue \right)
\quad\textrm{where}\quad
\gcovtrue =
    \ggrad{1}(\thetatrue)
    \info^{-1} \scorecov \info^{-1} \ggrad{1}(\thetatrue)^\trans.
\end{align*}
\end{prop}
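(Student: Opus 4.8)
The plan is to observe that, once the (asymptotically negligible) prior is accounted for, \propref{freq_clt} is just the asymptotic normality of a possibly misspecified M-estimator followed by the delta method, and that \assuref{bayes_clt} together with the third-order BCLT-okay hypothesis on $\g$ is engineered precisely to supply the hypotheses of the classical theory (\cite[Chapters 5 and 7]{van:2000:asymptotic}). The key preliminary remark is that $\likhat(\theta) = \meann \ell(\x_n \vert \theta) + N^{-1}\logprior(\theta)$ differs from the sample-average log-likelihood by a quantity whose first two derivatives are $\ordp{N^{-1}}$ uniformly on $\thetaball{\deltaulln}$ (using properness and smoothness of $\prior$), so the prior cannot affect any of the limits; asymptotically $\thetahat$ behaves exactly like an MLE.

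First I would establish \emph{consistency}, $\thetahat \plim \thetatrue$. The third bullet of the strict-optimality part of \assuref{bayes_clt} is exactly a well-separation condition for the population maximizer $\thetatrue$, which, combined with the uniform law of large numbers furnished by the domination condition, yields consistency of the argmax estimator by the standard argument. Since the smoothness assumptions place $\thetatrue$ in the interior of $\thetadom$, consistency implies that $\thetahat$ is interior with probability tending to one, hence solves the stationarity equation $\likhatk{1}(\thetahat) = 0$ eventually.

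Next I would derive \emph{asymptotic normality of $\thetahat$} by the usual Taylor argument. Write $0 = \likhatk{1}(\thetahat) = \likhatk{1}(\thetatrue) + \bar H_N(\thetahat - \thetatrue)$ with $\bar H_N := \int_0^1 \likhatk{2}(\thetatrue + s(\thetahat-\thetatrue))\,ds$. Two facts are needed. For the score: the smoothness-plus-interchange condition gives $\expect{\fdist(\xn)}{\ellgrad{1}(\xn \vert \thetatrue)} = \likk{1}(\thetatrue) = 0$, so the ordinary multivariate CLT gives $\sqrt{N}\,\likhatk{1}(\thetatrue) \dlim \normdist(0, \scorecov)$, the prior piece being $\ord{N^{-1/2}}$. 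For the Hessian: the domination condition provides an integrable envelope for $\ellgrad{3}$ near $\thetatrue$, which upgrades the pointwise LLN for $\likhatk{2}$ to uniform-in-probability convergence on a neighborhood; together with $\thetahat \plim \thetatrue$ this gives $\bar H_N \plim \likk{2}(\thetatrue) = -\info$, invertible by positive-definiteness of $\info$. Slutsky's lemma then delivers $\sqrt{N}(\thetahat - \thetatrue) = -\bar H_N^{-1}\sqrt{N}\,\likhatk{1}(\thetatrue) \dlim \normdist(0, \info^{-1}\scorecov\info^{-1})$. Finally, third-order BCLT-okay makes $\g$ continuously differentiable near $\thetatrue$, so the delta method applied to the previous display gives
\[
\sqrt{N}\bigl(\g(\thetahat) - \g(\thetatrue)\bigr) \dlim \normdist\bigl(0,\ \ggrad{1}(\thetatrue)\,\info^{-1}\scorecov\info^{-1}\,\ggrad{1}(\thetatrue)^\trans\bigr) = \normdist(0, \gcovtrue).
\]

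The step requiring genuine care is the uniform-in-probability convergence of $\likhatk{2}$ on a neighborhood of $\thetatrue$: this is where the fourth-derivative domination condition of \assuref{bayes_clt} does its work (bounding the oscillation of the third derivatives of $\ell$ by an integrable envelope, which licenses the ULLN for the second derivatives), and it is also where one must confirm that the prior's second derivative contributes only $\ordp{N^{-1}}$. Everything else is routine bookkeeping of the misspecified M-estimation/delta-method argument, which is why the result is attributed to \cite{van:2000:asymptotic}.
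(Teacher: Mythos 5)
Your proof is correct and follows essentially the same route as the paper's: a Taylor expansion of the score around $\thetatrue$ with an integrated Hessian, a CLT for the centered score at $\thetatrue$, a ULLN plus consistency to send the Hessian factor to $-\info$, Slutsky's theorem, and the delta method. The only cosmetic difference is that you obtain consistency via the standard well-separated-argmax argument, whereas the paper derives a quantitative local bound $\norm{\thetahat - \thetatrue}_2 \le 4 \infoev^{-1}\norm{\likhatk{1}(\thetatrue)}_2$ from the same Taylor expansion; both are valid, and your explicit accounting for the $\ord{N^{-1}}$ prior contribution and the final delta-method step is, if anything, slightly more complete than the paper's write-up.
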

%
% \begin{proofsketch}{\appref{freq_clt_proof}}
% %
% \Propref{freq_clt} is only a minor variation of the classical proof of
% asymptotic normality of the maximum likelihood estimator, accounting for the
% possibility that, under misspecification, we may have $\info^{-1} \ne
% \scorecov$. \propref{freq_clt} is proven under weaker conditions than ours in,
% for example, \cite[Chapter 7]{van:2000:asymptotic}.  In \appref{freq_clt_proof}
% we include our own proof under our stated conditions for completeness.
% % conditions in the corollary to Theorem 3 of \cite{huber:1967:nonstandard}.
% %
% \end{proofsketch}%

Using these assumptions, we can state a BCLT with control over a data-dependent
residual.  The following theorem is our primary technical contribution; from it,
all our consistency results will follow directly.

%%%%%%%%%%%%%%%%%%%%%%%%%%%%%%%%%%%%%%%%%
%%%%%%%%%%%%%%%%%%%%%%%%%%%%%%%%%%%%%%%%%
%%%%%%%%%%%%%%%%%%%%%%%%%%%%%%%%%%%%%%%%%
%
\begin{thm}\thmlabel{bayes_clt_main}(See proof in \appref{bayes_clt_proof}.)
Let \assuref{bayes_clt} hold, and assume that $\phi(\theta, \x_n)$ is
third-order BCLT-okay.  Let $\hat{M}$ denote the
$\thetadim\times\thetadim\times\thetadim\times\thetadim$ array of fourth moments
of the Gaussian distribution $\normdist(0, \infohat^{-1})$. With
$\fdist$-probability approaching one, for each $n$,
\begin{align}
\expect{\post}{\phi(\theta, \x_n)} - \phi(\thetahat, \x_n)
={}&
N^{-1} \left(
    \frac{1}{2} \phigrad{2}(\thetahat, \x_n) \infohat^{-1} +
    \frac{1}{6} \phigrad{1}(\thetahat, \x_n) \likhatk{3}(\thetahat) \hat{M}
\right) +
\nonumber\\&
\ordlog{N^{-2}} \resid{\phi}(\x_n),
\textrm{ where }\meann \resid{\phi}(\x_n)^2 = O_p(1).
\eqlabel{bclt_expansion}
\end{align}
%
% where $\meann \resid{\phi}(\x_n)^2 = O_p(1)$.

In \eqref{bclt_expansion},
$\resid{\phi}(\x_n)$ is a scalar to be repeated componentwise, and
derivatives with respect to $\theta$ are summed against the indices of the
Gaussian arrays.\footnote{Although a slight abuse of notation, these simple
conventions avoid a tedious thicket of indicial sums. Concretely, we mean that
\begin{align*}
% $\expect{\normdist(\tau \vert 0, \infohat^{-1})}{\tau_a \tau_b}$
\phigrad{2}(\thetahat, \x_n) \infohat^{-1} ={}&
    \sum_{a,b=1}^{\thetadim}
    \fracat{\partial^2 \phi(\theta, \x_n)}
           {\partial\theta_a \partial\theta_b}{\thetahat}
   (\infohat^{-1})_{ab}
\\
\phigrad{1}(\thetahat, \x_n) \likhatk{3}(\thetahat) \hat{M}
={}&
\sum_{a,b,c,d=1}^{\thetadim}
\fracat{\partial \phi(\theta, \x_n)}{\partial\theta_a }{\thetahat}
\fracat{\partial^3 \lik(\theta)}
       {\partial\theta_b \partial\theta_c \partial \theta_c }{\thetahat}
\hat{M}_{abcd}
\end{align*}
each of which is a vector expression of the same length as $\phi(\theta,
\x_n)$.  The term $\resid{\phi}(\x_n)$ \eqref{bclt_expansion} should
be thought of as a vector of the same length as $\phi(\theta, \x_n)$
with the scalar $\resid{\phi}(\x_n)$ in each entry.
}
\end{thm}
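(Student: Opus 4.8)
The plan is to run a Laplace-type expansion (cf.\ the classical posterior-moment expansions of \cite{kass:1990:posteriorexpansions}, here refined to carry the $\x_n$-dependence through every remainder) on the ratio $\expect{\post}{\phi(\theta,\x_n)} = \int_{\thetadom}\phi(\theta,\x_n)e^{N\likhat(\theta)}d\theta \,/\, \int_{\thetadom} e^{N\likhat(\theta)}d\theta$ and then subtract the leading term $\phi(\thetahat,\x_n)$. First I would pass to the high-probability regularity event on which $\thetahat$ exists in $\thetaball{\deltaulln}$, $\infohat=-\likhatk{2}(\thetahat)$ is positive definite and within a constant factor of $\info$, and (by the strict-optimum condition of \assuref{bayes_clt}) $\meann\ell(\x_n\vert\thetahat)-\meann\ell(\x_n\vert\theta)$ is bounded below by a positive quadratic away from $\thetahat$; on this event all constants below are deterministic. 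Then I would split $\thetadom$ into a ball $\thetaball{\delta}$ about $\thetahat$ and its complement. On the complement $e^{N(\likhat(\theta)-\likhat(\thetahat))}\le e^{-cN}$, so its contribution to both integrals is exponentially small, and with $\meann\expect{\prior(\theta)}{\norm{\phi(\theta,\x_n)}_2^2}=O_p(1)$ from \defref{bclt_okay} this contribution is absorbed into the claimed $\ordlog{N^{-2}}\resid{\phi}(\x_n)$ error.

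On $\thetaball{\delta}$ I would substitute $u=\sqrt{N}(\theta-\thetahat)$, reducing the ratio to $\int\phi(\thetahat+u/\sqrt{N},\x_n)\exp(-u^\trans\infohat u/2+\rho_N(u))\,du$ over $\int\exp(-u^\trans\infohat u/2+\rho_N(u))\,du$, where $\rho_N(u)=\tfrac{1}{6\sqrt{N}}\likhatk{3}(\thetahat)[u,u,u]+\tfrac{1}{24N}\likhatk{4}(\thetatil_u)[u,u,u,u]$ is the third-order Taylor remainder of $N\likhat$ at $\thetahat$ (the linear term vanishes, the quadratic term is $-u^\trans\infohat u/2$), with the quartic piece controlled uniformly by $M(\xn)$ via the uniform-fourth-derivative condition of \assuref{bayes_clt}. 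I would expand $e^{\rho_N}=1+\rho_N+\tfrac12\rho_N^2+\cdots$ and $\phi(\thetahat+u/\sqrt{N},\x_n)$ to second order (its third-order remainder controlled using the smoothness and integrability conditions on $\phi$ in \defref{no_data_bclt_okay} and \defref{bclt_okay}), so the integrand becomes the $\normdist(0,\infohat^{-1})$ density times a polynomial in $u$ with $\x_n$-dependent coefficients, plus remainders. Integrating term by term against $\normdist(0,\infohat^{-1})$: the $O(N^{-1/2})$ terms are odd in $u$ and vanish; the $O(N^{-1})$ terms produce exactly $\tfrac12\phigrad{2}(\thetahat,\x_n)\infohat^{-1}+\tfrac16\phigrad{1}(\thetahat,\x_n)\likhatk{3}(\thetahat)\hat{M}$ (the array $\hat{M}$ being the fourth moment of $\normdist(0,\infohat^{-1})$) together with a term proportional to $\phi(\thetahat,\x_n)$ coming from $\likhatk{4}$ and $(\likhatk{3})^2$ --- the $O(N^{-1})$ correction to the Gaussian normalizer; and the $O(N^{-3/2})$ terms are again odd and vanish. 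The structural payoff is that the denominator is the $\phi\equiv 1$ case of the numerator, so its $O(N^{-1})$ term is precisely that $\phi(\thetahat,\x_n)$-proportional normalizer correction; forming the ratio cancels it and leaves exactly the two stated terms, with residual of order $\ordlog{N^{-2}}$ --- the logarithmic slack absorbing the outer shell $\norm{u}_2\gtrsim\mathrm{polylog}(N)$ and the moduli of continuity at the Lagrange points $\thetatil_u$ that make the oddness cancellations only approximate.

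The step I expect to be the main obstacle is not the expansion itself but the claim $\meann\resid{\phi}(\x_n)^2=O_p(1)$, i.e.\ controlling the residual \emph{uniformly in $n$} rather than for each fixed $n$ --- which is exactly why the hypotheses go beyond a classical BCLT. The plan is to write $\resid{\phi}(\x_n)$ as a finite sum of explicit pieces (the localization error; the parts of the $\likhatk{4}$ and $\phigrad{3}$ remainders tied to the $u$-dependence of the Lagrange points, integrated against Gaussian tails; and the higher powers of $\rho_N$), and to bound each piece's square, on the regularity event, by a deterministic constant times one of $\meann\expect{\prior(\theta)}{\norm{\phi(\theta,\x_n)}_2^2}$, $\sup_{\theta\in\thetaball{\delta}}\meann\norm{\phigrad{k}(\theta,\x_n)}_2^2$ for $k\le 3$, or $\meann M(\x_n)^2$ --- each $O_p(1)$ by \defref{bclt_okay} and (for the last, via the LLN) the uniform-fourth-derivative condition of \assuref{bayes_clt} --- and then apply Minkowski's inequality over the finitely many pieces. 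The delicate bookkeeping is to keep the powers of $N$ matched to the integrability exponents so that each $\x_n$-dependent remainder quantity enters at precisely the rate the hypotheses supply, and to check that the normalizer-correction cancellation between numerator and denominator survives at the residual level, so that no $\phi(\thetahat,\x_n)$-proportional $O(N^{-1})$ term is left behind.
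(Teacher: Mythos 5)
Your proposal follows essentially the same route as the paper's proof: pass to a high-probability regularity event, localize to an inner ball of radius $O(\log N/\sqrt{N})$ around $\thetahat$ with the intermediate and far regions made negligible via Gaussian domination and the strict-optimum/prior-integrability conditions, Taylor expand $\phi$ and $N\likhat$ after the $\sqrt{N}$ rescaling, integrate against $\normdist(0,\infohat^{-1})$ with odd-moment cancellation to extract the two $N^{-1}$ terms, cancel the normalizer correction in the ratio, and control the residual uniformly in $n$ by building it from prior expectations, truncated-Gaussian expectations of Taylor remainders, and derivative norms at $\thetahat$ so that sample averages over $n$ commute with the bounds (the paper's \lemref{integral_commuting_sums}, \lemref{taylor_commuting_sums}, \lemref{resid_op1}). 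The only cosmetic difference is that the paper never computes the $\likhatk{4}$ and $(\likhatk{3})^2$ normalizer corrections you describe: it keeps the exact term $\phi(\thetahat,\x_n)\int_{R_1}\exp\bigl(N\likhat(\theta)-N\likhat(\thetahat)\bigr)d\tau$ in the numerator and divides by the (nearly identical) denominator so these corrections cancel automatically, whereas you cancel them by explicit identification --- the mechanism is the same.
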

%
% \begin{proofsketch}{\appref{bayes_clt_proof}}
% %
% Our proof of \thmref{bayes_clt_main} follows other classical proofs of the BCLT
% (e.g. \cite[Theorem 1.4.2]{ghosh:2003:bayesian}) by dividing $\thetadom$ into
% three regions, only one of which --- a shrinking ball centered at $\thetahat$ ---
% contributes leading order terms to the posterior expectation. Within this
% dominant shrinking ball, we can Taylor expand the posterior integrals and
% collect terms to give the desired result.
% %
% Our distinct contribution is to track the data dependence in the residual terms
% $\resid{\phi}(\x_n)$ such that, when $\phi(\theta, \x_n)$ is third-order
% BCLT-okay, the residual is square summable with $\fdist$-probability approaching
% one.

% Specifically \thmref{bayes_clt_main} follows from \lemref{bayes_clt}, which
% establishes the expansion above with an explicit, non-stochastic  expression for
% the residual $\resid{\phi}(\x_n)$, and \lemref{resid_op1}, which shows that
% $\resid{\phi}(\x_n)$ is square-summable.
% %
% \end{proofsketch}

%%%%%%%%%%%%%%%%%%%%%%%%%%%%%%%%%%%%%%%%%
%%%%%%%%%%%%%%%%%%%%%%%%%%%%%%%%%%%%%%%%%
%%%%%%%%%%%%%%%%%%%%%%%%%%%%%%%%%%%%%%%%%

The limiting frequentist distribution of $\expect{\post}{g(\theta)}$ under
misspecification, proven under weaker conditions than ours in \cite[Theorem
10.8]{van:2000:asymptotic}, follows from \thmref{bayes_clt_main}.  We state and
prove the result here, since it establishes the estimand which the IJ covariance
is targeting.

%%%%%%%%%%%%%%%%%%%%%%%%%%%%%%%%%%%%%%%%%
%%%%%%%%%%%%%%%%%%%%%%%%%%%%%%%%%%%%%%%%%
\begin{cor}\corlabel{bayes_expectation_dist} \cite[Theorem 10.8]{van:2000:asymptotic}
Let \assuref{bayes_clt} hold, and assume that $\g(\theta)$ is
third-order BCLT-okay.  Then
\begin{align*}
\sqrt{N} \left( \expect{\post}{g(\theta)} - g(\thetatrue) \right)
    \dlim \normdist\left(0, \gcovtrue\right).
\end{align*}
%
% where
% %
% \begin{align*}
% \gcovtrue := \ggrad{1}(\thetatrue) \info^{-1} \scorecov
%             \info^{-1} \ggrad{1}(\thetatrue)^T.
% \end{align*}
%
\begin{proof}
Under \assuref{bayes_clt} and \defref{no_data_bclt_okay},
$\ggrad{1}(\thetahat)$, $\infohat$, and $\likhatk{3}(\thetahat)$ are all
$O_p(1)$.  (See \lemref{regular} in \appref{bayes_clt_proof}.)
By \thmref{bayes_clt_main} we thus have,
\begin{align*}
\sqrt{N} \left( \expect{\post}{g(\theta)} - g(\thetatrue) \right) =&
\sqrt{N} \left( g(\thetahat) - g(\thetatrue) \right) + O_p(N^{-1/2}),
\end{align*}
The distributional limit them follows from \propref{freq_clt} and Slutsky's
theorem.
\end{proof}
\end{cor}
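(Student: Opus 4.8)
The plan is to reduce the statement about the posterior mean to the frequentist central limit theorem for the MAP estimator, \propref{freq_clt}, by showing that $\expect{\post}{g(\theta)}$ and the plug-in estimate $g(\thetahat)$ differ by a $\littleop{N^{-1/2}}$ term, and then invoking Slutsky's theorem. The useful observation up front is that the limiting covariance $\gcovtrue$ in the claim is exactly the delta-method sandwich covariance already identified in \propref{freq_clt}, so there is no separate work needed to pin down the limiting variance.

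First I would apply \thmref{bayes_clt_main} with the choice $\phi(\theta, \x_n) = g(\theta)$, that is, a function that happens not to depend on the datapoint index $n$. Since $g$ is assumed third-order BCLT-okay in the sense of \defref{no_data_bclt_okay}, the hypotheses of \thmref{bayes_clt_main} are met: the data-free conditions of \defref{no_data_bclt_okay} imply the data-dependent conditions of \defref{bclt_okay} trivially, because every sample average over $n$ appearing there collapses to a single deterministic quantity. The theorem then gives, with $\fdist$-probability approaching one,
\begin{equation*}
\expect{\post}{g(\theta)} - g(\thetahat)
= N^{-1}\left(
    \frac{1}{2} \ggrad{2}(\thetahat) \infohat^{-1}
    + \frac{1}{6} \ggrad{1}(\thetahat) \likhatk{3}(\thetahat) \hat{M}
  \right)
+ \ordlog{N^{-2}} \resid{g} ,
\end{equation*}
where $\resid{g}$ carries no index $n$ and satisfies $(\resid{g})^2 = \ordp{1}$.

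Second I would check that each factor in the leading $N^{-1}$ bracket is $\ordp{1}$. This is precisely the content of the regularity lemma cited inside \thmref{bayes_clt_main} (see \lemref{regular} in \appref{bayes_clt_proof}): under \assuref{bayes_clt} together with the BCLT-okay conditions on $g$, the arrays $\ggrad{1}(\thetahat)$, $\ggrad{2}(\thetahat)$, $\infohat^{-1}$ and $\likhatk{3}(\thetahat)$ all converge in probability to their population analogues and hence are $\ordp{1}$. The leading term is therefore $\ordp{N^{-1}}$, the remainder $\ordlog{N^{-2}} \resid{g}$ is $\littleop{N^{-1}}$, so the whole right-hand side is $\ordp{N^{-1}}$, and multiplying by $\sqrt{N}$ yields $\sqrt{N}\bigl(\expect{\post}{g(\theta)} - g(\thetahat)\bigr) = \ordp{N^{-1/2}} = \littleop{1}$.

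Finally I would write
\begin{equation*}
\sqrt{N}\bigl(\expect{\post}{g(\theta)} - g(\thetatrue)\bigr)
= \sqrt{N}\bigl(g(\thetahat) - g(\thetatrue)\bigr)
  + \sqrt{N}\bigl(\expect{\post}{g(\theta)} - g(\thetahat)\bigr) ,
\end{equation*}
apply \propref{freq_clt} to the first summand to obtain $\sqrt{N}(g(\thetahat) - g(\thetatrue)) \dlim \normdist(0, \gcovtrue)$, use the $\littleop{1}$ bound just established for the second summand, and conclude with Slutsky's theorem. As for the main obstacle: there is essentially none at this stage, because \thmref{bayes_clt_main} has already absorbed all the difficulty --- its delicate part, controlling the data dependence of the posterior-moment remainder strongly enough to later sum its square over $n$, is needed only for the influence scores $\infl_n$, whereas here $g$ is a single fixed function whose remainder is a single $\ordp{1}$ scalar. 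The only points to verify with any care are (i) that a data-free $g$ legitimately falls under the hypotheses of \thmref{bayes_clt_main}, and (ii) the elementary bookkeeping that turns the $N^{-1}$ leading term and the $\ordlog{N^{-2}}$ remainder into a $\littleop{N^{-1/2}}$ correction after rescaling by $\sqrt{N}$.
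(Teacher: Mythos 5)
Your proposal is correct and follows essentially the same route as the paper: apply \thmref{bayes_clt_main} to $\phi = g$, use \lemref{regular} to bound the leading $N^{-1}$ term so that $\sqrt{N}\left(\expect{\post}{g(\theta)} - g(\thetahat)\right) = \ordp{N^{-1/2}}$, and conclude via \propref{freq_clt} and Slutsky. The only difference is that you spell out why a data-free $g$ satisfies the data-dependent hypotheses of \thmref{bayes_clt_main}, a step the paper leaves implicit.
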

%%%%%%%%%%%%%%%%%%%%%%%%%%%%%%%%%%%%%%%%%

Our main result, the consistency of $\gcovij$, follows from
\thmref{bayes_clt_main} and \corref{bayes_expectation_dist}.

%%%%%%%%%%%%%%%%%%%%%%%%%%%%%%%%%%%%%
%%%%%%%%%%%%%%%%%%%%%%%%%%%%%%%%%%%%%
\begin{thm}\thmlabel{ij_consistent}(See proof in \appref{ij_consistent}.)
Assume that the functions $\g(\theta)$, $\ell(\xn \vert \theta)$, and
$\g(\theta) \ell(\xn \vert \theta)$ are all third-order BCLT-okay, and let
\assuref{bayes_clt} hold. Then the infinitesimal jackknife variance estimate is
consistent. Specifically,
\begin{align*}
    \gcovij ={}& 
        \ggrad{1}(\thetahat) \infohat^{-1} \scorecovhat \infohat^{-1} \ggrad{1}(\thetahat)^\trans + \ordlogp{N^{-1}} 
        \plim \gcovtrue.
\end{align*}
\end{thm}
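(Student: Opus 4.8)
The plan is to apply \thmref{bayes_clt_main} to the three functions $\g(\theta)$, $\ell(\xn \vert \theta)$, and $\g(\theta)\ell(\xn \vert \theta)$ --- all assumed third-order BCLT-okay --- and then assemble the resulting expansions inside the posterior covariance defining the influence score in \eqref{ij_terms}. Writing $\infl_n = N\big(\expect{\post}{\g(\theta)\ell(\x_n \vert \theta)} - \expect{\post}{\g(\theta)}\,\expect{\post}{\ell(\x_n \vert \theta)}\big)$ and substituting \eqref{bclt_expansion} for each of the three posterior expectations, the zeroth-order terms $\g(\thetahat)\ell(\x_n \vert \thetahat)$, $\g(\thetahat)$, and $\ell(\x_n \vert \thetahat)$ cancel by bilinearity of the covariance. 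Among the surviving $N^{-1}$ terms, the cubic pieces $\tfrac16 \phigrad{1}\likhatk{3}(\thetahat)\hat{M}$ cancel because $\phigrad{1}$ obeys the product (Leibniz) rule at $\thetahat$, and the Hessian pieces $\tfrac12 \phigrad{2}\infohat^{-1}$ collapse --- again by Leibniz, since $(\g\ell)_{(2)} = \g_{(2)}\ell + 2\,\g_{(1)}\otimes\ell_{(1)} + \g\,\ell_{(2)}$ symmetrically --- to exactly the cross term $\ggrad{1}(\thetahat)\infohat^{-1}\ellgrad{1}(\x_n \vert \thetahat)$. This reproduces the heuristic \eqref{infl_heuristic}: with $\fdist$-probability approaching one,
\begin{align*}
\infl_n = \ggrad{1}(\thetahat)\,\infohat^{-1}\,\ellgrad{1}(\x_n \vert \thetahat) + \ordlog{N^{-1}}\,R_n, \qquad\text{with}\qquad \meann R_n^2 = \ordp{1},
\end{align*}
where $R_n$ is a combination of $\resid{\g\ell}(\x_n)$, $\norm{\g(\thetahat)}_2\,\resid{\ell}(\x_n)$, $\abs{\ell(\x_n \vert \thetahat)}\,\resid{\g}$, and $N^{-1}$-scale products of the leading error terms.

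Given this, I would form $\gcovij = \meann \infl_n \infl_n^\trans - \bar\infl\,\bar\infl^\trans$ as in \eqref{ij_heuristic} and bound the two pieces separately. Expanding $\meann \infl_n \infl_n^\trans$, the main term is $\ggrad{1}(\thetahat)\infohat^{-1}\big(\meann \ellgrad{1}(\x_n \vert \thetahat)\ellgrad{1}(\x_n \vert \thetahat)^\trans\big)\infohat^{-1}\ggrad{1}(\thetahat)^\trans$, whose middle factor is $\scorecovhat$ by \defref{loglik_def}. The cross terms $\ggrad{1}(\thetahat)\infohat^{-1}\meann \ellgrad{1}(\x_n \vert \thetahat)R_n^\trans$ are controlled by Cauchy--Schwarz, since $\norm{\meann \ellgrad{1}(\x_n \vert \thetahat)R_n^\trans} \le (\meann \norm{\ellgrad{1}(\x_n \vert \thetahat)}_2^2)^{1/2}(\meann R_n^2)^{1/2}$ and $\meann \norm{\ellgrad{1}(\x_n \vert \thetahat)}_2^2 = \trace{\scorecovhat} = \ordp{1}$ (this and the $\ordp{1}$ bounds on $\ggrad{1}(\thetahat)$, $\infohat$, $\scorecovhat$ come from \lemref{regular}); multiplying by the $\ordlog{N^{-1}}$ prefactor makes these $\ordlogp{N^{-1}}$, while $\meann R_n R_n^\trans$ contributes only $\ordlogp{N^{-2}}$. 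For the mean term, first-order optimality $\likhatk{1}(\thetahat) = 0$ forces $\meann \ellgrad{1}(\x_n \vert \thetahat) = -N^{-1}\logprior_{(1)}(\thetahat) = \ordp{N^{-1}}$, so $\bar\infl = \ordlogp{N^{-1}}$ and $\bar\infl\,\bar\infl^\trans = \ordlogp{N^{-2}}$ is negligible. Collecting, $\gcovij = \ggrad{1}(\thetahat)\infohat^{-1}\scorecovhat\infohat^{-1}\ggrad{1}(\thetahat)^\trans + \ordlogp{N^{-1}}$.

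The final convergence to $\gcovtrue$ then follows from $\thetahat \plim \thetatrue$ (\propref{freq_clt}), the continuity of $\theta \mapsto \ggrad{1}(\theta)$, and the laws of large numbers $\infohat \plim \info$ and $\scorecovhat \plim \scorecov$ that already underpin \propref{freq_clt} and \corref{bayes_expectation_dist} under \assuref{bayes_clt}, together with the continuous-mapping theorem and Slutsky's theorem to absorb the $\ordlogp{N^{-1}}$ remainder.

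I expect the main obstacle to be the residual bookkeeping in the first step: expanding the covariance of two BCLT expansions produces products such as $\abs{\ell(\x_n \vert \thetahat)}\resid{\g}$ and $\resid{\g\ell}(\x_n)$ alongside $N^{-1}$-order products of the leading error terms, and I must verify that each still satisfies $\meann(\cdot)^2 = \ordp{1}$ so that summing over all $n \in [N]$ does not inflate the error. This forces me to combine the $\meann(\cdot)^2$ controls supplied by \thmref{bayes_clt_main} with additional $\ordp{1}$ moment bounds on $\ell(\x_n \vert \thetahat)$ and its derivatives at $\thetahat$, which is precisely why the theorem requires $\g$, $\ell$, \emph{and} the product $\g\ell$ to be third-order BCLT-okay, and why \defref{bclt_okay} contains the clauses $\meann \expect{\prior(\theta)}{\norm{\phi(\theta,\x_n)}_2^2} = \ordp{1}$ and $\sup_{\theta \in \thetaball{\delta}}\meann \norm{\phigrad{k}(\theta,\x_n)}_2^2 = \ordp{1}$. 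The Leibniz-rule cancellation of the cubic $\likhatk{3}$ terms is conceptually straightforward but notationally delicate given the indicial conventions of \thmref{bayes_clt_main}.
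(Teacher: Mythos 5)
Your proposal is correct and follows essentially the same route as the paper: three applications of \thmref{bayes_clt_main} (to $\g$, $\ell$, and the product), cancellation of the zeroth-order and $\likhatk{3}$ terms leaving $\ggrad{1}(\thetahat)\infohat^{-1}\ellgrad{1}(\x_n\vert\thetahat)$, and Cauchy--Schwarz on the square-summable residuals to control $\meann\infl_n\infl_n^\trans$ and the negligible mean term. The only (cosmetic) difference is that the paper applies the expansion to the pre-centered product $(\g(\theta)-\g(\thetahat))(\ell(\x_n\vert\theta)-\ell(\x_n\vert\thetahat))$, whose value and first derivative vanish at $\thetahat$ so the cancellation is automatic, whereas you expand the raw product and cancel explicitly via the Leibniz rule --- the same computation with different bookkeeping.
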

%%%%%%%%%%%%%%%%%%%%%%%%%%%%%%%%%%%%%

%%%%%%%%%%%%%%%%%%%%%%%%%%%%%%%%%%%%%%%%%%%%%%%%%%%
%%%%%%%%%%%%%%%%%%%%%%%%%%%%%%%%%%%%%%%%%%%%%%%%%%%

\section{High-dimensional global--local models}\seclabel{global_local}
We now turn to ``global--local'' models: a high-dimensional case with $\theta =
(\gamma, \lambda)$, where $\gamma$'s dimension is fixed, and the dimension of
$\lambda$ grows with $N$.  Arguably, this case is more realistic for
applications of MCMC than the finite-dimensional case, since global--local
marginal posteriors $\postg$ typically involve an intractable integral over the
latent variables, which is approximated using MCMC.

It will be useful to begin with a simple motivating example that captures
much of the intuition about why we care about global--local models, and
what is meant by frequentist variability under the exchangeability assumption.
We will use this example both to motivate the problem in the simplest possible
setting and also to have access to ground truth via simulation. In
\secref{experiments} we will consider more realistic examples.

%%%%%%%%%%%%%%%%%%%%%%%%%%%%%%%%%%%%%%%%%%%%%%%%%%%%%%%%%%%%%%%%%%%%%
%%%%%%%%%%%%%%%%%%%%%%%%%%%%%%%%%%%%%%%%%%%%%%%%%%%%%%%%%%%%%%%%%%%%%
%%%%%%%%%%%%%%%%%%%%%%%%%%%%%%%%%%%%%%%%%%%%%%%%%%%%%%%%%%%%%%%%%%%%%

\begin{ex}\exlabel{re}{(Nested random effects.)}
For a given number of groups $G$, let each datapoint $\x_n = \{\y_n, \a_n \}$
consist of a response $\y_n$ and a random ``group assignment'' $\a_n$, where
$\a_n = (\a_{n1}, \ldots, \a_{nG})$ such that if the $n$-th observation belongs
to group $g$ then $\a_{ng} = 1$ and the rest of the entries of $\a_n$ are zero.
Assume that the groups are equiprobable, so that $\expect{\fdist(\a_n)}{\a_{ng}}
= G^{-1}$. We will be particularly interested settings where $N / G$ does not
diverge as $N \rightarrow \infty$, so there are more groups as $N$ grows, and
$O(1)$ datapoints assigned to a typical group.

For our model,
let each group be associated with a scalar--valued ``random effect'' $\lambda_g
\in \mathbb{R}$. The local parameters are then $\lambda = (\lambda_1, \ldots,
\lambda_G)$, with $\lambdadim = G$. Suppose that there is a single scalar-valued
global parameter, $\gamma \in \mathbb{R}$, so $\gammadim = 1$. Let the model of
the data from group $g$ depend only on $\gamma$ and $\lambda_g$, so that
$\ell(\x_n \vert \gamma, \lambda) = \sumg \a_{ng} \ell(\y_n \vert \gamma,
\lambda_g, \a_{ng}=1)$.  Further, let the random effects be
IID \textit{a priori}, so the prior factorize as $\prior(\gamma,
\lambda) = \prior(\gamma) \prod_{g=1}^G \prior(\lambda_g)$, and the
components of $\lambda$ are independent {\em a posteriori} given $\gamma$.
\end{ex}

\Exref{re} captures two key features of the global--local models we consider:
that the IJ (and bootstrap) is estimating a covariance conditional on
fixed values of the random effects, and that the global parameter's posterior
concentrates.

First, observe that, under the assumed model, the datapoints $\x_n$ are IID
conditional on $\lambda$ and $\gamma$ since the group assignment indicators are
random.  Marginally over $\lambda$, the $\x_n$ are not IID, since the responses
$\y_n$ are correlated within a group.  Since the IJ (and bootstrap) estimate
covariances under the assumption of IID $\x_n$, they implicitly estimate the
sampling variance under repeated draws of the response and group
assignments, for a single, fixed value of $\lambda$ and $\gamma$.

Second, we expect $\p(\lambda_g | \xvec)$ does not concentrate for a typical
$g$, since a finite number of datapoints inform the posterior of $\lambda_g$.  
However, under typically regularity conditions, we do expect $\postg$ to
concentrate according as in \secref{finite_dim}, since the marginal log
likelihoods of the responses associated with group $g$ are IID.  The
following example details the reasoning.

\begin{ex}\exlabel{re_gamma}{(Concentration of nested random effects.)}
Let $\ng$ denote the indices of observations assigned to
group $g$, i.e. $\ng := \{n: \a_{ng} = g \}$. Taking $\x_n' := \{ \abs{\ng},
\y_n: n \in \ng \}$ to be the group of observations assigned to group $g$, we
have
\begin{align*}
\ell(\x_n' \vert \gamma)  = 
       \log \int \p(\lambda_g) \prod_{n \in \ng} \p(\x_n \vert \lambda_g, \gamma) d \lambda_g.
\end{align*}
If we additionally assume that set sizes $\abs{\ng}$ are IID Poisson random
variables,\footnote{By taking the group sizes to be IID Poisson, the total
number of data points $N' = \sumg \abs{\ng}$ becomes a (Poisson) random variable
rather than a fixed quantity.  But for a given $N$, by taking
$\expect{\fdist(\abs{ng})}{\abs{\ng}} = N / G$, we have $\expect{\fdist(N')}{N'}
= N$.  Since we condition on $N'$ in the posterior calculations, and since we
can $N' \rightarrow \infty$ as $N \rightarrow \infty$ with probability one, the
randomness in $N'$ does not affect the applicability of
\thmref{bayes_clt_main}.} then the grouped observations $\x_n'$ are IID with log
probability given by $\ell(\x_n' \vert \gamma)$.
\end{ex}

% Let
% $S_g = \{n: \a_{ng} = g\}$ indices of observations in group $g$.  Then
% %
% \begin{align*}
% \ell(\{\y_n: n \in S_g \} \vert \gamma) ={} 
% \log \int \prod_{n\in S_g}\p(\y_n \vert \gamma, \lambda_g) \prior(\lambda_g) d\lambda_g.
% % ={} \log \expect{\fdist(\a_n)}{
% %        \prod_{g=1}^G
% %               \int \p(\y_n \vert \gamma, \lambda_g, \a_{ng}=1) \prior(\lambda_g) d\lambda_g
% %        }.
% \end{align*}
% %
% If we model $G$ as coming from a Poisson distribution whose rate grows with $N$,
% then the posterior is unaffected, and the grouped observations $\{\y_n: n \in
% S_g \}$ are IID, marginally over $\lambda_g$.  

The results of \secref{finite_dim} show that the IJ applied to the marginal
observations $\ell(\x_n' \vert \gamma)$ 
given in \exref{re_gamma} will be a consistent estimator of the
covariance of $\expect{\p(\gamma \vert \x_1', \ldots, \x_G')}{\gamma}$ under IID samples
of $\x_n'$ --- but this is a different sampling distribution than IID $\x_n$.
Furthermore, in more complicated settings, the marginal log likelihood
$\ell(\x_n' \vert \gamma)$ is intractable, and we have access only to MCMC draws
of the joint log likelihood $\ell(\x_n \vert \gamma, \lambda)$.

In global--local models the dimension of $\theta = (\gamma, \lambda)$ grows with
$N$, and so we cannot apply the analysis of \secref{finite_dim} directly.
Instead, we will use \textit{von--Mises expansions} to study the limiting
behavior of \textit{low-dimensional posterior marginals}.  Our primary result
will be negative --- even when the posterior marginal the global parameters
$\gamma$ ``concentrates'' as in the BCLT of \secref{finite_dim}, the IJ
covariance estimate is inconsistent in general if the conditional posterior
$\postl$ does not concentrate as well.  

However, our negative results have a silver lining, which is that we conjecture
that the error in the IJ covariance estimates to be {\em quantitatively small}
when the variance of the local conditional posterior $\postl$ is itself
quantitatively small, even if not perfectly concentrated asymptotically.  
Our conjecture is based on a simplification of the residual from a von Mises
expansion in a class of high--dimensional exponential families.  We believe that
these arguments for the usefulness of the IJ covariance in the presence of
approximate concentration of $\postl$ support and are supported by our
experimental results of \secref{experiments} below, and are a promising subject
for future study.

Before proceeding, we make two notes on notation.  First, throughout this
section, we will restrict $\g(\theta)$ to be a scalar-valued function ($\gdim =
1$) for simplicity of notation.\footnote{In particular, this assumption will
allow us to use the mean value theorem rather than the integral form of a
Taylor series remainder, which is a considerable notational simplification.}
Second, in the global--local case of \secref{von_mises_high_dim}, we will take
derivative subscripts to represent partial derivatives with respect to $\gamma$
only, rather than $\theta$ as in the rest of the paper.  For example, we will
take
%
% \begin{align*}
%
$\ellgrad{1}(\xn \vert \gamma, \lambda) =
\fracat{\partial \ell(\xn \vert \gamma, \lambda)}
       {\partial \gamma}{\gamma}$
%
% \end{align*}
%
rather than as a derivative with respect to both $\gamma$ and $\lambda$. All
other derivatives conventions as described in \secref{notation} will still
apply.

\subsection{A von Mises expansion of Bayesian posterior expectations}
\seclabel{von_mises}
To compute a von Mises expansion to a Bayesian posterior, we must express
posterior expectations as a functional of the data distribution.  Let $\gdist$
represent a generic (possibly signed) measure on a sigma algebra on the domain
of a datapoint $\xn$.  We will assume without further comment that all needed
quantities are measurable.  Define the ``generalized posterior'' functional
\begin{align}\eqlabel{expectation_functional}
\T(\gdist, N) := \frac{
    \int \g(\theta) \exp\left( N \int \ell(\xn \vert \theta) \gdist(d\xn)\right)
        \prior(\theta) d\theta
}
{
\int \exp\left( N \int \ell(\xn \vert \theta) \gdist(d\xn)\right)
    \prior(\theta) d\theta
}.
\end{align}

The definition \eqref{expectation_functional} expresses the usual posterior
expectation as a special case, as we now show. Let $\fndist$ denote the
empirical measure, placing mass $1/N$ on each datapoint in $\xvec$.  Since $\int
\ell(\xn \vert \theta) \fndist(d \xn) = \meann \ell(\x_n \vert \theta)$, we have
that $\T(\fndist, N) = \expect{\post}{\g(\theta)}$.  Analogously, we will write
$\expect{\postpert{\gdist, N}}{\g(\theta)}$ for posterior expectations and
covariances computed using the appropriate version of
\eqref{expectation_functional}, possibly with alternative functions in place of
$\g(\theta)$.

Note that, as $N$ grows, under sufficient regularity conditions, two things
occur in the posterior expectation $\T(\fndist, N)$.  As $N$ gets larger,
$\fndist \rightarrow \fdist$, and the posterior concentrates as the the weight
$N$ given to the integral $\int \ell(\xn \vert \theta) \fndist(d\xn)$ grows. The
definition \eqref{expectation_functional} decouples these two phenomena,
permitting expression of pseudo-posterior expectations such as $\T(\fdist, N)$,
which uses the limiting distribution $\fdist$, but the data weight $N$.  The
advantage of such a decomposition is that we can use uniform laws of large
numbers as $\fndist \rightarrow \fdist$ to deal with the data asymptotics, and
treat the posterior integral's asymptotics with the $N$ dependence.

To form a von Mises expansion for $\T(\fdist, N)$, one can parameterize a path
between $\fndist$ and $\fdist$ with a family of distributions $\ftdist$, indexed
by $\t \in [0, 1]$, such that $\ftdist$ equals $\fdist$ when $\t = 0$ and
$\fndist$ when $\t = 1$. Then, for a fixed $N$, if the map $\t \mapsto \T(\t,
N)$ is smooth, one can form the univariate Taylor series expansion:\footnote{One
modern perspective on von Mises expansions separates the smoothness of the
functional from probabilistic convergence of $\fndist$ to $\fdist$. This
approach is conceptually attractive, but requires careful justification of the
series expansion and convergence in an appropriate Banach space which embeds
both the empirical and limiting distribution functions (\cite{reeds:1976:thesis,
fernholz:1983:mises}).  We do not take this approach here.  Rather, our approach
to the von Mises expansion is more similar to the original work of
\cite{mises:1947:asymptotic}, in that we Taylor expand only along a path between
the population and sampling distributions, without defining an embedding normed
vector space.  One may also use a von Mises expansion informally to motivate a
particular estimator, and then study the estimator's error using other more
bespoke methods (\cite{mises:1947:asymptotic}, \cite[Chapter
20]{van:2000:asymptotic}, \cite[Chapter 6]{serfling:1980:approximation}).
In fact, this is what we have done in \thmref{ij_consistent} above.}
\begin{align}\eqlabel{t_expansion}
\T(\fndist, N) - \T(\fdist, N) ={}&
     \fracat{\partial \T(\ftdist, N)}{\partial \t}{\t=0} +
%     \resid{T}(\ttil) \\
% \quad\textrm{where}\quad
% \resid{T}(\ttil) :={}&
    \frac{1}{2} \fracat{\partial^2 \T(\ftdist, N)}{\partial \t^2}{\t=\ttil}
    \quad\textrm{for some }\ttil \in [0,1].
    % \nonumber
%
\end{align}
The advantage of \eqref{t_expansion} is that the first-order term in is often
simple, as we shall see shortly.  If one can then show that the residual
$\sqrt{N}$ time the second derivative converges to zero in probability, then the
asymptotic behavior of $\sqrt{N}\left( \T(\fndist, N) - \T(\fdist, N)\right)$ is
determined only by the relatively simple first-order term.

% We will now proceed to analyze $\expect{\post}{\g(\theta)}$
% in the finite-dimensional case by:
% %
% \begin{enumerate}
% %
% \item Showing that $\sqrt{N}\fracat{\partial \T(\ftdist, N)}{\partial \t}{\t=0}= \frac{1}{\sqrt{N}}\sumn
% \infltil_n$ where $\infltil_n \approx \infl_n$; and
% %
% \item Show that $\sup_{\t \in [0,1]}
% \sqrt{N} \resid{T}(\t) \plim 0$.
% %
% \end{enumerate}
% %
% Our chosen parameterized path $\ftdist$ will allow us to use
% \thmref{bayes_clt_main} for both these tasks.  From these conclusions, we will
% have that
% %
% \begin{align*}
% %
% N^{1/2} \left(\T(\fndist, N) - \T(\fdist, N) \right)
% \approx N^{-1/2} \sumn \infltil_n \dlim
%     \normdist\left(0, \cov{\fdist(\xn)}{\infltil_0} \right),
% %
% \end{align*}
% %
% with $\gcovij = \covhat{n \in [N]}{\infl_n} \plim
% \cov{\fdist(\xn)}{\infltil_0}$.

% Again, the advantage of the von Mises expansion will be its capacity to analyze
% the high-dimensional global--local case.  Below, in \secref{von_mises_high_dim},
% we will see that both steps 1 and 2 fail in general for the high-dimensional
% global--local model: $\infltil_n$ and $\infl_n$ will differ, and $\sqrt{N}
% \resid{T}(\t)$ will differ from zero, even as $N \rightarrow \infty$.  However,
% the discrepancies in each case will be small when $\postl$ is small, which
% provides some justification for using the IJ covariance in practice. But to
% analyze the high-dimensional case, it will be instructive to first study the
% finite-dimensional case.

% \subsubsection{Step one: Parameterize a path between $\fdist$ and $\fndist$.}

For our parameterized path, we will use the mixture distribution $\ftdist := \t
\fndist + (1 - \t) \fdist$ for scalar $\t \in [0, 1]$. Since the denominator of
\eqref{expectation_functional} is bounded away from zero for all $\t \in [0,1]$,
the map $\t \mapsto \T(\ftdist, N)$ is smooth, and we can form the expansion
given in \eqref{t_expansion}.  One advantage of the mixture path is that, under
\assuref{bayes_clt}, \thmref{bayes_clt_main} can be applied to the posterior
$\postft$ for $\t \in [0, 1]$ with essentially no modification
(see \lemref{bclt_okay_t} of \appref{ftdist} for more details).

For the mixture path, we can compute the terms in
\eqref{t_expansion} explicitly by scalar differentiation with respect to $\t$,
as given in the following lemma.

\begin{defn}\deflabel{centering} For a given $\t$, define the posterior centered quantities
    \begin{align*}
    \gbar(\theta) :={}& \g(\theta) - \expect{\postft}{\g(\theta)}
    &\quad\textrm{and}\quad
    \ellbar(\x_n \vert \theta) :={}& \ell(\x_n \vert \theta) -
    \expect{\postft}{\ell(\x_n \vert \theta)},
    \end{align*}
    and the sampling-centered quantities
    \begin{align*}
    \ellunderbar(\x_n \vert \theta) :={}
        \ell(\x_n \vert \theta) -
        \expect{\fdist(\xn)}{\ell(\xn \vert \theta)} \quad\textrm{ and }\quad
    \ellbarbar(\xn \vert \theta) :={}
        \ellbar(\xn \vert \theta) -
            \expect{\fdist(\xn)}{\ellbar(\xn \vert \theta)}.
    \end{align*}
    Quantities with overbars depend on $\t$, i.e., on the data distribution at which
    the posterior is evaluated, since posterior centering is at the corresponding
    $\ftdist$.  For compactness we suppress this dependence on $\t$ in the
    notation.
\end{defn}

%%%%%%%%%%%%%%%%%%%%%%%%%%%%%%%%%%%%%%%%%%%%%%%%%%%%%%%%%%%
%%%%%%%%%%%%%%%%%%%%%%%%%%%%%%%%%%%%%%%%%%%%%%%%%%%%%%%%%%%
%%%%%%%%%%%%%%%%%%%%%%%%%%%%%%%%%%%%%%%%%%%%%%%%%%%%%%%%%%%
\begin{lem}\lemlabel{expansion_terms} (See \appref{post_derivs} for a proof.)
    For the mixture distribution
    $\ftdist$, the terms in \eqref{t_expansion} are given by
    \footnote{Note that the second derivative here is
    of the original posterior expectation, which is based on $\fndist$, but the
    derivative is evaluated at $\ftdist$. Therefore the original $\ell(\x_n \vert
    \theta)$ occurs inside the expectation (rather than $\ell(\x_n \vert \theta,
    \t)$), but the posterior expectation is with respect to $\ftdist$, which is
    formed using $\ell(\x_n \vert \theta, \t)$.}
    \begin{align}
    \fracat{\partial \T(\ftdist, N)}{\partial \t}{\t=0} ={}
    \meann \infltil_n
    %\eqlabel{term1}
    \textrm{ and }
    \fracat{\partial^2 \T(\ftdist, N)}{\partial \t^2}{\t}
    ={} \frac{1}{N^2} \sumnm \htil(\x_n, \x_m) 
    % =: \resid{T}(\t)
    \textrm{ where} \nonumber \\
    \infltil_n := N \expect{\postf}{\gbar(\theta)\ellbarbar(\x_n \vert \theta)}
    \quad\textrm{and}\quad
    \htil(\x_n, \x_m) :=
    N^2 \expect{\postft}{
        \gbar(\theta)
        \ellbarbar(\x_n \vert \theta)
        \ellbarbar(\x_m \vert \theta)
    }. \eqlabel{v_def}
    \end{align}
    Note that the posterior centering in $\infltil_n$ is implicitly at $\postf$.
\end{lem}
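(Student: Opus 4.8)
The plan is to obtain both derivatives from a single elementary identity for differentiating a tilted posterior expectation, and then specialize to the mixture path. Write $L_{\t}(\theta) := \int \ell(\xn \vert \theta)\,\ftdist(d\xn)$, so that the generalized posterior $\postft$ has density proportional to $\exp(N L_{\t}(\theta))\,\prior(\theta)$. First I would establish a preliminary lemma giving the standard exponential-family score identity: for any integrable $h(\theta)$,
\[
\frac{\partial}{\partial \t}\,\expect{\postft}{h(\theta)} = N\,\cov{\postft}{h(\theta),\ \frac{\partial}{\partial \t} L_{\t}(\theta)}.
\]
This is obtained by differentiating the numerator and the normalizing constant of \eqref{expectation_functional} and interchanging $\frac{\partial}{\partial \t}$ with the $\theta$-integral; the interchange is justified by dominating the $\theta$-integrands uniformly for $\t \in [0,1]$ via the smoothness and domination hypotheses of \assuref{bayes_clt}, together with the fact --- already needed to make $\t \mapsto \T(\ftdist,N)$ smooth --- that the denominator of \eqref{expectation_functional} is bounded away from zero on $\t \in [0,1]$. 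Differentiating once more, now applying this identity termwise to $\cov{\postft}{a,b} = \expect{\postft}{ab} - \expect{\postft}{a}\expect{\postft}{b}$, yields the companion formula $\frac{\partial}{\partial\t}\cov{\postft}{a,b} = N\,\expect{\postft}{\bar a\,\bar b\,\bar c}$, where $\bar a,\bar b,\bar c$ denote centering at $\postft$ and $c := \frac{\partial}{\partial \t} L_{\t}$; here the cross terms cancel exactly because $\expect{\postft}{\bar c} = 0$ and $\bar a, \bar b$ are already centered.

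Next I would specialize to the mixture path $\ftdist = \t \fndist + (1-\t)\fdist$. Then $L_{\t}(\theta) = \t\,\meann \ell(\x_n \vert \theta) + (1-\t)\lik(\theta)$ is affine in $\t$, so $\frac{\partial}{\partial \t} L_{\t}(\theta) = \meann \ell(\x_n \vert \theta) - \lik(\theta) = \meann \ellunderbar(\x_n \vert \theta)$ and does not depend on $\t$. Applying the score identity with $h = \g$ and evaluating at $\t = 0$ gives $\frac{\partial}{\partial \t}\T(\ftdist,N)\vert_{\t=0} = N\,\cov{\postf}{\g(\theta),\ \meann \ellunderbar(\x_n \vert \theta)}$; applying the companion identity gives $\frac{\partial^{2}}{\partial \t^{2}}\T(\ftdist,N)$, at a generic $\t \in [0,1]$, equal to $N^{2}$ times the $\postft$-expectation of the product of $\gbar(\theta)$ with the two $\postft$-centered sample averages $\meann\ellunderbar(\x_n \vert \theta)$ and $\meanm\ellunderbar(\x_m \vert \theta)$.

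What remains is purely notational: rewriting these in the centered notation of \defref{centering}. The one computation needed is $\expect{\fdist(\xn)}{\ellbar(\xn \vert \theta)} = \lik(\theta) - \expect{\postft}{\lik(\theta)}$, obtained by Fubini to exchange $\expect{\fdist(\xn)}{\cdot}$ with the posterior expectation and using $\lik(\theta) = \expect{\fdist(\xn)}{\ell(\xn \vert \theta)}$; this identity shows that the $\postft$-centered version of $\ellunderbar(\x_n \vert \theta)$ is exactly the doubly-centered $\ellbarbar(\x_n \vert \theta)$. Since $\gbar$ is posterior-centered, $\cov{\postf}{\g(\theta),\ellunderbar(\x_n \vert \theta)} = \expect{\postf}{\gbar(\theta)\,\ellbarbar(\x_n \vert \theta)} = N^{-1}\infltil_n$, so the first-order term is $\meann \infltil_n$; the same substitution turns the two $\postft$-centered averages in the second derivative into $\meann \ellbarbar(\x_n \vert \theta)$ and $\meanm \ellbarbar(\x_m \vert \theta)$, and expanding the double average gives $N^{-2}\sumnm \htil(\x_n,\x_m)$.

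I expect the only genuine obstacle to be the careful (but routine) justification of differentiation under the integral sign uniformly in $\t$, and the bookkeeping of the three layers of centering --- posterior, sampling, and both; the algebra is mechanical once the Fubini identity relating $\ellunderbar$ and $\ellbarbar$ is in hand. One subtlety worth flagging is that the second derivative is taken at an arbitrary $\t \in [0,1]$, so, unlike the first derivative at $\t = 0$, none of its centering may be simplified by setting $\ftdist = \fdist$.
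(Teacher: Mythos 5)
Your proposal is correct and follows essentially the same route as the paper: \appref{post_derivs} derives exactly your score identity and its second-derivative companion for an exponentially tilted posterior (with the same caveat that the centering must be at the fixed $\t_0$ where the derivative is evaluated), and the specialization to the affine mixture path with tilting function $\sumn \ellunderbar(\x_n \vert \theta)$, together with the Fubini identity showing that the $\postft$-centered version of $\ellunderbar$ is $\ellbarbar$, is precisely how the paper obtains \eqref{v_def}.
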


%%%%%%%%%%%%%%%%%%%%%%%%%%%%%%%%%%%%%%%%%%%%%%%%%%%%%%%%%%%%%%%%%%%%%%%
%%%%%%%%%%%%%%%%%%%%%%%%%%%%%%%%%%%%%%%%%%%%%%%%%%%%%%%%%%%%%%%%%%%%%%%

The form of the terms in \lemref{expansion_terms} will allow us
to analyze the limiting behavior of the IJ covariance in more
complicated settings than our results in \secref{finite_dim}.  In particular,
by plugging \lemref{expansion_terms} into \eqref{t_expansion},
\begin{align}\eqlabel{post_expansion}
  \sqrt{N}\left(\expect{\post}{\g(\theta)}  - \expect{\postf}{\g(\theta)}\right) ={}&
  \frac{1}{\sqrt{N}} \sumn \infltil_n + \sqrt{N} \resid{T} \nonumber \\
  \quad\textrm{where}\quad
  \resid{T}(\t) :={}& \frac{1}{N^2} \sumnm \htil(\x_n, \x_m).
\end{align}
We will find
that $\gcovij$ recovers the variance of the limiting distribution of
$\frac{1}{\sqrt{N}} \sumn \infltil_n$, in both the finite--dimensional and
global--local models. Informally, we note that the summands in the first-order
term given in \lemref{expansion_terms} are closely related to our influence
scores, since
\begin{align*}
\expect{\postf}{\gbar(\theta)\ellbarbar(\x_n \vert \theta)} ={}&
\cov{\postf}{\g(\theta), \ellunderbar(\x_n \vert \theta)} := \frac{\infltil_n}{N}.
\end{align*}
In both the the finite--dimensional and global--local models, we will be able to
apply \thmref{bayes_clt_main} to show that the difference between $\infltil_n$
and $\infl_n$ vanishes asymptotically, so that $\gcovij$ consistently estimates
the variance of the limiting distribution of $\frac{1}{\sqrt{N}} \sumn
\infltil_n $.  However, we will show that $\sqrt{N}\resid{T}(\t) \plim 0$ in
finite--dimensional models but diverges in global--local models.

\subsection{Recovering the finite-dimensional consistency}

We briefly state assumptions under which the residual $\sqrt{N} \resid{T}(\t)
\plim{} 0$, proving a version of \thmref{ij_consistent} using our von--Mises
expansion.  Though the proof requires stronger assumptions than
\thmref{ij_consistent}  (requiring five derivatives and additional regularity
assumptions on the prior), it shows that the subsequent
\textit{failure} of the von Mises expansion in global--local models
is not vacuous.

%%%%%%%%%%%%%%%%%%%%%%%%%%%%%%%%%%%%%%%%%%%%%%%%%%%%%%%%%%%%%%%%%%%%%%%%%%%
%%%%%%%%%%%%%%%%%%%%%%%%%%%%%%%%%%%%%%%%%%%%%%%%%%%%%%%%%%%%%%%%%%%%%%%%%%%
%%%%%%%%%%%%%%%%%%%%%%%%%%%%%%%%%%%%%%%%%%%%%%%%%%%%%%%%%%%%%%%%%%%%%%%%%%%

\begin{assu}\assulabel{bayes_clt_vm}
    Assume that
    \begin{enumerate}
    \item The functions $\g(\theta)$, $\ell(\xn \vert \theta)$, and $\g(\theta)
    \ell(\xn \vert \theta)$ are all fifth-order BCLT-okay. %for $\indexset = [N]$.
    \item The prior expectation $\expect{\prior(\theta)}{
        \norm{\g(\theta)}_2^2
        \norm{\ell(\x_n \vert \theta)}_2^2
        \norm{\ell(\x_m \vert \theta)}_2^2
        }$
        is finite $\fdist$-almost
    surely.
    \item The dual sum $\frac{1}{N^2} \sumn \summ
        \expect{\prior(\theta)}{
            \norm{\g(\theta)}_2^2
            \ell(\x_n \vert \theta)^2
            \ell(\x_m \vert \theta)^2} = O_p(1).$
    \end{enumerate}
\end{assu}

%%%%%%%%%%%%%%%%%%%%%%%%%%%%%%%%%%%%%%%%%%%%%%%%%%%%%%%%%%%%%%%%%%%%%%%%%%%
%%%%%%%%%%%%%%%%%%%%%%%%%%%%%%%%%%%%%%%%%%%%%%%%%%%%%%%%%%%%%%%%%%%%%%%%%%%
%%%%%%%%%%%%%%%%%%%%%%%%%%%%%%%%%%%%%%%%%%%%%%%%%%%%%%%%%%%%%%%%%%%%%%%%%%%

\begin{thm}\thmlabel{finite_dim_resid} (See \appref{finite_dim_resid_proof} for a proof.)
Let \assuref{bayes_clt,bayes_clt_vm} hold. Then
\begin{align*}
\sup_{\t \in [0, 1]} \sqrt{N} \resid{T}(\t) \plim{} 0.
\end{align*}
If we additionally assume that $(\infltil_n)^2$ is uniformly integrable
with respect to $\fdist(\x_n)$ as $N \rightarrow \infty$, then
by \eqref{post_expansion}, it follows that
$\sqrt{N}\left(\expect{\post}{\g(\theta)} - \expect{\postf}{\g(\theta)} \right)
\dlim \normdist\left(0, \gcovtrue \right)$ and $\gcovij$ is consistent.  
% Since the present assumptions
% also satisfy \thmref{ij_consistent}, $\gcovij \plim{} \gcovtrue$, 
% and it follows that $\gcovij$ is consistent.  
%
\end{thm}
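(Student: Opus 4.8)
The plan is to prove the residual bound $\sup_{\t \in [0,1]}\sqrt{N}\,\resid{T}(\t) \plim 0$ first, and then read off both the distributional statement and the consistency of $\gcovij$ from it together with \thmref{bayes_clt_main} and \corref{bayes_expectation_dist}. For the residual I would start from \lemref{expansion_terms}: collapsing the double sum and using the form of $\htil$ in \eqref{v_def}, $\resid{T}(\t) = \frac{1}{N^2}\sumnm \htil(\x_n,\x_m) = \expect{\postft}{\gbar(\theta)\big(\sumn \ellbarbar(\x_n \vert \theta)\big)^{2}}$. Since $\ellbarbar(\x_n \vert \theta)$ is exactly the sampling-centered $\ellunderbar(\x_n \vert \theta)$ of \defref{centering} recentered at $\postft$, setting $\delta_N(\theta) := \meann \ellunderbar(\x_n \vert \theta) = \meann \ell(\x_n \vert \theta) - \expect{\fdist(\xn)}{\ell(\xn \vert \theta)}$ gives $\sumn \ellbarbar(\x_n \vert \theta) = N\big(\delta_N(\theta) - \expect{\postft}{\delta_N}\big)$, so $\resid{T}(\t) = N^{2}\,\expect{\postft}{\gbar(\theta)\,\big(\delta_N(\theta) - \expect{\postft}{\delta_N}\big)^{2}}$.

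The crux is to show this posterior expectation is $\ordlogp{N^{-3}}$ uniformly in $\t \in [0,1]$, whence $\resid{T}(\t) = \ordlogp{N^{-1}}$ and $\sqrt{N}\,\resid{T}(\t) = \ordlogp{N^{-1/2}} \plim 0$. Two ingredients combine. First, because $\delta_N$ is a sampling-centered empirical average, a uniform law of large numbers (available from the smoothness and square-integrable envelopes in \assuref{bayes_clt,bayes_clt_vm}) gives $\sup_{\theta \in \thetaball{\deltaulln}}\norm{\delta_{N,(k)}(\theta)}_2 = \ordlogp{N^{-1/2}}$ for $k = 0,1,2,3$; in particular every Taylor coefficient of $\delta_N$ about $\thetahat$ (the mode of $\postft$) is $\ordlogp{N^{-1/2}}$. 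Second, \thmref{bayes_clt_main} applies to $\postft$ uniformly in $\t$ (by \lemref{bclt_okay_t}), which controls the posterior central moments of $\theta$: the second is $\ordp{N^{-1}}$ and, crucially, the third is $\ordlogp{N^{-2}}$, because the leading Gaussian contribution to an odd central moment vanishes and only the $\likhatk{3}$ skewness correction survives. Taylor-expanding $\gbar(\theta)$ and $\delta_N(\theta) - \expect{\postft}{\delta_N}$ in $\theta$ about $\thetahat$ and taking the posterior expectation, the would-be leading term is proportional to $\ggrad{1}\,\delta_{N,(1)}^{2}\,\expect{\postft}{(\theta - \expect{\postft}{\theta})^{\otimes 3}} = \ordp{1}\cdot\ordlogp{N^{-1}}\cdot\ordlogp{N^{-2}} = \ordlogp{N^{-3}}$, and every remaining term gains either a compensating $N^{-1/2}$ from an extra $\delta_N$-derivative or an extra $N^{-1}$ from a higher posterior moment, the Taylor remainder being absorbed by the residual term of \thmref{bayes_clt_main}. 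I expect this accounting to be the main obstacle: a naive Cauchy--Schwarz bound gives only $\norm{\gbar}_{L^2(\postft)}\,\norm{\delta_N - \expect{\postft}{\delta_N}}_{L^4(\postft)}^{2} = \ordlogp{N^{-5/2}}$, which is just barely too weak, so one must genuinely use the vanishing of Gaussian third moments together with the $\ordlogp{N^{-1/2}}$ size of $\delta_N$ and its derivatives, carried through uniformly in $\t$.

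The remainder is assembly. Inserting the residual bound into \eqref{post_expansion} gives $\sqrt{N}\big(\expect{\post}{\g(\theta)} - \expect{\postf}{\g(\theta)}\big) = \frac{1}{\sqrt{N}}\sumn \infltil_n + \ordlogp{N^{-1/2}}$. The summands $\infltil_n = N\cov{\postf}{\g(\theta),\ellunderbar(\x_n \vert \theta)}$ are IID across $n$ (the pseudo-posterior $\postf$ is deterministic) with $\expect{\fdist(\x_n)}{\infltil_n}=0$, and applying \thmref{bayes_clt_main} to $\postf$ (whose mode is $\thetatrue + \ord{N^{-1}}$) yields $\infltil_n = \ggrad{1}(\thetatrue)\info^{-1}\ellgrad{1}(\x_n \vert \thetatrue) + \ordlogp{N^{-1}}$; with the assumed uniform integrability of $(\infltil_n)^2$ this verifies the Lindeberg condition and forces $\var{\fdist(\x_n)}{\infltil_n} \to \ggrad{1}(\thetatrue)\info^{-1}\scorecov\info^{-1}\ggrad{1}(\thetatrue)^{\trans} = \gcovtrue$, so $\frac{1}{\sqrt{N}}\sumn \infltil_n \dlim \normdist(0,\gcovtrue)$. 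Since also $\expect{\postf}{\g(\theta)} - \g(\thetatrue) = \ord{N^{-1}}$ by \thmref{bayes_clt_main} on $\postf$, Slutsky gives $\sqrt{N}\big(\expect{\post}{\g(\theta)} - \g(\thetatrue)\big) \dlim \normdist(0,\gcovtrue)$. Finally, $\gcovij = \covhat{n\in[N]}{\infl_n}$ is consistent by the same comparison of $\infl_n$ with $\infltil_n$ via \thmref{bayes_clt_main} used to prove \thmref{ij_consistent}: $\covhat{n\in[N]}{\infl_n} - \covhat{n\in[N]}{\infltil_n} \plim 0$ and $\covhat{n\in[N]}{\infltil_n} \plim \lim_N \var{\fdist(\x_n)}{\infltil_n} = \gcovtrue$.
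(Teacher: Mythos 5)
Your proposal is correct and follows essentially the same route as the paper's proof: both rest on (i) the higher--order posterior expansion for triple--centered quantities (\thmref{bayes_clt_vm}, i.e.\ the vanishing of the leading Gaussian odd--moment contribution giving the $N^{-2}$ rate), (ii) the uniform $\ordp{N^{-1/2}}$ Donsker control of the $\fdist$--centered empirical averages $\meann \ellunderbargrad{k}$, carried uniformly in $\t$ via \lemref{bclt_okay_t}, and (iii) the triangular--array CLT for the IID $\infltil_n$ under the uniform--integrability assumption together with the \thmref{ij_consistent} comparison of $\infltil_n$ with $\infl_n$. Your collapsing of the double sum into $N^2\expect{\postft}{\gbar(\theta)\,(\delta_N(\theta)-\expect{\postft}{\delta_N})^2}$ is only a notational reorganization of the paper's V--statistic/centered--at--$\thetahat$ decomposition, and your identification of why plain Cauchy--Schwarz fails matches the paper's reason for needing the fifth--order expansion.
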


\subsection{Failure of the von Mises expansion in high dimensions}
\seclabel{von_mises_high_dim}
In this section, we argue that, in the global--local model, the von Mises
expansion residual $\resid{T}(\t)$ does not typically vanish as $N \rightarrow
\infty$, even though the influences scores $\infl_n$ are themselves $\ordp{1}$.
We show these results to be a general property of the Tower property of
expectations, and confirm the result in the specific case of \exref{re}.
However, we argue that the difficulty disappears as the conditional
posterior $\postl$ concentrates, even partially.

First, we note that no general $\sqrt{N}$--rate asymptotic results similar
to \thmref{finite_dim_resid} can be expected for the
expectation of functions of local variables whose posterior does not
concentrate.  To see this, take for the moment $\g(\theta) = g(\lambda)$,
and note that the first term in \eqref{post_expansion} becomes
\begin{align*}
\frac{1}{\sqrt{N}} \sumn \infltil_n ={}&
\meann \sqrt{N} \cov{\p(\lambda \vert \fdist, N)}
    {\g(\lambda), \expect{\p(\gamma \vert \lambda, \fdist, N)}
        {\ellbarbar(\x_n \vert \lambda, \gamma)}} 
        \rightarrow \infty,
\end{align*}
since covariances with respect to $\p(\lambda \vert \fdist, N)$ do not go to
zero as $N \rightarrow \infty$.  (This is in contrast to the finite--dimensional
case, for which posterior covariances go to zero at rate $1/N$ by
\thmref{bayes_clt_main}.) From \eqref{post_expansion} we then might expect that the
distribution of $\sqrt{N}\left(\expect{\post}{\g(\lambda)} -
\expect{\postf}{\g(\lambda)} \right)$ diverges, and it does not make sense to
attempt to estimate the limiting variance.

Suppose we then restrict our attention to functions $\g(\theta) = \g(\gamma)$,
and assume that the \textit{marginal posterior} $\postg$ does concentrate, in
the sense that we can apply results like \thmref{bayes_clt_main} to the marginal
posterior $\postg$, but $\postl$ does not concentrate. 

In \exref{re_gamma} we argued that nested random effect models are an example
when $\postg$ concentrates but $\postl$ does not.  It would be difficult to
state general conditions under which the marginal $\postg$ concentrates in more
complicated settings.  However, since our goal is to demonstrate
\emph{inconsistency} of the von Mises expansion, for our purposes it will
suffice to \emph{assume} that $\postg$ is as regular as would be needed to prove
\thmref{finite_dim_resid} (or something like it).  We will then argue that, even
if we did have adequate concentration of $\postg$, the residual would still be
large.  Specifically, we will make the following heuristic assumption, whose
function will be to render plausible the more specific and technical assumptions
below.
\begin{assu}\assulabel{postg_conc} (Informal)
Assume that $\postgt$ concentrates in the sense that there exists some
$\gamma_0$ such that $\expect{\postgt}{\gamma} \rightarrow \gamma_0$
as $N \rightarrow \infty$.  
Further, if the posterior--mean--zero functions $\bar{a}(\theta)$,
$\bar{b}(\theta)$, and $\bar{c}(\theta)$ are sufficiently regular, and $\post$
satisfies \assuref{bayes_clt,bayes_clt_vm}, then, as $N \rightarrow \infty$,
for some constants $C_2$ and $C_3$,
\begin{align*}
    N \expect{\post}{\bar{a}(\gamma) \bar{b}(\gamma)} 
    \rightarrow
    C_2 \bar{a}_{(1)}(\gamma_0) \bar{b}_{(1)}(\gamma_0)
    \quad\textrm{and}\quad
    N^2 \expect{\post}{\bar{a}(\theta) \bar{b}(\theta) \bar{c}(\theta)} 
    \rightarrow
    C_3 \bar{a}_{(1)}(\gamma_0) \bar{b}_{(1)}(\gamma_0) \bar{c}_{(1)}(\gamma_0).
\end{align*}
%
% In other words, posterior covariances converge at rate $N^{-1}$, and products of
% three centered posterior quantities converge at rate $N^{-2}$.
%
\end{assu}
\Assuref{postg_conc} amounts to a high--level summary of how
\thmref{bayes_clt_main} and \thmref{bayes_clt_vm} in \appref{bayes_clt_vm_proof}
are used in the proofs of \thmref{ij_consistent} and \thmref{finite_dim_resid}.
In the present section, we will simply assume that $\postgt$ satisfies
\assuref{postg_conc}, and argue that the IJ covariance will nevertheless be
inconsistent.

To use \assuref{postg_conc} to analyze $\resid{T}(\t)$, we must write the terms
of \lemref{expansion_terms} in terms of expectations over $\postg$ that we can
analyze.  By the Tower property,
\begin{align}
%
% \expect{\postf}{\gbar(\gamma) \ellbarbar(\x_n \vert \gl)} ={}&
\infltil_n ={}&
N \expect{\postgf}{
    \gbar(\gamma) \expect{\postlf}{\ellbarbar(\x_n \vert \gl)} } 
\quad\textrm{and}\quad 
\eqlabel{highdim-infl} \\
\htil(\x_n, \x_m)
={}&
N^2 \expect{\postgt}{
    \gbar(\gamma)
    \expect{\postlt}{
        \ellbarbar(\x_n \vert \gl)
        \ellbarbar(\x_m \vert \gl)    
    }
}. \eqlabel{highdim-resid}
\end{align}
Here, we see that $\expect{\fdist(\x_n)}{\infltil_n} = 0$ and
$\infltil_n = \ordlogp{1}$, so under uniform square integrability
assumptions we might expect $\frac{1}{\sqrt{N}} \sumn \infltil_n = \ordlogp{1}$
as in the finite-dimensional case.  Thus, for the first term of the
von Mises expansion, little has changed in the high--dimensional case.

However, the residual takes the form of
\begin{align*}
    \resid{T}(\t) = 
    \frac{1}{N^2} \sumnm \htil(\x_n, \x_m) =
    \frac{1}{N^2} \sumnm N^2 \expect{\postgt}{
            \gbar(\gamma)
            \expect{\postlt}{
                \ellbarbar(\x_n \vert \gl)
                \ellbarbar(\x_m \vert \gl)    
            }
        }.
\end{align*}
For a particular $\x_m,\x_n$, $\htil(\x_n, \x_m)$ thus takes the form of $N^2$ times
a posterior covariance with respect to $\postgt$ and so is roughly $\ordp{N^2
N^{-1}} = \ordp{N}$.  Contrast this with the finite--dimensional case, in which
case we had
\begin{align*}
    \htil(\x_n, \x_m) =
    N^2 \expect{\postft}{
            \gbar(\theta)
                \ellbarbar(\x_n \vert \theta)
                \ellbarbar(\x_m \vert \theta)    
            }
\end{align*}
is a posterior expectation of three centered terms, and so was roughly
$\ordlogp{1}$.  We thus expect each term $\htil(\x_n, \x_m)$ in the
high--dimensional case to be roughly a factor of $N$ larger than in the
finite--dimensional case.

Using standard results from V--statistics, we can argue heuristically that this
extra factor of $N$ will be fatal. In particular, by evaluating at $\t = 1$ we
can eliminate the potentially complicated dependence between the posterior
$\postgt$ and the data $\x_n$, in which case $\resid{T}(1)$ takes the form of a
second--order V--statistic with kernel $\htil(\x_n, \x_m) =  N^2
\expect{\postgf}{ \gbar(\gamma) \expect{\postlf}{\ellbarbar(\x_n \vert \gl)
\ellbarbar(\x_m \vert \gl) } }$ (e.g., Section 5 of
\citet{serfling:1980:approximation}). The variance of the corresponding
U--statistic is given by $\expect{\fdistnm}{\htil(\x_n, \x_m)^2} / N^2$ (see
\appref{v_stats} for more details).  Consequently, we would expect
$\var{\xfdist}{\sqrt{N} \resid{T}(1)} = \ordlog{\expect{\fdistnm}{\htil(\x_n, \x_m)^2} /
N}$, which would diverge in the finite--dimensional case but diverge in the
high--dimensional case.

Of course, the preceding argument, though suggestive, does not prove
non--convergence.  In order to be more precise, we now define an exponential family
of high--dimensional global--local models, for which we can derive a precise
limiting distribution for $\sqrt{N} \resid{T}(1)$.

\subsection{High--dimensional global--local exponential families}
\seclabel{high_dim_exp_family}
In the present section, we provide a detailed analysis of $\sqrt{N}\resid{T}(1)$
for a particular class of high--dimensional exponential families.  Though
our analysis is limited to $\t = 1$ in order to decouple the posterior from
the data dependence, our results are highly suggestive of the following conclusions.

First, when $\cov{\postl}{\lambda}$ does not vanish as $N\rightarrow \infty$
and has non--vanishing dependence on $\gamma$, then the von--Mises expansion
residual does not vanish, and the IJ covariance estimate is inconsistent.

Conversely, there exists a constant $\kappa$ such that $\sqrt{N}(\resid{T}(1) -
\kappa)$ converges to an $\ordlogp{1}$ random variable, with both $\kappa$ and
the limiting variance proportional to covariances of the form
$\cov{\postl}{\cdot}$. Consequently, when the spread of $\postl$ is numerically
small, or depends only weakly on $\gamma$, we can expect $\sqrt{N}\resid{T}(1)$
to be numerically small as well, and the IJ covariance reasonably accurate.

We now define notation for global--local exponential families which will apply
for the present section.  
As in \exref{re}, assume that the datapoint $\x_n =
(\a_n, \y_n)$ consists of a group assignment indicator $\a_n$ and a ``response''
vector $\y_n \in \rdom{\ydim}$.  Assume further that, for each $g$, the model is
an exponential family with natural parameter $\eta_g(\gamma, \lambda) \in
\rdom{\ydim}$, so that
\begin{align*}
\ell(\y_n \vert \gamma, \lambda, \a_{ng} = 1) ={}
    \y_n^\trans \eta_g(\gamma, \lambda)
    \quad\Rightarrow\quad
\ell(\x_n \vert \gamma, \lambda) ={}
    \sumg a_{ng} \ell(\y_n \vert \gamma, \lambda, \a_{ng} = 1) ={}
    \sumg a_{ng} \y_n^\trans \eta_g(\gamma, \lambda).
\end{align*}
Unlike \exref{re}, we are \emph{not} assuming that the model is nested --- each
component of $\lambda$ may, in principle, occur in each group's likelihood. Note
also that we assume the log partition function is included in $\eta_g(\gamma,
\lambda)$ if necessary by taking one entry of $\y_n$ to be $1$, and the
corresponding entry of $\eta_g(\gamma, \lambda)$ to be the negative log
partition function. 

\begin{defn}\deflabel{high_dim_exp_terms}
    In the context of the high--dimensional exponential family, define
    \begin{align*}
        \m_g :={}& \expect{\fdist(\y_n | \a_{ng} = 1)}{\y_n} \in \rdom{\ydim}  &
        \S_g :={}& \expect{\fdist(\y_n | \a_{ng} = 1)}{\y_n \y_n^\trans} \in \rdom{\ydim} \times \rdom{\ydim} \\
        \mu_g(\gamma) :={}& \expect{\postlt}{\eta_g(\gl)} \in \rdom{\ydim} &
        \J_{gh}(\gamma) :={}& \cov{\postlt}{\eta_g(\gamma, \lambda), \eta_h(\gamma, \lambda)}
        \in \rdom{\ydim} \times \rdom{\ydim}\\
        \M_{gh} :={}& N^2 \expect{\postgf}{\gbar(\gamma)  \mubar_g(\gamma)  \mubar_h(\gamma)^\trans}
        \in \rdom{\ydim} \times \rdom{\ydim}&
        \L_{gh} :={}& N \expect{\postgf}{\gbar(\gamma)  \J_{gh}(\gamma)}
        \in \rdom{\ydim} \times \rdom{\ydim} \\
        \tilde{y}_{ng} :={}& \sqrt{G}  \a_{ng} \y_n - m_g / \sqrt{G} &
        \rho_{nm} :={}&  \frac{1}{G} \sumg \tilde{y}_{ng}^\trans \L_{gg} \tilde{y}_{mg}.
    \end{align*}
    As part of the definition, we assume that all the above quantities exist
    and are finite for any $N$ and $G$.
\end{defn}

%%%%%%%%%%%%%%%%%%%%%%%%%%%%%%%%%%%%
%%%%%%%%%%%%%%%%%%%%%%%%%%%%%%%%%%%%
%%%%%%%%%%%%%%%%%%%%%%%%%%%%%%%%%%%%

For any particular $g$ and $h$, we have chosen the scaling so that we can expect
both $\M_{gh}$ and $\L_{gh}$ to be $\ordlog{1}$ by
\assuref{postg_conc} and concentration of the marginal posterior
$\postgf$. Further, note that $\expect{\fdist(\x_n)}{\tilde{y}_{ng}} ={} 0$ and
$\expect{\fdist{\x_n}}{\tilde{y}_{ng}\tilde{y}_{ng}^\trans} ={} \S_g$, so that,
asymptotically, we expect $\tilde{y}_{ng}^\trans \L_{gg} \tilde{y}_{ng}$ --- and
its average over $G$, $\rho_n$ --- to be an $\ordlog{1}$ random variable for a
typical $g$ and $n$.

Our key assumptions, \assuref{high_dim_resid_assu,high_dim_resid_moments},
essentially assume that the intuition of \assuref{postg_conc} holds
when averaged in various norms over $g$.

%%%%%%%%%%%%%%%%%%%%%%%%%%%%%%%%%%%%
%%%%%%%%%%%%%%%%%%%%%%%%%%%%%%%%%%%%
%%%%%%%%%%%%%%%%%%%%%%%%%%%%%%%%%%%%

\begin{assu}\assulabel{high_dim_resid_assu}
Let $\littleop{1}$ denote a term that goes to zero in probability as
$G\rightarrow \infty$ and $N \rightarrow \infty$ along any sequence.
Let $\norm{\cdot}_F$ denote the Frobenius norm of the argument.
Assume that, as $N\rightarrow \infty$ and $G \rightarrow \infty$
along any sequence,
\begin{enumerate}
\item \itemlabel{m_matrix_frob}
    $\frac{1}{G^2} \sum_{g=1}^G \sum_{h=1}^G
        \norm{ \S_g^{1/2} \M_{gh} \S_h^{1/2}}_F^2
        = \ordlog{1}$.
\item \itemlabel{l_matrix_frob}
    $\frac{1}{G} \sum_{g=1}^G \sum_{h=1}^G
    \norm{\S_g^{1/2} \L_{gh} \S_h^{1/2}}_F^2
    = \ordlog{1}$.\footnote{
        Note the order $1/G$ in \assuref{high_dim_resid_assu} \itemref{l_matrix_frob},
        in contrast to the order $1/G^2$ in \itemref{m_matrix_frob}. 
        If the random effects are independent a posteriori given $\gamma$, then 
        $\J_{gh}(\gamma) = 0$ for $g \ne h$, and the sum is over only $G$ terms rather
        than $G^2$ terms.  \Assuref{high_dim_resid_assu} \itemref{l_matrix_frob}
        allows for some small amount of posterior covariance between random effects,
        as long as the joint sum is only order $G$ rather than $G^2$.        
    }
\item \itemlabel{m_matrix_avg_op}
    $\frac{1}{G} \sumg \trace{\S_g}  \norm{\M_{gg}}_F = \ordlog{1}$ 
% \item \itemlabel{l_matrix_avg_op}
%     $\frac{1}{G} \sumg \trace{\S_g} \norm{\L_{gg}}_F = \ordlog{1}$ 
\item \itemlabel{m_matrix_trace}
    $\frac{1}{G} \sum_{g=1}^G \trace{\S_g^{1/2} \M_{gg} \S_g^{1/2}}
    = \ordlog{1}$.
% \item \itemlabel{l_matrix_trace}
%     $\frac{1}{G} \sum_{g=1}^G \trace{\S_g^{1/2} \L_{gg} \S_g^{1/2}}
%     % = \ordlog{1}$.
%     = \kappa + \littleop{1} \textrm{ for some }\kappa < \infty$.
\end{enumerate}
\end{assu}

%%%%%%%%%%%%%%%%%%%%%%%%%%%%%%%%%%%%%%%%%%%%%
%%%%%%%%%%%%%%%%%%%%%%%%%%%%%%%%%%%%%%%%%%%%%
%%%%%%%%%%%%%%%%%%%%%%%%%%%%%%%%%%%%%%%%%%%%%

\begin{assu}\assulabel{high_dim_resid_moments}
Assume that, as $N\rightarrow \infty$ and $G \rightarrow \infty$
along any sequence, for some $\kappa < \infty$,
\begin{align*}
\frac{1}{G} \sum_{g=1}^G \trace{\S_g^{1/2} \L_{gg} \S_g^{1/2}}
    = \kappa + \ordlog{G^{-1/2}}.
\end{align*}
Further, assume that the average within--group third and fourth moments of
$\y_n$ satisfy:
\begin{align*}
\frac{1}{G} 
    \sumg \expect{\fdist(\y_n \vert \a_{ng} = 1)}{
        \norm{\y_n}_2^3 }
    \norm{\m_g}_2 \norm{\L_{gg}}_F^2
    = \ordlog{1}
\quad\textrm{and}\quad
\frac{1}{G} 
    \sumg \expect{\fdist(\y_n \vert \a_{ng} = 1)}{
        \norm{\y_n}_2^4 } \norm{\L_{gg}}_F^2
    = \ordlog{1}.
\end{align*}
\end{assu}
    
% Again, we expect a typical term in the sums of
% \assuref{high_dim_resid_assu,high_dim_resid_moments} to be of the assumed order.
% The assumptions amount to asserting that the intuition of \assuref{postg_conc}
% holds when averaged in various norms over $g$.  

To further motivate \assuref{high_dim_resid_assu,high_dim_resid_moments},
we present a simple example before stating our result.

%%%%%%%%%%%%%%%%%%%%%%%%%%%%%%%%%%%%
%%%%%%%%%%%%%%%%%%%%%%%%%%%%%%%%%%%%
%%%%%%%%%%%%%%%%%%%%%%%%%%%%%%%%%%%%
\begin{ex}\exlabel{poisson_re_resid}
\def\r{r}
Take the Poisson random effect regression model with responses $\r_n$, IID Gamma priors.
\begin{align*}
    \p(\lambda_g) =& \mathrm{Gamma}(\alpha, \beta) &
    \p(\r_g \vert \lambda_g, \gamma, \a_{ng}=1) =& \mathrm{Poisson}(\gamma \lambda_g).
\end{align*}
Then, up to a constant $C$ containing terms that do not involve both the data and
parameters,\footnote{
    Terms such as $-\log \y_n!$ that do not depend on the parameters
    vanish in the posterior centering, and terms that do not depend on the data
    vanish in the $\fdist$ centering.  Such terms can be omitted from the
    definitions of $\y_n$ and $\eta$.    
}
\begin{align*}
% \ell(\lambda) ={}& \sumg \left( -\beta \lambda_g + (\alpha - 1) \log \lambda_g \right) \\
\ell(\x_n \vert \gamma, \lambda) ={}& 
    \sumg \a_{ng} \left( \r_n (\log(\gamma) + \log(\lambda_g)) - \gamma \lambda_g\right) + C
={}
\sumg \a_{ng} 
\begin{pmatrix}
    \r_n \\ 
    1
\end{pmatrix}^\trans
\begin{pmatrix}
    \log(\gamma) + \log(\lambda_g) \\ 
    - \gamma \lambda_g
\end{pmatrix} + C \Rightarrow \\
\y_{n} ={}& \begin{pmatrix}
    \r_n \\ 
    1
\end{pmatrix} 
\quad\textrm{and}\quad
\eta_g(\gamma, \lambda) ={}
\begin{pmatrix}
    \log(\gamma) + \log(\lambda_g) \\ 
    - \gamma \lambda_g
\end{pmatrix}.
\end{align*}
For a give $G$, all the quantities needed from $\fdist$ for
\defref{high_dim_exp_terms} can be represented as a deterministic set of values 
for $\expect{\fdist(\y_n \vert \a_{ng}=1)}{\r_n}$ and
$\var{\fdist(\y_n \vert \a_{ng}=1)}{\r_n}$.
In particular, since we are evaluating at $\postglf$, the posterior is
proportional to
$\exp\left( 
    \sumn \expect{\fdist(\x_n)}{\ell(\x_n \vert \gamma, \lambda)}
\right) = \exp\left( N \expect{\fdist(\x_n)}{\ell(\x_n \vert \gamma, \lambda)}\right)$,
with
\begin{align*}
N \expect{\fdist(\x_n)}{\ell(\x_n \vert \gamma, \lambda)} ={}&
\frac{N}{G} \sumg 
    \left( \expect{\fdist(\y_n \vert \a_{ng}=1)}{\r_n} 
        (\log(\gamma) + \log(\lambda_g)) - \gamma \lambda_g\right) + C,
\end{align*}
from which we can read off explicit expressions for $\mu_g(\gamma)$ and
$\J_{gh}(\gamma)$ and, in turn, $\M_{gh}$ and $\L_{gh}$ using
\assuref{postg_conc}.  Doing so, we find that $\M_{gh}$ is typically dense,
motivating the $G^2$ scaling in \assuref{high_dim_resid_assu}
\itemref{m_matrix_frob}.  Conversely, $\J_{gh} = 0$ for $g \ne h$ since the
entries of $\lambda$ are in fact nested as in \exref{re}, and so \emph{a
posteriori} independent given $\gamma$, motivating the $G$ scaling
in \assuref{high_dim_resid_assu} \itemref{l_matrix_frob}.
See \appref{poisson_re_resid} for details.
\end{ex}

%%%%%%%%%%%%%%%%%%%%%%%%%%%%%%%%%%%%
%%%%%%%%%%%%%%%%%%%%%%%%%%%%%%%%%%%%
%%%%%%%%%%%%%%%%%%%%%%%%%%%%%%%%%%%%

\begin{thm}\thmlabel{resid_divergence}
    Let \assuref{high_dim_resid_assu, high_dim_resid_moments} hold.  Then, as
    $G\rightarrow\infty$ and $N\rightarrow\infty$,
    \begin{align*}
        \resid{T}(1) ={}  
        \frac{1}{N} \sumnm \rho_{nm} + \ordlog{N^{-1}}
        \quad\textrm{and}\quad
        \expect{\xfdist}{\abs{\resid{T}(1) - \kappa}} \rightarrow 0.
    \end{align*}
    Consquently, if $\kappa \ne 0$, then, with probability approaching one,
    \begin{align*}
        \sup_{\t \in [0,1]} 
            \sqrt{N} \abs{\resid{T}(\t)} \ge 
            \sqrt{N} \abs{\resid{T}(1)} \rightarrow \infty.
    \end{align*}
    Furthermore, if $N / G \rightarrow 0$, then
    \begin{align*}
        \resid{T}(1) = \frac{1}{N} \sumn \rho_{nn} + \ordlog{G^{-1/2}} + \ordlog{N^{-1}},
    \end{align*}
    where $\rho_{nn}$ are IID random variables,
    $\expect{\fdist(\x_n)}{\rho_{nn}} = \kappa + \ordlog{G^{-1/2}}$, and
    $\var{\fdist(\x_n)}{\rho_{nn}} = \ordlogp{1}$.  It follows that $\sqrt{N}
    \left(\resid{T}(1) - \kappa \right)$ converges by the central limit theorem
    to a Normal random variable. See \appref{resid_divergence} for a proof.
\end{thm}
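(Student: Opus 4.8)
\emph{Proof plan.} The plan is to start from \lemref{expansion_terms} and the Tower-property expression in \eqref{highdim-resid}, which give $\resid{T}(1) = \frac{1}{N^2}\sumnm \htil(\x_n,\x_m)$ with kernel $N^2 \expect{\postgt}{\gbar(\gamma)\,\expect{\postlt}{\ellbarbar(\x_n\vert\gl)\,\ellbarbar(\x_m\vert\gl)}}$ evaluated at $\t=1$. First I would specialize to the exponential family. Substituting $\ell(\x_n\vert\gl) = \sumg \a_{ng}\,\y_n^\trans\eta_g(\gl)$ and carrying both layers of centering through shows that $\ellbarbar(\x_n\vert\gl) = \tfrac{1}{\sqrt{G}}\sumg\tilde{y}_{ng}^\trans\eta_g(\gl)$ plus a term not depending on $\lambda$ or $\gamma$, with $\tilde{y}_{ng} = \sqrt{G}\,\a_{ng}\y_n - \m_g/\sqrt{G}$ as in \defref{high_dim_exp_terms}. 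Since, conditionally on $\gamma$, $\expect{\postlt}{\ellbarbar(\x_n\vert\gl)\,\ellbarbar(\x_m\vert\gl)}$ is a posterior covariance plus a product of posterior means, and $\expect{\postlt}{\eta_g(\gl)\eta_h(\gl)^\trans} = \J_{gh}(\gamma) + \mu_g(\gamma)\mu_h(\gamma)^\trans$, the inner expectation splits into a \emph{covariance part} $\tfrac1G\sum_{g,h}\tilde{y}_{ng}^\trans\J_{gh}(\gamma)\tilde{y}_{mh}$ and a \emph{mean part} built from $\tfrac{1}{\sqrt{G}}\sumg\tilde{y}_{ng}^\trans\mu_g(\gamma)$. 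Taking the outer $\postgt$-expectation weighted by $\gbar(\gamma)$ and invoking \assuref{postg_conc} --- that the $\gamma$-marginal posterior concentrates at $\gamma_0$ and its rescaled second- and third-order centered posterior moments converge to limits built from first derivatives at $\gamma_0$, exactly as \thmref{bayes_clt_main} is used in the proof of \thmref{finite_dim_resid} --- replaces $N^2\expect{\postgt}{\gbar(\gamma)\J_{gh}(\gamma)}$ by $N\L_{gh}$ and the mean part by the matrices $\M_{gh}$ of \defref{high_dim_exp_terms}, with $\ordlog{N^{-1}}$ remainders of the type already controlled there; the replacement of the data-dependent posterior $\p(\gamma\vert\fndist,N)$ appearing at $\t=1$ by $\postgf$ costs only a further $\ordlog{\cdot}$ term since both concentrate at the same $\gamma_0$.

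Next I would separate the two contributions to $\resid{T}(1) = \frac{1}{N^2}\sumnm\htil(\x_n,\x_m)$. The mean part equals $\frac{1}{N^2 G}\sum_{g,h}\big(\sumn\tilde{y}_{ng}\big)^\trans\M_{gh}\big(\summ\tilde{y}_{mh}\big)$ up to lower order; bounding its $\xfdist$-mean (which is $\ordlog{N^{-1}}$ by \assuref{high_dim_resid_assu}\itemref{m_matrix_trace}) and its variance (which is $\ordlog{N^{-2}}$ by \assuref{high_dim_resid_assu} items \itemref{m_matrix_frob} and \itemref{m_matrix_avg_op}) shows it is $\ordlog{N^{-1}}$. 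In the covariance part, the off-diagonal $g\ne h$ terms vanish under nesting ($\J_{gh}=0$), and \assuref{high_dim_resid_assu}\itemref{l_matrix_frob} ensures they remain lower-order in general; the $g=h$ terms are exactly $\frac{1}{NG}\sumnm\sumg\tilde{y}_{ng}^\trans\L_{gg}\tilde{y}_{mg} = \frac{1}{N}\sumnm\rho_{nm}$. Combining yields the first displayed identity, $\resid{T}(1) = \frac{1}{N}\sumnm\rho_{nm} + \ordlog{N^{-1}}$.

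It then remains to analyze $\frac{1}{N}\sumnm\rho_{nm} = \frac{1}{NG}\sumg\big(\sumn\tilde{y}_{ng}\big)^\trans\L_{gg}\big(\summ\tilde{y}_{mg}\big)$. Since $\expect{\fdist(\x_n)}{\tilde{y}_{ng}} = 0$ and $\expect{\fdist(\x_n)}{\tilde{y}_{ng}\tilde{y}_{ng}^\trans} = \S_g - G^{-1}\m_g\m_g^\trans$, a second-moment computation gives $\expect{\xfdist}{\frac{1}{N}\sumnm\rho_{nm}} = \frac{1}{G}\sumg\trace{\S_g^{1/2}\L_{gg}\S_g^{1/2}} + \ordlog{G^{-1}} = \kappa + \ordlog{G^{-1/2}}$ by \assuref{high_dim_resid_moments}, while the variance is shown to vanish using the third- and fourth-moment bounds of \assuref{high_dim_resid_moments} together with \assuref{high_dim_resid_assu}\itemref{l_matrix_frob} to control the $\L_{gg}$ norms. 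An $L_1$ bound then delivers $\expect{\xfdist}{\abs{\resid{T}(1) - \kappa}} \rightarrow 0$. The divergence claim is then immediate: if $\kappa\ne0$, then $\resid{T}(1)\plim\kappa\ne0$, so $\sqrt{N}\,\abs{\resid{T}(1)}\rightarrow\infty$, and $\sup_{\t\in[0,1]}\sqrt{N}\,\abs{\resid{T}(\t)}\ge\sqrt{N}\,\abs{\resid{T}(1)}$ trivially.

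Finally, for the $N/G\rightarrow0$ refinement, I would observe that the off-diagonal part $\frac{1}{N}\sum_{n\ne m}\rho_{nm}$ is nonzero only when $\x_n$ and $\x_m$ fall in a common group, an event of probability $\ordp{G^{-1}}$ per pair, so a variance bound makes it $\ordlog{(N/G)^{1/2}}\rightarrow0$; what remains is $\frac{1}{N}\sumn\rho_{nn}$, an average of the IID random variables $\rho_{nn}$ (functions of the IID $\x_n$) with $\expect{\fdist(\x_n)}{\rho_{nn}} = \frac{1}{G}\sumg\trace{\S_g^{1/2}\L_{gg}\S_g^{1/2}} + \ordlog{G^{-1}} = \kappa + \ordlog{G^{-1/2}}$ and $\var{\fdist(\x_n)}{\rho_{nn}} = \ordlogp{1}$, so the classical central limit theorem applied to $\frac{1}{\sqrt{N}}\sumn(\rho_{nn} - \kappa)$ gives convergence to a Normal random variable. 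The main obstacle is the reduction in the first paragraph: propagating both centerings through the exponential-family substitution, passing from the data-dependent posterior at $\t=1$ to its $\gamma_0$-concentration limit via \assuref{postg_conc}, and --- most delicately --- verifying that the accumulated higher-order BCLT remainders really are of the claimed order once summed over all $N^2$ pairs $(n,m)$, which is precisely what the norm-averaged hypotheses of \assuref{high_dim_resid_assu} are tailored to ensure. The subsequent $V$-statistic bookkeeping for the off-diagonal terms is routine by comparison.
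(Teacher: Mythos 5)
Your overall route --- specializing to the exponential family, splitting the inner $\lambda$--expectation into a covariance part and a mean part, killing the mean ($\M$) contributions and the off--block terms with \assuref{high_dim_resid_assu}, then treating $\frac{1}{N}\sumn \rho_{nn}$ as an IID average whose moments come from \assuref{high_dim_resid_moments}, with a CLT when $N/G \rightarrow 0$ --- is the paper's route, and that part of your bookkeeping lines up with \lemref{high_dim_resid_order, high_dim_rho_moments} (modulo small misattributions: the diagonal $\M$ term is bounded in $L_1$ via \itemref{m_matrix_avg_op}, the $n \ne m$ terms in $L_2$ via \itemref{m_matrix_frob} and \itemref{l_matrix_frob} together with \lemref{m_rescaling}).

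The gap is in your first paragraph, which you yourself flag as the main obstacle. In the paper there is nothing to approximate there: $\L_{gh}$ and $\M_{gh}$ are \emph{defined} (\defref{high_dim_exp_terms}, \defref{high_dim_exp_terms_more}) as the exact finite--$N$ posterior expectations $N\expect{\postgf}{\gbar(\gamma)\J_{gh}(\gamma)}$ and $N^2\expect{\postgf}{\gbar(\gamma)\mubar_g(\gamma)\mubar_h(\gamma)^\trans}$, so the tower property plus the linearity $\ell(\x_n \vert \gl) = \z_n^\trans \eta(\gl)$ yields the exact algebraic identity $\htil(\x_n,\x_m) = \zbar_m^\trans\left(N\L + \M\right)\zbar_n$ --- no linearization at $\gamma_0$, no BCLT remainders summed over $N^2$ pairs, and no use of \assuref{postg_conc}, which is explicitly informal and is \emph{not} a hypothesis of \thmref{resid_divergence} (it serves only to motivate the scalings in \assuref{high_dim_resid_assu}). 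Your plan instead ``replaces'' these posterior expectations by derivative products at $\gamma_0$ via \assuref{postg_conc}, and, more seriously, proposes to pass from the data--dependent posterior $\p(\gamma \vert \fndist, N)$ to $\postgf$ ``at the cost of a further $\ordlog{\cdot}$ term since both concentrate at the same $\gamma_0$.'' That step is not implied by \assuref{high_dim_resid_assu, high_dim_resid_moments}, and controlling the coupling between the data and a data--dependent posterior uniformly over all pairs $(n,m)$ is precisely what the paper avoids by phrasing the theorem for the $\fdist$--posterior kernel; the data--dependent version ($\resid{T}(\ttil)$, or equivalently the posterior at $\fndist$) is explicitly left as a conjecture. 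As written, your argument therefore either relies on hypotheses the theorem does not grant, or has a hole exactly at its central reduction; the fix is to use the kernel identity exactly rather than asymptotically.
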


%%%%%%%%%%%%%%%%%%%%%%%%%%%%%%%%%%%%%%%%%%%%%
%%%%%%%%%%%%%%%%%%%%%%%%%%%%%%%%%%%%%%%%%%%%%
%%%%%%%%%%%%%%%%%%%%%%%%%%%%%%%%%%%%%%%%%%%%%

\Thmref{resid_divergence} shows that $\sqrt{N}\resid{T}(1)$ can be expected to
be non--zero with high probability. However, \thmref{resid_divergence} also
states that $\sqrt{N}\resid{T}(1)$ is dominated to leading order by the
stochastic behavior of $\rho_n$, which is propotional to the conditional
posterior covariances $\cov{\postlf}{\eta_g(\gl), \eta_h(\gl)}$, and which does
not depend on the conditional posterior means $\mu_g(\gamma)$. Reducing
$\J_{gh}(\gamma)$ uniformly by a factor of $K$, for example, reduces $\sqrt{N}
\left(\resid{T}(1) - \kappa \right)$ by the same factor $K$.

\Thmref{resid_divergence} does not itself prove the inconsistency of the IJ
covariance, since it leaves open the possibility that $\sqrt{N}\resid{T}(\ttil)
\plim 0$ due to some complex stochastic relationship between the data and
posterior, or between $\ttil$ and the data, though examples such as
\exref{poisson_re_resid} make such a result seem unlikely.  We thus
only conjecture that the above results extend to $\resid{T}{\ttil}$,
leaving such analysis for future work.

% We leave more detailed applications of \thmref{resid_divergence} and to specific
% models, as well as analysis of the more complicated $\sqrt{N}\resid{T}(\t)$ for
% $\t \ne 1$ for future work.

% \todo{Finish this example}
% The limit $N / G \rightarrow 0$ corresponds to having one observation per group
% with probability approaching one.  In the nested model, this corresponds to, 
% %
% \begin{align*}
%     \ell(\x_n \vert \gamma) = 
%         \log \int 
%             \exp\left( \ell(\x_n \vert \lambda_g, \gamma)\right)\prior(\lambda_g) d\lambda_g.
% \end{align*}
% %
% By applying \thmref{ij_consistent} to the marginal, we know that
% %
% \begin{align*}
%     \expect{\postg}{\gamma} \rightarrow \argmax_{\gamma}
%     \expect{\fdist(\xn)}{\ell(\xn \vert \gamma)} =: \gamma_0
%     \quad\textrm{and}\quad
%     N \left( \expect{\postg}{\gamma} - \gamma_0\right) 
%     \dlim
%     \normdist\left(0, \var{\fdist(\xn)}{
%         \cov{\postgf}{\gamma, \ell(\xn \vert \gamma)}} \right).
% \end{align*}
% %
% However, the von Mises expansion for the high dimensional model centers at

% .... TODO finish this example.  The point is that you do get a normal
% limit for $\resid{T}(1)$, but with the wrong variance and wrong centering.
% %
% % \begin{align*}
% %     \postgf \propto \int \prod_{g=1}^G
% %     \exp\left(N \expect{\fdist(\xn)}{\ell(\xn \vert \lambda_g, \gamma)} \right)
% %     \prior(\lambda) \prior(\gamma) d \lambda_g
% % \end{align*}
% %

\subsection{Weighted posterior}
\seclabel{von_mises_weights}
We will now discuss a well-known connection between the von Mises expansion
result of \secref{von_mises,von_mises_high_dim} to the bootstrap, primarily
to inform our real-world experiments of \secref{experiments}, for which the
ground truth asymptotic variance is unavailable and we compare the IJ covariance
with the more computationally demanding bootstrap covariance.  The connection
is interesting in its own right, however, since the negative
results of \secref{von_mises_high_dim} suggest that the bootstrap variance
will be inconsistent as well in the high-dimensional setting.

We can connect the, bootstrap, IJ covariance, and von Mises expansion in a
unified framework using the notion of {\em data weights}.  Let $\w = (\w_1,
\ldots, \w_N)$ be a vector of scalar weights, with $\onevec := (1, \ldots, 1)$
denoting the vector of all ones.  Define the {\em weighted posterior
expectation} as follows:
\begin{align}\label{eq:postw}
\expect{\postw}{\g(\theta)} =
    \frac{
    \int_{\thetadom} g(\theta)
        \exp\left(\sumn \w_n \ell(\x_n \vert \theta) \right) \prior(\theta)
                        d\theta}
         {\int_{\thetadom}
         \exp\left(\sum \w_n \ell(\x_n \vert \theta) \right) \prior(\theta)
         d\theta}.
\end{align}
Under this definition, $\expect{\postpert{\xvec, \onevec}}{\g(\theta)} =
\expect{\post}{\g(\theta)}$.  If we let $\w_n$ denote the number of times a
datapoint $n$ is selected when drawing indices with replacement from $[N]$, then
$\p(\w) = \textrm{Multinomial}\left(N^{-1}, N\right)$ and the bootstrap
covariance estimate of \algrref{boot} can be represented succinctly as
\begin{align*}
\gcovboot =
N \covhat{b \in [B]}{\expect{\postpert{\xvec, \w^b}}{\g(\theta)}}
\quad\textrm{where}\quad
\w^b \iid \p(\w).
\end{align*}

Of course, evaluating $\expect{\postw}{\g(\theta)}$ at a generic weight vector
$\w$ typically requires running a computationally expensive new MCMC chain as in
\algrref{boot}. Instead, we might form a linear approximation to the effect of
changing the posterior weights.  Define $\T(\w) :=
\expect{\postw}{\g(\theta)}$, and form the linear approximation
\begin{align}
%
% \Tlin(\w) :={}&
%     \T(\onevec) +  \\
\T(\w) ={}&
    \T(\onevec) +
    \fracat{\partial \T(\w)}{\partial\w^\trans}{\w=\onevec}(\w - \onevec) +
    \resid{\wtil}(\w) \quad\textrm{where} \eqlabel{boot_expansion} \\
\resid{\wtil}(\w) :={}&
    \frac{1}{2} (\w - \onevec)^\trans
    \fracat{\partial^2 T(\w)}{\partial \w_n \partial \w_m}{\wtil} (\w - \onevec)
    \quad\textrm{and} \nonumber \\
\wtil ={}& \onevec + \t (\w - \onevec) \textrm{ for some }\t \in [0,1].
\nonumber
\end{align}
Assuming that we can exchange integration and differentiation in the integrals
of \eqref{postw}, the first derivative is given by
\begin{align}\eqlabel{t_w_deriv}
\fracat{\partial \T(\w)}{\partial\w_n}{\wtil} =
\cov{\postwtil}{\g(\theta), \ell(\x_n \vert \theta)}
\quad\Rightarrow\quad
\fracat{\partial \T(\w)}{\partial\w^\trans}{\w=\onevec} = \frac{1}{N}\infl,
\end{align}
where we have taken $\infl = (\infl_1, \ldots, \infl_N)$ to be the $\gdim \times
N$ matrix of stacked influence scores as given in \eqref{ij_terms}.
Noting that, by standard properties of the multinomial distribution,
$\expect{\p(\w)}{\w} = \onevec$ and  $\cov{\p(\w)}{\w} = I_N + N^{-1}\onevec
\onevec^\trans$, it follows that we can also write the IJ covariance using the
weighted posterior as
\begin{align*}
\gcovij
% = \meann \infl_n \infl_n^\trans - \bar{\infl} \bar{\infl}^\trans
= N \infl\, \cov{\p(\w)}{\w - \onevec} \infl^\trans =
\cov{\p(\w)}{
    \sqrt{N}
    \fracat{\partial \T(\w)}{\partial\w^\trans}{\w=\onevec}(\w - \onevec)}.
\end{align*}
The IJ covariance is thus precisely a linear approximation to the bootstrap
covariance of the rescaled statistic $\sqrt{N} (\T(\w) - \T(\onevec))$.

As in the von Mises expansion, we can evaluate the accuracy of this linear
approximation using the second order term of the Taylor series expansion. By
again differentiating \eqref{t_w_deriv} with respect to $\w$ by exchanging
integration and differentiation, we see that
\begin{align*}
\fracat{\partial^2 T(\w)}{\partial \w_n \partial \w_m}{\w}
= \expect{\postw}{\gbar(\theta)
    \ellbar(\x_n \vert \theta) \ellbar(\x_m \vert \theta)}
\end{align*}
so that
\begin{align}\eqlabel{boot_resid}
\sqrt{N} \resid{\wtil}(\w) ={}&
\frac{1}{N^2} \sumnm \sqrt{N} \htil(\w_n, \w_m | \x_n, \x_m, \wtil) \textrm{ where}\\
\htil(\w_n, \w_m | \x_n, \x_m, \wtil) :={}&
    N^{5/2} \expect{\postwtil}{\gbar(\theta)
      \ellbarbar(\x_n \vert \theta) \ellbarbar(\x_m \vert \theta)}
          (\w_n - 1) (\w_m - 1).
\end{align}

As with \eqref{v_def}, \eqref{boot_resid} takes the form of a V-statistic with a
kernel consisting of $N^{5/2}$ times a posterior expectation over $\gbar(\theta)
\ellbarbar(\x_n \vert \theta) \ellbarbar(\x_m \vert \theta)$.  All the analysis
of the residuals $\resid{\T}(\t)$ \secref{von_mises,von_mises_high_dim}
applies, only with $\postwtil$ replacing $\postft$.  This is, of course,
not a coincidence: \eqref{boot_expansion} is exactly analogous to
\eqref{t_expansion}, only with $\post$ in place of $\postf$ and $\postw$
in place of $\post$.

We thus expect the IJ to be a good approximation to the bootstrap in the same
settings where the IJ is a good approximation to the truth: namely, settings
where the posterior expectation $\expect{\postpert{\mathbb{G},N}}{\gbar(\theta)
\ellbarbar(\x_n \vert \theta) \ellbarbar(\x_m \vert \theta)}$ is of order
$N^{-2}$, uniformly for $\mathbb{G}$ in a neighborhood of $\fndist$.  In
particular, in high-dimensional models, we expect the IJ to form an inconsistent
approximation to the bootstrap, in the sense that the two do not coincide as $N
\rightarrow \infty$. This connection motivates our use of the bootstrap as a
``ground truth'' for our real-data experiments of \secref{experiments}.

Further, the failure of the residual to vanish may imply that the bootstrap
itself is also inconsistent in posteriors which fail to concentrate, given the
role of von Mises expansions in many proofs of consistency of the bootstrap
variance estimator (see, e.g., \cite[Chapter 3]{shao:2012:jackknife}). However,
we leave detailed exploration of the consistency of the bootstrap in
high-dimensional settings for future work.

%%%%%%%%%%%%%%%%%%%%%%%%%%%%%%%%%%%%%%%%%%%%%%%%%%%
%%%%%%%%%%%%%%%%%%%%%%%%%%%%%%%%%%%%%%%%%%%%%%%%%%%

\section{Experiments}
\seclabel{experiments}

% This needs to be fixed.  Comment out to avoid red text
% \subsection{Nuisance parameter simulations}
% \seclabel{nuisance_simulations}
% \input{experiments_nuisance.tex}

We will now evaluate the performance of the IJ on several real-world models and
datasets. Since we are using real data, we will not have access to a ground
truth, so we will check the accuracy of the IJ covariance estimate by comparison
with the more computationally expensive bootstrap estimate.  All our estimates
will be computed with Monte Carlo (both MCMC and Monte Carlo bootstrap draws),
and we will account for this Monte Carlo error in our comparisons.

In each experiment, we will fix a number of MCMC samples, denoted $M$, and a
number of bootstraps, denoted $B$.  From a single run of MCMC, we compute the
Bayesian posterior estimate, $\gcovbayeshat$, and the IJ covariance estimate,
$\gcovijhat$.  Using \algrref{boot}, we draw $B$ different bootstrap samples,
computing $M$ MCMC samples for each, from which compute the bootstrap variance
estimate, $\gcovboothat$.

All three of $\gcovbayeshat$, $\gcovijhat$, and $\gcovboothat$ are estimated
with Monte Carlo sampling error. Since our experiments use a finite $M$,
$\gcovbayeshat$ and $\gcovijhat$ are MCMC estimators of $\gcovbayes$ and
$\gcovij$, respectively.  Similarly, our $B$ bootstrap samples are thought of as
a random draw from the set of $B_{max} := N^N / N!$ possible bootstrap samples,
and so $\gcovboothat$ is a ``doubly Monte-Carlo'' estimator of $\gcovboot :=
\lim_{B \rightarrow B_{max}} \lim_{M \rightarrow \infty} \gcovboothat$.  In
general, for $B < B_{max}$, we expect bootstrap sampling variability to remain
even in $\lim_{M \rightarrow \infty} \gcovboothat$. Similarly, for $M < \infty$,
we expect MCMC sampling variability to remain even in $\lim_{B \rightarrow
B_{max}} \gcovboothat$.  We will shortly discuss how to estimate these Monte
Carlo errors.

Even when the conditions of \thmref{ij_consistent} are met and the IJ covariance
is consistent, we do not expect $\gcovij = \gcovboot$ even with $B = B_{max}$
and $M = \infty$, since we have a finite number $N$ of data points, and the
bootstrap and IJ covariance are formally distinct (though consistent) estimators
of $\gcovtrue$.  The difference $\gcovij - \gcovboot$ should decrease as $N$
grows, but, for any finite $N$, the difference can be expected to exceed Monte
Carlo error for sufficiently large $M$ and $B$. Similarly, even in the correctly
specified case, we do not expect $\gcovbayes = \gcovhat$, as, again, these are
distinct (though consistent) estimators of $\info^{-1}$.

One might attempt to additionally account for frequentist sampling variability
of the original data from the target population in $\gcovij$ or $\gcovboot$ as
estimators of $\gcovtrue$, or of $\gcovbayes$ as an estimator of $\info^{-1}$.
However, we expect the quantities $\gcovij$, $\gcovboot$, and $\gcovbayes$ to be
highly correlated under random sampling of the data from the population
distribution, and it is necessary to take this correlation into account when
computing the variability of, say, the IJ ``error'' $\gcovijhat - \gcovboothat$.
The authors do not see a simple, computationally tractable way to account for
this correlation, so we neglect this error in our experiments, and account only
for the Monte Carlo error of the preceding paragraph.  Since we consider the IJ
covariance  to succeed when the difference $\gcovijhat - \gcovboothat$ is small
relative to random variability, neglecting the frequentist sampling error is
conservative.

To estimate the Monte Carlo standard errors, we model the individual entries of
the matrices $\gcovbayeshat$, $\gcovijhat$, and $\gcovboothat$ as normal
distributions with Monte Carlo standard deviations given by the matrices
$\sebayes$, $\seij$, and $\seboot$, respectively.\footnote{Since both
$\gcovbayeshat$ and $\gcovijhat$ are based on the same MCMC samples, their Monte
Carlo errors are correlated, typically positively.  However, to avoid cumbersome
exposition of the experimental results, we will report standard errors for
$\gcovbayeshat$ and $\gcovijhat$ as if they were independent, since accounting
for this correlation will tend to reduce the standard errors of the difference
$\gcovbayeshat - \gcovijhat$, and the difference $\gcovbayeshat - \gcovijhat$
tends to be large in our experiments, even relative to the conservative estimate
of independent standard errors.
% Estimated standard errors of the difference
% $\gcovbayeshat - \gcovijhat$ taking this correlation into account are available
% in the online code for our experiments CITE GIT REPO.
}
To estimate $\sebayes$ and $\seij$, we block bootstrap the MCMC chains using
block sizes that are large relative to estimates of the MCMC mixing times.
Note that block bootstrapping here requires only re-arranging the original MCMC
chains, which is fast, and does not require drawing any new MCMC samples.  To
estimate $\seboot$, we use the fact that each bootstrap sample $\mathscr{G}^b$
from \algrref{boot} contains independent MCMC errors, and so the marginal
distribution of the $\mathscr{G}^b$ can be used to estimate the variability both
of the bootstrap sampling and the MCMC.  We do so using the delta method, as
follows.  For any two scalar random variables, $t_1$ and $t_2$ drawn according
to a joint distribution $p(t_1, t_2)$, we can define $h(a, b, c) :={} a - bc$
and write
\begin{align*}
\cov{p(t_1, t_2)}{t_1, t_2} &={}
    \expect{p(t_1, t_2)}{t_1 t_2} -
    \expect{p(t_1, t_2)}{t_1}\expect{p(t_1, t_2)}{t_2}\\
&={} h\left(\expect{p(t_1, t_2)}{t_1 t_2},
     \expect{p(t_1, t_2)}{t_1},
     \expect{p(t_1, t_2)}{t_2} \right).
\end{align*}
By applying the delta method to the function $h$, we can estimate the sampling
variability of $\cov{p(t_1, t_2)}{t_1, t_2}$ from the covariance of the vector
$(t_1 t_2, t_1, t_2)$, which, in turn, can be estimated from draws of $t_1$ and
$t_2$.  We apply the delta method in this way to each entry of the matrix
$\gcovboothat$.

\subsection{Simulated Poisson Experiments}
\seclabel{poisson_experiments}
\PoissonREGraph{}

We begin with a simulated experiment based on \exref{re}, in which we keep $N /
G$ fixed, increase $N$, and compare the IJ and bootstrap covariance to a ground
truth estimated by simulation.  We demonstrate the inconsistency of the IJ
covariance as $N$ grows when $N / G$ is small, as well as an improvement in the
accuracy of the IJ for larger values of $N / G$, long before $N / G$ approaches
infinity.

We consider $N$ responses $\y = (y_1, \ldots, \y_N)$ which are
Poisson-distributed conditional on $G$ scalar-valued random effects $\lambda =
(\lambda_1, \ldots, \lambda_G)$.  Let $\a_n \in \{1 ,\ldots, G \}$ denote the
random effect index for observation $n$, writing $\a = (\a_1, \ldots, \a_N)$.
Formally, our model is
\begin{align*}
\y_n \vert \lambda, \gamma, \a \iid{}&
    \mathrm{Poisson}(\exp(\gamma + \lambda_{\a_n}) ) \\
\exp\left(\lambda_g\right) \iid{}& \mathrm{Gamma}(\alpha, \beta)\\
\gamma \sim{}& \mathrm{Unif}(-\infty, \infty).
\end{align*}
We chose $\alpha$ and $\beta$ so that $\expect{\p(\lambda_g)}{\exp(\lambda_g)} =
\reZPriorMean$ and $\var{\p(\lambda_g)}{\exp(\lambda_g)} = \reZPriorSD^2$.   We will
take our quantity of interest to be $g(\gamma) = \gamma$ for simplicity.

For concreteness, one might imagine that $y_n$ describes a count of a
charismatic animal observed at site $\a(n)$, where each site has a different,
random baseline preponderance of the animal given by its corresponding
$\exp(\lambda_g)$. The IJ covariance measures the ``conditional variance'' of
collecting new observations at a randomly chosen collection of the same sites,
in contrast to the ``marginal variance'' of collecting new observations at new,
unseen sites.

Given values for $N$, $G$, we fix $\gamma = 1.5$,  and simulate $1 + \reNumSims$
datasets according to the model given above. We use the first simulated dataset
as the ``observed dataset,'' using it to compute the reported Bayesian posterior
variance and IJ variance, as well as to compute $\reNumBoots$ bootstrap
variance, all of $\expect{\post}{\theta}$.  We use the remaining $\reNumSims$
datasets to estimate the true frequentist variance of $\expect{\post}{\theta}$,
an estimate which we call $\hat{V}_{\mathrm{sim}}$.

\Figref{poisson_re_graph} shows the results.  When $N/G = 1$, i.e., there is
only one observation per random effect on average, we expect that $\p(\lambda_g
\vert \xvec)$ is fairly dispersed and, by the arguments of
\secref{von_mises_high_dim}, we expect the IJ covariance to be inconsistent.
As expected, the first row of \figref{poisson_re_graph} shows that the IJ
covariance under-estimates the ground truth simulated variance, even as
$N$ grows large.  In contrast, when $N/G = 10$, even though we do not expect
the IJ covariance to be consistent, we expect $\p(\lambda_g
\vert \xvec)$ to be more concentrated than in the $N/G=1$ case, and so,
in turn, for the IJ covariance to be more accurate.  Although there is
evidently some bias in the IJ covariance when $N/G=10$, the IJ variance is
clearly a better estimate than when $N/G=1$.

We note that the bootstrap is often within the Monte Carlo error bars of the
ground truth, even when $N/G=1$, although it is evident from the first row of
\figref{poisson_re_graph} that the bootstrap remains biased, even as $N$ grows,
as expected from the discussion in \secref{von_mises_weights}. For $N/G = 10$,
the performance of the bootstrap and IJ are essentially the same.

\subsection{Applied regression and modeling textbook}
\seclabel{arm_experiments}

\subsubsection{Models and data}

\ARMTable{}

We ran the experiments from \cite{gelman:2006:arm} for which data was available
in the Stan examples repository \cite{stan-examples:2017}, and for which
\texttt{rstanarm} \cite{rstanarm} was able to run in a reasonable amount of
time, resulting in comparisons between $\armNumCovsEstimated$ distinct
covariance estimates from $\armNumModels$ different models using
$\armNumDatasets$ distinct datasets. The median number of observations amongst
the models was $\armMedianNumObs$, and the range was from $\armMinNumObs$ to
$\armMaxNumObs$ exchangeable observations.  Since we fit our models in
\texttt{rstanarm}, we were able to use the function \texttt{rstanarm::log\_lik}
to compute $\ell(\x_n \vert \theta)$, so $\gcovijhat$ required essentially
no additional coding.  We used the default priors provided by \texttt{rstanarm}.

% What are our parameters of interest?
Every model we considered from \cite{gelman:2006:arm} were instances of
generalized linear models, including both ordinary and logistic regression as
well as models with only fixed effects and models that include random effects,
as shown in \tabref{arm_models}.  For parameters of interest, $g(\theta)$, we
took all fixed-effect regression parameters, the log of the residual variance
(when applicable), and the log of the random effect variance (when applicable).
We did not examine random effects, nor the random effect covariances for the few
models with multiple random effects, out of concern that a BCLT might not be
expected to hold for these parameters (in the notation of global--local models,
these parameters might be considered part of local parameters $\lambda$).

% What is the exchangeable unit?
For models with no random effects, we define an exchangeable unit to be a tuple
containing a particular response and a regressor.  For random effects models, we
had to choose an exchangeable unit.  When the random effect had many distinct
levels, we considered the groups of observations within a level to be a single
exchangeable unit.  For models whose random effects had few distinct levels, we
took the exchangeable unit to either be a more fine-grained partition of the
data, or, in some cases treated the observations as fully independent.

%%%%%%%%%%%%%%%%%%%%%%%%%%%%%%%%%%%%%%%%%%%%%%%%%
%%%%%%%%%%%%%%%%%%%JINGLE%%%%%%%%%%%%%%%%%%%%%%%%
%%%%%%%%%%%%%%%%%%%%%%%%%%%%%%%%%%%%%%%%%%%%%%%%%

\ARMGraphZ{}

\ARMGraphDiff{}

\subsubsection{Reporting metrics}

% What different types of covariances do we estimate?
% In addition to separately reporting results for different sizes of datasets and
% model types, we separately report the accuracy of different types of covariances
% in the covariance matrix of $g(\theta)$.
% First, we separately report results
% for terms on the diagonal, which we call ``variance estimates'', and terms off
% the diagonal, which we call ``cross-covariance estimates.''  Examples of
% cross-covariance estimates would be the covariance between two distinct
% regression parameters, or between a regression parameter and the log of the
% random effect variance.
% We will additionaly split out our results by whether the
% covariance involves at least one log scale parameter (either for a random effect
% or a residual), versus whether the covariance estimate is purely between
% regression parameters.

% Why do we need summary metrics
With $\armNumCovsEstimated$ covariances to report, we need to summarize our
results into a few high-level accuracy summaries.  We report the following two
summary measures, intended to capture the statistical and practical significance
of discrepancies between $\gcovbayeshat$, $\gcovijhat$ and $\gcovboot$.

% Statistical difference
First, to quantify the ``statistical significance'' of $\gcovijhat - \gcovboot$,
we compute the matrix $\zdiff$, whose $i,j$-th entry is simply the z-statistic
which we would compute to test equality between $\gcovijhat_{ij}$ and
$\gcovboot_{ij}$:
\begin{align*}
\zdiff_{ij} :=
\frac{\gcovijhat_{ij} - \gcovboothat_{ij}}
     {\sqrt{(\seij_{ij})^2 + (\seboot_{ij})^2}}.
\end{align*}
Of course, the z-statistics are not independent within the matrix $\zdiff$, so
we should think of aggregate summaries of elements of the various models'
$\zdiff$ matrices as heuristics which summarize the error $\gcovijhat -
\gcovboot$ relative to Monte Carlo error rather than formal hypothesis tests.
% We
% do not compute a version of $\zdiff$ for the Bayesian estimator both because we
% are interested more in the size of practical differences between $\gcovbayeshat$
% and $\gcovboothat$, and because the relative sizes of $\zdiff$ and a
% corresponding statistic for $\gcovbayeshat$ would be affected by different
% noise estimates $\sebayes$ and $\seboot$.

% Practical difference
Second, we wish to quantify the ``practical significance'' of the differences
$\gcovijhat - \gcovboot$.  For example, if the Monte Carlo error is very small,
then $\zdiff$ might be very large even when $\gcovijhat - \gcovboot$ is small
for practical purposes.  Conversely, $\zdiff$ can be very small simply because
the Monte Carlo error is high, and we don't necessarily want to reward large
$\seij$.  To quantify the practical significance, we compute the relative
errors,
\begin{align*}
\normdiffij_{ij} :=
     \frac{\gcovijhat_{ij} - \gcovboothat_{ij}}
     {\left| \gcovboothat_{ij} \right| + \seboot_{ij}}
\quad\textrm{and}\quad
\normdiffbayes_{ij} :=
  \frac{\gcovbayeshat_{ij} - \gcovboothat_{ij}}
  {\left| \gcovboothat_{ij} \right| + \seboot_{ij}}.
\end{align*}
The matrices $\normdiffij$ and $\normdiffbayes$ measure the relative differences
between the bootstrap and $\gcovijhat$ or $\gcovbayeshat$, respectively.  We add
the standard deviation $\seboot$ in the denominator to avoid problems dividing
by zero.\footnote{Note that the relative errors $\normdiffij$ and
$\normdiffbayes$ do not attempt to remove or account for Monte Carlo error.  One
might imagine computing the relative error after deconvolving the Monte Carlo
error and covariance estimates using, say, nonparametric empirical Bayes.  We
investigated such an approach, but the resulting procedure was complicated and
the results qualitatively similar to \figref{normerr_graph} below, so we will
report only the relatively simple metrics $\normdiffij$ and $\normdiffbayes$.}

%%%%%%%%%%%%%%%%%%%%%%%%%%%%%%%%%%%%%%%%%%%%%%%%%
%%%%%%%%%%%%%%%%%%%JINGLE%%%%%%%%%%%%%%%%%%%%%%%%
%%%%%%%%%%%%%%%%%%%%%%%%%%%%%%%%%%%%%%%%%%%%%%%%%

\subsubsection{Results}

% number of samples and timing
We computed $M = \armNumMCMCSamples$ MCMC draws from each model, which took a
range of times, from $\armMinMCMCTimeSecs$ seconds to $\armMaxMCMCTimeMins$
minutes. The total sampling for the initial MCMC and IJ computation was
$\armTotalMCMCTimeMins$ minutes.  For each model, we computed $B =
\armNumBootstraps$ bootstrap samples, for a total bootstrap compute time of
approximately $\armTotalBootTimeHours$ hours.

The distribution of the measure of practical significance, $\normdiffij$ and
$\normdiffbayes$, is shown in \figref{normerr_graph}.  For smaller datasets, the
IJ covariance is not meaningfully closer to the bootstrap than the Bayesian
covariance is. However, for larger datasets, the IJ is a notably better
approximation to the bootstrap than is the Bayesian posterior covariance,
particularly for covariances involving at least one scale parameter.  Most of
the IJ covariance estimates are well within 50\% of the bootstrap covariance,
particularly for regression parameters.

The comparisons for $\zdiff$, our informal test of statistical significance are
shown in \figref{relerr_graph}. The IJ produces results similar to the bootstrap
for models with $N$ greater than the median and for regression parameters.  For
datasets with fewer exchangeable units, the IJ and bootstrap often differ to a
degree greater than can be accounted for by the standard error, probably because
there is not enough data for a BCLT to practically apply.  For larger datasets,
the tests lie outside the standard rejection region at roughly the nominal
rate, suggesting that the differences between $\gcovijhat$ and $\gcovboothat$
can mostly be accounted for by Monte Carlo error.  We stress again that the
tests are correlated within a matrix $\zdiff$, and such collective measures
should not be expected to be a valid family of hypothesis tests.
%
% \begin{table}[h]
% %
% <<arm_reldiff_table, cache=knitr_cache, results='asis'>>=
% source("R_scripts/ARM/reldiff_table.R", echo=knitr_debug, print.eval=TRUE)
% @
% %
% \caption{Proportion of rejections with level $0.1$ within parameter types and
% dataset sizes when testing for equivalence between the IJ and the bootstrap. \tablabel{arm_reldiff_table}}
% %
% \end{table}
%

\subsection{Pilots data}
\seclabel{arm_pilots}
%%%%%%%%%%%%%%%%%%%%%%%%%%%%%%%%%%%%%%
%%%%%%%%%%%%%%%%%%%%%%%%%%%%%%%%%%%%%%
% Do not edit the TeX file your work
% will be overwritten.  Edit the RnW
% file instead.
%%%%%%%%%%%%%%%%%%%%%%%%%%%%%%%%%%%%%%
%%%%%%%%%%%%%%%%%%%%%%%%%%%%%%%%%%%%%%

\PilotsSEGraph{}

The ``Pilots'' experiment from \cite[Section 13.5]{gelman:2006:arm}, originally
published in \cite{gawron:2003:flight}, is a particularly apt
illustration of our methods, so we will briefly discuss it in detail.  The data
and model consists of a non-nested linear random effects model, and exhibits the
following properties:
\begin{itemize}
\item Marginal maximum likelihood (\texttt{lme4}) fails due a singular fit,
which is attributable to low variance in one of the groups.  In contrast,
\texttt{rstanarm} is able to produce draws from the posterior with good MCMC
diagnostics.\footnote{ In the original \cite[Section 13.5]{gelman:2006:arm}, it
appears that the authors were able to run \texttt{lme4} on the same model we
discuss here without encountering a singular fit, though with a very small
estimated variance for the component which we find to be singular.  However,
running the identical command to theirs on a more recent version ($lmerVersion$)
of the \texttt{lme4} results in a singular fit.  We conjecture that some aspect
of \texttt{lme4} may have changed since the publication of
\cite{gelman:2006:arm}.}
\item The MCMC chain takes long enough that the bootstrap samples took a
non-trivial amount of compute time: $\armPilotMCMCTimeSecs$ seconds for MCMC,
leading to $\armPilotBootTimeMins$ minutes for $\armPilotNumBoots$ bootstrap
samples.  (The nonlinear scaling is a consequence of some bootstrap samples
taking longer to converge than the original data.)
\item The IJ and bootstrap covariances differ meaningfully from the Bayesian
posterior, and in a way that could affect the interpretation of the results.
\item For at least one parameter, the frequentist variance exceeds the Bayesian
posterior variance, despite the former being interpreted as a conditional
variance and the latter a marginal variance under correct specification.
\item The IJ and bootstrap differ more than can be accounted for by random
sampling, but are in qualitative agreement, particularly relative to Bayes.
\end{itemize}
We acknowledge that our detailed analysis of the Pilots experiment is
cherry-picking from the ARM models. Almost all the other models in the
\cite{gelman:2006:arm} could be fit by \texttt{lme4}, producing results in
qualitative agreement with the posterior computed by \texttt{rstanarm}.  Nor did
the models of \cite{gelman:2006:arm} typically exhibit such large discrepancies
between Bayesian and frequentist variability (though one can certainly detect
the discrepancies in aggregate, as we show in \secref{arm_experiments}).
Nevertheless, the Pilots experiment exhibits the kind of behavior that {\em may}
be possible, particularly in cases of non-nested random effects, which were not
widely represented in \cite{gelman:2006:arm}.

\PilotsIntervalsGraph{}

The dataset consists of $N = \armPilotNumObs$ distinct pilots (indexed here by
$n$), who were grouped into $K=\armPilotNumGroups$ (indexed here by $k$)
according to their amount of prior experience and education recovering from
emergency situations.  Each pilot was then tested in $J=\armPilotNumScenarios$
different emergency scenarios (indexed here by $j$), after each of which an
examiner determined whether the pilot was able to successfully recover the
aircraft.  The observed ``response'' $\y_{kj}$ is the proportion of pilots in
group $k$ who were able to successfully recover from scenario $j$.  The
study's authors are interested in the effect of training on the ability of
a pilot to recover, relative to the ordinary variability in the types of
emergency scenarios pilots might encounter.  

A non-nested linear random effects model is fit with a global offset and random
effects for both the airport and condition.  Our ``global'' parameters of
interest, $\gamma$, will be the global offset ($\mu$), and scale parameters for
the per-observation residual ($\sigma_y$) and airport and scenario random
effects ($\sigma_{grp}$ and $\sigma_{scn}$, respectively). The
$\armPilotNumScenarios \times \armPilotNumGroups$ actual random effects for each
scenario constitute our ``local'' parameters, $\lambda$.

Formally, the model is:
\begin{align*}
\gamma ={}& (
    \mu,
    \sigma_{y},
    \sigma_{grp},
    \sigma_{scn})
\\
\lambda ={}& (\lambda^{grp}_{1}, \ldots, \lambda^{grp}_J,
              \lambda^{scn}_{1}, \ldots, \lambda^{scn}_K)
\\
\y_n ={}& \mu +     \lambda^{grp}_{j[n]} +
    \lambda^{scn}_{k[n]} + \varepsilon_n
\\
\lambda^{grp}_{j} \iid{}& \normdist\left(0,  \sigma_{grp}^2\right)
\quad
\lambda^{scn}_{k} \iid{} \normdist\left(0,  \sigma_{scn}^2\right)
\quad\varepsilon_n \iid{} \normdist\left(0,  \sigma_{y}^2\right).
\end{align*}

In this model, we are interested in the relative sizes of the scale parameters
$\sigma_{grp}$, $\sigma_{scn}$, and $\sigma_{y}$, since these quantify the size
of the effect on pilot performance due to pilot group (i.e. education), the
difference between scenarios, and idiosyncratic, individual variation,
respectively. With this interpretation in mind, we take our quantities of
interest as $\g(\gamma) = \left(\log \sigma_{scn} - \log \sigma_{grp}, \log
\sigma_y - \log \sigma_{grp}, \log \sigma_y - \log \sigma_{scn}\right)$.

Since the data consists of forty different pilots, one might reasonably ask how
much our conclusions might have changed had we randomly selected new pilots from
the same population and asked them to compute the same set of tasks.  This very
reasonable notion of sampling variability corresponds to conditional sampling.
We expect the IJ and bootstrap to estimate this conditional sampling
variability, whereas the Bayesian posterior variance should estimate
(asymptotically) the marginal variance, i.e., the variability we would expect
from, in some sense, selecting both new pilots and a new set of scenarios.  We
thus expect the frequentist and Bayesian intervals to differ.

This model has a large number of fully crossed random effects, the number of
which ($\lambdadim = \armPilotNumGroups + \armPilotNumScenarios$) is large
relative to the number of data points ($N = \armPilotNumObs$), so by the
reasoning of \secref{von_mises_high_dim,von_mises_weights}, we expect the
bootstrap and IJ covariance to differ, even in the absence of MCMC noise, but
that the IJ covariance will be a reasonable approximation to the bootstrap.

In \figref{width_graph}, we see the width of the standard errors (i.e., the
square root of the variance, all divided by $\sqrt{N}$) for each of the sampling
methods. In some cases, the IJ and bootstrap covariances are fairly different,
as might be expected given the low number of observations per random effect for
this model.  \Figref{width_graph} shows that the differences in the uncertainty
estimates matter, particularly for the key quantity of interest, $\log
\sigma_{scn} - \log \sigma_{grp}$.  Although the Bayesian posterior credible
interval overlaps with zero --- a conclusion that admits the possibility no
difference between the size of the effects --- we see that we are very unlikely
to measure a point estimate of $\expect{\post}{\log \sigma_{scn} - \log
\sigma_{grp}} = 0$ under re-sampling of pilots while keeping the scenarios and
group definitions fixed.

\subsection{Leisleri's bats}
\seclabel{bats}
%%%%%%%%%%%%%%%%%%%%%%%%%%%%%%%%%%%%%%
%%%%%%%%%%%%%%%%%%%%%%%%%%%%%%%%%%%%%%
% Do not edit the TeX file your work
% will be overwritten.  Edit the RnW
% file instead.
%%%%%%%%%%%%%%%%%%%%%%%%%%%%%%%%%%%%%%
%%%%%%%%%%%%%%%%%%%%%%%%%%%%%%%%%%%%%%

\subsubsection{Model and data}

\BatsData{}

Certain ecological datasets take the form of observation histories of labeled
individuals, and a key analysis task with such datasets is to infer population
quantities such as survival rates, taking into account the fact that individuals
may be present but not observed.  We consider one such dataset consisting
whether or not $\batsNumObs$ Leisleri's bats were observed on $\batsNumTimes$
distinct survey occasions, obtained from the datasets accompanying
\cite[Chapter 7]{kery:2011:bpa} (see also \cite{schorcht:2009:bats}).
Following \cite[Chapter 7]{kery:2011:bpa}, we model the data using a ``Cormack
Jolly Seber'' model with random time effects for survival probability.  We will
estimate the variability due to random sampling of the bats, for fixed
population parameters and at the same times.

We are motivated to compute frequentist covariances both because the simple
model is very likely an incomplete picture of the actual complicated population
dynamics, and because the bats are reasonably thought of as a random sample from
a larger population, e.g., from a contiguous section of forest with identical
properties that was not surveyed.

The analysis task is to infer the probability that a bat is observed given that
it is present, and the base rate and dispersion of the survival probabilities at
each time.  Formally, we are interested in the following quantities:
\begin{align*}
p_{\mathrm{mean}} :=&{}
    \textrm{ The (constant) probability of observation given presence}\\
\phi_{\mathrm{mean}} :=&{}
    \textrm{ The base probability that a bat survives from one period
            to the next}\\
\sigma_y :=&{} \textrm{ The variability in survival probability over time}.
\end{align*}

The generative model is then defined as follows. Each of $p_{\mathrm{mean}}$,
$\phi_{\mathrm{mean}}$ are given uniform priors in $(0, 1)$, and $\sigma_y$ is
given an improper uniform prior on $(0, \infty)$.  For each time period,
$t=1,\ldots,\batsNumTimes$, the survival probability for that time period
is determined by an IID ``random effect'' draw as follows:
\begin{align*}
\epsilon_t | \phi_{\mathrm{mean}}, \sigma_y \iid&
    \normdist\left(\cdot \vert
        \log\left(\frac{\phi_{\mathrm{mean}}}{1 - \phi_{\mathrm{mean}}}\right),
        \sigma_y\right)\\
\phi_{t} :=& \frac{\exp(\epsilon_t)}{1 + \exp(\epsilon_t)}.
\end{align*}
Given that a bat is observed in time period $t$, the probability that it is next
observed in the period $t + 1$ is then the probability that it survived times
the probability that it was observed, i.e., $\phi_{t} \cdot p_{\mathrm{mean}}$.
Similarly, the probability that it is next observed in the period $t + 2$ is the
probability that it survives to time $t + 2$, that it is not observed at time $t +
1$, and is observed at time $t + 2$, i.e., is $\phi_{t} \cdot \phi_{t + 1} \cdot
(1 - p_{\mathrm{mean}}) \cdot p_{\mathrm{mean}}$, and so on.  In this way, the
probability of the sequence of observations of the entire dataset can be
expressed in terms of the global model parameters $p_{\mathrm{mean}}$,
$\sigma_y$, and $\phi_{\mathrm{mean}}$.

Given the reasoning of the previous paragraph, it is computationally convenient
to express the dataset as a ragged array of counts, where each row corresponds
to a capture time, and each column corresponds to the first re-observation after
the capture time.  The final column counts all bats which were never
re-observed.  In this way, each individual's observation in all but the final
period increments by one exactly one cell of the ragged array by one.
Conversely, we can evaluate $\ell(x_n | \z, \theta)$ by adding up the log
probabilities of each cell in the ragged array to which individual $n$
contributes. Our particular dataset is shown in ragged array form in
\figref{bats_data}.

In our notation, $\theta = (p_{\mathrm{mean}}, \phi_{\mathrm{mean}}, \sigma_y)$,
$\z = (\epsilon_1, \ldots, \epsilon_T)$, and we take $g(\theta) =
(p_{\mathrm{mean}}, \phi_{\mathrm{mean}}, \log(\sigma_y))$.  We take a
datapoint, $\x_n$, to be all the observations for a single bat.  We are thus
evaluating the variability of sampling the bats, keeping the time periods and
their random effects fixed.

\subsubsection{Modeling results}

\BatsResults{}

As in the simpler example of \secref{arm_pilots}, the joint MAP estimate
(as computed by \texttt{rstan::optimizing}) estimates $\hat\sigma_y = 0$, and so
is degenerate (the Hessian of the log joint distribution at the MAP is
singular).  Furthermore, the returned MAP estimates $\hat{p}_{\mathrm{mean}} =
\batsOptMeanP$ and $\hat{\phi}_{\mathrm{mean}} = \batsOptMeanPhi$ are absurd,
probably due to the inability of the optimization algorithm to deal with the
poorly formed objective.  In order to account for survival probabilities that
are variable in time, we are thus constrained to use MCMC.

We ran MCMC model using \texttt{Stan} and a model implementation from the Stan
Examples Wiki \cite{stan-examples:2017}, slightly modified to track the log
probabilities $\ell(x_n | \z, \theta)$ needed for the IJ computation. We
computed a total of $\batsNumMCMCDraws$ draws, which took $\batsMCMCTime$
seconds. We ran the bootstrap for $B = \batsNumBoots$.  We computed standard
errors for the IJ and Bayesian covariances using the block bootstrap with
$\batsNumSEBlocks$ blocks and $\batsNumSEDraws$ draws.  As a robustness check,
we verified that the standard error estimates were similar those computed with
twice and four times as many blocks.

The base posterior mean estimates were
\begin{align*}
\expect{\post}{p_{\mathrm{mean}}} \approx \batsMeanMeanP \quad\quad
\expect{\post}{\phi_{\mathrm{mean}}} \approx \batsMeanMeanPhi \quad\quad
\expect{\post}{\log (\sigma_y)} \approx \batsMeanLogSigma,
\end{align*}
which closely match the results from \cite[Chapter 7.11]{kery:2011:bpa}.

\Figref{bats_result} shows the relationship between the diagonals of
$\gcovbayeshat$, $\gcovijhat$, and $\gcovboothat$.  The first row of
\figref{bats_result} shows the estimated variances, i.e., the diagonal entries
of the covariance matrices. For $p_{\mathrm{mean}}$ and $\phi_{\mathrm{mean}}$,
the frequentist variances are markedly different than the Bayesian posterior,
and the bootstrap and IJ agree closely.  For $\log (\sigma_y)$, the IJ
covariance is differs from the bootstrap by an amount greater than the estimated
standard errors, but the IJ is still a better estimate of the frequentist
variance than the Bayesian posterior. The second row of \figref{bats_result}
shows the estimated covariances, i.e., the off-diagonal entries of the
covariance matrices.  The covariances mostly differ from one another by less
than the estimated Monte Carlo error, except for the covariance between
$\phi_{\mathrm{mean}}$ and $\log \sigma_y$.

Computing the initial MCMC run took $\batsMCMCTime$ seconds, and the additional
time to compute $\gcovijhat$ was negligible.  In contrast, computing
$\gcovboothat$ took $\batsBootTime$ seconds.  The IJ produced results
very similar to the bootstrap, especially considering Monte Carlo error,
but approximately $\batsBootOverMCMCTime$ times faster.

\subsection{Election modeling}
\seclabel{election}
%%%%%%%%%%%%%%%%%%%%%%%%%%%%%%%%%%%%%%
%%%%%%%%%%%%%%%%%%%%%%%%%%%%%%%%%%%%%%
% Do not edit the TeX file your work
% will be overwritten.  Edit the RnW
% file instead.
%%%%%%%%%%%%%%%%%%%%%%%%%%%%%%%%%%%%%%
%%%%%%%%%%%%%%%%%%%%%%%%%%%%%%%%%%%%%%

\subsubsection{Model and data}

The Economist website published predictions of the outcome of the 2020
presidential race using a Bayesian model implemented in \texttt{Stan}
\cite{economist:2020:election}. The model and data for some previous election
cycles were made publicly available on the website, and we examined the model's
predictions of the 2016 presidential election.  The dataset consisted of
$\electionNumObs$ national polls conducted over the course of the 2016 election.
We examined the election-day posterior mean vote percentage for the democratic
candidate both nationally and in each of the 51 US ``states'' (for the duration
of this section, I will describe Washington D.C. as a ``state'' for simplicity
of exposition), and consider the frequentist variability under re-sampling of
the $\electionNumObs$ polls. By re-sampling the polls rather than the individual
responses, we allow for the possibility that respondents are not necessarily
independent within a poll while still capturing the sampling variability within
the polls.

Based on \cite{linzer:2013:dynamic}, the model views the raw poll results as
noisy draws from the logistic transform of a random walk in time, plus random
effects to account for various aspects of the poll.  The random effects
accounted for include the noisy observations from a random walk of unobserved
``true'' vote percentages, pollster identity (there were $\electionNumPollsters$
distinct pollsters), the population queried (e.g., likely voters, registered
voters, etc.), and the mode of the poll (e.g. online, phone, etc.).  Since all
the polls were national, the vote proportions in the individual states are not
informed directly by the data.  Rather, the poll probabilities are drawn
randomly from the logistic transform of a national random walk (plus random
effects), and the national random walk is modeled in turn as a weighted sum of
individual state random walks that follow a 51-dimensional autoregressive
process with {\em a priori} fixed covariance.

The prior distribution on the ``true'' vote percentage random walks is centered
on a fundamentals prediction on election day, and then follows an autoregressive
process backwards in time.  Effectively, polls are shrunk more towards the
fundamentals prediction near election day, and shrunk less the further the poll
is from the election.  The random walk prior also effectively smooths the time
series of predictions.  The fundamentals-based election day prediction is,
again, estimated from data outside the model and treated as fixed in the present
analysis.  See \cite{gelman:1993:elections} for the reasoning being this
time-reversed random walk prior.  
%
% We need to save additional data to plot this.
% \Figref{election_data} shows a time series
% plot of the polls, as well as the models' prediction and credible interval over
% time for the national vote.

% \POTUSData{}

The distinction between $\gamma$ and $\lambda$ in the election model is less
clear than in explicit random effects like those of \secref{bats,
arm_experiments}. Our quantities of interest, the election day vote percentages,
can be derived entirely from the state-level\footnote{Washington DC is counted
as a ``state,'' so $\gamma$ is length 51.} random walks on election day. So we
take
\begin{align*}
\gamma :={}& \textrm{State-level random walk values on election day}\\
\lambda :={}& \textrm{All other parameters (including all other random walk values)}\\
\rho_s :={}& \left(\frac{\exp(\gamma_s)}{1 + \exp(\gamma_s)} \right)
 = \textrm{Expected state vote proportions}\\
w_s :={}& \textrm{State population proportions (fixed {\em a priori})} \\
\rho_{\mathrm{nat}} :={}& \sum_{s=1}^{51} w_s \rho_s
    = \textrm{Expected national vote proportion} \\
g(\gamma) :={}& \left(\rho_{\mathrm{nat}}, \rho_1, \ldots, \rho_{51}\right).
\end{align*}

We take a datapoint, $\x_n$, to be the value of one of the $\electionNumObs$
national polls.  Note that, by doing so, we consider multiple observations from
the same pollster to be exchangeable.  Conceptually, we are conditioning on the
identities of the pollsters, and randomly drawing a new population of polls from
the fixed set of pollsters.

In this case, the data is literally a random sample from a larger population,
and the variability under that sampling procedure is naturally an interesting
quantity. Furthermore, the model is surely an incomplete description of the true
underlying relationship between the voting population and the polls.
Consequently, it is particularly interesting to ask whether the posterior
uncertainty is greater than or less than the frequentist variability due to the
random sampling that we know gave rise to the observed data.  In particular, if
the frequentist sampling variability is much smaller than the posterior
uncertainty, then the posterior uncertainty is dominated by model uncertainty
that is not being strongly informed by the data.

\subsubsection{Modeling results}

We computed $\electionNumMCMCDraws$ MCMC draws using \texttt{Stan}, which took
approximately $\electionMCMCHours$ hours of compute time.   We computed
$\electionNumBoots$ bootstraps, for a total bootstrap compute time of
approximately $\electionBootHours$ hours. As expected, the bootstrap took
roughly $\electionBootOverMCMC \approx \electionNumBoots$ times longer than the
initial MCMC run.  There was considerable variability in the MCMC runtime of the
bootstrap, and a few slow runs of MCMC increased the average bootstrap time
relative to the run on the original dataset.

\Figref{election_result} shows results for the national vote, and the states
$\mathrm{\electionBestState}$ and $\mathrm{\electionWorstState}$, which had the
smallest and largest relative error between the IJ and bootstrap, respectively.
Most importantly, observe that the Bayesian variance is over an order of
magnitude larger than both estimates of the frequentist variance.  In this
model, the posterior uncertainty is dominated by the uncertainty in its latent
parameters, strongly suggesting that additional polling data {\em from the same
distribution as the data} will be relatively unhelpful for reducing the
posterior variability.  In particular, one might expect to require more sampling
near the election date, and more repeated sampling from existing pollsters; such
sampling is different from more samples from the same distribution that gave
rise to our empirical distribution on $\x_n$ because we are effectively
conditioning on the distribution of the polls amongst pollsters and time points.

\POTUSStatesGraph{}

\POTUSScatterGraph{}

In \figref{election_result}, we see from the error bars that the IJ and
bootstrap differ by more than can be accounted for by Monte Carlo sampling, even
in the national vote and best-performing state, $\mathrm{\electionBestState}$.
Nevertheless, the qualitative conclusion drawn by $\gcovijhat$ is correct, which
is that the frequentist variability is much smaller than the Bayesian posterior
variability in this case.  \Figref{election_state_result} shows that the rest
of the states show the same qualitative performance.

The reason for the discrepancy between the IJ and the bootstrap in this case is
not clear to the authors, but we can put forward some possibilities.  One
possibility is that a BCLT simply does not apply to this dataset for lack of
data.  Due to the time series nature of the model, one might expect that early
polls do not have a strong effect on the prediction on election day, and that
the effective number of samples informing our election-day quantity of interest
is in fact much smaller than $\electionNumObs$.  Similarly, the fact that each
pollster has its own random effect, but many pollsters are observed only a few
times in this dataset, may mean that the marginal likelihood of the data is not
strongly affected by the national vote on election day.  Any such arguments made
for the national vote would apply {\em a fortiori} for the states, since the
data consisted only of national-level polls, and an individual state's
parameters are necessarily even less informative of the data than any
national-level parameters.  Another possibility, as discussed in the beginning
of \secref{experiments}, could be that $\gcovij$ and $\gcovboot$ (of which our
computed $\gcovijhat$ and $\gcovboothat$ are Monte Carlo estimates) may be quite
different from one another due to $N$ being sufficiently far from $\infty$.

Nevertheless, we emphasize that the IJ provided the qualitatively correct result
in this case --- namely, that the frequentist variance is a small fraction of
the posterior variance, and essentially no additional time relative to the
initial MCMC run, compared over a thousand hours of compute time required for
the bootstrap.

\section{Conclusion}
We define and analyze the IJ covariance estimate of the frequentist variance of
Bayesian posterior expectations.  The IJ covariance is computationally
appealing, since it can be computed easily from only a single set of MCMC draws,
in contrast to the bootstrap, which, in general, requires many distinct MCMC
chains.  In finite dimensional problems obeying a central limit theorem, we show
that the IJ covariance is consistent, since it consistently estimates the
sandwich covariance matrix.  However, in high-dimensional problems in which some
parameters' posterior do not concentrate, we prove using a von Mises calculus
that the IJ covariance is inconsistent in general, even as an estimator for
parameters which concentrate marginally.  Nevertheless, we argue theoretically
and demonstrate in real-life settings that the IJ covariance remains a useful
estimator even when the posterior does not fully concentrate, particularly
given its computational simplicity.

\clearpage

\ifbool{jrssb}{
    \bibliographystyle{abbrvnat}
    \bibliography{references}
} {
    % JASA and arxiv
    \printbibliography{}
}

\newpage

\ifbool{jrssb}{
    \begin{appendices}
} {
    % JASA and arxiv
    \appendix
}
\section*{Appendices}

%%%%%%%%%%%%%%%%%%%%%%%%%%%%%%%%%%%%%%%%%%%%%%%%%%%
%%%%%%%%%%%%%%%%%%%%%%%%%%%%%%%%%%%%%%%%%%%%%%%%%%%
%%%%%%%%%%%%%%%%%%%%%%%%%%%%%%%%%%%%%%%%%%%%%%%%%%%

\section{Additional notation and lemmas}

\subsection{Additional notation}

\subsubsection{Uniform laws of large numbers}

% Our main tool will be uniform laws of large numbers.
Several of our key assumptions take the form of \emph{uniform laws of large
numbers} (ULLNs), a concept which we make concrete in \defref{ulln}.
Typically, ULLNs require that the sample average $\meann \phi(\x_n, \theta)$
converge to $\expect{\fdist(\xn)}{\phi(\x_n, \theta)}$ uniformly over $\theta$
in some compact set.
%
%%%%%%%%%%%%%%%%%%%%%%%%%%%%%%%%%%%%%%%%%%%%%
%%%%%%%%%%%%%%%%%%%%%%%%%%%%%%%%%%%%%%%%%%%%%
\begin{defn}\deflabel{ulln}
For $\delta > 0$, denote the $\delta$-ball in $\theta$ as $\thetaball{\delta} :={}
\left\{\theta: \norm{\theta - \thetatrue}_2 \le \delta \right\}$. We say a
``uniform law of large numbers'', or ``ULLN'', holds for a function
$\phi(\theta, \xn)$ if, for some $\delta > 0$,
\begin{align*}
\sup_{\theta \in \thetaball{\delta}}
\norm{\meann \phi(\theta, \x_n) -
\expect{\fdist(\xn)}{\phi(\theta, \xn)}}_2
\plim 0,
\end{align*}
where $\expect{\fdist(\xn)}{\phi(\theta, \xn)}$ is finite.
\end{defn}
%%%%%%%%%%%%%%%%%%%%%%%%%%%%%%%%%%%%%%%%%%%%%

% \footnote{
% % The center of the ball and norm  doesn't matter for the ULLN.
% Note that, since we will in this section be assuming conditions under which
% $\thetahat \plim \thetatrue$, our ULLNs will hold just as well for a ball
% centered at $\thetahat$ rather than $\thetatrue$. Also, if a ULLN holds for
% $\norm{\cdot}_2$, it will hold just as well for $\norm{\cdot}_1$ or
% $\norm{\cdot}_\infty$, because, in this section, all our arrays will have fixed
% dimension, and in finite dimensions all $L_p$ norms are equivalent.}

% When do ULLNs hold?
% A well-known sufficient condition which can be easy to verify in practice
% is given in the following lemma.
In general, a ULLN requires that the set of functions $\left\{\theta \mapsto
\phi(\theta, \xvec): \norm{\theta - \thetatrue}_2 < \delta \right\}$ be a
``Glivenko-Cantelli class'' with finite covering number (\cite[Chapter
19]{van:2000:asymptotic}, \cite{vaart:2013:empiricalprocesses}). A common
condition for a uniform law of large numbers to hold is that $\phi(\theta, \xn)$
be dominated by an absolutely integrable function of $\xn$ alone in an open
neighborhood of $\thetatrue$.

%
% %%%%%%%%%%%%%%%%%%%%%%%%%%%%%%%%%%%%%%%%%%%%%
% %%%%%%%%%%%%%%%%%%%%%%%%%%%%%%%%%%%%%%%%%%%%%
% \begin{defn}\deflabel{lipschitz}
% %
% We say that a function $\phi(\theta, \xn)$ is dominated-Lipschitz if there
% exists a positive function $M(\xn)$
% such that, $\theta, \theta' \in \thetaball{\delta}$,
% %
% \begin{align*}
% %
% \norm{\phi(\theta, \xn) - \phi(\theta', \xn)}_2 \le
% M(\xn) \norm{\theta - \theta'}_2,
% %
% \end{align*}
% %
% $\fdist$-almost surely in $\xn$ where $\expect{\fdist(\xn)}{M(\xn)} < \infty$.
% %
% \end{defn}
% %%%%%%%%%%%%%%%%%%%%%%%%%%%%%%%%%%%%%%%%%%%%%
%
%
% %%%%%%%%%%%%%%%%%%%%%%%%%%%%%%%%%%%%%%%%%%%%%
% %%%%%%%%%%%%%%%%%%%%%%%%%%%%%%%%%%%%%%%%%%%%%
% \begin{cor}\corlabel{lipschitz}(\cite[Example 19.7]{van:2000:asymptotic}.)
% %
% If a function $\phi(\theta)$ is dominated-Lipschitz in $\thetaball{\delta}$
% according to \defref{lipschitz}, then a ULLN holds for $\phi(\theta)$ with the
% same $\delta$.
% %
% \end{cor}
% %%%%%%%%%%%%%%%%%%%%%%%%%%%%%%%%%%%%%%%%%%%%%

\subsubsection{Derivative arrays}

Throughout our proofs, we will be forming and manipulating higher-order
Taylor expansions of functions of $\theta$.  The arrays of derivatives
in these expressions will always be summed against a direction, typically
$\tau = \sqrt{N}(\theta - \thetahat)$.  In order to avoid cumbersome
summation notation for all such terms, we will implicitly sum derivative arrays
against $\tau$ or $\theta$.  or example, for
a function $\psi(\theta, \x_n) \in \mathbb{R}^{D_\psi}$, we will write
\begin{align*}
\psigrad{3}(\thetahat, \x_n)\tau^3 =
\sum_{a=1}^D \sum_{b=1}^D \sum_{c=1}^D
    \fracat{\partial^3 \psi(\theta, \x_n)}
           {\partial\theta_{a} \partial\theta_{b} \partial\theta_{c}}
           {\thetahat} \tau_{a} \tau_{b} \tau_{c}.
\end{align*}
Occasionally we will need to sum several derivative arrays against $\tau$,
particularly in \secref{region_r1}.  Since the directions $\tau$ will always be
the same, we will combine the $\tau$ in such expressions with no ambiguity. For
example, for the term involving $\likhatk{3}(\thetahat)$ in the statement of
\thmref{bayes_clt_main}, we will write
\begin{align*}
\psigrad{1}(\thetahat, \x) \likhatk{3}(\thetahat) \tau^4 =
\sum_{a=1}^{D} \sum_{b=1}^{D} \sum_{c=1}^{D} \sum_{d=1}^{D}
    \psigrad{1:a}(\thetahat, \x) \likhatk{3:b,c,d}(\thetahat)
    \tau_a \tau_b \tau_c \tau_d.
\end{align*}

When necessary, higher
derivatives will be indexed in the parentheses.  For example,
$\psigrad{2:i,j}(\theta, \x_n)$ will denote the length $D_\psi$ vector
\begin{align*}
\fracat{\partial^2 \psi(\theta, \x_n)}
    {\partial\theta_i \partial\theta_j}{\theta, \x_n}.
\end{align*}
%
%\end{defn}

We denote by $\ind{A}$ the indicator function taking value $1$ when the
event $A$ occurs and $0$ otherwise.

% Order notation
We will use order notation to describe the behavior of quantities as $N
\rightarrow \infty$.  For example, we will use $z = O(\phi(N))$ to mean there
exists an $N^*$ and an $A > 0$ such that $N > N^* \Rightarrow z < A \phi(N)$. In
our notation, $A$ and $N^*$ in this definition will never depend on the data,
$\xvec$, though they may depend on unobserved population quantities.
Dependence on the data $\xvec$ in order notation will always be represented with
$O_p$ notation.  For example, $z = O_p(\phi(N))$ will mean that there exists an
$A > 0$ such that for any $\epsilon > 0$, there is a corresponding $N^*$ and
such that
\begin{align*}
N > N^* \quad\Rightarrow\quad
    \expect{\xfdist}{\ind{z < A \phi(N)}} > 1 - \epsilon.
\end{align*}
Here, $A$ does not depend on $\epsilon$.  The meaning is that the event $z =
O(\phi(N))$ occurs with arbitrarily high probability as $N$ goes to infinity.
The notation $\tilde{O}$ and $\tilde{O}_p$ will have the same meaning as $O$ and
$O_p$ notation respectively, but with factors of $\log N$ possibly ignored.

\subsection{Lemmas and Definitions}\applabel{lemmas}
% We will use the following two minor variants of standard analysis tools.
% The first is a multivariate version of the integral form of the Taylor series
% residual for a vector-valued function.

\begin{defn}\deflabel{taylor_residual}
For a $k$-times differentiable function $\phi(\theta)$, a fixed $\thetahat$, and
and $\theta$, define the residual functional
\begin{align*}
\tresid{k}{\phi, \thetahat, \theta} :=
    \frac{1}{(k-1)!} \int_0^1 (1 - t)^{k-1}
        \phigrad{k}(\thetahat + t (\theta - \thetahat)) dt.
\end{align*}
\end{defn}
The residual $\tresid{k}{\phi, \thetahat, \theta}$ is an array of the same
dimension as the $k$-th order derivative, $\phigrad{k}(\theta)$.  Observe that,
for any norm $\norm{\cdot}$,
\begin{align*}
    \norm{\tresid{k}{\phi, \thetahat, \theta}} \le
        \frac{1}{(k-1)!} \sup_{\thetatil: \norm{\thetatil - \thetahat}_\infty \le
                                          \norm{\theta - \thetahat}_\infty}
                            \norm{\phigrad{k}(\thetatil)}.
\end{align*}
\begin{lem}\lemlabel{taylor_residual}
A $K$-times continuously differentiable function $\phi(\theta)$ has Taylor
series residual
\begin{align*}
\phi(\theta) =
    \sum_{k=0}^{K-1} \frac{1}{k!} \phigrad{k}(\thetahat) (\theta - \thetahat)^k +
    \tresid{K}{\phi, \thetahat, \theta} (\theta - \thetahat)^K.
\end{align*}
\begin{proof}
Let $\phi_d(\theta)$ denote the $d$-th component of the vector $\phi(\theta)$
(contrast this notation with for the $d$-th order derivative
$\phigrad{d}(\theta)$). Consider the scalar to scalar map $t \mapsto
\phi_d(\thetahat + t (\theta - \thetahat))$. Taking the integral remainder form
of the $K - 1$-th order Taylor series expansion of this scalar map
\cite[Appendix B.2]{dudley:2018:analysis} and stacking the result into a vector
gives the desired result.
\end{proof}
\end{lem}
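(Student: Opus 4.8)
The plan is to reduce the multivariate, vector-valued identity to the classical single-variable Taylor theorem with integral-form remainder, applied componentwise. First I would fix a component index $d$ and define the scalar map $f_d(t) := \phi_d(\thetahat + t(\theta - \thetahat))$ for $t \in [0,1]$, where $\phi_d$ denotes the $d$-th entry of the vector $\phi$. Because $\phi$ is $K$-times continuously differentiable on a neighborhood of the segment joining $\thetahat$ and $\theta$, and because $t \mapsto \thetahat + t(\theta-\thetahat)$ is affine, the chain rule shows $f_d$ is $K$-times continuously differentiable on $[0,1]$ with $f_d^{(k)}(t) = \phigrad{k}(\thetahat + t(\theta - \thetahat))(\theta - \thetahat)^k$, where the right-hand side uses the convention of \secref{notation} that a $k$-th-order derivative array is contracted against $k$ copies of the displacement direction $\theta - \thetahat$.

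Next I would apply the standard scalar Taylor expansion of $f_d$ about $t=0$, evaluated at $t=1$: $f_d(1) = \sum_{k=0}^{K-1}\frac{1}{k!} f_d^{(k)}(0) + \frac{1}{(K-1)!}\int_0^1 (1-t)^{K-1} f_d^{(K)}(t)\,dt$. Substituting $f_d(1) = \phi_d(\theta)$, $f_d^{(k)}(0) = \phigrad{k}(\thetahat)(\theta - \thetahat)^k$, and $f_d^{(K)}(t) = \phigrad{K}(\thetahat + t(\theta-\thetahat))(\theta - \thetahat)^K$, the remainder term becomes $\bigl(\frac{1}{(K-1)!}\int_0^1 (1-t)^{K-1}\phigrad{K}(\thetahat + t(\theta-\thetahat))\,dt\bigr)(\theta - \thetahat)^K$, which is exactly $\tresid{K}{\phi,\thetahat,\theta}(\theta - \thetahat)^K$ by \defref{taylor_residual}. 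Here pulling the contraction against $(\theta - \thetahat)^K$ out through the integral is legitimate since it is a fixed bounded linear operation and the integrand $t \mapsto \phigrad{K}(\thetahat + t(\theta-\thetahat))$ is continuous on $[0,1]$. Finally I would stack these scalar identities over $d = 1, \ldots, D_\phi$ to recover the claimed vector identity.

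I do not expect a real obstacle here beyond notational bookkeeping. The only mildly delicate point is verifying the iterated chain-rule identity for $f_d^{(k)}$ in a form that is consistent with the array-contraction conventions fixed earlier, and checking that $K$-fold continuous differentiability of $\phi$ in $\theta$ transfers to $f_d$ in $t$ — which is immediate because the inner map is affine and compositions/products of continuous functions are continuous. No analytic input beyond the one-variable Taylor theorem with integral remainder and continuity of $\phigrad{K}$ on the relevant segment is required, so the argument is short.
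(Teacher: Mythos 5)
Your proposal is correct and follows essentially the same route as the paper's proof: reduce to the scalar map $t \mapsto \phi_d(\thetahat + t(\theta - \thetahat))$, apply the one-variable Taylor theorem with integral remainder, and stack over components. You simply spell out the chain-rule identity for $f_d^{(k)}$ and the interchange of the contraction with the integral, details the paper leaves implicit.
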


Expanding \defref{loglik_def}, define the following quantities.

%%%%%%%%%%%%%%%%%%%%%%%%%%%%%%%%%%%%%%%%
%%%%%%%%%%%%%%%%%%%%%%%%%%%%%%%%%%%%%%%%
\begin{defn}\deflabel{loglik_details_def}
\begin{align*}
\likhatk{k}(\theta) :={}&
   \frac{1}{N} \sumn \ellgrad{k}(\x_n \vert \theta)
&
\likk{k}(\theta) :={}& \int \ellgrad{k}(\x \vert \theta) d\fdist(\x)
\\
\normalizer :={}& \det\left(2\pi \infohat^{-1} \right)^{1/2} &
\normhatarg{\tau} :={}&
    \normalizer^{-1} \exp\left(-\frac{1}{2} \infohat \tau^2 \right)
    = \normdist(\tau \vert 0, \infohat^{-1})
    % \textrm{The Gaussian density with mean }0\textrm{ and covariance }
    % \infohat^{-1}.
\\
\infoev :={}& \textrm{The minimum eigenvalue of }\info &
\infoevhat :={}& \textrm{The minimum eigenvalue of }\infohat
\end{align*}
\end{defn}
Additionally, our residuals will be defined in terms of the following
truncated normal distribution.

%%%%%%%%%%%%%%%%%%%%%%%%%%%%%%%%%%%%%%%%
%%%%%%%%%%%%%%%%%%%%%%%%%%%%%%%%%%%%%%%%
\begin{defn}\deflabel{residual_distribution}
Let $\normresid{\lambda}{S}(\tau)$ denote the $\normdist\left(\tau | 0,
\lambda^{-1} I_D \right)$ distribution truncated to the set $S$,
where $I_D$ denotes the $D \times D$ identity matrix.
Specifically,
\begin{align*}
\normresid{\lambda}{S}(\tau) :=
\frac{
    \exp\left( - \frac{\lambda}{2} \norm{\tau}_2^2 \right)
}{
    \int_{S}
        \exp\left( - \frac{\lambda}{2} \norm{\tau}_2^2 \right) d\tau
}.
\end{align*}
\end{defn}

We next show that the domination condition of the derivatives of the log
likelihood --- specifically, \assuref{bayes_clt}
\itemref{finite_dim,loglik_smooth,loglik_ulln} --- suffices to apply various
uniform laws.

\begin{lem}\lemlabel{ulln}
Suppose that there exists $\deltaulln > 0$, $M(\xn)$, and $K \ge 1$ such that
\begin{align}
\sup_{\theta \in \thetaball{\deltaulln}}
    \norm{\ellgrad{K}(\xn \vert \theta)}_2 \le M(\xn)
\quad\textrm{with}\quad
\expect{\fdist(\xn)}{M(\xn)^2} \le \infty.
\eqlabel{bounded_lipschitz}
\end{align}
Then
\begin{align}
\sup_{\theta \in \thetaball{\deltaulln}}
    \norm{\meann \left( \ellgrad{k}(\x_n \vert \theta) -
           \expect{\fdist(\xn)}{\ellgrad{k}(\xn \vert \theta)} \right)
           }_2 &\plim{} 0
\nonumber\\&\hspace{-3em}
\textrm{for }k \in \{0, \ldots, K - 1\}
          \eqlabel{glivenko_cantelli} \\
\sup_{\theta \in \thetaball{\deltaulln}}
    \norm{\frac{1}{\sqrt{N}} \sumn \left( \ellgrad{k}(\x_n \vert \theta) -
           \expect{\fdist(\xn)}{\ellgrad{k}(\xn \vert \theta)}\right)
           }_2 &={} \ordp{1}
\nonumber\\&\hspace{-3em}
   \textrm{for }k \in \{0, \ldots, K - 1\}
           \eqlabel{donsker} \\
\sup_{\theta \in \thetaball{\deltaulln}}
    \meann \norm{\ellgrad{k}(\x_n \vert \theta)}_2^2
     &={} \ordp{1}
\nonumber\\&\hspace{-3em}
\textrm{for }k \in \{0, \ldots, K\}
     \eqlabel{means_op1}.
\end{align}
\begin{proof}
First, note that
\begin{align*}
\sup_{\theta \in \thetaball{\deltaulln}}
    \meann \norm{\ellgrad{K}(\x_n \vert \theta)}_2^2 \le{}&
\meann \sup_{\theta \in \thetaball{\deltaulln}}
    \norm{\ellgrad{K}(\x_n \vert \theta)}_2^2
\quad\textrm{and}\\
\expect{\fdist(\xn)}{
    \sup_{\theta \in \thetaball{\deltaulln}}
        \norm{\ellgrad{K}(\xn \vert \theta)}^2_2
} \le{}& \expect{\fdist(\xn)}{M(\xn)^2} < \infty,
\end{align*}
so \eqref{means_op1} holds for $k = K$ by the strong law of large numbers.

We now prove the remaining relations inductively. For a scalar-valued,
differentiable function $f(\theta)$, and any two $\theta, \theta'$ in
$\thetaball{\deltaulln}$, we can write
\begin{align*}
\abs{f(\theta) - f(\theta')} ={}&
\abs{
    \int_0^1 \fracat{\partial f(\theta' + \t (\theta - \theta'))}
                    {\partial \t}{\t} d\t
}
\\={}&
\abs{
    \int_0^1 f_{(1)}(\theta' + \t (\theta - \theta')) (\theta - \theta') d\t
}
\\\le{}&
    \norm{\theta - \theta'}_2
    \sup_{\thetatil \in \thetaball{\deltaulln}}
        \norm{f_{(1)}(\thetatil)}_2.
\end{align*}
Applying the previous formula componentwise to $\ellgrad{K-1}(\xn \vert
\theta)$, and loosening the right hand side to contain all of
$\norm{\ellgrad{K}(\xn \vert \thetatil)}_2$ rather than only a particular
component, gives
\begin{align*}
\norm{\ellgrad{K-1}(\xn \vert \theta)}_\infty \le{}&
    \norm{\theta - \theta'}_2
    \sup_{\thetatil \in \thetaball{\deltaulln}}
        \norm{\ellgrad{K}(\xn \vert \thetatil)}_2
\\\le{}&
\norm{\theta - \theta'}_2 M(\xn).
\end{align*}
It follows that each component of $\ellgrad{K-1}(\xn \vert \theta)$ is both
Glivenko-Cantelli and Donsker, and so \eqref{glivenko_cantelli,donsker} hold for
$k = K - 1$ (see \cite[Example 19.7]{van:2000:asymptotic}).  Further,
\begin{align*}
\sup_{\theta \in \thetaball{\deltaulln}}
    \norm{\ellgrad{K-1}(\xn \vert \theta)}_2
\le{}&
\sup_{\theta \in \thetaball{\deltaulln}}
    \sqrt{\thetadim} \norm{\ellgrad{K-1}(\xn \vert \theta)}_\infty
\le{}
    2 \sqrt{\thetadim}  \deltaulln M(\xn).
\end{align*}
It follows that \eqref{means_op1} holds for $k = K - 1$.

Additionally, \eqref{bounded_lipschitz} holds for $k=K-1$, so
the remaining results follow by proceeding inductively on $K-1$, $K-2$,
and so on.

\end{proof}

\end{lem}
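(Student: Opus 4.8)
The plan is to establish all three conclusions simultaneously by a \emph{downward induction on the derivative order}, using the domination hypothesis \eqref{bounded_lipschitz} as the base level and arranging that each inductive step reproduces the analogous domination hypothesis one order lower, so that the induction descends from $K$ all the way to $0$.

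First I would handle the base case, which is the moment bound \eqref{means_op1} at $k = K$. One pulls the supremum inside the sample average, $\sup_{\theta \in \thetaball{\deltaulln}} \meann \norm{\ellgrad{K}(\x_n \vert \theta)}_2^2 \le \meann \sup_{\theta \in \thetaball{\deltaulln}} \norm{\ellgrad{K}(\x_n \vert \theta)}_2^2 \le \meann M(\x_n)^2$, and since $\expect{\fdist(\xn)}{M(\xn)^2} < \infty$ the strong law of large numbers shows this average converges to a finite constant, hence is $\ordp{1}$.

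The inductive step is the substance of the argument. Assuming \eqref{bounded_lipschitz} holds at order $K$, I want \eqref{glivenko_cantelli}, \eqref{donsker}, and \eqref{means_op1} at order $K-1$, together with a fresh domination bound at order $K-1$. The key point is that domination of the $K$-th derivative on the compact convex ball $\thetaball{\deltaulln}$ forces each scalar component $\phi_d$ of $\theta \mapsto \ellgrad{K-1}(\xn \vert \theta)$ to be Lipschitz on that ball with constant at most $M(\xn)$: integrating along the segment from $\theta'$ to $\theta$ gives $\abs{\phi_d(\theta) - \phi_d(\theta')} \le \norm{\theta - \theta'}_2 \sup_{\thetatil \in \thetaball{\deltaulln}} \norm{\ellgrad{K}(\xn \vert \thetatil)}_2 \le \norm{\theta - \theta'}_2 M(\xn)$. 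Because this Lipschitz constant is square-integrable, the family indexed by $\theta \in \thetaball{\deltaulln}$ is a Lipschitz-parametrized class over a finite-dimensional compact index set, hence both Glivenko--Cantelli and Donsker by standard covering-number arguments (cf.\ \cite[Example 19.7]{van:2000:asymptotic}); applying this componentwise and using $\norm{\cdot}_2 \le \sqrt{\thetadim}\,\norm{\cdot}_\infty$ yields \eqref{glivenko_cantelli} and \eqref{donsker} at order $K-1$. Taking $\theta'$ to be the center $\thetatrue$ in the same Lipschitz bound shows $\sup_{\theta \in \thetaball{\deltaulln}} \norm{\ellgrad{K-1}(\xn \vert \theta)}_2$ is dominated by a square-integrable function of $\xn$; feeding that into the base-case argument gives \eqref{means_op1} at order $K-1$, and it is exactly \eqref{bounded_lipschitz} with $K$ replaced by $K-1$, so one iterates the step through $K-2, K-3, \ldots, 0$.

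I expect the only non-bookkeeping obstacle to be invoking the correct empirical-process statement for the Lipschitz-parametrized class and, relatedly, making sure the envelope stays \emph{square}-integrable (not merely integrable) at every level of the recursion, since that is what the $\ordp{1}$ rates and the Donsker conclusion require. The other ingredients --- pulling suprema through sample averages, the componentwise reduction for array-valued derivatives, and the $\ell_2$-versus-$\ell_\infty$ norm comparison --- are routine.
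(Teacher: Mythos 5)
Your proposal is correct and follows essentially the same route as the paper's proof: a downward induction on the derivative order, with the base case at $k=K$ handled by pulling the supremum inside the sample average and applying the strong law, and the inductive step obtained by integrating the $K$-th derivative along a segment to get a square-integrable Lipschitz constant, invoking the Lipschitz-class Glivenko--Cantelli/Donsker result (the same Example 19.7 of van der Vaart) componentwise, and observing that the resulting envelope bound regenerates the domination hypothesis one order lower. No substantive differences to note.
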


\subsection{Derivatives of posterior expectations}\applabel{post_derivs}
\def\et#1{\expect{\p(\theta \vert \t)}{#1}}
\def\phibar{\bar{\phi}}

Consider an exponentially tilted distribution,
\begin{align*}
\p(\theta \vert \t) :=
    \frac{\exp\left(
        \ell(\theta) + \t \phi(\theta)
        \right)}
        {\int \exp\left(
            \ell(\theta') + \t \phi_k(\theta')
            \right) d\theta'}
\end{align*}
which we assume is defined for all $\t$ in a neighborhood of the zero vector.
When we can exchange the order of integration and differentiation,
\begin{align*}
\fracat{d \et{\g(\theta)}}{d \t}{\t}
={}&
\et{\g(\theta) \phi(\theta)} - \et{\g(\theta)} \et{\phi(\theta)}.
\end{align*}
Applying the previous formula to itself gives
\begin{align*}
\fracat{d^2 \expect{\p(\theta \vert \t)}{\g(\theta)}}{d \t^2}{\t}
={}&
\et{\g(\theta) \phi(\theta) \phi(\theta)} -
\et{\g(\theta) \phi(\theta)} \et{\phi(\theta)} -
\\&
\et{\g(\theta)} \left(
    \et{\phi(\theta) \phi(\theta)}  - \et{\phi(\theta)}\et{\phi(\theta)}
\right) -
\\&
\et{\phi(\theta)} \left(
    \et{\g(\theta) \phi(\theta)} - \et{\g(\theta)} \et{\phi(\theta)}
\right).
\end{align*}
If, for some $\t_0$, we set
\begin{align*}
\gbar(\theta) :={} \g(\theta) - \expect{\p(\theta \vert \t_0)}{\g(\theta)} \\
\phibar(\theta) :={} \phi(\theta) -
    \expect{\p(\theta \vert \t_0)}{\phibar(\theta)},
\end{align*}
then the above formulas simplify at $\t_0$ to
\begin{align*}
\fracat{d \et{\g(\theta)}}{d \t}{\t=\t_0} ={}&
    \et{\gbar(\theta) \phibar(\theta)} \quad\textrm{and}\quad
\fracat{d^2 \expect{\p(\theta \vert \t)}{\g(\theta)}}{d \t^2}{\t=\t_0} ={}
    \et{\gbar(\theta) \phibar(\theta) \phibar(\theta)}.
\end{align*}
Importantly, the validity of the preceding expression for the second derivative
requires that the centering is at a fixed $\t_0$ --- the same $\t_0$ at which
the derivative is evaluated --- and does not depend on $\t$.

\section{Expansion of posterior moments}\applabel{bayes_clt_proof}
% \subsection{Expansion of posterior moments}\applabel{bayes_clt_proof}
The proof of \thmref{bayes_clt_main} will be broken up into several parts. Our
first step is to derive an explicit series expansion for posterior expectations
of the form $\expect{\post}{\phi(\theta)}$, culminating in \lemref{bayes_clt}.
In \lemref{resid_op1}, we then consider the dependence of $\phi(\theta)$ on a
subset of datapoints, writing $\phi(\theta, \xvec_s)$ (recalling
\defref{index_sets}), and showing that the results of \lemref{bayes_clt} can be
summed over $\indexset$. The residual of the expansion in \lemref{bayes_clt} is
designed so that \lemref{resid_op1} follows easily.

To prove \lemref{bayes_clt}, we follow classical proofs of the BCLT, keeping
explicit track of the residual.  We first express the desired expectation as the
ratio of integrals with respect to $\tau$.  We then divide the domain of
integration into three regions according to their distance from $\thetahat$ and
treat each separately. Finally, we combine the bounds on the integrals over
different regions for our final result.

Throughout, we will use the change of variable
\begin{align*}
\theta := \thetahat + \sqrt{N} \tau
\quad\Leftrightarrow\quad
\tau = N^{-1/2} (\theta - \thetahat).
\end{align*}
We will switch the domain of integration between $\tau$ and $\theta$ without
re-writing the arguments.  For example, for some function $\phi(\theta)$, we
will write the shorthand $\int \phi(\theta) d\tau$ for the more precise $\int
\phi\left(\sqrt{N} \tau + \thetahat\right) d\tau$. Similarly, we will overload
set notation: if $A$ is a set of $\theta$, we will write $\tau \in A$ as a
shorthand for $\tau \in \left\{\tau: \thetahat + \sqrt{N} \tau \in A \right\}$.
Although slight abuses of notation, these conventions will keep our expressions
more compact, as well as serve as a reminder of the $\theta$ domain in which the
integrand is smooth by assumption.

%%%%%%%%%%%%%%%%%%%%%%%%%%%%%%%%%%%%%%%%%%%%%%%%%%%%%
% Index sets

\subsection{General index sets}\seclabel{index_sets}

To succinctly state our theorems for double sums, it will be convenient to
introduce a more general index notation.

\begin{defn}\deflabel{index_sets}
    Let $\indexset$ denote a set of sets of data indices, where each member $s \in
    \indexset$ has the same cardinality, and for $s \in \indexset$, let $\xvec_s$
    denote the set of corresponding datapoints.  We write $\phi(\xvec_s)$ for a
    function that depends on $|s|$ datapoints.  For example, if $\indexset = \{1,
    \ldots, N \} =: [N]$, then $\xvec_s = \x_s$, and $\means \phi(\xvec_s) = \meann
    \phi(\x_n)$.  Similarly, if $\indexset = [N] \times [N]$, then $\xvec_s =
    (\x_{s_1}, \x_{s_2})$, and $\means \phi(\xvec_s) = \frac{1}{N^2} \sumnm
    \phi(\x_n, \x_m)$.
\end{defn}

We now state a generalization of \defref{bclt_okay} that applies to 
generic index sets as defined in \defref{index_sets}.

\begin{defn}\deflabel{bclt_okay_index}
For an index set $\indexset$, a data-dependent function
$\phi(\theta, \xvec_s)$ is $K$-th order BCLT-okay if,
for each $k = 1, \ldots, K$,
\begin{enumerate}
\item\itemlabel{bclt_okay_as}
    $\theta \mapsto \phigrad{k}(\theta, \xvec_s)$ is BCLT-okay
    $\fdist$-almost surely
\item\itemlabel{prior_op1}
    $\means \expect{\prior(\theta)}{\norm{\phi(\theta, \xvec_s)}^2_2} = O_p(1)$.
\item\itemlabel{grad_op1}
For some $\delta > 0$,
$\sup_{\theta \in \thetaball{\delta}}
    \means \norm{\phigrad{k}(\theta, \xvec_s)}^2_2 = O_p(1)$.
\end{enumerate}
\end{defn}

Note that a function is BCLT--okay according to \defref{bclt_okay} precisely if
it is BCLT--okay according to \defref{bclt_okay_index} with index set $\indexset
= [N]$.  Although $\indexset = [N]$ and $\indexset = [N]\times[N]$ will be the
only examples of $\indexset$ we will use in the present paper, we believe it is
clearer to keep the notation general.   

In particular, \thmref{bayes_clt_main} is stated in terms of $\indexset = [N]$,
but we will prove that it holds for generic index sets of the form
\defref{bclt_okay_index}.  Formally, we will prove \thmref{bayes_clt_main} under
the following \assuref{core_bclt_assu}, noting that \assuref{core_bclt_assu} is
assumed by \thmref{bayes_clt_main} for $\indexset = [N]$.

\begin{assu}\assulabel{core_bclt_assu}
Assume that $\phi(\theta, \xvec_s)$ is third-order BCLT-okay
according to \defref{bclt_okay_index}, and that \assuref{bayes_clt} holds.
\end{assu}

%%%%%%%%%%%%%%%%%%%%%%%%%%%%%%%%%%%%%%
\subsection{Proof of \propref{freq_clt}}\applabel{freq_clt_proof}
\

%%%%%%%%%%%%%%%%%%%%%
%%%%%%%%%%%%%%%%%%%%%
\begin{lem}\lemlabel{thetahat_consistent}
Under \assuref{bayes_clt}, $\thetahat \plim \thetatrue$, and
$\norm{\infohat^{-1}}_{op} < 2 \infoev^{-1}$ , with probability approaching one.
\begin{proof}
%
% For any $\epsilon >0$, let $A_\delta$ denote the event
% $\norm{\thetahat - \thetatrue}_2 > \delta$.
% By \assuref{bayes_clt} \itemref{strict_opt}, $\thetahat \in \thetaball{\delta}$
% with probability approaching one for any $\delta > 0$.
By \assuref{bayes_clt} \itemref{strict_opt}, $\info$ is positive definite with
$\norm{\info^{-1}}_{op} = \norm{\likk{2}(\thetatrue)}_{op} = \infoev^{-1}$.
Choose $\delta$ small enough that $\likk{2}(\theta)$ is continuous by
\assuref{bayes_clt} \itemref{loglik_smooth}, then choose $\delta$ smaller still
so that, by continuity of the operator norm,
\begin{align*}
\sup_{\theta \in \thetaball{\delta}}
    \norm{\likk{2}(\theta)^{-1}}_{op} < \sqrt{2} \infoev^{-1}.
\end{align*}
Next, choose $\delta$ sufficiently small that, by the ULLN of
\assuref{bayes_clt} \itemref{loglik_ulln} and \lemref{ulln},
\begin{align*}
\sup_{\theta \in \thetaball{\delta}}
    \norm{\likhatk{2}(\theta)^{-1}}_{op} < 2 \infoev^{-1},
\end{align*}
with probability approaching one.  For the remainder of the proof,  we will
condition on the event that the preceding display holds.

Using \lemref{taylor_residual} to Taylor expand $\likhatk{1}$ around
$\thetahat$ and evaluate at $\thetatrue$ gives
\begin{align}\eqlabel{likscore_taylor_expansion}
\likhatk{1}(\thetatrue) ={}&
    \likhatk{1}(\thetahat) +
    \tresid{2}{\likhat, \thetahat, \thetatrue} (\thetahat - \thetatrue).
\end{align}
Now,the minimum eigenvalue of the matrix $\tresid{2}{\likhat, \thetahat, \theta}$
is controlled by
\begin{align*}
\inf_{v: \norm{v}_2=1}
    v^T \tresid{2}{\likhat, \thetahat, \theta} v ={}&
\inf_{v: \norm{v}_2=1}
    \int_0^1 (1 - t)
        v^T \likhatk{2}(\thetahat + t (\thetatrue - \thetahat)) v dt \\
\ge&
    \inf_{v: \norm{v}_2=1}
    \inf_{\theta \in \thetaball{\delta}} v^T \likhatk{2}(\theta) v \\
\ge& \frac{\infoev}{2}.
\end{align*}
Consequently, the matrix $\tresid{2}{\likhat, \thetahat, \theta}$ is invertible
with $\norm{\tresid{2}{\likhat, \thetahat, \theta}^{-1}}_{op} \le
2 \infoev^{-1}$. Applying to the Taylor expansion, and using the fact that
$\likhatk{1}(\thetahat) = 0$, gives that
\begin{align*}
\norm{\thetahat - \thetatrue}_2 \le 4 \infoev^{-1} \norm{\likhatk{1}(\thetatrue)}_2.
\end{align*}
Finally, by a ULLN applied to $\likhatk{1}$, we have
$\likhatk{1}(\thetatrue) \plim \likk{1}(\thetatrue) = 0$, concluding the proof.
\end{proof}
\end{lem}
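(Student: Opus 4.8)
The plan is to prove the two assertions in sequence: first the consistency $\thetahat \plim \thetatrue$, and then, conditioning on the high-probability event that $\thetahat$ has been localized near $\thetatrue$, the operator-norm bound on $\infohat^{-1}$. For consistency I would run a standard argmax argument driven by the well-separation condition in \assuref{bayes_clt} \itemref{strict_opt}. Fix $\delta > 0$. Since $\thetahat$ maximizes $\likhat(\theta) = \meann \ell(\x_n \vert \theta) + N^{-1}\logprior(\theta)$, we have $\likhat(\thetahat) \ge \likhat(\thetatrue)$, which rearranges to
\begin{align*}
\meann \ell(\x_n \vert \thetatrue) - \meann \ell(\x_n \vert \thetahat)
\le N^{-1}\left(\logprior(\thetahat) - \logprior(\thetatrue)\right).
\end{align*}
The prior is proper and, near $\thetatrue$, bounded away from zero and $C^4$ by \assuref{bayes_clt} \itemref{prior_smooth,prior_proper}, so the right-hand side is $\ordp{N^{-1}}$, whereas \assuref{bayes_clt} \itemref{strict_opt} forces the left-hand side to be strictly positive uniformly over $\theta \notin \thetaball{\delta}$ with probability approaching one. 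Hence $\thetahat \in \thetaball{\delta}$ eventually, and since $\delta$ is arbitrary, $\thetahat \plim \thetatrue$.

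For the Hessian bound I would exploit this localization. Since $\info$ is symmetric positive definite with minimum eigenvalue $\infoev$, we have $\norm{\info^{-1}}_{op} = \infoev^{-1}$, and because $\likk{2}(\thetatrue) = -\info$ and $\likk{2}$ is continuous near $\thetatrue$ by \assuref{bayes_clt} \itemref{loglik_smooth}, continuity of matrix inversion and of the operator norm lets me shrink $\delta$ so that $\sup_{\theta \in \thetaball{\delta}} \norm{\likk{2}(\theta)^{-1}}_{op} < \sqrt{2}\,\infoev^{-1}$. The domination hypothesis \assuref{bayes_clt} \itemref{loglik_ulln} then permits invoking the uniform law \lemref{ulln}, giving $\sup_{\theta \in \thetaball{\delta}} \norm{\likhatk{2}(\theta) - \likk{2}(\theta)}_2 \plim 0$; here the prior contributes to $\likhatk{2}$ only the rescaled second derivative of $\logprior$, which is $\ordp{N^{-1}}$ on $\thetaball{\delta}$ by \itemref{prior_smooth} and so does not perturb the limit. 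Continuity of inversion upgrades this to $\sup_{\theta \in \thetaball{\delta}} \norm{\likhatk{2}(\theta)^{-1}}_{op} < 2\infoev^{-1}$ with probability approaching one. Since $\infohat^{-1} = (-\likhatk{2}(\thetahat))^{-1}$ and $\thetahat \in \thetaball{\delta}$ eventually, the bound $\norm{\infohat^{-1}}_{op} < 2\infoev^{-1}$ follows.

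The main obstacle is getting the ordering and the prior bookkeeping right. The uniform Hessian control holds only on a neighborhood of $\thetatrue$, so consistency must be established first in order to localize $\thetahat$, and it is the separation condition \assuref{bayes_clt} \itemref{strict_opt}, rather than a uniform law, that rules out spurious far-away maximizers (a uniform convergence argument alone would not suffice globally). One must also verify at two different orders that the prior is asymptotically negligible --- at zeroth order for the consistency inequality and at second order for $\infohat$ --- which is exactly what the local regularity in \assuref{bayes_clt} \itemref{prior_smooth,prior_proper} supplies in a neighborhood of $\thetatrue$.
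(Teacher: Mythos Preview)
Your treatment of the Hessian bound matches the paper's almost verbatim, but your route to consistency is genuinely different, and as written it contains a circularity. The paper does not run an argmax argument via \assuref{bayes_clt} \itemref{strict_opt} here; instead it Taylor-expands the score $\likhatk{1}$ around $\thetahat$, inverts the integrated Hessian $\tresid{2}{\likhat,\thetahat,\thetatrue}$ using the uniform control on $\thetaball{\delta}$, and reads off the quantitative bound $\norm{\thetahat-\thetatrue}_2 \le 4\infoev^{-1}\norm{\likhatk{1}(\thetatrue)}_2$, which tends to zero by the law of large numbers at $\thetatrue$. That display is not incidental: it is exactly the estimate recycled in the proof of \propref{freq_clt} to obtain the $\sqrt{N}$ rate for $\thetahat$. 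Your argmax argument, even once patched, delivers only $o_p(1)$ consistency and so would not set up that later step.

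The gap is your claim that $N^{-1}\bigl(\logprior(\thetahat)-\logprior(\thetatrue)\bigr)=\ordp{N^{-1}}$, which you justify by citing regularity of $\prior$ \emph{near} $\thetatrue$. At that point in your argument you have not yet shown that $\thetahat$ is near $\thetatrue$; that is precisely the conclusion the inequality is meant to force. Propriety of $\prior$ (\itemref{prior_proper}) gives integrability, not a global pointwise upper bound on the density, so nothing in \assuref{bayes_clt} prevents $\logprior(\thetahat)$ from being large before localization is established, and the separation margin in \itemref{strict_opt} could in principle be overwhelmed. To repair the argmax route you would need either an additional hypothesis that $\prior$ is bounded, or a preliminary step confining $\thetahat$ to a region on which $\logprior$ is already controlled.
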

%%%%%%%%%%%%%%%%%%%%%

\paragraph{Proof of \propref{freq_clt}.}
\begin{proof}
The proof uses \eqref{likscore_taylor_expansion} in the proof of
\lemref{thetahat_consistent}.  First, let $\delta_N := \norm{\thetahat - \thetatrue}_2$,
so that, by \lemref{thetahat_consistent}, $\delta_N \plim 0$.

Then, observe that
\begin{align*}
\MoveEqLeft
\norm{\tresid{2}{\likhat, \thetahat, \theta} - \likk{2}(\thetatrue)}_2
\\={}&
\norm{\int_0^1 (t - 1)
\left(
    \likhatk{2}(t (\theta - \thetahat) + \thetahat) dt - \likk{2}(\thetatrue)
\right) dt
}_2
\\ \le&
\sup_{\theta \in \thetaball{\delta_N}}
    \norm{\likhatk{2}(\theta) - \likk{2}(\thetatrue)}_2
\\ \le&
\sup_{\theta \in \thetaball{\delta_N}}
    \norm{\likhatk{2}(\theta) - \likhatk{2}(\thetatrue)}_2 +
\sup_{\theta \in \thetaball{\delta_N}}
    \norm{\likhatk{2}(\thetatrue) - \likk{2}(\thetatrue)}_2
\\ &\plim 0,
\end{align*}
where the limit to zero follows from continuity of $\likhatk{2}$ via
\assuref{bayes_clt} \itemref{loglik_smooth} for the first term and the ULLN via
\assuref{bayes_clt} \itemref{loglik_ulln} and \lemref{ulln} for the second term.
It follows that
\begin{align*}
\tresid{2}{\likhat, \thetahat, \theta}^{-1} \plim
    \likk{2}(\thetatrue)^{-1} = -\info^{-1}.
\end{align*}

Next, by \assuref{bayes_clt} \itemref{loglik_ulln} and \lemref{ulln},
$\expect{\fdist(\x)}{\norm{\ellgrad{1}(\x \vert \thetatrue)}^2_2}$ is finite
with probability approaching one, so $\cov{\fdist(\x)}{\ellgrad{1}(\x \vert
\thetatrue)} = \scorecov$ is finite. By smoothness of $\lik$
(\assuref{bayes_clt} \itemref{loglik_smooth}) and the definition of
$\thetatrue$, $\expect{\fdist(\x)}{\ellgrad{1}(\x \vert \thetatrue)} = 0$. The
$\x_n$ are IID, so, by a central limit theorem,
\begin{align*}
\sqrt{N} \likhatk{1}(\thetatrue) =
    \frac{1}{\sqrt{N}} \sumn \ellgrad{1}(\x_n \vert \thetatrue)
\dlim \normdist(\cdot | 0, \scorecov).
\end{align*}
By \eqref{likscore_taylor_expansion} in the proof of
\lemref{thetahat_consistent}, with probability approaching one,
\begin{align*}
\sqrt{N}(\thetahat - \thetatrue) ={}&
    \tresid{2}{\likhat, \thetahat, \theta}^{-1}
        \sqrt{N} \likhatk{1}(\thetatrue) \\
&\dlim \normdist(\cdot | 0, \info^{-1} \scorecov \info^{-1}),
\end{align*}
where the final line follows from Slutsky's theorem.
\end{proof}

%%%%%%%%%%%%%%%%%%%%%%%%%%%%%%%%%%%%%%
\subsection{Ratio of integrals}\seclabel{bclt_ratio}
We need to consider the asymptotic behavior of $\expect{\post}{\phi(\theta, \xvec_s)}$,
which can be written as the ratio of the integrals
\begin{align*}
\expect{\post}{\phi(\theta, \xvec_s)} ={}&
\frac{\int \phi(\theta, \xvec_s)
        \exp\left(N \likhat(\theta)\right)
        d \theta}
  {\int \exp\left(N\likhat(\theta)\right) d \theta}.
\end{align*}
Recall that $\thetahat := \mathrm{argmax}_\theta \likhat(\theta)$ and
define $\thetatil(t) := t (\theta - \thetahat) + \thetahat$.
We will employ the following Taylor series expansion around $\thetahat$.
\begin{align*}
\likhat(\theta) ={}&
    \likhat(\thetahat) +
    \likhatk{1}(\thetahat)(\theta - \thetahat) +
    \frac{1}{2}\likhatk{2}(\thetahat)(\theta - \thetahat)^2 +
\\&
    \frac{1}{6}\likhatk{3}(\thetahat)(\theta - \thetahat)^3 +
    \tresid{4}{\likhat, \thetahat, \theta} (\theta - \thetahat)^4,
\end{align*}
recalling the definition of the integral form of the Taylor series residual
$\tresid{4}{\likhat, \thetahat, \theta}$ given in \defref{taylor_residual}.

Note that $\likhat(\thetahat)$ is constant in $\theta$, and that
$\likhatk{1}(\thetahat) = 0$.  Plugging in and changing the domain of
integration to $\tau$, we see
\begin{align*}
\expect{\post}{\phi(\theta, \xvec_s)}
={}&
\frac{\int \phi(\theta, \xvec_s)
       \exp\left(
            N \likhat(\theta) - N \likhat(\thetahat)
        \right)
       d \theta}
     {\int
     \exp\left(
        N \likhat(\theta) - N \likhat(\thetahat)
     \right) d \theta}
\\={}&
\frac{\int \phi(\theta, \xvec_s)
       \exp\left(
            N \likhat(\theta) - N \likhat(\thetahat)
        \right)
       \sqrt{N} d \tau}
     {\int
     \exp\left(
        N \likhat(\theta) - N \likhat(\thetahat)
     \right) \sqrt{N} d \tau}.
% =\\{}&
% \frac{\int \phi(\theta, \xvec_s)
%        \exp\left(
%             \frac{1}{2} \likhatk{2}(\thetahat)\tau^2 +
%             N^{-1/2} \frac{1}{6}\likhatk{3}(\thetahat) \tau^3 +
%             N^{-1} \tresid{4}{\likhat, \thetahat, \theta} \tau^4
%         \right)
%        d \tau}
%      {\int
%      \exp\left(
%          \frac{1}{2} \likhatk{2}(\thetahat)\tau^2 +
%          N^{-1/2} \frac{1}{6}\likhatk{3}(\thetahat) \tau^3 +
%          N^{-1} \tresid{4}{\likhat, \thetahat, \theta} \tau^4
%      \right) d \tau}.
\end{align*}
Since the factor of $\sqrt{N}$ cancels in the preceding display, it will suffice
to consider the following generic integral.
\begin{defn}\deflabel{bclt_integral}
For a target function $\phi$, define the integral functional
\begin{align*}
I(\phi) :={}&
\int \phi(\theta, \xvec_s)
   \exp\left(
       N \likhat(\theta) - N \likhat(\thetahat)
   \right)
   d \tau \\
={}&
\int \phi(\theta, \xvec_s) \times
\\&\quad
   \exp\left(
       \frac{1}{2} \likhatk{2}(\thetahat)\tau^2 +
       N^{-1/2} \frac{1}{6}\likhatk{3}(\thetahat) \tau^3 +
       N^{-1} \tresid{4}{\likhat, \thetahat, \theta} \tau^4
   \right)
   d \tau.
\end{align*}
\end{defn}
Under \defref{bclt_integral}, $\expect{\post}{\phi(\theta, \xvec_s)} = I(\phi) /
I(1)$.  We will now consider the behavior of $I(\phi)$ for a
generic third-order BCLT-okay $\phi$,
returning to consider the behavior of the ratio in \secref{bclt_region_comb}.

%%%%%%%%%%%%%%%%%%%%%%%%%%%%%%%%%%%%%%
\subsection{Regions}
For some $\delta_1$ and $\delta_2$ to be determined later, we will divide
consideration of $I(\phi)$ into the following three disjoint regions.

\begin{defn}\deflabel{region_division}
For any $\delta_1 > 0$ and $\delta_2 > 0$ and $N$ sufficiently large, define
the disjoint regions
\begin{align*}
R_1 :={}& \left\{\theta:
    \norm{\theta - \thetahat}_2 < \delta_1 \frac{\log N}{\sqrt{N}} \right\}\\
R_2 :={}& \left\{\theta:
    \delta_1 \frac{\log N}{\sqrt{N}} \le
        \norm{\theta - \thetahat}_2 < \delta_2 \right\} \\
R_3 :={}& \left\{\theta:
    \delta_2 \le \norm{\theta - \thetahat}_2  \right\}.
\end{align*}
Recalling our slight abuse of set notation, the same regions in the domain of $\tau$
are,
\begin{align*}
\tau \in R_1 \Rightarrow{}& \tau \in \left\{\tau:
    \norm{\tau}_2 < \delta_1 \log N \right\}\\
\tau \in R_2 \Rightarrow{}& \left\{\tau:
    \delta_1 \log N \le
        \norm{\tau}_2 < \delta_2 \sqrt{N} \right\} \\
\tau \in R_3 \Rightarrow{}& \left\{\tau:
    \delta_2 \sqrt{N} \le \norm{\tau}_2  \right\}.
\end{align*}
Similarly, we will use the regions as domains of integration for $\tau$ as well
as $\theta$. Correspondingly, we can define
\begin{align*}
I_1(\phi) :={}& \int_{R_1} \phi(\theta, \xvec_s)
   \exp\left(N \likhat(\theta) - N \likhat(\thetahat)\right) d \tau \\
I_2(\phi) :={}& \int_{R_2} \phi(\theta, \xvec_s)
  \exp\left(N \likhat(\theta) - N \likhat(\thetahat)\right) d \tau \\
I_3(\phi) :={}& \int_{R_3} \phi(\theta, \xvec_s)
 \exp\left(N \likhat(\theta) - N \likhat(\thetahat)\right) d \tau.
\end{align*}
\end{defn}
Since the boundaries of the regions coincide, $I(\phi) = I_1(\phi) + I_2(\phi) +
I_3(\phi)$ for $N$ when the regions $R_1$, $R_2$, and $R_3$ are disjoint, which
will occur for $N$ sufficiently large (see \defref{regular_event} below for more
discussion of this point). Note that the union $R_1 \bigcup R_2$ is a $\delta_2$
ball around $\thetahat$, and that $R_1$ shrinks to $\thetahat$ as $N \rightarrow
\infty$.

Throughout the proof, we will be choosing $\delta_1$ and $\delta_2$ to satisfy
certain conditions.  Our assumptions will always take the form of assuming that
$\delta_1$ is \textit{large enough}, and that $\delta_2$ is \textit{small
enough}.  For any particular $\delta_1$ and $\delta_2$, we then will require
that $N$ is large enough that the regions are disjoint.

% With the regions so defined, we will be able to choose, independently of $\xvec$,
% $\delta_1$ to be large enough and $\delta_2$ small enough that we can make the
% integral $I_1(\phi)$ contain all nonzero terms up to polynomial order in $N$
% with high probability as $N \rightarrow \infty$.  In other words, only the
% behavior of the integral $I_1(\phi)$ will matter to leading order.  Because the
% region $R_1$ is shrinking in size as $N$ grows, the integral $I_1(\phi)$ will be
% arbitrarily well approximated by a Taylor series expansion around $\thetahat$.

%%%%%%%%%%%%%%%%%%%%%%%%%%%%%%%%%%%%%%
\subsection{Nearly certain events}
Our entire analysis will be conditioned on the occurrence of several events,
each of which occurs with probability approaching one as $N$ goes to infinity.
Outside of these events we provide no guarantees.  As a consequence, the
residuals of our approximation can be used to show convergence in probability
or distribution (via Slutsky's theorem), but not convergence in expectation
nor almost sure convergence.

\begin{defn}\deflabel{regular_event}
For any any $\epsilon_U > 0$, let $\regevent$
denote the event that all of the following occur
for a particular $N$, $\delta_1$, $\delta_2$.
(``R'' is for ``regular''.)
\begin{enumerate}
\item \eventlabel{disjoint_regions} The regions $R_1$, $R_2$, and $R_3$ are
disjoint, and $I(\phi) = I_1(\phi) + I_2(\phi) + I_3(\phi)$.
\item \eventlabel{thetahat_close}
$\norm{\thetahat - \thetatrue}_\infty < \epsilon_U$.
\item \eventlabel{prior_close}
    $|\prior(\thetahat) - \prior(\thetatrue)| < \epsilon_U$.
\item \eventlabel{ulln} For any scalar-valued function $\rho(\theta, \x_n)$
with $ \expect{\fdist(\xn)}{\rho(\theta, \xn)} = 0$
satisfying a ULLN with $\delta = \deltaulln$,
\begin{align*}
\sup_{\theta \in R_1 \bigcup R_2} \meann
\abs{\rho(\theta, \x_n)}
    \le \epsilon_U.
\end{align*}
%
% For each $0 \le k \le 3$,
% %
% \begin{align*}
% &    \sup_{\theta \in R_1 \bigcup R_2}
% \norm{\likhatk{k}(\theta) - \likk{k}(\theta)}_2^2 < \epsilon_U.
% \end{align*}
%
\item \eventlabel{infoev} $\infoevhat > \infoev / 2$.
\item \eventlabel{strict_opt} There exists $\epsilon_L > 0$ such that
\begin{align*}
\sup_{\theta \in R_3}
\left( \sumn \ell(\x_n \vert \thetahat) -
       \sumn \ell(\x_n \vert \theta)
\right) > N \epsilon_L.
\end{align*}
\end{enumerate}
\end{defn}

%%%%%%%%%%%%%%%%%%%%%%%%%%%%%%%%%%%%%%%%%%%
%%%%%%%%%%%%%%%%%%%%%%%%%%%%%%%%%%%%%%%%%%%

\begin{lem}\lemlabel{regular}
Let \assuref{bayes_clt} hold. For any $\delta_1 > 0$, any $\pi_0 \in (0,1)$, and
any $\delta_2 > 0$ and $\epsilon_U > 0$ satisfying  $\deltaulln > \epsilon_U +
\delta_2$ and $\delta_2 / 2 \ge \epsilon_U$, there exists an $N^*$ such $N >
N^*$ implies that event $\regevent$ occurs with $\fdist$-probability at least $1 -
\pi_0$.
\begin{proof}
\Eventref{disjoint_regions} follows from the fact that, for any positive
$\delta_1$ and $\delta_2$, eventually $\delta_2 \sqrt{N} > \delta_1 \log N$.

\Eventref{thetahat_close} follows from \lemref{thetahat_consistent}.

\Eventref{prior_close} follows from the fact that the prior density is
continuous by \assuref{bayes_clt} \itemref{prior_smooth} and,
by \lemref{thetahat_consistent}, we can take
$\thetahat$ to be arbitrarily close to $\thetatrue$.
% Note that, to satisfy this
% condition and, we may need to make $\norm{\thetahat - \thetatrue}_2$ smaller than
% $\epsilon_U$).

For \eventref{ulln}, observe that $\norm{\thetahat - \thetatrue}_2 < \epsilon_U$
implies that
\begin{align*}
\theta \in& R_1 \bigcup R_2 \Rightarrow\\
\norm{\theta - \thetahat}_2 \le& \delta_2 \Rightarrow\\
\norm{\theta - \thetatrue}_2 \le&
    \delta_2 + \norm{\thetahat - \thetatrue}_2
\le \delta_2 + \epsilon_U \le \deltaulln.
\end{align*}
Therefore,
\begin{align*}
\sup_{\theta \in R_1 \bigcup R_2} \meann \abs{\rho(\theta, \x_n)} \le&
\sup_{\theta \in \thetaball{\deltaulln}}
    \meann \abs{\rho(\theta, \x_n)} \plim 0.
\end{align*}
\Eventref{infoev} follows from \lemref{thetahat_consistent}.

For \eventref{strict_opt}, write
\begin{align*}
\MoveEqLeft
\frac{1}{N} \sup_{\theta \in R_3}
\left( \sumn \ell(\x_n \vert \thetahat) -
       \sumn \ell(\x_n \vert \theta)
\right) =
% \\&
\likhat(\thetahat) - \likhat(\thetatrue) +
    \sup_{\theta \in R_3}
        \left(\likhat(\thetatrue) - \likhat(\theta)\right).
\end{align*}
Since $\norm{\thetahat - \thetatrue}_2 \le \epsilon_U \le \delta_2 / 2$
by assumption of the present lemma,
\begin{align*}
\theta \in& R_3 \Rightarrow\\
\norm{\theta - \thetahat + \thetahat - \thetatrue}_2 \ge & \delta_2 \Rightarrow\\
\norm{\theta - \thetahat}_2 + \norm{\thetahat - \thetatrue}_2
    \ge& \delta_2 \Rightarrow\\
\norm{\theta - \thetahat}_2 \ge& \delta_2 - \frac{\delta_2}{2} = \frac{\delta_2}{2},
\end{align*}
so the set $R_3$, is contained in $\thetadom \setminus \thetaball{\delta_2 /
2}$. Consequently, by \assuref{bayes_clt} \itemref{strict_opt}, there exists an
$\epsilon_L$ such that
\begin{align*}
\sup_{\theta: \norm{\theta - \thetahat}_2 \ge \delta_2}
    \left(\likhat(\thetatrue) - \likhat(\theta)\right) \le&
\sup_{\theta \in \thetadom \setminus \thetaball{\delta_2 / 2}}
    \left(\likhat(\thetatrue) - \likhat(\theta)\right) \le
-2 \epsilon_L.
\end{align*}
By the continuity of $\lik$, and a ULLN applied to $\likhat(\theta)$, we can
take $\norm{\thetahat - \thetatrue}_2$ sufficiently small that that
$|\likhat(\thetahat) - \likhat(\thetatrue)| < \epsilon_L$.  It follows that,
with probability approaching one,
\begin{align*}
\frac{1}{N} \sup_{\theta: \norm{\theta - \thetahat}_2 \ge \delta_2}
\left( \sumn \ell(\x_n \vert \thetahat) -
       \sumn \ell(\x_n \vert \theta)
\right) \le \epsilon_L - 2 \epsilon_L = -\epsilon_L.
\end{align*}
The result follows by multiplying both sides of the inequality by $N$.
\end{proof}
\end{lem}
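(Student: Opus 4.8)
The plan is to establish each of the six events that make up $\regevent$ in \defref{regular_event} with $\fdist$-probability at least $1 - \pi_0/6$ for all $N$ sufficiently large, and then conclude by a union bound. The backbone of the argument is \lemref{thetahat_consistent} — itself a consequence of \assuref{bayes_clt} via the uniform laws of \lemref{ulln} — which supplies $\thetahat \plim \thetatrue$ together with $\norm{\infohat^{-1}}_{op} < 2 \infoev^{-1}$ with probability approaching one. Everything else is bookkeeping around these two facts and the arithmetic constraints linking $\epsilon_U$, $\delta_1$, $\delta_2$, and $\deltaulln$.

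Four of the events are essentially immediate. \eventref{disjoint_regions} is deterministic: for any fixed $\delta_1, \delta_2 > 0$ we have $\delta_2 \sqrt{N} > \delta_1 \log N$ once $N$ is large, so $R_1$, $R_2$, $R_3$ are disjoint; since their boundaries coincide they partition the domain, giving $I(\phi) = I_1(\phi) + I_2(\phi) + I_3(\phi)$. \eventref{thetahat_close} is precisely $\thetahat \plim \thetatrue$ from \lemref{thetahat_consistent}, and \eventref{prior_close} follows from it since $\prior$ is continuous near $\thetatrue$ by \assuref{bayes_clt} \itemref{prior_smooth}. \eventref{infoev}, i.e. $\infoevhat > \infoev / 2$, is just the restatement of the operator-norm bound $\norm{\infohat^{-1}}_{op} < 2 \infoev^{-1}$ in terms of minimum eigenvalues.

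The two remaining events are where the hypotheses on $\epsilon_U$, $\delta_2$, $\deltaulln$ are used. For \eventref{ulln}: on the event $\norm{\thetahat - \thetatrue}_2 < \epsilon_U$, the assumption $\deltaulln > \epsilon_U + \delta_2$ gives $R_1 \cup R_2 \subseteq \thetaball{\deltaulln}$, because any $\theta$ with $\norm{\theta - \thetahat}_2 \le \delta_2$ satisfies $\norm{\theta - \thetatrue}_2 \le \delta_2 + \epsilon_U \le \deltaulln$; hence the sample average over $R_1 \cup R_2$ is dominated by its supremum over $\thetaball{\deltaulln}$, which vanishes in probability by the uniform law of large numbers hypothesized on $\rho$ (using $\expect{\fdist(\xn)}{\rho(\theta, \xn)} = 0$). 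For \eventref{strict_opt}: the assumption $\delta_2 / 2 \ge \epsilon_U$ together with $\norm{\thetahat - \thetatrue}_2 < \epsilon_U$ and the triangle inequality forces $R_3 \subseteq \thetadom \setminus \thetaball{\delta_2/2}$; one then invokes \assuref{bayes_clt} \itemref{strict_opt} — unique strict maximum of $\lik$ at $\thetatrue$, positive definiteness of $\info$, and the sample-level separation for $\theta$ outside any ball — to extract a constant $\epsilon_L > 0$ with $\sup_{\theta \in \thetadom \setminus \thetaball{\delta_2/2}} (\likhat(\thetatrue) - \likhat(\theta)) \ge 2\epsilon_L$ with probability approaching one, and then combine this with $\likhat(\thetahat) \ge \likhat(\thetatrue)$ and the fact (prior continuity plus a ULLN applied to $\likhat$) that $\abs{\likhat(\thetahat) - \likhat(\thetatrue)}$ is small to obtain $\frac{1}{N} \sup_{\theta \in R_3}(\sumn \ell(\x_n \vert \thetahat) - \sumn \ell(\x_n \vert \theta)) \ge \epsilon_L$; multiplying through by $N$ yields \eventref{strict_opt}.

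I expect \eventref{strict_opt} to be the main obstacle. Unlike the other events, the region $R_3$ reaches arbitrarily far from $\thetatrue$, so the uniform law of \lemref{ulln} — which only controls behavior in a fixed neighborhood $\thetaball{\deltaulln}$ — gives no purchase there; one must lean on the global identifiability clause of \assuref{bayes_clt} \itemref{strict_opt} and its interplay with the argmax property of $\thetahat$ (up to the asymptotically negligible $N^{-1}\logprior$ contribution) to produce a strictly positive, $N$-independent separation constant $\epsilon_L$. Granting all six events, choose $N^*$ large enough that each holds with $\fdist$-probability at least $1 - \pi_0/6$ for $N > N^*$; on the intersection, which has probability at least $1 - \pi_0$, the event $\regevent$ occurs, which is the claim.
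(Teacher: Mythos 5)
Your proposal is correct and follows essentially the same route as the paper's proof: each of the six events is handled with the identical argument (the deterministic region-disjointness, \lemref{thetahat_consistent} for \eventref{thetahat_close}, \eventref{prior_close}, and \eventref{infoev}, the inclusion $R_1 \cup R_2 \subseteq \thetaball{\deltaulln}$ via $\deltaulln > \epsilon_U + \delta_2$ for \eventref{ulln}, and the triangle-inequality containment $R_3 \subseteq \thetadom \setminus \thetaball{\delta_2/2}$ combined with \assuref{bayes_clt} \itemref{strict_opt} for \eventref{strict_opt}), with the union bound made explicit where the paper leaves it implicit.
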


Note that \lemref{regular} requires $\delta_2$ to be sufficiently small (so that
$\deltaulln > \delta_2$), which in turn requires $\epsilon_U$ to be small (so
that $\deltaulln > 3 \epsilon_U$).  For any particular $\deltaulln > \delta_2$,
we can always apply \lemref{regular} with arbitrarily small $\epsilon_U$, at the
cost of requiring a large enough $N^*$ so that the probability convergence
results apply.
Throughout, we will choose $\epsilon_U$ \textit{small enough} that we obtain our
desired results.

%%%%%%%%%%%%%%%%%%%%%%%%%%%%%%%%%%%%%
\subsection{Region $R_3$}

The integral over $R_3$ goes to zero by strict optimality of $\thetatrue$
and the proper behavior of the prior.

\begin{lem}\lemlabel{region_r3}
Under \assuref{core_bclt_assu}, when $\regevent$ occurs with
$\epsilon_U < \prior(\thetatrue) / 2$,
\begin{align*}
\norm{I_3(\phi)}_2 =
O(\exp(-N \epsilon_L)) \expect{\prior(\theta)}{\norm{ \phi(\theta, \xvec_s) }_2 }.
\end{align*}
\begin{proof}
We have
\begin{align*}
\norm{I_3(\phi)}_2
={}& \norm{ \int_{R_3} \phi(\theta, \xvec_s)
       \exp\left(
            N \likhat(\theta) - N \likhat(\thetahat)
        \right)
       d \theta }_2 \\
\le&
\sup_{\theta \in R_3}
\exp\Bigg(
    N \Bigg(
         \frac{1}{N} \sumn \ell(\x_n \vert \theta) -
         \frac{1}{N} \sumn \ell(\x_n \vert \thetahat)
     \Bigg)
 \Bigg) \times
 \int_{R_3} \norm{\phi(\theta, \xvec_s)}_2
 \frac{\prior(\theta)}{\prior(\thetahat)} d \theta \\
={}&
    \exp(-N \epsilon_L)
    \frac{1}{\prior(\thetahat)}
    \expect{\prior(\theta)}{
        \norm{\phi(\theta, \xvec_s)}_2 \mathbb{I}(\theta \in R_3)
    }
    \quad\textrm{(\defref{regular_event}, \eventref{strict_opt})}\\
\le&
    \exp(-N \epsilon_L)
    \frac{1}{\prior(\thetatrue) - \epsilon_U}
    \expect{\prior(\theta)}{\norm{\phi(\theta, \xvec_s)}_2 \mathbb{I}(\theta \in R_3)}
\quad\textrm{(\defref{regular_event}, \eventref{prior_close})}\\
\le&
\exp(-N \epsilon_L) 2 \prior(\thetatrue)^{-1}
\expect{\prior(\theta)}{\norm{\phi(\theta, \xvec_s)}_2 }.
\end{align*}
In the last step, we take $\epsilon_U < \prior(\thetatrue) / 2$ in $\regevent$
(specifically, \defref{regular_event} \eventref{prior_close}).
\end{proof}
\end{lem}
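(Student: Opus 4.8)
The plan is to bound $\norm{I_3(\phi)}_2$ directly, exploiting that on the event $\regevent$ the region $R_3$ is precisely where strict optimality of $\thetatrue$ forces exponential decay of the likelihood. First I would move the Euclidean norm inside the integral: since the weight $\exp(N\likhat(\theta) - N\likhat(\thetahat))$ is a nonnegative scalar, the triangle inequality gives $\norm{I_3(\phi)}_2 \le \int_{R_3}\norm{\phi(\theta,\xvec_s)}_2\,\exp(N\likhat(\theta) - N\likhat(\thetahat))\,d\theta$. Then I would split the exponent via $N\likhat(\theta) = \sumn\ell(\x_n \vert \theta) + \logprior(\theta)$, so that $\exp(N\likhat(\theta) - N\likhat(\thetahat)) = \exp\big(\sumn\ell(\x_n \vert \theta) - \sumn\ell(\x_n \vert \thetahat)\big)\cdot\prior(\theta)/\prior(\thetahat)$; pulling the first factor out as a supremum over $R_3$ and leaving $\prior(\theta)$ attached to the integrand turns the bound into $\sup_{\theta\in R_3}\exp\big(\sumn\ell(\x_n \vert \theta) - \sumn\ell(\x_n \vert \thetahat)\big)\cdot\prior(\thetahat)^{-1}\int_{R_3}\norm{\phi(\theta,\xvec_s)}_2\,\prior(\theta)\,d\theta$.

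Next I would invoke the two components of $\regevent$ tailored to $R_3$. By \defref{regular_event}, \eventref{strict_opt}, on $\regevent$ there is a constant $\epsilon_L > 0$, not depending on $N$, with $\sup_{\theta\in R_3}\big(\sumn\ell(\x_n \vert \thetahat) - \sumn\ell(\x_n \vert \theta)\big) > N\epsilon_L$, which bounds the supremum factor above by $\exp(-N\epsilon_L)$. By \defref{regular_event}, \eventref{prior_close}, $\abs{\prior(\thetahat) - \prior(\thetatrue)} < \epsilon_U$, and the hypothesis $\epsilon_U < \prior(\thetatrue)/2$ then gives $\prior(\thetahat) > \prior(\thetatrue)/2$, hence $\prior(\thetahat)^{-1} < 2\prior(\thetatrue)^{-1}$. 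Finally, discarding the indicator $\ind{\theta\in R_3}\le 1$ yields $\int_{R_3}\norm{\phi(\theta,\xvec_s)}_2\,\prior(\theta)\,d\theta \le \expect{\prior(\theta)}{\norm{\phi(\theta,\xvec_s)}_2}$. Chaining the three estimates gives $\norm{I_3(\phi)}_2 \le 2\prior(\thetatrue)^{-1}\exp(-N\epsilon_L)\expect{\prior(\theta)}{\norm{\phi(\theta,\xvec_s)}_2}$, which is exactly the asserted $O(\exp(-N\epsilon_L))\expect{\prior(\theta)}{\norm{\phi(\theta,\xvec_s)}_2}$, since the prefactor $2\prior(\thetatrue)^{-1}$ depends on neither $N$ nor the data.

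I do not expect a real obstacle here: this is the ``far'' region, where the difficult work has already been front-loaded into the construction of $\regevent$ in \lemref{regular}, so the argument is essentially bookkeeping built on strict optimality and properness of the prior. The two points needing a little care are (i) correctly attributing the $\frac1N\logprior(\theta)$ term inside $\likhat$ to the integrand, rather than to the exponentially decaying prefactor, since this is what makes $\expect{\prior(\theta)}{\norm{\phi(\theta,\xvec_s)}_2}$ (and not some other integral) appear in the final bound; and (ii) ensuring that $\epsilon_L$ arises from \assuref{bayes_clt} as a fixed population quantity and that the leading constant is data- and $N$-independent, so the $O(\cdot)$ notation is legitimate in the paper's sense. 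No BCLT-okay property of $\phi$ is actually used: the conclusion is only a bound relative to $\expect{\prior(\theta)}{\norm{\phi(\theta,\xvec_s)}_2}$, so $\phi$ need not even be prior-integrable for the statement as written to hold.
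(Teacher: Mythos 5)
Your proposal is correct and follows essentially the same route as the paper's proof: factor the posterior weight into the likelihood ratio times $\prior(\theta)/\prior(\thetahat)$, bound the likelihood ratio uniformly on $R_3$ by $\exp(-N\epsilon_L)$ via \defref{regular_event} \eventref{strict_opt}, control $\prior(\thetahat)^{-1}$ by $2\prior(\thetatrue)^{-1}$ via \eventref{prior_close} and $\epsilon_U < \prior(\thetatrue)/2$, and drop the indicator of $R_3$. Your observations about where the $\frac{1}{N}\logprior$ term must land and about the $N$- and data-independence of the constants match the paper's bookkeeping exactly.
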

%

%%%%%%%%%%%%%%%%%%%%%%%%%%%%%%%%%%%%%
\subsection{Region $R_2$}
Region $R_2$ is dealt with by choosing $\delta_2$ sufficiently small that
$I_2(\phi)$ is dominated by a particular Gaussian integral.  The we will make
that Gaussian integral small by choosing $\delta_1$ sufficiently large.

\begin{lem}\lemlabel{region_r2}
Let \assuref{core_bclt_assu} hold. Take $\delta_1 \ge \sqrt{8 / \infoev}$.
There exist $\delta_2$ and $\epsilon_U$, defined as functions only of the
population quantities $\infoev$ and $\likk{k}(\theta)$, such that, for any
$\xvec$ such that $\regevent$ occurs,
\begin{align*}
\norm{I_2(\phi)}_2 \le&
    O\left(N^{-3}\right)
    \expect{\normresid{\infoev / 2}{R_1 \bigcup R_2}(\tau)}
           {\norm{\phi(\theta, \xvec_s)}_2 }.
\end{align*}
\end{lem}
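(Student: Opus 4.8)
\textbf{Proof plan for \lemref{region_r2}.}
The plan is to work throughout on the event $\regevent$ of \defref{regular_event} (with $\epsilon_U$ a small constant to be fixed below and $\delta_1 \ge \sqrt{8/\infoev}$), and to show that on $R_1 \bigcup R_2$ the exponent in the integrand of $I_2(\phi)$ is dominated by a strictly negative quadratic in $\tau$. Apply \lemref{taylor_residual} to $\likhat$ to third order about $\thetahat$ (equivalently, truncate the fourth-order expansion of \defref{bclt_integral}); using $\likhatk{1}(\thetahat)=0$ and the change of variables $\theta = \thetahat + N^{-1/2}\tau$, this gives
\[
N\likhat(\theta) - N\likhat(\thetahat) = \tfrac{1}{2}\likhatk{2}(\thetahat)\tau^2 + N^{-1/2}\tresid{3}{\likhat,\thetahat,\theta}\tau^3 ,
\]
with $\tresid{3}{\likhat,\thetahat,\theta}$ as in \defref{taylor_residual}. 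The first step is the quadratic term: on $\regevent$ one has $\infoevhat > \infoev/2$ (\eventref{infoev}), but in fact $\infoevhat \plim \infoev$ (by \assuref{bayes_clt} \itemref{loglik_ulln}, \lemref{ulln}, \lemref{thetahat_consistent}, and continuity of $\likk{2}$), so for $\epsilon_U$ small $\regevent$ can be taken to also force $\infoevhat \ge \tfrac{7}{8}\infoev$, whence $\tfrac12\likhatk{2}(\thetahat)\tau^2 \le -\tfrac{7}{16}\infoev\norm{\tau}_2^2$.

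The second and main step is to control the cubic remainder uniformly over $R_1 \bigcup R_2$. On $\regevent$, $\norm{\thetahat - \thetatrue}_2 < \epsilon_U$, so if $\delta_2 + \epsilon_U \le \deltaulln$ then $R_1 \bigcup R_2 \subseteq \thetaball{\deltaulln}$; on that ball \lemref{ulln} (Eq.\ \eqref{glivenko_cantelli} with $k=3$, valid because \assuref{bayes_clt} \itemref{loglik_ulln} supplies the $K=4$ domination) gives $\sup_{\theta\in\thetaball{\deltaulln}}\norm{\likhatk{3}(\theta)}_2 \le 2L_3$ with probability approaching one, where $L_3 := \sup_{\theta\in\thetaball{\deltaulln}}\norm{\likk{3}(\theta)}_2 < \infty$; absorb this into $\regevent$, so $\norm{\tresid{3}{\likhat,\thetahat,\theta}}_2 \le L_3$. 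Since $\theta\in R_1\bigcup R_2$ means $N^{-1/2}\norm{\tau}_2 = \norm{\theta-\thetahat}_2 < \delta_2$, we get $\bigl|N^{-1/2}\tresid{3}{\likhat,\thetahat,\theta}\tau^3\bigr| \le L_3(N^{-1/2}\norm{\tau}_2)\norm{\tau}_2^2 \le L_3\delta_2\norm{\tau}_2^2$. Choosing $\delta_2$ small \emph{as a function only of the population quantities} $\infoev$ and $L_3$ (e.g.\ $\delta_2 \le \infoev/(20L_3)$, also small enough for $\delta_2 + \epsilon_U \le \deltaulln$ and for \lemref{regular} to apply) bounds this by $\tfrac{1}{20}\infoev\norm{\tau}_2^2$, hence on $\regevent$ and for all $\tau\in R_1\bigcup R_2$,
\[
N\likhat(\theta) - N\likhat(\thetahat) \le -\tfrac{7}{16}\infoev\norm{\tau}_2^2 + \tfrac{1}{20}\infoev\norm{\tau}_2^2 \le -\tfrac{3}{10}\infoev\norm{\tau}_2^2 .
\]

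The third step is bookkeeping. On $R_2$, $\norm{\tau}_2 \ge \delta_1\log N$, so splitting $-\tfrac{3}{10}\infoev\norm{\tau}_2^2 = -\tfrac{\infoev}{4}\norm{\tau}_2^2 - \tfrac{\infoev}{20}\norm{\tau}_2^2 \le -\tfrac{\infoev}{4}\norm{\tau}_2^2 - \tfrac{\infoev}{20}\delta_1^2(\log N)^2$ yields
\[
\norm{I_2(\phi)}_2 \le \int_{R_2}\norm{\phi(\theta,\xvec_s)}_2 \exp\!\left(N\likhat(\theta)-N\likhat(\thetahat)\right)d\tau \le \exp\!\left(-\tfrac{\infoev}{20}\delta_1^2(\log N)^2\right)\int_{R_1\bigcup R_2}\norm{\phi(\theta,\xvec_s)}_2 \exp\!\left(-\tfrac{\infoev}{4}\norm{\tau}_2^2\right)d\tau .
\]
By \defref{residual_distribution}, the last integral equals $Z_{R_1\bigcup R_2}\,\expect{\normresid{\infoev/2}{R_1\bigcup R_2}(\tau)}{\norm{\phi(\theta,\xvec_s)}_2}$ with $Z_{R_1\bigcup R_2} = \int_{R_1\bigcup R_2}\exp(-\tfrac{\infoev}{4}\norm{\tau}_2^2)d\tau \le (4\pi/\infoev)^{D/2} = O(1)$, while $\exp(-\tfrac{\infoev}{20}\delta_1^2(\log N)^2)$ is sub-polynomial in $N$, hence $O(N^{-3})$ once $N$ is large enough that $\tfrac{\infoev}{20}\delta_1^2\log N \ge 3$ (the hypothesis $\delta_1 \ge \sqrt{8/\infoev}$ just pins down $\tfrac{\infoev}{20}\delta_1^2 \ge \tfrac{2}{5}$, making this threshold explicit). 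Combining the two factors gives the stated $\norm{I_2(\phi)}_2 \le O(N^{-3})\,\expect{\normresid{\infoev/2}{R_1\bigcup R_2}(\tau)}{\norm{\phi(\theta,\xvec_s)}_2}$.

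The main obstacle is the second step: producing a uniform, strictly-negative-quadratic upper bound for $N\likhat(\theta)-N\likhat(\thetahat)$ over all of $R_1\bigcup R_2$ using only population-level constants. This forces one to (i) sharpen the crude bound $\infoevhat > \infoev/2$ of \eventref{infoev} to something strictly better, and (ii) exploit the defining inequality $\norm{\theta-\thetahat}_2 < \delta_2$ of $R_2$ to convert the cubic $N^{-1/2}\norm{\tau}_2^3$ term into a small multiple of $\norm{\tau}_2^2$ --- both requiring that $\delta_2$ and $\epsilon_U$ be fixed small, in terms of $\infoev$ and $\sup_{\thetaball{\deltaulln}}\norm{\likk{3}}_2$, \emph{before} $N\to\infty$, so that the ULLN-type events constituting $\regevent$ supply the remaining control.
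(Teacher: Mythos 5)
Your proof is correct and follows essentially the same strategy as the paper's: fix $\epsilon_U$ and $\delta_2$ small as functions of $\infoev$ and $\sup_{\theta}\norm{\likk{k}(\theta)}_2$ so that on $\regevent$ the exponent is dominated by a strictly negative quadratic $-c\,\infoev\norm{\tau}_2^2$ uniformly over $R_1 \bigcup R_2$, then use $\norm{\tau}_2 \ge \delta_1 \log N$ on $R_2$ to peel off a super-polynomially small factor and recognize the remaining Gaussian integral as the truncated-normal expectation. The only cosmetic differences are that you stop the Taylor expansion at the third-order integral remainder rather than isolating $\likhatk{3}(\thetahat)$ with a fourth-order remainder (the paper's residual matrix $\Lambda(\theta,\xvec,N)$), and you sharpen the eigenvalue bound on $\infohat$ directly; both are harmless, and your handling of the $(\log N)^2$ factor in fact reconciles the stated threshold $\delta_1 \ge \sqrt{8/\infoev}$ with the $O(N^{-3})$ rate more cleanly than the paper's own final step.
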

\begin{proof} %%%%%%%%%%%%% proof of region_r2 lemma
The main part of the proof consists in showing that there exists a $\delta_2$
sufficiently small so that
\begin{align}\eqlabel{r2_target_bound}
\norm{I_2(\phi)}_2 \le& \int_{R_2}
    \exp\left(
        -\frac{\infoev}{2} \norm{\tau}_2^2
        \right) \norm{\phi(\theta, \xvec_s)}_2 d\tau,
\end{align}
The approach will be to bound the difference between $N \likhat(\theta) - N
\likhat(\thetahat)$ and $\frac{1}{2} \likk{2}(\thetatrue)$ so that the entire
expression inside the integral's exponent can be written as  $A \tau^2$ for a
positive definite matrix $A$.

For the duration of this proof, define the following residual matrix:
\begin{align*}
\Lambda(\theta, \xvec, N) :={}&
    \frac{1}{2} \left(\likhatk{2}(\thetahat) - \likk{2}(\thetatrue)\right) + \\
&   N^{-1/2} \frac{1}{6}\left(
            \likhatk{3}(\thetahat) -
            \likk{3}(\thetatrue)
        \right) \tau + \\
&    N^{-1/2} \frac{1}{6}\likk{3}(\thetatrue) \tau +
    N^{-1} \tresid{4}{\likhat, \thetahat, \theta} \tau^2.
\end{align*}
Under this definition,
\begin{align*}
N \likhat(\theta) - N \likhat(\thetahat) ={}&
\left( \frac{1}{2} \likk{2}(\thetatrue) + \Lambda(\theta, \xvec, N) \right) \tau^2.
\end{align*}
We will now bound the entries of the $D \times D$ matrix $\Lambda(\theta, \xvec,
N)$.

When $\regevent$ occurs, we have
\begin{align*}
\norm{\likhatk{2}(\thetahat) - \likk{2}(\thetatrue)}_2 \le&
\norm{\likhatk{2}(\thetahat) - \likk{2}(\thetahat)}_2 +
\norm{\likk{2}(\thetahat) - \likk{2}(\thetatrue)}_2 \\
\le&
\sup_{\theta \in R_2}
    \norm{\likhatk{2}(\theta) - \likk{2}(\theta)}_2 +
\sup_{\theta \in R_2} \norm{\likk{3}(\theta)}_2
    \norm{\thetahat - \thetatrue}_2 \\
\le& \left(1 + \sup_{\theta \in R_2} \norm{\likk{3}(\theta)}_2\right)
    \epsilon_U.
\quad\textrm{\defref{regular_event}, \eventref{ulln, thetahat_close}}
\end{align*}
Similarly,
\begin{align*}
\norm{\left(\likhatk{3}(\thetahat) - \likk{3}(\thetatrue)\right) \tau}_2
\le&
\norm{\likhatk{3}(\thetahat) - \likk{3}(\thetatrue)}_2
\norm{\tau}_2
\\
\le& \left(1 + \sup_{\theta \in R_2} \norm{\likk{4}(\theta)}_2\right)
    \norm{\tau}_2 \epsilon_U.
\end{align*}
Next, applying Cauchy-Schwartz gives
\begin{align*}
\norm{\likk{3}(\thetatrue) \tau}_2 \le
    \norm{\likk{3}(\thetatrue)}_2 \norm{\tau}_2.
\end{align*}
Finally, let us consider the Taylor series residual term.  Let $\thetatil(t) = t
(\theta - \thetahat) + \thetahat$.  Because we can write
\begin{align*}
\tresid{4}{\likhat, \thetahat, \theta} ={}&
\frac{1}{6} \int_0^1 (t - 1)^3 \likhatk{4}(\thetatil(t)) dt \\
={}&
\frac{1}{6} \int_0^1 (t - 1)^3 \left(
    \likhatk{4}(\thetatil(t)) - \likk{4}(\thetatil(t))
\right) dt +
\frac{1}{6} \int_0^1 (t - 1)^3 \likk{4}(\thetatil(t)) dt,
\end{align*}
we have, when $\regevent$ occurs,
\begin{align*}
\norm{\tresid{4}{\likhat, \thetahat, \theta} \tau^2 }_2 \le&
\norm{\tresid{4}{\likhat, \thetahat, \theta}}_2 \norm{\tau}_2^2 \\
\le&
\frac{1}{6} \left(
    \sup_{\theta \in R_2}
        \norm{\likhatk{4}(\theta) - \likk{4}(\theta)}_2 +
    \sup_{\theta \in R_2} \norm{\likk{4}(\theta)}_2 \right) \norm{\tau}_2^2 \\
\le&
\frac{1}{6} \left(
    \epsilon_U + \sup_{\theta \in R_2} \norm{\likk{4}(\theta)}_2 \right)
    \norm{\tau}_2^2.
\quad\textrm{\defref{regular_event}, \eventref{ulln}}
\end{align*}

Combining the above results, together with the fact that, in $R_2$, $\sqrt{N}
\norm{\tau}_2 \le \delta_2$, gives
\begin{align}
\norm{\Lambda(\theta, \xvec, N)}_2 \le&
    \frac{1}{2} \norm{\likhatk{2}(\thetahat) - \likk{2}(\thetatrue)}_2 +
    \nonumber\\&
        N^{-1/2} \frac{1}{6}\norm{
            \likhatk{3}(\thetahat) -
            \likk{3}(\thetatrue)
        } \norm{\tau}_2 +
    \nonumber\\&
        N^{-1/2} \frac{1}{6} \norm{\likk{3}(\thetatrue)}_2 \norm{\tau}_2 +
    \nonumber\\&
    N^{-1} \norm{\tresid{4}{\likhat, \thetahat, \theta}}_2 \norm{\tau}_2^2
\nonumber\\\le&
    \frac{1}{2} \left(1 + \sup_{\theta \in R_2} \norm{\likk{3}(\theta)}_2\right)
        \epsilon_U +
   \frac{1}{6} \left(1 + \sup_{\theta \in R_2} \norm{\likk{4}(\theta)}_2\right)
    \delta_2 \epsilon_U + \nonumber\\
&   \frac{1}{6} \norm{\likk{3}(\thetatrue)}_2 \delta_2 +
   \frac{1}{6} \left(
        \epsilon_U + \sup_{\theta \in R_2} \norm{\likk{4}(\theta)}_2 \right)
    \delta_2^2. \eqlabel{r2_lambda_bound}
\end{align}
The preceding bound contains only the $\epsilon_U$, $\delta_2$, and the
population quantities $\sup_{\theta \in R_2} \norm{\likk{3}(\theta)}_2$,
$\sup_{\theta \in R_2} \norm{\likk{4}(\theta)}_2$, and $\infoev$. The quantities
$\sup_{\theta \in R_2} \norm{\likk{3}(\theta)}_2$ and $\sup_{\theta \in R_2}
\norm{\likk{4}(\theta)}_2$ depend on $\delta_2$ themselves, but are finite and
decreasing in $\delta_2$.

Consequently, when $\regevent$ occurs, there exist an $\epsilon_U$ and
$\delta_2$ sufficiently small so that
\begin{align}\eqlabel{re2_lambda_inf_bound}
\norm{\Lambda(\theta, \xvec, N)}_\infty \le
    \sqrt{D^2} \norm{\Lambda(\theta, \xvec, N)}_2 \le \frac{\infoev}{2},
\end{align}
independently of $\xvec$ (except that $\regevent$ occurs). The bound
\eqref{re2_lambda_inf_bound} cannot necessarily be acheived simply by
setting $\epsilon_U$ to be small (i.e., by relying on the consistency of
$\thetahat$ and the ULLN) because of the terms $\sup_{\theta \in R_2}
\norm{\likk{4}(\theta)}_2 \delta_2^2$ and $\norm{\likk{3}(\thetatrue)}_2
\delta_2$ in \eqref{r2_lambda_bound}---it is necessary to set $\delta_2$
small as well.
% However, using the bound
% \eqref{r2_lambda_bound}, when can choose $\epsilon_U$ and $\delta_2$ as a
% function only of population quantities to guarantee \eqref{re2_lambda_inf_bound}
% whenever $\regevent$ occurs.

Recall from \defref{loglik_def} that $\info =
-\frac{1}{2}\likk{2}(\thetatrue)$, and from \defref{loglik_details_def} that
$\infoev > 0$ is the minumum eigenvalue of $\info$.  These facts, combined
with \eqref{re2_lambda_inf_bound} give that, for any $\tau$,
\begin{align*}
\left(\frac{1}{2}\likk{2}(\thetatrue) + \Lambda(\theta, \xvec, N)\right) \tau^2
    \le& -\infoev \norm{\tau}_2^2 + \frac{\infoev}{2} \norm{\tau}_2^2 \\
    ={}& -\frac{\infoev}{2} \norm{\tau}_2^2.
\end{align*}
Plugging into the definition of $I_2(\phi)$ given in \defref{region_division}
gives the desired \eqref{r2_target_bound}.

To complete the proof, we use the fact that the normal density is bounded
in $R_2$.

\begin{align*}
\norm{I_2(\phi)}_2
\le&
\int_{R_2}
    \exp\left(
        -\frac{\infoev}{2} \norm{\tau}_2^2
        \right) \norm{\phi(\theta, \xvec_s)}_2  d\tau
\\={}&
\int_{R_2}
    \exp\left(
        -\frac{\infoev}{4} \norm{\tau}_2^2
        -\frac{\infoev}{4} \norm{\tau}_2^2
        \right) \norm{\phi(\theta, \xvec_s)}_2 d\tau
\\\le&
\sup_{\tau \in R_2}
\exp\left(
    -\frac{\infoev}{4} \norm{\tau}_2^2
\right)
\int_{R_2}
    \exp\left(
        -\frac{\infoev}{4} \norm{\tau}_2^2
        \right) \norm{\phi(\theta, \xvec_s)}_2 d\tau
\\\le &
\exp\left(
    -\frac{\infoev}{4} \delta_1^2 \log N
\right)
\int_{R_1 \bigcup R_2}
    \exp\left(
        -\frac{\infoev}{4} \norm{\tau}_2^2
        \right) \norm{\phi(\theta, \xvec_s)}_2 d\tau.
\end{align*}
In the final line, we have used the fact that $(\log N)^2 \ge \log N$ for all $N
\ge 3$. Taking $\delta_1^2 \ge 12 / \infoev$ and multiplying by the $O(1)$
normalizer for the $\normresid{\infoev/2}{R_1 \bigcup R_2}(\tau)$ completes the proof.
\end{proof} %%%%%%%%%%%%% proof of region_r2 lemma

By inspecting the proof of \lemref{region_r2}, one can see that that any
polynomial rate of convergence can be achieved by taking $\delta_1$ sufficiently
large. Indeed, if the lower upper boundary of $R_1$ were of the form $\tau \le
N^k$ for some $k < 1/2$, then $\norm{I_2(\phi)}_2$ would decay exponentially and
$R_1$ would still shrink to a point in the $\theta$ domain.  However, such a
rapidly growing boundary between $R_1$ and $R_2$ comes at a cost in the analysis
of region $R1$.  In particular, it is the use of a logarithmically growing
boundary is what allows us to give polylog bounds in the subsequent
\lemref{region_r1_final_bound} and in the final result of
\thmref{bayes_clt_main}.

%%%%%%%%%%%%%%%%%%%%%%%%%%%%%%%%%%%%%
\subsection{Region $R_1$}\seclabel{region_r1}
It is region $R_1$ that will dominate the value of the integral $I(\phi)$.  We
will first perform a Taylor series expansion of all the terms around their
sample versions, and then take probability limits of the resulting expressions.

First, we will show that integrals over $\normhat$ in $R_1$ are sufficiently
close to the corresponding Gaussian integrals over $\mathbb{R}^D$ for
sufficiently large $\delta_1$.

%%%%%%%%%%%%%%%%%%%%%%%%%%%%%%%%%%%%%%%%%%%%%%%%%%%%%%%%%%%%%%%%%%%%%%%%%
%%%%%%%%%%%%%%%%%%%%%%%%%%%%%%%%%%%%%%%%%%%%%%%%%%%%%%%%%%%%%%%%%%%%%%%%%
%%%%%%%%%%%%%%%%%%%%%%%%%%%%%%%%%%%%%%%%%%%%%%%%%%%%%%%%%%%%%%%%%%%%%%%%%

\begin{lem}\lemlabel{region_r1_gaussian_moments}
Let \assuref{core_bclt_assu} hold.  Let $\tau^k$ represent the
$D$-dimensional array of outer products of $\tau$.  When $\regevent$ occurs and
$\delta_1 \ge \sqrt{24 / \infoev}$ then, for any $k \ge 0$,
\begin{align*}
\int_{R_1} \normhatarg{\tau} \tau^k d\tau =
    \expect{\normhat}{\tau^k} + O(N^{-3}).
\end{align*}
\begin{proof}
By definition,
\begin{align*}
\int_{R_1} \normhatarg{\tau} \tau^k d\tau  - \expect{\normhat}{\tau^k}
={}&
\int_{R_2 \bigcup R_3} \normhatarg{\tau} \tau^k d\tau
\end{align*}
We will now use a technique similar to that of the proof of
\lemref{region_r2} to control the norm of the final integral.
Recall that, on $\regevent$, the minimum eigenvalue of $\infohat$, $\infoevhat$,
is at least $\frac{1}{2} \infoev$ (\defref{regular_event} \eventref{infoev}),
so that $\normalizer^{-1} = O(1)$.
\begin{align*}
\MoveEqLeft
\norm{
    \int_{R_2 \bigcup R_3}
            \normhatarg{\tau} \tau^k d\tau
}_2
\\\le&
    \int_{R_2 \bigcup R_3}
            \normhatarg{\tau} \norm{\tau}^k_2 d\tau
\\={}&
\normalizer^{-1} \int_{R_2 \bigcup R_3}
        \exp\left(-\frac{1}{2} \infohat \tau^2 \right)
            \norm{\tau}^k_2 d\tau
\\\le&
\normalizer^{-1} \int_{R_2 \bigcup R_3}
        \exp\left(-\frac{1}{4} \infoev \norm{\tau}_2^2 \right)
            \norm{\tau}^k_2 d\tau
\\\le&
\sup_{\tau \in R_2 \bigcup R_3}
    \exp\left(-\frac{1}{8} \infoev \norm{\tau}_2^2 \right)
\normalizer^{-1} \int_{R_2 \bigcup R_3}
        \exp\left(-\frac{1}{8} \infoev \norm{\tau}_2^2 \right)
            \norm{\tau}^k_2 d\tau
\\={}&
\exp\left(-\frac{1}{8} \infoev \delta_1^2 (\log N)^2 \right)
\normalizer^{-1} \int_{R_2 \bigcup R_3}
        \exp\left(-\frac{1}{8} \infoev \norm{\tau}_2^2 \right)
            \norm{\tau}^k_2 d\tau
\\={}& O(N^{-3}),
\end{align*}
where the final line takes $\delta_1^2 \ge \frac{24}{\infoev}$, that $\log N <
(\log N)^2$ for $N \ge 3$, that $\normalizer^{-1} = O(1)$, and that normal
moments are finite.
\end{proof}
\end{lem}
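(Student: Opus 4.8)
The plan is to read \lemref{region_r1_gaussian_moments} as a Gaussian tail estimate: the density $\normhatarg{\tau} = \normdist(\tau\vert 0,\infohat^{-1})$ puts exponentially little mass outside the ball $R_1 = \{\norm{\tau}_2 < \delta_1\log N\}$, and a polynomial bound on that excess mass is all that is needed. Since $R_1$, $R_2$, $R_3$ partition the $\tau$-domain on $\regevent$ (\defref{regular_event} \eventref{disjoint_regions}) and $\normhat$ integrates to one, I would first write
\[
\int_{R_1}\normhatarg{\tau}\,\tau^k\,d\tau - \expect{\normhat}{\tau^k}
 = -\int_{R_2\cup R_3}\normhatarg{\tau}\,\tau^k\,d\tau ,
\]
so it suffices to show the right-hand integral has Euclidean norm $\ord{N^{-3}}$, uniformly over every data set $\xvec$ for which $\regevent$ occurs.

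Next I would pass to norms, $\norm{\int_{R_2\cup R_3}\normhatarg{\tau}\tau^k d\tau}_2 \le \int_{R_2\cup R_3}\normhatarg{\tau}\norm{\tau}_2^k\,d\tau$, and dominate the kernel using \defref{regular_event} \eventref{infoev}: on $\regevent$ the smallest eigenvalue $\infoevhat$ of $\infohat$ exceeds $\infoev/2$, hence $\normhatarg{\tau} = \normalizer^{-1}\exp(-\tfrac12\tau^\trans\infohat\tau) \le \normalizer^{-1}\exp(-\tfrac{\infoev}{4}\norm{\tau}_2^2)$. I would also note that $\normalizer^{-1} = \ord{1}$, which follows from that same lower eigenvalue bound together with the upper bound on $\infohat$ implied by $\likhatk{2}(\thetahat)\to\likk{2}(\thetatrue)$ (established inside the proof of \propref{freq_clt}); importantly, all these constants depend only on fixed population quantities, not on $\xvec$.

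The key device is to split the exponent in half, $\exp(-\tfrac{\infoev}{4}\norm{\tau}_2^2) = \exp(-\tfrac{\infoev}{8}\norm{\tau}_2^2)\,\exp(-\tfrac{\infoev}{8}\norm{\tau}_2^2)$: on $R_2\cup R_3$ we have $\norm{\tau}_2 \ge \delta_1\log N$, so the first factor is at most $\exp(-\tfrac{\infoev}{8}\delta_1^2(\log N)^2)$ there, while the leftover integral $\normalizer^{-1}\int_{R_2\cup R_3}\exp(-\tfrac{\infoev}{8}\norm{\tau}_2^2)\norm{\tau}_2^k\,d\tau$ is bounded by a finite Gaussian absolute moment, hence $\ord{1}$. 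Taking $\delta_1^2 \ge 24/\infoev$ and using $(\log N)^2 \ge \log N$ for $N \ge 3$ yields $\exp(-\tfrac{\infoev}{8}\delta_1^2(\log N)^2) \le \exp(-3(\log N)^2) \le N^{-3}$, which closes the estimate.

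I do not expect a deep obstacle here; the one thing requiring care is the uniformity bookkeeping, i.e. checking that every constant (the eigenvalue bound, $\normalizer^{-1}$, the Gaussian moments) is a function of the fixed population quantities $\infoev$ and $\likk{k}(\thetatrue)$ alone, so that the error is genuinely $\ord{N^{-3}}$ in the deterministic sense used throughout the appendix rather than merely $\ordp{N^{-3}}$ — which is exactly what conditioning on $\regevent$ delivers. A secondary point is respecting the quantifier order already fixed in \defref{region_division} and \lemref{regular}: $\delta_1$ (and $\delta_2$, $\epsilon_U$) must be chosen first, and only afterwards is $N$ sent to infinity.
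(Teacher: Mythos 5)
Your proposal is correct and follows essentially the same route as the paper's proof: reduce to the tail integral over $R_2 \cup R_3$, dominate the Gaussian kernel via the eigenvalue bound $\infoevhat \ge \infoev/2$ on $\regevent$, split the exponent in half so that one factor contributes $\exp(-\tfrac{\infoev}{8}\delta_1^2(\log N)^2) \le N^{-3}$ and the other yields a finite Gaussian moment. (You even fix the paper's harmless sign slip in the first display, and your remarks on the uniformity of the constants match the role of $\regevent$ in the paper's argument.)
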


%%%%%%%%%%%%%%%%%%%%%%%%%%%%%%%%%%%%%%%%%%
%%%%%%%%%%%%%%%%%%%%%%%%%%%%%%%%%%%%%%%%%%
%%%%%%%%%%%%%%%%%%%%%%%%%%%%%%%%%%%%%%%%%%

Next, we series expand the likelihood term.

% Note that you need a standalone lemma for the log likelihood expansion
% to deal with the normalizing constant.
\begin{lem}\lemlabel{region_r1_loglik_expansion}
Let \assuref{core_bclt_assu} hold.  For any $\xvec$ such that $\regevent$
occurs,
\begin{align*}
\MoveEqLeft
\exp\left(N \likhat(\theta) - N \likhat(\thetahat)\right) = \\
& \exp\left(\frac{1}{2} \infohat \tau^2 \right) \times
\left( 1 + N^{-1/2} \frac{1}{6} \likhatk{3}(\theta) \tau^3 +
    \ordlog{N^{-1}} \right).
\end{align*}
\begin{proof}

Write the log likelihood term as
\begin{align}
\MoveEqLeft
\exp\left(N \likhat(\theta) - N \likhat(\thetahat)\right) =  \nonumber \\
& \exp\left(\frac{1}{2} \likhatk{2}(\thetahat) \tau^2 \right) \times \nonumber \\
& \exp\left(
        N^{-1/2} \frac{1}{6} \likhatk{3}(\thetahat) \tau^3 +
        N^{-1} \frac{1}{6} \tresid{4}{\likhat(\theta), \thetahat, \theta} \tau^4
    \right). \eqlabel{r1_loglike_term_factorization}
\end{align}
We will expand the second term of \eqref{r1_loglike_term_factorization} using
the expansion of the  exponential function around 0, which holds for all
$z$:
\begin{align*}
\exp(z) ={}& 1 + z + z^2 + \frac{1}{2} \int_0^1 (1 - t)^2 \exp(tz) dt z^3 \\
={}& 1 + z + O(|z|^2).
\end{align*}
To match \eqref{r1_loglike_term_factorization}, we need to evaluate at
\begin{align*}
z = N^{-1/2} \frac{1}{6} \likhatk{3}(\thetahat) \tau^3 +
    N^{-1} \tresid{4}{\likhat(\theta), \thetahat, \theta} \tau^4.
\end{align*}
When $\regevent$ occurs (\defref{regular_event} \eventref{ulln}), we
have
\begin{align*}
\norm{N^{-1} \tresid{4}{\likhat(\theta), \thetahat, \theta} \tau^4}_2 \le&
N^{-1} \sup_{\theta \in R_1}
    \norm{\tresid{4}{\likhat(\theta), \thetahat, \theta}}_2
    \sup_{\tau \in R_1}\norm{\tau^3}_2 \\
={}& \ordlog{N^{-1}}\\
\norm{N^{-1/2} \frac{1}{6} \likhatk{3}(\thetahat) \tau^3}_2 \le&
N^{-1/2} \frac{1}{6} \norm{\likhatk{3}(\thetahat)}_2
    \sup_{\tau \in R_1}\norm{\tau^3}_2\\
={}& \ordlog{N^{-1/2}} \Rightarrow\\
|z| ={}& \ordlog{N^{-1/2}},
\end{align*}
so that
\begin{align}
\exp\left(
        N^{-1/2} \frac{1}{6} \likhatk{3}(\thetahat) \tau^3 +
        N^{-1} \frac{1}{6} \tresid{4}{\likhat(\theta), \thetahat, \theta} \tau^4
    \right) = \nonumber\\
1 + N^{-1/2} \frac{1}{6} \likhatk{3}(\theta) \tau^3 +
    \ordlog{N^{-1}}
. \eqlabel{r1_exp_expansion}
\end{align}
Combining \eqref{r1_loglike_term_factorization, r1_exp_expansion} and
recognizing the definition of $\infohat$ gives the desired result.
\end{proof}
\end{lem}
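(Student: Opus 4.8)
The plan is to write $N\likhat(\theta) - N\likhat(\thetahat)$ as its fourth-order Taylor polynomial about $\thetahat$ plus an integral remainder, split off the quadratic piece as the Gaussian kernel, and then Taylor-expand the exponential of the remaining (cubic-and-higher) part about zero, controlling every error term uniformly over $\tau$ in the region $R_1$ on the nearly-certain event $\regevent$.

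First I would invoke \lemref{taylor_residual} to expand $\likhat$ about $\thetahat$ at order four. Since $\thetahat$ is an interior maximizer of $\likhat$ (\lemref{thetahat_consistent} together with \assuref{bayes_clt}), the first-order term vanishes, and after the substitution $\theta - \thetahat = N^{-1/2}\tau$ the exponent becomes $\frac{1}{2}\likhatk{2}(\thetahat)\tau^2 + N^{-1/2}\frac{1}{6}\likhatk{3}(\thetahat)\tau^3 + N^{-1}\tresid{4}{\likhat,\thetahat,\theta}\tau^4$. Exponentiating and factoring, $\exp(N\likhat(\theta)-N\likhat(\thetahat)) = \exp(\frac{1}{2}\likhatk{2}(\thetahat)\tau^2)\exp(z)$, where $z := N^{-1/2}\frac{1}{6}\likhatk{3}(\thetahat)\tau^3 + N^{-1}\tresid{4}{\likhat,\thetahat,\theta}\tau^4$; since $\infohat = -\likhatk{2}(\thetahat)$, the first factor is (up to the constant $\normalizer$) the Gaussian kernel $\normhatarg{\tau}$ appearing in the statement.

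Next I would bound $z$ uniformly over $R_1$. On $\regevent$ the quantity $\norm{\likhatk{3}(\thetahat)}_2$ is $\ordp{1}$: $\likhatk{3}$ satisfies a ULLN in a neighborhood of $\thetatrue$ by \assuref{bayes_clt} \itemref{loglik_smooth}, \itemref{loglik_ulln} and \lemref{ulln}, while $\thetahat \plim \thetatrue$. Similarly $\sup_{\theta \in R_1}\norm{\tresid{4}{\likhat,\thetahat,\theta}}_2$ is bounded by a constant times $\sup_{\theta \in \thetaball{\deltaulln}}\norm{\likhatk{4}(\theta)}_2 = \ordp{1}$, again by the square-integrable dominating function $M(\xn)$ of \assuref{bayes_clt} \itemref{loglik_ulln}. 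Because $\norm{\tau}_2 \le \delta_1\log N$ on $R_1$, we have $\norm{\tau^3}_2, \norm{\tau^4}_2 = \ordlog{1}$, so $z = \ordlog{N^{-1/2}}$ uniformly on $R_1$, with a constant depending only on population quantities once $\regevent$ holds. Using the exact remainder $\exp(z) = 1 + z + z^2\int_0^1 (1-t)e^{tz}\,dt$ and boundedness of $e^{tz}$ on $[0,1]$ for bounded $z$, the last term is $z^2 O(1) = \ordlog{N^{-1}}$, so $\exp(z) = 1 + N^{-1/2}\frac{1}{6}\likhatk{3}(\thetahat)\tau^3 + \ordlog{N^{-1}}$; replacing $\likhatk{3}(\thetahat)$ by $\likhatk{3}(\theta)$ costs only a further $\ordlog{N^{-1}}$, since $\theta - \thetahat = \ordlog{N^{-1/2}}$ on $R_1$ and $\likhatk{4}$ is dominated. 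Multiplying the two factors yields the claim, which then feeds into the $R_1$ integral estimates toward \thmref{bayes_clt_main}.

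I expect the main obstacle to be the uniform bookkeeping: all error terms must be $\ordlog{N^{-1}}$ uniformly in $\tau$ over $R_1$ with constants that do not depend on the data $\xvec$ once $\regevent$ is imposed, which is exactly what the ULLN/domination control of $\likhatk{3}$ and $\likhatk{4}$ delivers. It is also important that the outer boundary of $R_1$ grows only like $\log N / \sqrt{N}$: this keeps $\tau^3$ and $\tau^4$ only polylogarithmic in $N$, so the neglected terms vanish at rate $\ordlog{N^{-1}}$; a polynomially growing boundary would let these powers of $\tau$ blow up and destroy the expansion.
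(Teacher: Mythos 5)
Your proposal is correct and follows essentially the same route as the paper: Taylor-expand $\likhat$ to fourth order about $\thetahat$ (where the gradient vanishes), factor off the Gaussian kernel $\exp(\tfrac12\likhatk{2}(\thetahat)\tau^2)$, expand $\exp(z)$ with an explicit remainder, and control $z=\ordlog{N^{-1/2}}$ uniformly on $R_1$ via the ULLN/domination events in $\regevent$ together with the polylogarithmic bound on $\norm{\tau}$. Your extra remark that swapping $\likhatk{3}(\thetahat)$ for $\likhatk{3}(\theta)$ costs only $\ordlog{N^{-1}}$ is a point the paper's proof passes over silently, and is worth making explicit.
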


%%%%%%%%%%%%%%%%%%%%%%%%%%%%%%%%%%%%%%%%%%%%%%%%%%%%%%%%%%%%
%%%%%%%%%%%%%%%%%%%%%%%%%%%%%%%%%%%%%%%%%%%%%%%%%%%%%%%%%%%%
%%%%%%%%%%%%%%%%%%%%%%%%%%%%%%%%%%%%%%%%%%%%%%%%%%%%%%%%%%%%

\begin{lem}\lemlabel{r1_normalizer_expansion}
Let \assuref{core_bclt_assu} hold.  When $\regevent$ occurs and $\delta_1$ is as
given in \lemref{region_r1_gaussian_moments},
\begin{align*}
\int_{R_1} \exp\left(
    N \likhat(\theta) - N \likhat(\thetahat)
\right) d\tau = \normalizer + \ordlog{N^{-1}}.
\end{align*}
\begin{proof}
Recall that, by definition,
\begin{align*}
\normalizer = \int \exp\left(-\frac{1}{2} \infohat \tau^2 \right) d\tau.
\end{align*}
Expanding using \lemref{region_r1_loglik_expansion} gives
\begin{align*}
\MoveEqLeft
\left| \int_{R_1} \exp\left(
    N \likhat(\theta) - N \likhat(\thetahat)
\right) d\tau - \normalizer \right|
\\={}&
\left|
\int_{R_1} \exp\left(\frac{1}{2} \infohat \tau^2 \right)
\left( 1 + N^{-1/2} \frac{1}{6} \likhatk{3}(\theta) \tau^3 +
    \ordlog{N^{-1}} \right) d\tau -
\int \exp\left(-\frac{1}{2} \infohat \tau^2 \right) d\tau
\right|
\\={}&
\left|
\int_{R_1} \exp\left(\frac{1}{2} \infohat \tau^2 \right)
\left( N^{-1/2} \frac{1}{6} \likhatk{3}(\theta) \tau^3 +
    \ordlog{N^{-1}} \right) d\tau -
\int_{R_2 \bigcup R_3} \exp\left(-\frac{1}{2} \infohat \tau^2 \right) d\tau
\right|
\\\le&
\frac{1}{6} N^{-1/2} \sup_{\theta \in R_1} \norm{
    \likhatk{3}(\theta) }_2
\expect{\normhat}{\ind{\tau \in R_1}\tau^3} +
\sup_{\tau \in R_2 \bigcup R_3}
    \exp\left(-\frac{1}{4} \infoev \norm{\tau}_2^2 \right) +
\ordlog{N^{-1}}
\\={}&
O(N^{-2}) +
\exp\left(-\frac{1}{4} \infoev \delta_1^2 (\log N)^2 \right) +
\ordlog{N^{-1}}
\\={}& \ordlog{N^{-1}},
\end{align*}
where, in the penultimate line, we use \lemref{region_r1_gaussian_moments},
including the given lower bound on $\delta_1$.
\end{proof}
\end{lem}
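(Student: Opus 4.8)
The plan is to reduce $\int_{R_1}\exp\bigl(N\likhat(\theta) - N\likhat(\thetahat)\bigr)\,d\tau$ to a full-space Gaussian integral plus polynomially small corrections. First I would substitute the Laplace-type expansion of \lemref{region_r1_loglik_expansion} for the integrand on $R_1$, splitting the integral into three pieces: (i) the truncated Gaussian normalizer $\int_{R_1}\exp\bigl(-\tfrac12\infohat\tau^2\bigr)\,d\tau$; (ii) the cubic correction $\tfrac16 N^{-1/2}\int_{R_1}\exp\bigl(-\tfrac12\infohat\tau^2\bigr)\,\likhatk{3}(\theta)\,\tau^3\,d\tau$; and (iii) a term of the form $\ordlog{N^{-1}}\int_{R_1}\exp\bigl(-\tfrac12\infohat\tau^2\bigr)\,d\tau$. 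Since $\normalizer$ is exactly the full-space value of that Gaussian integral and, on $\regevent$, $\infoevhat \ge \infoev/2$ (\defref{regular_event}, \eventref{infoev}) so that $\normalizer = \ord{1}$, piece (iii) is itself $\ordlog{N^{-1}}$.

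For piece (i) I would enlarge the domain of integration from $R_1$ to all of $\mathbb{R}^D$: by \lemref{region_r1_gaussian_moments} with $k=0$, the Gaussian mass outside $R_1$ is $\ord{N^{-3}}$ once $\delta_1$ is taken as large as required there, so piece (i) equals $\normalizer + \ord{N^{-3}}$.

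Piece (ii) is the delicate part, and I expect it to be the main obstacle. The naive bound, replacing $\likhatk{3}(\theta)\,\tau^3$ by $\sup_{R_1}\norm{\likhatk{3}(\theta)}_2\,\norm{\tau}_2^3$, produces only $\ord{N^{-1/2}}$, which is too weak. The resolution is to exploit the vanishing of odd Gaussian moments: write $\likhatk{3}(\theta) = \likhatk{3}(\thetahat) + \bigl(\likhatk{3}(\theta) - \likhatk{3}(\thetahat)\bigr)$. The constant part factors out of the integral, and $\int_{R_1}\normhat(\tau)\,\tau^3\,d\tau = \expect{\normhat}{\tau^3} + \ord{N^{-3}} = \ord{N^{-3}}$, since a centered Gaussian has zero third moments (\lemref{region_r1_gaussian_moments} with $k=3$). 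For the increment, on $R_1$ we have $\norm{\theta - \thetahat}_2 < \delta_1 \log N/\sqrt{N}$, so a mean-value bound gives $\norm{\likhatk{3}(\theta) - \likhatk{3}(\thetahat)}_2 \le \sup_{R_1}\norm{\likhatk{4}(\theta)}_2 \cdot \delta_1 \log N/\sqrt{N} = \ordlog{N^{-1/2}}$, where the ULLN of \assuref{bayes_clt} \itemref{loglik_ulln} via \lemref{ulln} gives $\sup_{R_1}\norm{\likhatk{4}(\theta)}_2 = \ord{1}$ on $\regevent$. Multiplying by $\int_{R_1}\normhat(\tau)\norm{\tau}_2^3\,d\tau = \ord{1}$ and by the $N^{-1/2}$ prefactor shows piece (ii) is $\ordlog{N^{-1}}$.

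Combining the three pieces, the integral equals $\normalizer + \ord{N^{-3}} + \ordlog{N^{-1}} + \ordlog{N^{-1}} = \normalizer + \ordlog{N^{-1}}$, as claimed. The points requiring care are that all the $O$ and $\tilde O$ constants are uniform in $\xvec$ once we condition on $\regevent$ — so the estimate can later feed a Slutsky argument — and that $\delta_1$ must be chosen at least as large as in \lemref{region_r1_gaussian_moments}, so the tail contributions are genuinely $\ord{N^{-3}}$ rather than merely $o(1)$.
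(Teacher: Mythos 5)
Your proposal is correct and follows essentially the same route as the paper: substitute the expansion of \lemref{region_r1_loglik_expansion}, identify the truncated Gaussian integral with $\normalizer$ up to an $\ord{N^{-3}}$ tail via \lemref{region_r1_gaussian_moments}, absorb the $\ordlog{N^{-1}}$ residual against the $\ord{1}$ normalizer, and kill the cubic term using the vanishing of odd Gaussian moments. The one difference is that you handle the cubic coefficient's possible $\theta$-dependence explicitly, splitting off $\likhatk{3}(\thetahat)$ and bounding the increment on $R_1$ by $\ordlog{N^{-1/2}}$ via a mean-value argument; this is slightly more careful than the paper's one-line bound (which in effect evaluates the coefficient at $\thetahat$ so that it factors out of the signed moment) but arrives at the same $\ordlog{N^{-1}}$ conclusion.
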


%%%%%%%%%%%%%%%%%%%%%%%%%%%%%%%%%%%%%%%%%%%%%%%%%%%%%%%%%%%%

%%%%%%%%%%%%%%%%%%%%%%%%%%%%%%%%%%%%%%%%%%
%%%%%%%%%%%%%%%%%%%%%%%%%%%%%%%%%%%%%%%%%%
%%%%%%%%%%%%%%%%%%%%%%%%%%%%%%%%%%%%%%%%%%

In \lemref{region_r1_final_bound} to follow, we integrate the result of
\lemref{region_r1_loglik_expansion} using \lemref{region_r1_gaussian_moments} to
produce our final results for region $R_1$. Recall from
\defref{loglik_details_def} that $\normhat = \normalizer^{-1}
\exp\left(\frac{1}{2} \infohat \tau^2 \right)$ is the density of the Gaussian
distribution on $\tau$ with mean zero and covariance $\infohat^{-1} =
(-\likhatk{2}(\thetahat))^{-1}$.

Note that the statement of \lemref{region_r1_final_bound} combines vector-valued
quantities like $\phigrad{2}(\thetahat, \xvec_s) \expect{\normhat}{\tau^2}$ with
scalar-valued quantities like $\norm{\phigrad{2}(\thetahat, \xvec_s)}_2$.  This is
for notational convenience since the scalar-valued terms will simply become part
of the residual, and their component values do not matter.  Formally, the
equation can be taken to hold componentwise.
\begin{lem}\lemlabel{region_r1_final_bound}
Let \assuref{core_bclt_assu} hold.  When $\delta_1$ is as given in
\lemref{region_r2}, then, for any $\xvec$ such that $\regevent$
occurs,
\begin{align*}
\MoveEqLeft
I_1(\phi) =
    \phi(\thetahat, \xvec_s)
    \int_{R_1} \exp\left(
        N \likhat(\theta) - N \likhat(\thetahat)
    \right) d\tau +
\nonumber\\ &\quad
    N^{-1} \normalizer \left(
        \frac{1}{2} \phigrad{2}(\thetahat, \xvec_s) \expect{\normhat}{\tau^2} +
        \frac{1}{6} \likhatk{3}(\thetahat) \phigrad{1}(\thetahat, \xvec_s)
            \expect{\normhat}{\tau^4} \right) +
\nonumber\\ &\quad
    \ordlog{N^{-2}} \bigg(
        1 +
        \norm{\phigrad{1}(\thetahat, \xvec_s)}_2 +
        \norm{\phigrad{2}(\thetahat, \xvec_s)}_2 +
        \expect{\normresid{\infoev / 2}{R_1}(\tau)}{
            \norm{\tresid{3}{\phi(\cdot, \xvec_s), \theta, \thetahat}}_2}
    \bigg).
\end{align*}
\begin{proof}
The result will follow by Taylor expanding $\phi$, combining with
\lemref{region_r1_loglik_expansion}, and then integrating using the fact that
Gaussian integrals in region $R_1$ are approximately equal to Gaussian integrals
over the whole domain, as proved in \lemref{region_r1_gaussian_moments}.

%%%%%%%%%%%%%%%%%

We first Taylor expand $\phi(\theta, \xvec_s)$ around $\thetahat$ and substitute
$\tau = \sqrt{N}(\theta - \thetahat)$.
\begin{align}
\phi(\theta, \xvec_s) ={}&
    \phi(\thetahat, \xvec_s) +
    N^{-1/2} \phigrad{1}(\thetahat, \xvec_s) \tau + \nonumber \\
&
    N^{-1} \frac{1}{2} \phigrad{2}(\thetahat, \xvec_s) \tau^2 +
    N^{-3/2} \tresid{3}{\phi(\cdot, \xvec_s), \theta, \thetahat} \tau^3.
    \eqlabel{r1_phi_expansion}
\end{align}

%%%%%%%%%%%%%%%%%%%%%%%%

We will now combine \eqref{r1_phi_expansion} with
\lemref{region_r1_loglik_expansion}.
\begin{align}
\MoveEqLeft
\exp\left(N \likhat(\theta) - N \likhat(\thetahat)\right)
\left(
    \phi(\theta, \xvec_s) - \phi(\thetahat, \xvec_s)
\right) = \nonumber \\
%%%%%%%%%%%%%%%%%%%%%%%
\MoveEqLeft
\exp\left(
    \frac{1}{2} \infohat \tau^2
\right) \times \nonumber\\
&
\left(
    N^{-1/2} \phigrad{1}(\thetahat, \xvec_s) \tau +
    N^{-1} \frac{1}{2} \phigrad{2}(\thetahat, \xvec_s) \tau^2 +
    N^{-3/2} \tresid{3}{\phi(\cdot, \xvec_s), \theta, \thetahat} \tau^3
\right)  \times \nonumber\\
&
\left(
    1 +
    N^{-1/2} \frac{1}{6} \likhatk{3}(\thetahat) \tau^3 +
    \ordlog{N^{-1}}
\right) = \nonumber\\
%%%%%%%%%%%%%%%%%%%%%%%
\MoveEqLeft
\exp\left(
    \frac{1}{2} \infohat \tau^2
\right) \times \nonumber\\
&
\Bigg(
    N^{-1/2} \phigrad{1}(\thetahat, \xvec_s) \tau +
\nonumber\\ &\quad
    N^{-1} \left(
        \frac{1}{2} \phigrad{2}(\thetahat, \xvec_s) \tau^2 +
        \frac{1}{6} \likhatk{3}(\thetahat) \phigrad{1}(\thetahat, \xvec_s) \tau^4 \right) +
\nonumber\\ &\quad
    N^{-3/2} \left(
        \tresid{3}{\phi(\cdot, \xvec_s), \theta, \thetahat} \tau^3 +
        \frac{1}{12} \likhatk{3}(\thetahat)  \phigrad{2}(\thetahat, \xvec_s) \tau^5
    \right) +
\nonumber\\ &\quad
    \ordlog{N^{-3/2}} \bigg(
        \phigrad{1}(\thetahat, \xvec_s) \tau +
% \nonumber\\ &\quad\quad\quad
        N^{-1/2}  \frac{1}{2} \phigrad{2}(\thetahat, \xvec_s) \tau^2 +
        N^{-1} \tresid{3}{\phi(\cdot, \xvec_s), \theta, \thetahat} \tau^3
    \bigg)
\Bigg) = \nonumber\\
%%%%%%%%%%%%%%%%%%%%%%%
\MoveEqLeft
\exp\left(
    \frac{1}{2} \infohat \tau^2
\right) \times \nonumber\\
&
\Bigg(
    N^{-1/2} \phigrad{1}(\thetahat, \xvec_s) \tau +
% \nonumber\\ &\quad
    N^{-1} \left(
        \frac{1}{2} \phigrad{2}(\thetahat, \xvec_s) \tau^2 +
        \frac{1}{6} \likhatk{3}(\thetahat) \phigrad{1}(\thetahat, \xvec_s) \tau^4 \right) +
\nonumber\\ &\quad
    \ordlog{N^{-3/2}} \bigg(
        \phigrad{1}(\thetahat, \xvec_s) \tau +
        \left( \ord{1} \tau^5 + \ord{N^{-1/2}}  \tau^2\right)
            \phigrad{2}(\thetahat, \xvec_s) +
\nonumber\\ &\quad\quad\quad\quad\quad\quad\quad
        \ord{1} \tresid{3}{\phi(\cdot, \xvec_s), \theta, \thetahat} \tau^3
    \bigg)
\Bigg).
\eqlabel{r1_product_expansion1}
\end{align}

In the final line of \eqref{r1_product_expansion1} we absorbed some higher-order
quantities into the lower-order $\ordlog{N^{-3/2}}$ term for convenience of
expression, and used the fact that, when $\regevent$ occurs, for $k=1,2,3$,
$\norm{\likhatk{k}(\thetahat)}_2 = \ord{1}$ when $\regevent$ occurs, since
\begin{align*}
\norm{\likhatk{k}(\thetahat) - \likk{k}(\thetahat)}_2 \le&
    \sup_{\theta \in R_1} \norm{\likhatk{k}(\theta) - \likk{k}(\theta)}_2 \le
    \epsilon_U,
\end{align*}
so
\begin{align*}
\norm{\likhatk{k}(\thetahat)}_2 \le&
    \norm{\likhatk{k}(\thetahat) - \likk{k}(\thetahat)}_2 +
    \norm{\likk{k}(\thetahat) - \likk{k}(\thetatrue)}_2 +
    \norm{\likk{k}(\thetatrue)}_2 \\
\le&
    \epsilon_U +
    \sup_{\theta \in R_1} \norm{\likk{k}(\theta) - \likk{k}(\thetatrue)}_2 + \norm{\likk{k}(\thetatrue)}_2
= \ord{1},
\end{align*}
by continuity of $\likk{k}(\theta)$.

Next, we integrate both sides of \eqref{r1_product_expansion1} using
\lemref{region_r1_gaussian_moments} to replace Gaussian integrals over $R_1$
with Gaussian moments over the whole domain, up to an order $N^{-3}$ which can
be absorbed in the $\ordlog{N^{-2}}$ error term.
\begin{align*}
I_1(\phi) ={}&
\int_{R_1} \phi(\theta, \xvec_s)
   \exp\left(N \likhat(\theta) - N \likhat(\thetahat)\right) d \tau
\\
={}&
\phi(\thetahat, \xvec_s)
\int_{R_1} \exp\left(
    N \likhat(\theta) - N \likhat(\thetahat)
\right) d\tau +
\\{}&
N^{-1} \left(
    \frac{1}{2} \phigrad{2}(\thetahat, \xvec_s)
        \expect{\normhat}{\tau^2} +
    \frac{1}{6} \likhatk{3}(\thetahat) \phigrad{1}(\thetahat, \xvec_s)
        \expect{\normhat}{\tau^4} \right) +
\nonumber\\ &
    \ordlog{N^{-2}} \bigg(
        1 +
        \norm{\phigrad{1}(\thetahat, \xvec_s)}_2 +
        \norm{\phigrad{2}(\thetahat, \xvec_s)}_2 +
\nonumber\\ &
    \hspace{5em}
        \int_{R_1}
            \exp\left(-\frac{1}{2} \infohat \tau^2 \right)
            \tresid{3}{\phi(\cdot, \xvec_s), \theta, \thetahat}
            d\tau
\bigg).
\end{align*}

Finally, since $\tresid{3}{\phi(\cdot, \xvec_s), \theta, \thetahat}$ depends
explicitly on $\tau$, we must control this term in a more convenient form.  We
will argue as in \lemref{region_r1_gaussian_moments}.   Recall that, when
$\regevent$ occurs, $\infoevhat \ge \frac{\infoev}{2}$, so that $\normalizer =
O(1)$ and
\begin{align*}
\MoveEqLeft
\norm{\normalizer \int_{R_1} \exp\left(-\frac{1}{2} \infohat \tau^2 \right)
    \tresid{3}{\phi(\cdot, \xvec_s), \theta, \thetahat} d\tau }_2
\\\le&
\normalizer \int_{R_1} \exp\left(-\frac{1}{2} \infohat \tau^2 \right)
    \norm{\tresid{3}{\phi(\cdot, \xvec_s), \theta, \thetahat}}_2  d\tau
\\ \le&
\normalizer \int_{R_1} \exp\left(-\frac{\infoev}{4} \norm{\tau}_2^2 \right)
    \norm{\tresid{3}{\phi(\cdot, \xvec_s), \theta, \thetahat}}_2  d\tau.
% \\ \le&
% \normalizer \int_{R_1 \bigcup R_2}
%     \exp\left(-\frac{\infoev}{4} \norm{\tau}_2^2 \right)
%         \norm{\tresid{3}{\phi(\cdot, \xvec_s), \theta, \thetahat}}_2  d\tau.
%
\end{align*}
The result follows by multiplying by the $O(1)$ normalizer of
the $\normresid{\infoev / 2}{R_1}$ distribution.
\end{proof}
\end{lem}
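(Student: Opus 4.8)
The plan is to expand the two factors of the integrand of $I_1(\phi)$ separately, multiply the resulting series, and integrate term by term over $R_1$ against the Gaussian weight, using \lemref{region_r1_gaussian_moments} to trade the truncated integrals for exact Gaussian moments. First I would peel off the constant part of $\phi$ exactly: since $\phi(\thetahat,\xvec_s)$ does not depend on $\tau$, $I_1(\phi)-\phi(\thetahat,\xvec_s)\int_{R_1}\exp(N\likhat(\theta)-N\likhat(\thetahat))d\tau=\int_{R_1}\big(\phi(\theta,\xvec_s)-\phi(\thetahat,\xvec_s)\big)\exp(N\likhat(\theta)-N\likhat(\thetahat))d\tau$, which accounts for the first term of the claim with no error. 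For the difference I would Taylor expand $\phi$ about $\thetahat$ to third order via \lemref{taylor_residual}, substituting $\theta-\thetahat=N^{-1/2}\tau$:
\begin{align*}
\phi(\theta,\xvec_s)-\phi(\thetahat,\xvec_s)= N^{-1/2}\phigrad{1}(\thetahat,\xvec_s)\tau + \tfrac{1}{2}N^{-1}\phigrad{2}(\thetahat,\xvec_s)\tau^2 + N^{-3/2}\tresid{3}{\phi(\cdot,\xvec_s),\theta,\thetahat}\tau^3.
\end{align*}
Separately, \lemref{region_r1_loglik_expansion} writes the exponentiated log-likelihood increment as the Gaussian factor $\normhat\propto\exp(-\tfrac{1}{2}\infohat\tau^2)$ times $\big(1+N^{-1/2}\tfrac{1}{6}\likhatk{3}(\thetahat)\tau^3+\ordlog{N^{-1}}\big)$. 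Since $\exp(-\tfrac{1}{2}\infohat\tau^2)=\normalizer\,\normhat$ by \defref{loglik_details_def}, integrals of $\tau^k$ against this factor over the whole domain return $\normalizer\,\expect{\normhat}{\tau^k}$, which is the source of the $\normalizer$ prefactor and the Gaussian moments in the target.

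Next I would multiply these two expansions and collect by order in $N$. At order $N^{-1}$ exactly two contributions survive: the linear term $N^{-1/2}\phigrad{1}\tau$ paired with the cubic correction $N^{-1/2}\tfrac{1}{6}\likhatk{3}\tau^3$, yielding $N^{-1}\tfrac{1}{6}\likhatk{3}(\thetahat)\phigrad{1}(\thetahat,\xvec_s)\tau^4$, and the quadratic term $\tfrac{1}{2}N^{-1}\phigrad{2}\tau^2$ paired with the leading $1$. The term $N^{-1/2}\phigrad{1}\tau$ paired with $1$ is an odd power of $\tau$ and so integrates to zero against the symmetric Gaussian; all other cross terms (e.g. $\phigrad{2}\tau^2$ against the cubic correction, or anything involving $\tresid{3}{\phi}\tau^3$) are $\ordlog{N^{-3/2}}$ times a fixed polynomial in $\tau$ and collapse into the residual bracket.

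I would then integrate over $R_1$. Using \lemref{region_r1_gaussian_moments} I would replace each truncated integral $\int_{R_1}\normhat\,\tau^k d\tau$ by the full moment $\expect{\normhat}{\tau^k}$ at the cost of an $O(N^{-3})$ error that is absorbed by the $\ordlog{N^{-2}}$ residual. To control coefficients, on $\regevent$ the ULLN (\defref{regular_event}, \eventref{ulln}) together with continuity of $\likk{k}$ and consistency of $\thetahat$ gives $\norm{\likhatk{k}(\thetahat)}_2=\ord{1}$ for $k\le3$, so the cubic-correction coefficient is uniformly bounded in $\xvec$. The only piece that cannot be evaluated as a Gaussian moment is the residual $\tresid{3}{\phi(\cdot,\xvec_s),\theta,\thetahat}\tau^3$, because it depends on $\tau$ explicitly; here I would keep it inside the integral and dominate the weight by $\exp(-\tfrac{1}{2}\infohat\tau^2)\le\exp(-\tfrac{\infoev}{4}\norm{\tau}_2^2)$, valid since $\infoevhat\ge\infoev/2$ on $\regevent$ (\defref{regular_event}, \eventref{infoev}). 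Recognizing the dominating weight as proportional to the truncated density $\normresid{\infoev/2}{R_1}$ produces the final bracket term $\expect{\normresid{\infoev/2}{R_1}(\tau)}{\norm{\tresid{3}{\phi(\cdot,\xvec_s),\theta,\thetahat}}_2}$.

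The main obstacle is the disciplined bookkeeping of the product of two asymptotic series while tracking the \emph{data-dependence} carried by $\tresid{3}{\phi}$: unlike the polynomial terms it is not a Gaussian moment (it is an average of $\phigrad{3}$ along the segment from $\thetahat$ to $\theta$), so the whole argument hinges on dominating its integral uniformly over $\xvec$ on $\regevent$ rather than evaluating it. The accompanying delicate point is verifying that every discarded cross term is genuinely $\ordlog{N^{-2}}$; this is where the logarithmically growing truncation radius $\delta_1\log N$ of \defref{region_division} is essential, since it forces the $\sup_{\tau\in R_1}\norm{\tau}$-type factors picked up from higher powers of $\tau$ to cost only powers of $\log N$, keeping the residuals within the claimed $\ordlog{N^{-2}}$ order.
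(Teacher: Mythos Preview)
Your proposal is correct and follows essentially the same route as the paper: peel off the constant $\phi(\thetahat,\xvec_s)$ term, Taylor expand $\phi$ to third order, multiply by the likelihood expansion from \lemref{region_r1_loglik_expansion}, integrate term by term using \lemref{region_r1_gaussian_moments}, and dominate the $\tresid{3}{\phi}$ integral by the truncated Gaussian $\normresid{\infoev/2}{R_1}$ via the eigenvalue bound $\infoevhat\ge\infoev/2$. Your bookkeeping of the orders, the use of the ULLN on $\regevent$ to control $\norm{\likhatk{k}(\thetahat)}_2$, and your remark about the logarithmic truncation radius keeping discarded $\tau$-powers polylogarithmic all match the paper's argument.
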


%%%%%%%%%%%%%%%%%%%%%%%%%%%%%%%%%%%%%
\subsection{Combining the regions}\seclabel{bclt_region_comb}

We now combine the results of the previous three sections to evaluate
$I(\phi)$.
The accuracy of our BCLT expansion of $\expect{\post}{\phi(\theta, \xvec_s)}$ will
depend on the following residual function.
%
%%%%%%%%%%%%%%%%%%%%%%%%%%%%%%%%%%%%%%%%%
\begin{defn}\deflabel{bclt_resid}
Let $\infoev > 0$ denote the minimum eigenvalue of $\info$, and let $\delta > 0$
be an arbitrarily small number. For a function $\phi(\theta, \xvec_s)$ that
is third-order BCLT-okay, define the residual function
\begin{align*}
\resid{0}(\xvec_s) :={}&
    1 +
    \norm{\phi(\thetahat, \xvec_s)}_2 +
    \norm{\phigrad{1}(\thetahat, \xvec_s)}_2 +
    \norm{\phigrad{2}(\thetahat, \xvec_s)}_2
\\
\resid{1}(\xvec_s) :={}&
    \expect{\normresid{\infoev / 2}{R_1}(\tau)}{
        \norm{\tresid{3}{\phi(\cdot, \xvec_s), \theta, \thetahat}}_2}
\\
\resid{2}(\xvec_s) :={}&
    \expect{\normresid{\infoev / 2}{R_1 \bigcup R_2}(\tau)}{\norm{\phi(\theta, \xvec_s)}_2}
\\
\resid{3}(\xvec_s) :={}& \expect{\prior(\theta)}{\norm{\phi(\theta, \xvec_s)}_2}
\\
\resid{\phi}(\xvec_s) :={}&
    \resid{0}(\xvec_s) + \resid{1}(\xvec_s) + \resid{2}(\xvec_s) + \resid{3}(\xvec_s).
\end{align*}
\end{defn}
%%%%%%%%%%%%%%%%%%%%%%%%%%%%%%%%%%%%%%%%%
%%%%%%%%%%%%%%%%%%%%%%%%%%%%%%%%%%%%%%%%%

%%%%%%%%%%%%%%%%%%%%%%%%%%%%%%%%%%%%%%%%%
%%%%%%%%%%%%%%%%%%%%%%%%%%%%%%%%%%%%%%%%%
%%%%%%%%%%%%%%%%%%%%%%%%%%%%%%%%%%%%%%%%%
%
\begin{lem}\lemlabel{bayes_clt}
Let \assuref{bayes_clt} hold, and assume that $\phi$ is third-order BCLT-okay.
With $\fdist$-probability approaching one,
\begin{align*}
\MoveEqLeft
\expect{\post}{\phi(\theta, \xvec_s)} - \phi(\thetahat, \xvec_s) =
\\&
N^{-1} \left(
    \frac{1}{2} \phigrad{2}(\thetahat, \xvec_s) \expect{\normhat}{\tau^2} +
    \frac{1}{6} \likhatk{3}(\thetahat) \phigrad{1}(\thetahat, \xvec_s)
        \expect{\normhat}{\tau^4} \right) +
\\&
\ordlog{N^{-2}} \resid{\phi}(\xvec_s),
\end{align*}
where $\resid{\phi}(\xvec_s)$ is given in \defref{bclt_resid}.
\begin{proof}
First, combine the results of \lemref{region_r3, region_r2, region_r1_final_bound},
consolidating higher orders into lower orders and expanding the domain of
integration in the residual $\resid{\phi}(\xvec_s)$.
\begin{align}
I(\phi) ={}&
    \phi(\thetahat, \xvec_s)
    \int_{R_1} \exp\left(
        N \likhat(\theta) - N \likhat(\thetahat)
    \right) d\tau +
\nonumber\\ &\quad
    N^{-1} \normalizer \left(
        \frac{1}{2} \phigrad{2}(\thetahat, \xvec_s) \expect{\normhat}{\tau^2} +
        \frac{1}{6} \likhatk{3}(\thetahat) \phigrad{1}(\thetahat, \xvec_s)
            \expect{\normhat}{\tau^4} \right) +
\nonumber\\ &\quad
    \ordlog{N^{-2}} \resid{\phi}(\xvec_s).
\eqlabel{i_phi_expansion}
\end{align}
For convenience in the final bound of \thmref{bayes_clt_main} we have included the
term $\norm{\phi(\thetahat, \xvec_s)}_2$ in the definition of the residual even
though it does not occur in \lemref{region_r1_final_bound}.

The final expectation is given by the ratio
\begin{align*}
\expect{\post}{\phi(\theta, \xvec_s)} ={}& \frac{I(\phi)}{I(1)}.
\end{align*}
For the remainder of this proof, for compactness of notation, we define
\begin{align*}
\zeta :={}&
    \int_{R_1} \exp\left(N \likhat(\theta) - N \likhat(\thetahat) \right) d\tau \\
={}& \normalizer + \ordlog{N^{-1}} \\
={}& \normalizer (1 + \ordlog{N^{-1}}),
\end{align*}
which follows from \lemref{r1_normalizer_expansion} and the fact that
$\normalizer$ is bounded away from zero when $\regevent$ occurs. When
$\phi(\theta, \xvec_s) \equiv 1$, then $\resid{0}(\xvec_s) = \resid{1}(\xvec_s) = 0$,
so by \lemref{region_r3, region_r2},
\begin{align*}
I(1) = \zeta\left(1 + \ord{N^{-3}}\right).
\end{align*}
We now use the expansion, valid for $|z| < 1$ as $z \rightarrow 0$,
\begin{align}\eqlabel{inverse_expansion}
    \frac{1}{1 + z} = 1 - z + O(z^2).
\end{align}
Using \eqref{i_phi_expansion} for $I(\phi)$ and
applying \eqref{inverse_expansion} to $1 / I(1)$ and to $1 / \zeta$
gives
\begin{align*}
\frac{I(\phi)}{I(1)} ={}&
\frac{I(\phi)}{\zeta( 1 + \ord{N^{-3}})} \\
={}& \frac{I(\phi)}{\zeta} + \ord{N^{-3}} I(\phi) \\
={}&
\phi(\thetahat, \xvec_s) +
N^{-1} \frac{\normalizer}{\zeta} \left(
    \frac{1}{2} \phigrad{2}(\thetahat, \xvec_s) \expect{\normhat}{\tau^2} +
    \frac{1}{6} \likhatk{3}(\thetahat) \phigrad{1}(\thetahat, \xvec_s)
        \expect{\normhat}{\tau^4} \right) +
\\&
\ordlog{N^{-2}} \resid{\phi}(\xvec_s)
\\={}&
\phi(\thetahat, \xvec_s) +
N^{-1} \left(
    \frac{1}{2} \phigrad{2}(\thetahat, \xvec_s) \expect{\normhat}{\tau^2} +
    \frac{1}{6} \likhatk{3}(\thetahat) \phigrad{1}(\thetahat, \xvec_s)
        \expect{\normhat}{\tau^4} \right) +
\\&
\ordlog{N^{-2}} \resid{\phi}(\xvec_s).
\end{align*}
Note that we have absorbed terms from the expansion of $I(\phi)$ that are linear
in $|\phi(\thetahat, \xvec_s)|$, $\norm{\phigrad{2}(\thetahat, \xvec_s)}$, and
$\norm{\phigrad{3}(\thetahat, \xvec_s)}$ into the residual $\resid{\phi}(\xvec_s)$.
To do so we have used the fact that, when $\regevent$ occurs,
$\likhatk{3}(\thetahat) = \likhatk{3}(\thetatrue) + O(1)$.

Finally, the expansion holds with $\fdist$-probability approaching one by
\lemref{regular}.
\end{proof}
\end{lem}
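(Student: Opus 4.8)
The plan is to realize the posterior expectation as a ratio of Laplace-type integrals and to perform a Bernstein--von Mises expansion while tracking the data-dependence in the remainder. With the change of variables $\theta = \thetahat + N^{-1/2}\tau$ (so the $\sqrt{N}$ Jacobian cancels between numerator and denominator), write $\expect{\post}{\phi(\theta,\xvec_s)} = I(\phi)/I(1)$ with $I(\phi) := \int \phi(\theta,\xvec_s) \exp\left(N\likhat(\theta) - N\likhat(\thetahat)\right) d\tau$ as in \defref{bclt_integral}. A fourth-order Taylor expansion of $\likhat$ around $\thetahat$, using $\likhatk{1}(\thetahat) = 0$, renders the exponent as $\tfrac12 \likhatk{2}(\thetahat)\tau^2 + N^{-1/2}\tfrac16 \likhatk{3}(\thetahat)\tau^3 + N^{-1}\tresid{4}{\likhat,\thetahat,\theta}\tau^4$. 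Everything that follows is conditioned on the ``regular event'' $\regevent$ of \defref{regular_event}, which holds with $\fdist$-probability tending to one by \lemref{regular} and supplies consistency of $\thetahat$, the bound $\infoevhat > \infoev/2$, the relevant uniform laws on $R_1 \bigcup R_2$, strict separation of the optimum on $R_3$, and continuity of the prior at $\thetahat$.

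Next I would split the $\tau$-domain into the three regions $R_1, R_2, R_3$ of \defref{region_division} and estimate $I(\phi)$ on each. On $R_3$, strict optimality gives $N\likhat(\theta) - N\likhat(\thetahat) \le -N\epsilon_L$, so $I_3(\phi)$ is exponentially negligible, contributing only $\ord{e^{-N\epsilon_L}}\expect{\prior(\theta)}{\norm{\phi(\theta,\xvec_s)}_2}$ (\lemref{region_r3}). On $R_2$, one first fixes $\delta_2$ small enough --- and small in terms of the genuine population curvature $\sup_{\theta \in R_2}\norm{\likk{3}(\theta)}_2$, $\sup_{\theta \in R_2}\norm{\likk{4}(\theta)}_2$, $\infoev$, not merely invoking consistency of $\thetahat$ --- that the deviation of the exponent's quadratic form from $\tfrac12\likk{2}(\thetatrue)$ has operator norm at most $\infoev/2$ uniformly on $R_2$; then the exponent is dominated by $-\tfrac{\infoev}{2}\norm{\tau}_2^2$, and since $\norm{\tau}_2 \ge \delta_1 \log N$ there, a sufficiently large $\delta_1$ makes $I_2(\phi)$ an $\ord{N^{-3}}\expect{\normresid{\infoev/2}{R_1 \bigcup R_2}(\tau)}{\norm{\phi(\theta,\xvec_s)}_2}$ term (\lemref{region_r2}). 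Region $R_1$ supplies the main contribution: there $\norm{\tau}_2 = \ordlog{1}$, so $\exp\left(N\likhat(\theta) - N\likhat(\thetahat)\right) = \exp\left(\tfrac12\infohat\tau^2\right)\left(1 + N^{-1/2}\tfrac16\likhatk{3}(\thetahat)\tau^3 + \ordlog{N^{-1}}\right)$ (\lemref{region_r1_loglik_expansion}); expanding $\phi$ to third order around $\thetahat$, multiplying out, and integrating via \lemref{region_r1_gaussian_moments} (which trades truncated Gaussian moments for full ones at cost $\ord{N^{-3}}$) kills every odd moment, leaving at order $N^{-1}$ exactly $\normalizer\,(\tfrac12 \phigrad{2}(\thetahat,\xvec_s)\expect{\normhat}{\tau^2} + \tfrac16 \likhatk{3}(\thetahat)\phigrad{1}(\thetahat,\xvec_s)\expect{\normhat}{\tau^4})$, with the rest swept into $\ordlog{N^{-2}}$ times $\resid{0}(\xvec_s) + \resid{1}(\xvec_s)$ (\lemref{region_r1_final_bound}).

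Finally I would assemble these bounds into $I(\phi) = \phi(\thetahat,\xvec_s)\,\zeta + N^{-1}\normalizer\,(\tfrac12 \phigrad{2}(\thetahat,\xvec_s)\expect{\normhat}{\tau^2} + \tfrac16 \likhatk{3}(\thetahat)\phigrad{1}(\thetahat,\xvec_s)\expect{\normhat}{\tau^4}) + \ordlog{N^{-2}}\resid{\phi}(\xvec_s)$, where $\zeta := \int_{R_1}\exp\left(N\likhat(\theta) - N\likhat(\thetahat)\right)d\tau = \normalizer + \ordlog{N^{-1}}$ by \lemref{r1_normalizer_expansion}; taking $\phi \equiv 1$ gives $I(1) = \zeta(1 + \ord{N^{-3}})$, and forming $I(\phi)/I(1)$ via $1/(1+z) = 1 - z + \ord{z^2}$ together with $\normalizer/\zeta = 1 + \ordlog{N^{-1}}$ yields the claimed expansion, the leftover lower-order pieces being absorbed into $\resid{2}(\xvec_s)$ and $\resid{3}(\xvec_s)$ of \defref{bclt_resid}; the statement holds with $\fdist$-probability approaching one because $\regevent$ does. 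I expect the main obstacle to be the simultaneous calibration of the two radii: $\delta_1$ must be large enough (of order $\infoev^{-1/2}$, with the logarithmic boundary chosen so the residuals come out polylogarithmic rather than merely polynomial) that both the $R_2$ Gaussian tail and the $R_1$ truncation error fall strictly below the target $N^{-2}$ scale, while $\delta_2$ must be small enough to control true population curvature --- terms such as $\norm{\likk{3}(\thetatrue)}_2\delta_2$ do not shrink as $\epsilon_U \to 0$ --- yet still large enough relative to $\epsilon_U$ and $\deltaulln$ for the uniform laws behind \lemref{regular} to apply. A secondary, bookkeeping obstacle is keeping the residual expressed solely through $\phi$ and its first two derivatives at $\thetahat$, prior expectations of $\norm{\phi}_2$, and truncated-Gaussian expectations of $\phi$ and of its third-order Taylor remainder $\tresid{3}{\phi(\cdot,\xvec_s),\theta,\thetahat}$, so that it has the ``controllable'' shape needed later to sum over datapoints in \thmref{bayes_clt_main}.
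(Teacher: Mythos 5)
Your proposal is correct and follows essentially the same route as the paper: the same ratio-of-integrals setup, the same three-region decomposition invoking \lemref{region_r3}, \lemref{region_r2}, and \lemref{region_r1_final_bound}, and the same assembly via $\zeta = \normalizer + \ordlog{N^{-1}}$, $I(1) = \zeta(1+\ord{N^{-3}})$, and the expansion $1/(1+z) = 1 - z + O(z^2)$ on the regular event $\regevent$. The only cosmetic difference is that you re-derive the region-by-region bounds inline rather than simply citing them, and the leftover pieces are absorbed into $\resid{0}$ as well as $\resid{2}$ and $\resid{3}$, but this does not change the argument.
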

%
%%%%%%%%%%%%%%%%%%%%%%%%%%%%%%%%%%%%%%%%%

%%%%%%%%%%%%%%%%%%%%%%%%%%%%%%%%%%%%%
\subsection{Proof of that the BCLT residual is square-summable}\applabel{resid_proofs}

All that remains to prove \thmref{bayes_clt_main} is to show that the residual is
$O_p(1)$ when summed.  The form of $\resid{\phi}(\xvec_s)$ is awkward, but it is
designed specifically to commute with suprema over sums, allowing us to apply
ULLNs to the remainder of our BCLT, as stated formally in
\lemref{integral_commuting_sums,taylor_commuting_sums} below.

Observe that residual expressions of the form $\sup_{\theta \in R_1 \bigcup R_2}
\norm{\phi(\theta, \x)}_2$ would not be as useful, since
\begin{align*}
\meann \sup_{\theta \in R_1 \bigcup R_2} \norm{\phi(\theta, \x_n)}_2 \ge
\sup_{\theta \in R_1 \bigcup R_2} \meann \norm{\phi(\theta, \x_n)}_2,
\end{align*}
and ULLN bound on the latter does not imply a bound on the former, since the
supremum may depend on $\x_n$ in a way that causes the sample mean to diverge.

%%%%%%%%%%%%%%%%%%%%%%%%%%%%%%%%%%%%%%%%%%%%%%%%%%%%%%%%%%%%%%%%%%%%%%%
%%%%%%%%%%%%%%%%%%%%%%%%%%%%%%%%%%%%%%%%%%%%%%%%%%%%%%%%%%%%%%%%%%%%%%%
%%%%%%%%%%%%%%%%%%%%%%%%%%%%%%%%%%%%%%%%%%%%%%%%%%%%%%%%%%%%%%%%%%%%%%%

\begin{lem}\lemlabel{integral_commuting_sums}
The integral of \defref{bclt_resid} commutes with sums.
For a scalar-valued function $\xi(\theta, \xvec_s)$ and an index set
$\indexset$,
\begin{align*}
\means \expect{\normresid{\infoev / 2}{R_1 \bigcup R_2}(\tau)}
              {\xi(\theta, \xvec_s)}^2 \le
\sup_{\theta \in R_1 \bigcup R_2} \means \xi(\theta, \xvec_s)^2.
\end{align*}
\begin{proof}
\begin{align*}
\means \expect{\normresid{\infoev / 2}{R_1 \bigcup R_2}(\tau)}
              {\xi(\theta, \xvec_s)}^2
\le{}&
\means \expect{\normresid{\infoev / 2}{R_1 \bigcup R_2}(\tau)}
            {\xi(\theta, \xvec_s)^2}
\\={}&
\expect{\normresid{\infoev / 2}{R_1 \bigcup R_2}(\tau)}
              {\means \xi(\theta, \xvec_s)^2}
\\\le{}&
\sup_{\tau \in R_1 \bigcup R_2}
    \means \xi(\thetahat + \tau / \sqrt{N}, \xvec_s)^2
\\={}&
\sup_{\theta \in R_1 \bigcup R_2} \means \xi(\theta, \xvec_s)^2.
\end{align*}
\end{proof}
\end{lem}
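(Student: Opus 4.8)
The plan is to derive the bound from Jensen's inequality together with the nonnegativity of the integrand, which lets me commute the data average $\means$ with the $\tau$-expectation. First I would apply Jensen's inequality: since $\normresid{\infoev / 2}{R_1 \bigcup R_2}$ is a genuine probability measure (the truncated Gaussian of \defref{residual_distribution}) and $x \mapsto x^2$ is convex, for each fixed $s \in \indexset$,
\begin{align*}
\expect{\normresid{\infoev / 2}{R_1 \bigcup R_2}(\tau)}{\xi(\theta, \xvec_s)}^2 \le
\expect{\normresid{\infoev / 2}{R_1 \bigcup R_2}(\tau)}{\xi(\theta, \xvec_s)^2}.
\end{align*}

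Next I would average over $s \in \indexset$ and interchange $\means$ with the expectation over $\tau$. Because the integrand $\xi(\theta, \xvec_s)^2$ is nonnegative after the Jensen step (and, when $\indexset$ is finite, $\means$ is simply a finite sum), this interchange is unconditional by Tonelli's theorem, giving $\means \expect{\normresid{\infoev / 2}{R_1 \bigcup R_2}(\tau)}{\xi(\theta, \xvec_s)^2} = \expect{\normresid{\infoev / 2}{R_1 \bigcup R_2}(\tau)}{\means \xi(\theta, \xvec_s)^2}$.

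Finally, I would bound this expectation by the supremum of its integrand over the support of the measure. By \defref{residual_distribution} together with \defref{region_division}, the truncated Gaussian $\normresid{\infoev/2}{R_1 \bigcup R_2}$ is supported precisely on $\{\tau: \thetahat + \tau/\sqrt{N} \in R_1 \bigcup R_2\}$, so, using the change of variables $\theta = \thetahat + \tau/\sqrt{N}$ and the set-notation convention of \secref{bclt_ratio},
\begin{align*}
\expect{\normresid{\infoev / 2}{R_1 \bigcup R_2}(\tau)}{\means \xi(\thetahat + \tau/\sqrt{N}, \xvec_s)^2}
\le \sup_{\tau \in R_1 \bigcup R_2} \means \xi(\thetahat + \tau/\sqrt{N}, \xvec_s)^2
= \sup_{\theta \in R_1 \bigcup R_2} \means \xi(\theta, \xvec_s)^2.
\end{align*}
Chaining the three displays yields the claim. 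I do not anticipate any substantive obstacle here; the only bookkeeping that warrants care is confirming that the support of the truncated Gaussian coincides with $R_1 \bigcup R_2$ in the $\tau$-coordinates, which is immediate from the definitions, and that the interchange of the nonnegative sum and integral holds without integrability hypotheses.
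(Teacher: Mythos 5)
Your proposal is correct and follows essentially the same route as the paper's proof: Jensen's inequality pointwise in $s$, interchange of the (finite, nonnegative) average over $\indexset$ with the $\tau$-expectation, and then bounding the expectation by the supremum of the integrand over the support of the truncated Gaussian, which is $R_1 \bigcup R_2$ in the $\tau$-coordinates. No gaps.
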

%%%%%%%%%%%%%%%%%%%%%%%%%%%%%%%%%%%%%%%%%%%%%%%%%%%%%%%%%%%%%%%%%%%%%%%

%%%%%%%%%%%%%%%%%%%%%%%%%%%%%%%%%%%%%%%%%%%%%%%%%%%%%%%%%%%%%%%%%%%%%%%
%%%%%%%%%%%%%%%%%%%%%%%%%%%%%%%%%%%%%%%%%%%%%%%%%%%%%%%%%%%%%%%%%%%%%%%
%%%%%%%%%%%%%%%%%%%%%%%%%%%%%%%%%%%%%%%%%%%%%%%%%%%%%%%%%%%%%%%%%%%%%%%

\begin{lem}\lemlabel{taylor_commuting_sums}
The Taylor series residual of \defref{bclt_resid} commutes with sums.
When $\theta \in R_1 \bigcup R_2$, for a k-th order continuously differentiable
function $\xi(\theta, \xvec_s)$ and an index set $\indexset$,
\begin{align*}
\means \norm{\tresid{k}{\xi(\cdot, \xvec_s), \theta, \thetahat}}_2^2 \le
\sup_{\theta \in R_1 \bigcup R_2}
    \means \norm{\xi_{(k)}(\theta, \xvec_s)}_2^2.
\end{align*}
\begin{proof}
Recall from \defref{taylor_residual} that
\begin{align*}
\MoveEqLeft
\means \norm{\tresid{k}{\xi(\cdot, \xvec_s), \theta, \thetahat}}_2^2
=\\{}&
\means \norm{
    \frac{1}{(k-1)!} \int_0^1 (1 - t)^{k-1}
    \xi_{(k)}(\thetahat + t (\theta - \thetahat), \xvec_s) dt
}_2^2
\le\\&
\means \frac{1}{(k-1)!} \int_0^1
\norm{\xi_{(k)}(\thetahat + t (\theta - \thetahat), \xvec_s)}_2^2 dt
=\\&
\frac{1}{(k-1)!} \int_0^1
\means \norm{\xi_{(k)}(\thetahat + t (\theta - \thetahat), \xvec_s)}_2^2 dt
\le\\&
\frac{1}{(k-1)!}
\sup_{\theta \in R_1 \bigcup R_2}
    \means\norm{\xi_{(k)}(\theta, \xvec_s)}_2^2.
\end{align*}
The result follows because $k \ge 1$.
\end{proof}
\end{lem}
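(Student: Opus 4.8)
The plan is to treat this as a purely deterministic chain of inequalities: the statement holds pointwise in $\xvec$ and makes no probabilistic claim, so none of the central-limit or regularity-event machinery from the preceding sections is needed. The only ingredients are the definition of the residual functional in \defref{taylor_residual}, Minkowski's integral inequality, convexity of the squared Euclidean norm (Jensen), Fubini's theorem to interchange the $t$-integral with the index average $\means$, and the elementary geometric fact that $R_1 \bigcup R_2$ is a ball centered at $\thetahat$.

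First I would unwrap the definition, writing
\[
\tresid{k}{\xi(\cdot, \xvec_s), \theta, \thetahat} = \frac{1}{(k-1)!}\int_0^1 (1-t)^{k-1}\, \xi_{(k)}(\thetahat + t(\theta - \thetahat), \xvec_s)\, dt.
\]
Next I would apply Minkowski's integral inequality to move $\norm{\cdot}_2$ inside the $t$-integral, then use $0 \le (1-t)^{k-1} \le 1$ on $[0,1]$ to discard the weight. Because $[0,1]$ carries unit Lebesgue mass, squaring and applying Jensen's inequality to the convex map $x \mapsto x^2$ bounds the square of the integral by the integral of the square, yielding
\[
\norm{\tresid{k}{\xi(\cdot, \xvec_s), \theta, \thetahat}}_2^2 \le \frac{1}{((k-1)!)^2} \int_0^1 \norm{\xi_{(k)}(\thetahat + t(\theta - \thetahat), \xvec_s)}_2^2\, dt,
\]
where the factorial prefactor is at most one since $k \ge 1$. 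I would then average over $s \in \indexset$ and interchange $\means$ with the $t$-integral by Fubini, which is justified because the integrand is continuous and the index set is finite. Finally, for any $\theta \in R_1 \bigcup R_2$ and any $t \in [0,1]$, the point $\thetahat + t(\theta - \thetahat)$ satisfies $\norm{\thetahat + t(\theta - \thetahat) - \thetahat}_2 = t\,\norm{\theta - \thetahat}_2 < \delta_2$, so it remains in $R_1 \bigcup R_2$; the inner average is therefore dominated by its supremum over the region, and integrating that constant over $[0,1]$ (unit mass) gives the claimed bound.

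There is no genuine obstacle here; the only place requiring care is the passage from the linear (Minkowski) bound to the squared bound, where one must keep track of both the weight $(1-t)^{k-1}$ and the factorial prefactor so that Jensen is applied against a genuine probability measure and the prefactor is verified to be $\le 1$. The single structural fact that must be invoked explicitly is star-shapedness of $R_1 \bigcup R_2$ about $\thetahat$, which guarantees the whole Taylor path lies inside the region over which the supremum is taken; this is immediate from \defref{region_division}, since $R_1 \bigcup R_2$ is exactly the centered ball $\{\theta : \norm{\theta - \thetahat}_2 < \delta_2\}$.
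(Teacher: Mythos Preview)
Your proposal is correct and follows essentially the same route as the paper's proof: both unwrap \defref{taylor_residual}, push the norm inside the $t$-integral, square via Jensen against the unit measure on $[0,1]$, swap $\means$ with $\int_0^1$, and bound by the supremum over $R_1 \cup R_2$ after noting the prefactor is at most one for $k \ge 1$. Your version is in fact slightly more explicit than the paper's---you name Minkowski and Jensen, keep the sharper constant $((k-1)!)^{-2}$ rather than the paper's $((k-1)!)^{-1}$, and spell out why the Taylor path $\thetahat + t(\theta - \thetahat)$ stays in the ball $R_1 \cup R_2$, a point the paper uses silently.
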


%%%%%%%%%%%%%%%%%%%%%%%%%%%%%%%%%%%%%%%%%%%%%%%%%%%%%%%%%%%%%%%%%%%%%%%

%%%%%%%%%%%%%%%%%%%%%%%%%%%%%%%%%%%%%%%%%%%%%%%%%%%%%%%%%%%%%%%%%%%%%%%
%%%%%%%%%%%%%%%%%%%%%%%%%%%%%%%%%%%%%%%%%%%%%%%%%%%%%%%%%%%%%%%%%%%%%%%
%%%%%%%%%%%%%%%%%%%%%%%%%%%%%%%%%%%%%%%%%%%%%%%%%%%%%%%%%%%%%%%%%%%%%%%

%
\begin{lem}\lemlabel{resid_op1}
Under \assuref{core_bclt_assu}, $\meann\resid{\phi}(\x_n)^2 = O_p(1)$.
\begin{proof}
By \lemref{regular}, and in particular \defref{regular_event} \eventref{ulln},
together with the ULLN assumed in \defref{bclt_okay}, we have
\begin{align*}
\sup_{\theta \in R_1 \bigcup R_2}
    \meann \norm{\phigrad{k}(\theta, \x_n)}_2^2 = O_p(1),
\end{align*}
for $k=0,1,2,3$.  It follows immediately that $\meann
\norm{\phigrad{k}(\thetahat, \x_n)}_2 = O_p(1)$ and $\meann
\norm{\phigrad{k}(\thetahat, \x_n)}_2^2 = O_p(1)$ for $k=0, 1, 2, 3$. From this
and Cauchy-Schwartz it follows that $\meann \resid{0}(\x_n)^2 = O_p(1)$.

Applying \lemref{integral_commuting_sums} with $\indexset = \{1, \ldots, N \}$ and
$\xi(\theta, \xvec_s) = \norm{\phi(\theta, \x_n)}_2$ gives
$\meann \resid{2}(\x_n) = O_p(1)$ and $\meann \resid{2}(\x_n)^2 = O_p(1)$.

Similarly, $\meann \resid{1}(\x_n)^2 = O_p(1)$ follows from
\lemref{integral_commuting_sums} and then \lemref{taylor_commuting_sums}.

The terms $\meann \resid{3}(\x_n)$ and $\meann \resid{3}(\x_n)^2$ are controlled
by assumption in \defref{bclt_okay}.

Finally, the fact that $\meann \resid{\phi}(\x_n)^2 = O_p(1)$ follows
by Cauchy-Schwartz, since $\resid{\phi}(\x_n)$ is the sum of products
of terms, the squares of each of which have $O_p(1)$ sum.
\end{proof}
\end{lem}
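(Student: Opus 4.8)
The plan is to bound each of the four constituent pieces of $\resid{\phi}(\x_n)$ from \defref{bclt_resid} separately, show that each has an $\ordp{1}$ sample second moment, and then combine them with the elementary inequality $\left(\sum_{j=0}^3 a_j\right)^2 \le 4\sum_{j=0}^3 a_j^2$. Everything rests on a single base estimate: that the \emph{sample} average of the squared derivative norms is uniformly $\ordp{1}$ over the relevant neighborhood of $\thetahat$. Concretely, combining \defref{regular_event} \eventref{ulln} (available with $\fdist$-probability approaching one by \lemref{regular}) with the third-order BCLT-okay hypothesis of \assuref{core_bclt_assu} from \defref{bclt_okay_index}, I would first establish that, for $k = 0,1,2,3$,
\begin{align*}
\sup_{\theta \in R_1 \bigcup R_2} \meann \norm{\phigrad{k}(\theta, \x_n)}_2^2 = \ordp{1}.
\end{align*}
The role of \eventref{ulln} here is to absorb the data-dependence of the supremum's \emph{location} --- the ball $R_1 \bigcup R_2$ is centered at the random $\thetahat$ --- after which the population-level domination built into \defref{bclt_okay_index} controls the limit.

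With this base estimate in hand, the four terms are dispatched in turn. For $\resid{0}(\x_n)$, evaluating at the single point $\thetahat \in R_1$ gives $\meann \norm{\phigrad{k}(\thetahat, \x_n)}_2^2 = \ordp{1}$ for each $k$, and Cauchy--Schwarz on the sum of norms yields $\meann \resid{0}(\x_n)^2 = \ordp{1}$. For $\resid{2}(\x_n) = \expect{\normresid{\infoev/2}{R_1 \bigcup R_2}(\tau)}{\norm{\phi(\theta, \x_n)}_2}$, I would apply \lemref{integral_commuting_sums} with $\indexset = [N]$ and $\xi(\theta, \x_n) = \norm{\phi(\theta, \x_n)}_2$ to pull the supremum outside the sample average, giving $\meann \resid{2}(\x_n)^2 \le \sup_{\theta \in R_1 \bigcup R_2} \meann \norm{\phi(\theta, \x_n)}_2^2 = \ordp{1}$, the $k=0$ base estimate. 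For the Taylor-residual term $\resid{1}(\x_n)$ I would chain the two commuting lemmas: \lemref{integral_commuting_sums} converts the truncated-Gaussian expectation into $\sup_{\theta \in R_1} \meann \norm{\tresid{3}{\phi(\cdot, \x_n), \theta, \thetahat}}_2^2$, and then \lemref{taylor_commuting_sums} with $k=3$ bounds this by $\sup_{\theta \in R_1 \bigcup R_2} \meann \norm{\phigrad{3}(\theta, \x_n)}_2^2 = \ordp{1}$, the $k=3$ base estimate. Finally, $\resid{3}(\x_n) = \expect{\prior(\theta)}{\norm{\phi(\theta, \x_n)}_2}$ yields to Jensen's inequality, $\resid{3}(\x_n)^2 \le \expect{\prior(\theta)}{\norm{\phi(\theta, \x_n)}_2^2}$, whose sample average is $\ordp{1}$ directly by the prior-moment condition of \defref{bclt_okay_index}. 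A closing Cauchy--Schwarz across the four pieces then gives $\meann \resid{\phi}(\x_n)^2 = \ordp{1}$.

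The genuinely load-bearing point, and the one I expect to be the main obstacle, is the interchange of the supremum over $\theta$ with the sample average over $n$. A naive ULLN controls only $\sup_\theta \meann(\cdot)$, whereas the definitions of $\resid{1}$ and $\resid{2}$ apply a Taylor remainder or a $\tau$-integral \emph{inside} the norm before summing; the reverse quantity $\meann \sup_{\theta} \norm{\phi(\theta, \x_n)}_2$ need not be $\ordp{1}$, since the maximizing $\theta$ can depend on $\x_n$ in a way that makes the per-datapoint maxima fail to average (exactly the cautionary remark preceding \lemref{integral_commuting_sums}). The resolution is that the residuals of \defref{bclt_resid} were engineered to keep the supremum on the \emph{outside}: \lemref{integral_commuting_sums,taylor_commuting_sums} establish that the truncated-Gaussian expectation and the integral-form Taylor remainder each commute past the sample average (via Jensen and Tonelli, then pulling the bounded $\tau$-integral out as a supremum), collapsing every term onto the single base estimate. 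Verifying that this base estimate itself holds uniformly over the $\thetahat$-centered ball --- rather than the fixed $\thetatrue$-centered ball of the raw ULLN assumption --- is the one spot requiring genuine care, and is supplied by \eventref{ulln} of \defref{regular_event}.
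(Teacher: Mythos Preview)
Your proposal is correct and follows essentially the same route as the paper's own proof: establish the base estimate $\sup_{\theta \in R_1 \bigcup R_2}\meann\norm{\phigrad{k}(\theta,\x_n)}_2^2 = \ordp{1}$ via \lemref{regular} and the BCLT-okay assumptions, then handle $\resid{0}$ by evaluating at $\thetahat$, $\resid{2}$ via \lemref{integral_commuting_sums}, $\resid{1}$ via \lemref{integral_commuting_sums} followed by \lemref{taylor_commuting_sums}, $\resid{3}$ directly from the prior-moment assumption, and close with Cauchy--Schwarz across the four pieces. Your diagnosis of the load-bearing step --- that the commuting lemmas exist precisely to keep the supremum outside the sample average --- is exactly the point the paper emphasizes in the paragraph preceding \lemref{integral_commuting_sums}.
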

%
%%%%%%%%%%%%%%%%%%%%%%%%%%%%%%%%%%%%%%%%%%%%%%%%%%%%%%%%%%%%%%%%

% \subsection{Connection to other central limit theorems}
% \applabel{theory_discussion}
% \input{theory_discussion.tex}

\subsubsection{Proof of \thmref{ij_consistent}}\applabel{ij_consistent}
\begin{proof}
    We can re-arrange the covariance in the definition of $\infl_n$ as the product
    of terms centered at $\thetahat$:
    \begin{align*}
    \infl_n ={}& N \expect{\post}{
        (\g(\theta) - \g(\thetahat))
        (\ell(\x_n \vert \theta) - \ell(\x_n \vert \thetahat))
    } -
    \\&
    2 N \left(
        \g(\thetahat) - \expect{\post}{\g(\theta)}
    \right)
    \left(
        \ell(\x_n \vert \thetahat) - \expect{\post}{\ell(\x_n \vert \theta)}
    \right).
    \end{align*}
    Apply \thmref{bayes_clt_main} three times, with $\phi_\g(\theta) := \g(\theta)$,
    $\phi_\ell(\theta, \x_n) := \ell(\x_n \vert \theta)$, and $\phi_{\ell\g}(\theta,
    \x_n) := (\g(\theta) - \g(\thetahat)) (\ell(\x_n \vert \theta) - \ell(\x_n \vert
    \thetahat))$, all of which are third-order BCLT-okay by assumption. Let the
    residuals be denoted as $\resid{\g}$ (with no $\x_n$ dependence),
    $\resid{\ell}(\x_n)$, and $\resid{\ell\g}(\x_n)$, respectively.   Note that the
    first derivative of $\theta \mapsto \phi_{\ell\g}(\theta, \x_n)$ is zero at
    $\thetahat$.
    
    Write the leading term in the expansion of $\phi_\ell(\theta,
    \x_n)$ as
    \begin{align*}
    L(\x_n) :=
    \frac{1}{2} \ellgrad{2}(\thetahat, \x_n) \infohat^{-1} +
    \frac{1}{6} \ellgrad{1}(\thetahat, \x_n) \likhatk{3}(\thetahat) \hat{M}.
    \end{align*}
    Under \assuref{bayes_clt}, $\thetahat \plim \thetatrue$  and $\infohat^{-1}$ has
    bounded operator norm with probability approaching one (\propref{freq_clt}).
    These facts imply that $\meann \norm{L(\x_n)}_2^2 = \ordp{1}$ by Cauchy-Schwarz
    and a ULLN (\assuref{bayes_clt} \itemref{loglik_ulln} and \lemref{ulln}).
    
    Then \thmref{bayes_clt_main}, together with the fact that $\thetahat = \ordp{1}$
    and $\g(\theta)$ is continuous, gives
    \begin{align*}
    \infl_n ={}&
    \ggrad{1}(\thetahat) \infohat \ellgrad{1}(\x_n \vert \thetahat) +
    \ordlog{N^{-1}} \resid{\ell\g}(\x_n) +
    \\& N
        \left( \ordp{N^{-1}} L(\x_n) + \ordlogp{N^{-2}} \resid{\ell}(\x_n) \right)
        \left( \ordp{N^{-1}} + \ordlogp{N^{-2}} \resid{\g} \right)
    \\=&
    \ggrad{1}(\thetahat) \infohat \ellgrad{1}(\x_n \vert \thetahat) +
    \\&
    % \ordlog{N^{-1}} (\resid{\ell\g}(\x_n) + L(\x_n)) +
    % \ordlogp{N^{-2}} \left(\resid{\g} L(\x_n) +  \resid{\ell}(\x_n)\right) +
    % \ordlogp{N^{-3}} \resid{\ell}(\x_n) \resid{\g}.
    \ordlog{N^{-1}} \left(
    \resid{\ell\g}(\x_n) + L(\x_n) +
    \resid{\g} L(\x_n) +  \resid{\ell}(\x_n) +
    \resid{\ell}(\x_n) \resid{\g}
    \right).
    \end{align*}
    In the final line, we have combined lower orders into the $\ordp{N^{-1}}$ term
    for conciseness of expression.
    
    We now show that the sums of the squares of the residuals in the preceding
    display go to zero. By \thmref{bayes_clt_main}, each of $\meann
    \resid{\ell}(\x_n)^2$, and $\meann \resid{\ell\g}(\x_n)^2$ are $O_p(1)$.
    Similarly, $\meann \norm{\infohat^{-1} \ellgrad{1}(\x_n \vert \thetahat)}_2^2 =
    \ordp{1}$ by a ULLN and the boundedness of the operator norm of $\infohat^{-1}$.
    By Cauchy-Schwartz, we then have $\meann \ggrad{1}(\thetahat)^\trans
    \infohat^{-1} \ellgrad{1}(\x_n \vert \thetahat) \resid{\infl}(\x_n) = O_p(1)$.
    
    By repeated application of Cauchy-Schwartz, we
    can thus sum $\infl_n$ and $\infl_n \infl_n^\trans$ to get
    \begin{align*}
    \sumn \infl_n ={}&
     \ggrad{1}(\thetahat) \infohat \sumn \ellgrad{1}(\x_n \vert \thetahat) +
     \ordlogp{N^{-1}} = \ordlogp{N^{-1}} \\
    \meann \infl_n \infl_n^\trans ={}&  \ggrad{1}(\thetahat) \infohat^{-1}
     \scorecovhat
     \infohat^{-1} \ggrad{1}(\thetahat)^\trans + \ordlogp{N^{-1}}
    = \gcovmaphat + \ordlogp{N^{-1}}.
    \end{align*}
    Consistency of $\gcovij$ then follows from \corref{bayes_expectation_dist}.
\end{proof}

\newpage
\section{Theoretical details for the von Mises expansion}\applabel{higher_order}

% In this section we provide additional results to support the von Mises
% expansions of \secref{global_local}.

\subsection{A higher--order posterior moment expansion}\applabel{bayes_clt_vm_proof}

The residual in the von--Mises expansion \eqref{post_expansion}
involves a posterior expectation of a product of three
centered quantities.  To analyze this term precisely, we require
a higher--order version of \thmref{bayes_clt_main}.

% Note that \assuref{bayes_clt} suffices to show that $\ell(\xn \vert \theta)$
% satisfies \itemref{grad_op1} of \defref{bclt_okay} for $\indexset = [N]$.
% However, for index sets consisting of all pairs of datapoints, \itemref{grad_op1}
% of \defref{bclt_okay} does not necessarily follow from \assuref{bayes_clt}.

\subsubsection{A higher--order version of \thmref{bayes_clt_main}}

%%%%%%%%%%%%%%%%%%%%%%%%%%%%%%%%%%%%%%%%%%%%%%%%%%%%%%%%%%%%%%
%%%%%%%%%%%%%%%%%%%%%%%%%%%%%%%%%%%%%%%%%%%%%%%%%%%%%%%%%%%%%%
%%%%%%%%%%%%%%%%%%%%%%%%%%%%%%%%%%%%%%%%%%%%%%%%%%%%%%%%%%%%%%

%
\begin{thm}\thmlabel{bayes_clt_vm}
    Let \assuref{bayes_clt} hold, and assume that $\phi(\theta, \xvec_s)$ is
    fifth-order BCLT-okay for an index set $\indexset$.  Further, assume that all of
    $\phi(\thetahat, \xvec_s)$, $\phigrad{1}(\thetahat, \xvec_s)$. and
    $\phigrad{2}(\thetahat, \xvec_s)$ are zero.\footnote{The assumption that the
    first two derivatives of $\phi$ are zero at $\thetahat$ is not essential to the
    series expansion; this assumption simply eliminates a large number of unneeded
    terms while still sufficing to analyze $\resid{T}(\t)$.}
    Let $\hat{M}_4$ denote the array of fourth moments and $\hat{M}_6$ the array of
    sixth moments of the Gaussian distribution $\normdist(0, \infohat^{-1})$. With
    $\fdist$-probability approaching one, for each $s \in \indexset$,
    \begin{align*}
    \expect{\post}{\phi(\theta, \xvec_s)} ={}&
        N^{-2} \left(
            \frac{1}{36} \phigrad{3}(\thetahat, \xvec_s) \likhatk{3}(\thetahat)
                \hat{M}_6 +
            \frac{1}{24} \phigrad{4}(\thetahat, \xvec_s) \hat{M}_4
    \right) +
    \nonumber\\ &
    \ordlog{N^{-3}} \resid{\phi}(\xvec_s),
    \end{align*}
    where $\means \resid{\phi}(\xvec_s)^2 = O_p(1)$, and the moment arrays are
    summed against the partial derivatives as in \thmref{bayes_clt_main}.
\begin{proof}

The reasoning is the same as \lemref{region_r1_final_bound}.  Define
\begin{align*}
\resid{4}(\xvec_s) :=
\norm{\phigrad{3}(\thetahat, \xvec_s)}_2 +
\norm{\phigrad{4}(\thetahat, \xvec_s)}_2 +
\expect{\normresid{\infoev / 2}{R_1}(\tau)}{
    \norm{\tresid{5}{\phi(\cdot, \xvec_s), \theta, \thetahat}}_2}.
\end{align*}
Then, for any $\xvec$ such that event $\regevent$ of \defref{regular_event}
occurs,
\begin{align*}
\expect{\post}{\phi(\theta, \xvec_s)} ={}&
    N^{-2} \left(
        \frac{1}{36} \phigrad{3}(\thetahat, \xvec_s) \likhatk{3}(\thetahat)
            \expect{\normhat}{\tau^6} +
        \frac{1}{24} \phigrad{4}(\thetahat, \xvec_s)
            \expect{\normhat}{\tau^4}
\right) +
\nonumber\\ &
\ordlog{N^{-3}}
\left(
\resid{2}(\xvec_s) + \resid{3}(\xvec_s) + \resid{4}(\xvec_s)
\right).
\end{align*}

As in \lemref{region_r1_loglik_expansion}, Taylor expand $\phi(\theta, \xvec_s)$
around $\thetahat$ and substitute $\tau = \sqrt{N}(\theta - \thetahat)$:
\begin{align*}
\phi(\theta, \xvec_s) ={}&
N^{-1} \tau^2 \left(
N^{-1/2} \frac{1}{6} \phigrad{3}(\thetahat, \xvec_s) \tau +
N^{-1} \frac{1}{24} \phigrad{4}(\thetahat, \xvec_s) \tau^2 +
N^{-3/2} \tresid{5}{\phi(\cdot, \xvec_s), \theta, \thetahat} \tau^3
\right).
\end{align*}
By comparing with \eqref{r1_phi_expansion}, the result follows by exactly
the same proof as that of \thmref{bayes_clt_main}, where we
\begin{itemize}
\item Use $\frac{1}{6} \phigrad{3}(\thetahat, \xvec_s)$ in place of
$\phigrad{1}(\thetahat, \xvec_s) $,
\item Use $\frac{1}{24} \phigrad{4}(\thetahat, \xvec_s)$ in place of
$\frac{1}{2} \phigrad{2}(\thetahat, \xvec_s)$,
\item Use $\tresid{5}{\phi(\cdot, \xvec_s), \theta, \thetahat}$
in place of $\tresid{3}{\phi(\cdot, \xvec_s), \theta, \thetahat}$,
\item Observe that $\resid{2}(\xvec_s)$ and $\resid{3}(\xvec_s)$ are
$\ord{N^{-3}}$ when $\regevent$ occurs, and
\item Incorporate the leading factor of $N^{-1} \tau^2$ before taking
expectations.
\end{itemize}
The above formal modifications lead directly to the desired result.
\end{proof}
\end{thm}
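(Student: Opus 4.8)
The plan is to re-run the argument behind \thmref{bayes_clt_main} essentially verbatim, reusing the region decomposition of \defref{region_division}, the nearly-certain event $\regevent$ of \defref{regular_event} (which occurs with $\fdist$-probability approaching one by \lemref{regular}), and the bounds of \lemref{region_r3}, \lemref{region_r2}, \lemref{region_r1_gaussian_moments}, \lemref{region_r1_loglik_expansion} and \lemref{r1_normalizer_expansion}. Writing $\tau := \sqrt{N}(\theta - \thetahat)$ and $\expect{\post}{\phi(\theta,\xvec_s)} = I(\phi)/I(1)$ as in \secref{bclt_ratio}, the only structural change is the Taylor expansion of the integrand: since $\phi(\thetahat,\xvec_s)$, $\phigrad{1}(\thetahat,\xvec_s)$ and $\phigrad{2}(\thetahat,\xvec_s)$ all vanish, expanding to fifth order around $\thetahat$ gives
\begin{align*}
\phi(\theta, \xvec_s) ={}&
    N^{-3/2}\tfrac{1}{6}\phigrad{3}(\thetahat, \xvec_s)\tau^3
    + N^{-2}\tfrac{1}{24}\phigrad{4}(\thetahat, \xvec_s)\tau^4
    + N^{-5/2}\tresid{5}{\phi(\cdot, \xvec_s), \theta, \thetahat}\tau^5,
\end{align*}
so the expansion starts two orders lower in $N$ than in the generic case. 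Because $\phi(\thetahat,\xvec_s) = 0$, there is no $\phi(\thetahat,\xvec_s)\int_{R_1}\exp(\cdot)d\tau$ leading term to carry along.

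First I would dispatch regions $R_2$ and $R_3$ by direct appeal to the existing lemmas. \Lemref{region_r3} gives that $\norm{I_3(\phi)}_2$ is exponentially small times $\expect{\prior(\theta)}{\norm{\phi(\theta,\xvec_s)}_2} =: \resid{3}(\xvec_s)$, and \lemref{region_r2} (taking $\delta_1$ a fixed multiple larger than before, which only costs a larger $N^*$) gives $\norm{I_2(\phi)}_2 = \ord{N^{-3}}\expect{\normresid{\infoev/2}{R_1 \bigcup R_2}(\tau)}{\norm{\phi(\theta,\xvec_s)}_2} =: \ord{N^{-3}}\resid{2}(\xvec_s)$; both are absorbed into the claimed $\ordlog{N^{-3}}$ residual. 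The main work is region $R_1$: multiply the $\phi$-expansion by the likelihood expansion $\exp(-\tfrac12\infohat\tau^2)\bigl(1 + N^{-1/2}\tfrac16\likhatk{3}(\thetahat)\tau^3 + \ordlog{N^{-1}}\bigr)$ from \lemref{region_r1_loglik_expansion}, integrate term by term, and use \lemref{region_r1_gaussian_moments} to replace each $R_1$-restricted Gaussian integral by the corresponding moment of $\normhat = \normdist(0,\infohat^{-1})$ up to an additive $\ord{N^{-3}}$. The surviving leading contributions are: the $\tau^3 \times 1$ term, which integrates to the odd moment $\expect{\normhat}{\tau^3} = 0$; the cross term $N^{-2}\tfrac{1}{36}\phigrad{3}(\thetahat,\xvec_s)\likhatk{3}(\thetahat)\tau^6$, which integrates to $N^{-2}\tfrac{1}{36}\phigrad{3}(\thetahat,\xvec_s)\likhatk{3}(\thetahat)\hat{M}_6$; and the $\tau^4 \times 1$ term $N^{-2}\tfrac{1}{24}\phigrad{4}(\thetahat,\xvec_s)\tau^4$, which integrates to $N^{-2}\tfrac{1}{24}\phigrad{4}(\thetahat,\xvec_s)\hat{M}_4$. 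Every remaining term is collected into $\ordlog{N^{-3}}\resid{\phi}(\xvec_s)$ with
\begin{align*}
\resid{\phi}(\xvec_s) :={}&
    \resid{2}(\xvec_s) + \resid{3}(\xvec_s)
    + \norm{\phigrad{3}(\thetahat, \xvec_s)}_2
    + \norm{\phigrad{4}(\thetahat, \xvec_s)}_2
    + \expect{\normresid{\infoev/2}{R_1}(\tau)}{
        \norm{\tresid{5}{\phi(\cdot, \xvec_s), \theta, \thetahat}}_2},
\end{align*}
exactly paralleling \defref{bclt_resid}. Forming the ratio $I(\phi)/I(1)$ with $I(1) = \normalizer(1 + \ordlog{N^{-1}})$ via \lemref{r1_normalizer_expansion}, and absorbing the $\ordlog{N^{-1}}$-times-leading-term cross contributions (which are linear in $\norm{\phigrad{3}}_2,\norm{\phigrad{4}}_2$, hence already inside $\resid{\phi}$), yields the stated expansion.

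The final step is to check $\means\resid{\phi}(\xvec_s)^2 = \ordp{1}$, the analogue of \lemref{resid_op1}. This is precisely where fifth-order BCLT-okayness (\defref{bclt_okay_index}, \assuref{core_bclt_assu}) is needed rather than third: on $\regevent$ one has $\sup_{\theta \in R_1 \bigcup R_2}\means\norm{\phigrad{k}(\theta,\xvec_s)}_2^2 = \ordp{1}$ for $k = 1,\ldots,5$ by the ULLN (\defref{regular_event}~\eventref{ulln}), hence $\means\norm{\phigrad{3}(\thetahat,\xvec_s)}_2^2 = \ordp{1}$ and $\means\norm{\phigrad{4}(\thetahat,\xvec_s)}_2^2 = \ordp{1}$; \lemref{integral_commuting_sums} and then \lemref{taylor_commuting_sums} control $\means\expect{\normresid{\infoev/2}{R_1}(\tau)}{\norm{\tresid{5}{\phi(\cdot,\xvec_s),\theta,\thetahat}}_2}^2$; and $\means\resid{2}(\xvec_s)^2$, $\means\resid{3}(\xvec_s)^2$ are handled as in \lemref{resid_op1}. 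Cauchy--Schwarz then assembles these into $\means\resid{\phi}(\xvec_s)^2 = \ordp{1}$.

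The main obstacle is the order bookkeeping in region $R_1$: one must confirm that nothing survives strictly between the leading $N^{-2}$ order and the $N^{-3}$ residual. The dangerous terms are the ones naively of size $N^{-5/2}$ — the fifth-order Taylor residual $N^{-5/2}\tresid{5}{\phi(\cdot,\xvec_s),\theta,\thetahat}\tau^5$ against $1$, and the cubic-$\phi$ term against the $\ordlog{N^{-1}}$ remainder of the likelihood expansion. Each of these is odd in $\tau$ to leading order (after a one-term Taylor expansion of the $\tau$-dependent coefficients $\tresid{5}$ and $\tresid{4}$ around $\tau = 0$), so integrating against the symmetric density $\normhat$ annihilates the leading piece via $\expect{\normhat}{\tau^{5}} = \expect{\normhat}{\tau^{9}} = 0$, leaving only a region-truncation error of order $\ord{N^{-3}}$ by \lemref{region_r1_gaussian_moments} together with a $\ordlog{N^{-3}}$ correction from the next Taylor term. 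Once this cancellation is verified, the remainder of the argument is a faithful re-run of \secref{region_r1} and \secref{bclt_region_comb} with $\tfrac16\phigrad{3}$, $\tfrac1{24}\phigrad{4}$ and $\tresid{5}{\cdot}$ playing the roles of $\phigrad{1}$, $\tfrac12\phigrad{2}$ and $\tresid{3}{\cdot}$, and with $\resid{2}(\xvec_s)$ and $\resid{3}(\xvec_s)$ now $\ord{N^{-3}}$ on $\regevent$.
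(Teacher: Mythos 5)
Your proposal is correct and follows essentially the same route as the paper's proof, which simply re-runs the region-decomposition argument of \lemref{region_r1_final_bound} with $\tfrac16\phigrad{3}$, $\tfrac1{24}\phigrad{4}$, and $\tresid{5}{\cdot}$ substituted for $\phigrad{1}$, $\tfrac12\phigrad{2}$, and $\tresid{3}{\cdot}$, and with $\resid{2}$, $\resid{3}$ now $\ord{N^{-3}}$ on $\regevent$. You are in fact somewhat more explicit than the paper about why the intermediate $N^{-5/2}$ terms drop to $\ordlog{N^{-3}}$ (odd Gaussian moments plus a one-step expansion of the $\tau$-dependent Taylor-residual coefficients), but this is the same mechanism the paper's argument implicitly relies on.
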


\subsection{Proof of \thmref{finite_dim_resid}}\applabel{finite_dim_resid_proof}
\begin{proof}

%%%%%%%%%%%%%%%%%%%%%%%%%%

To analyze the residual, we will first use \thmref{bayes_clt_main,bayes_clt_vm}
to analyze $\resid{T}(\t)$ for fixed $\fndist$ and $\t$.  For compactness of
notation, we omit $\t$ for the application of \thmref{bayes_clt_vm}, returning
to it later to analyze the limiting behavior of each term as $N \rightarrow
\infty$.

As in the proof of \thmref{ij_consistent}, we re-write each term of $\htil(\x_n,
\x_m \vert \t)$ of \eqref{v_def} as a quantity centered at $\thetahat(\t)$
rather than the posterior expectation.  Define
\begin{align*}
\ghat(\theta) :={}& \g(\theta) - \g(\thetahat)
&
\ellhatunderbar(\xn \vert \theta) :={}&
    \ellunderbar(\xn \vert \theta) - \ellunderbar(\xn \vert \thetahat).
\end{align*}
For compactness, additionally define
\begin{align*}
\varepsilon_{\g} :={}& \g(\thetahat) - \expect{\post}{\g(\theta)}
\\
\varepsilon_{n} :={}&
    \ellunderbar(\x_n \vert \thetahat) -
    \expect{\post}{\ellunderbar(\x_n \vert \theta)},
\\
\varepsilon_{m} :={}&
    \ellunderbar(\x_m \vert \thetahat) -
    \expect{\post}{\ellunderbar(\x_m \vert \theta)}
\end{align*}
and
\begin{align*}
\phi^{\g n}(\theta) :={}& \ghat(\theta) \ellhatunderbar(\x_n \vert \theta)
\\
\phi^{\g m}(\theta) :={}&  \ghat(\theta) \ellhatunderbar(\x_m \vert \theta)
\\
\phi^{n m}(\theta) :={}&
    \ellhatunderbar(\x_n \vert \theta) \ellhatunderbar(\x_m \vert \theta)
\\
\phi^{\g n m}(\theta) :={}&
    \ghat(\theta) \ellhatunderbar(\x_n \vert \theta)
    \ellhatunderbar(\x_m \vert \theta).
\end{align*}
Using the fact that $\gbar(\theta) =
\ghat(\theta) + \varepsilon_\g$ (and so on), we can rearrange
\begin{align}
\MoveEqLeft
\expect{\post}{
    \gbar(\theta)
    \ellbarbar(\x_n \vert \theta)
    \ellbarbar(\x_m \vert \theta)
    } = \expect{\post}{\phi^{\g n m}(\theta)} +
\nonumber\\{}&
\expect{\post}{\phi^{\g n}(\theta)} \varepsilon_m +
\expect{\post}{\phi^{\g m}(\theta)} \varepsilon_n +
\expect{\post}{\phi^{n m}(\theta)} \varepsilon_\g +
\nonumber\\{}&
4 \varepsilon_\g \varepsilon_n \varepsilon_m.
\eqlabel{resid_centered}
\end{align}
We have $\sqrt{N} \resid{\t} = \meannm N^{5/2} \expect{\post}{ \gbar(\theta)
\ellbarbar(\x_n \vert \theta) \ellbarbar(\x_m \vert \theta) }$, to which each
term of \eqref{resid_centered} contributes one sum.

We will begin by considering the term $\meann N^{5/2} \expect{\post}{\phi^{\g n
m}(\theta)}$. Assume for the moment that $\phi_{\g n m}(\theta)$ is BCLT-okay
for $\indexset = [N] \times [N]$ (we will prove this later).  The derivatives
$\phigrad{k}^{\g n m}(\theta)$ are linear combinations of products of the form
$\ghatgrad{a}(\theta) \ellhatunderbargrad{b}(\x_n \vert \theta)
\ellhatunderbargrad{c}(\x_m \vert \theta)$, having $a + b + c = k$, $0 \le a,b,c
\le k$, and coefficient $k! / (a! b! c!)$.  In particular, $\phi_{\g n
m}(\theta)$ and its first two partial derivatives are all zero at $\thetahat$,
and
\begin{align*}
\phigrad{3}^{\g n m}(\thetahat) ={}&
    \ggrad{1}(\thetahat)
    \ellunderbargrad{1}(\x_n \vert \thetahat)
    \ellunderbargrad{1}(\x_m \vert \thetahat)
\\
\phigrad{4}^{\g n m}(\thetahat) ={}&
    3 \ggrad{2}(\thetahat)
    \ellunderbargrad{1}(\x_n \vert \thetahat)
    \ellunderbargrad{1}(\x_m \vert \thetahat) +
\\&
    3 \ggrad{1}(\thetahat)
    \ellunderbargrad{2}(\x_n \vert \thetahat)
    \ellunderbargrad{1}(\x_m \vert \thetahat) +
\\&
    3 \ggrad{1}(\thetahat)
    \ellunderbargrad{1}(\x_n \vert \thetahat)
    \ellunderbargrad{2}(\x_m \vert \thetahat).
\end{align*}
We can then apply \thmref{bayes_clt_vm} to write
\begin{align}
N^{5/2} \expect{\post}{\phi^{\g n m}(\theta)} ={}&
\sqrt{N} A_{111}
\ggrad{1}(\thetahat)
\ellunderbargrad{1}(\x_n \vert \thetahat)
\ellunderbargrad{1}(\x_m \vert \thetahat) \likhatk{3}(\thetahat) \hat{M}_6 +
\nonumber\\&
\sqrt{N} \sum_{ijk} A_{ijk}
    \ggrad{i}(\thetahat)
    \ellunderbargrad{j}(\x_n \vert \thetahat)
    \ellunderbargrad{k}(\x_m \vert \thetahat) \hat{M}_4 +
\nonumber\\&
\ordlog{N^{-1/2}} \resid{\g n m}(\x_n, \x_m),
\eqlabel{phi_gnm_bclt}
\end{align}
for $\{i, j, k\}$ ranging over $\{1, 1, 2\}$, $\{1, 2, 1\}$, and $\{2, 1, 1\}$
and suitably chosen constants $A_{ijk}$.

Let us now re-introduce $\t$ dependence, writing $\thetahat(\t)$ in place of
$\thetahat$.
By \thmref{bayes_clt_vm},
\begin{align*}
\meannm \ordlog{N^{-1/2}} \resid{\g n m}(\x_n, \x_m) \plim{}& 0,
\end{align*}

By \lemref{thetahat_consistent}, the terms $\hat{M}_4$ and $\hat{M}_6$ in
\thmref{bayes_clt_vm}
converge in probability to constants, and by smoothness of $\g(\theta)$ and
convergence by \lemref{thetahat_consistent} of $\thetahat$ to a constant,
$\sup_{\t \in [0,1]} \norm{\g(\thetahat(\t))}_2$ and $\sup_{\t \in [0,1]}
\likhatk{3}(\thetahat(\t))$ are both $\ordp{1}$.  So it suffices to analyze, for
$j \in \{1,2\}$,
\begin{align*}
\MoveEqLeft
\sup_{\t \in [0,1]}
\norm{
\frac{1}{N^2} \sum_{n=1}^N \sum_{m=1}^N \sqrt{N}
    \ellunderbargrad{1}(\x_n \vert \thetahat(\t))
    \ellunderbargrad{j}(\x_m \vert \thetahat(\t))
}_2
={}\\&
\sup_{\t \in [0,1]}
\norm{
\frac{1}{\sqrt{N}} \sumn
    \ellunderbargrad{1}(\x_n \vert \thetahat(\t))
\meanm
    \ellunderbargrad{j}(\x_m \vert \thetahat(\t))
}
\le{}\\&
\sup_{\t \in [0,1]} \norm{
\frac{1}{\sqrt{N}} \sumn
    \ellunderbargrad{1}(\x_n \vert \thetahat(\t))
}_2
\sup_{\t \in [0,1]} \norm{
\meanm
    \ellunderbargrad{j}(\x_m \vert \thetahat(\t))
}_2
% = \\&
% \ordp{1}
% \sup_{\t \in [0,1]} \norm{
% \meanm
%     \ellunderbargrad{j}(\x_m \vert \thetahat(\t))
% }_2
\\&\plim{} 0,
\end{align*}
where the final line follows from \lemref{ulln}, and the fact that
$\sup_{\t \in [0,1]} \thetahat(\t) \plim \thetatrue$ by
\lemref{bclt_okay_t}.

It follows that
\begin{align*}
\sup_{\t \in [0,1]}
\norm{
\meannm
    N^{5/4} \expect{\postft}{\phi^{\g n m}(\theta)}}_2 \plim 0.
\end{align*}

Analogous reasoning applied to the remaining terms of \eqref{resid_centered},
but using \thmref{bayes_clt_main} instead of \thmref{bayes_clt_vm}, gives that
each term involving a $\ell(\x_n \vert \theta)$ is of order $N^{-3/2}$ when
summed over $N$ --- a of $N$ for the expansion of \thmref{bayes_clt_main}, and a
factor of $\sqrt{N}$ from the fact that $\sup_{\t \in [0,1]}\frac{1}{\sqrt{N}}
\sumn \ellgrad{k}(\x_n \vert \thetahat(\t)) = \ordp{1}$ by \lemref{ulln}.
Similarly, $\expect{\postfn}{\phi^{n m}(\theta)}$ is order $N^{-2}$,
since we get one factor of $\sqrt{N}$ for each of
$\ell(\x_n \vert \theta)$ and $\ell(\x_m \vert \theta)$.
Specifically,
\begin{align*}
\sup_{\t \in [0,1]}
\norm{
\meann N^{3/2} \expect{\postft}{\phi^{\g n}(\theta)}
}_2 =& \ordp{1}
\\
\sup_{\t \in [0,1]}  \norm{
    \meann  N^{3/2} \varepsilon_n
}_2 =& \ordp{1}
\\
\sup_{\t \in [0,1]}
\norm{
\meannm N^2 \expect{\postft}{\phi^{n m}(\theta)}
}_2 =& \ordp{1}
\\
\sup_{\t \in [0,1]} N \varepsilon_\g =& \ordp{1}.
\end{align*}
Combining, we see that
\begin{align*}
\sup_{\t \in [0,1]}
\norm{\meannm
N^{5/2} \expect{\postft}{
    \gbar(\theta)
    \ellbarbar(\x_n \vert \theta)
    \ellbarbar(\x_m \vert \theta)
    }}_2 = \ordp{N^{-1/2}} \plim 0.
\end{align*}

It follows each of term of \eqref{resid_centered} converges to zero in
probability.

It remains to show that $\phi^{\g n m}(\theta)$ is fifth-order BCLT-okay
according to \defref{bclt_okay} with $\indexset = [N] \times [N]$, uniformly in
$\t$, using \assuref{bayes_clt_vm}.

First, \defref{no_data_bclt_okay} \itemref{eprior_finite,cont_diffable} are
satisfied by \assuref{bayes_clt_vm} and \assuref{bayes_clt}, respectively, and
so \defref{bclt_okay} \itemref{bclt_okay_as} is satisfied. \Defref{bclt_okay}
\itemref{prior_op1} is also satisfied by \assuref{bayes_clt_vm}.

Finally, for \defref{bclt_okay} \itemref{grad_op1},
\begin{align*}
\MoveEqLeft
\sup_{\theta \in \thetaball{\delta}}
    \meannm \norm{\g(\theta)
                  \ell(\x_n \vert \theta) \ell(\x_m \vert \theta)}_2^2
\le{}\\&
\sup_{\theta \in \thetaball{\delta}}
    \norm{g(\theta)}
\sup_{\theta \in \thetaball{\delta}}
    \meann \norm{\ell(\x_n \vert \theta)}_2^2
\sup_{\theta \in \thetaball{\delta}}
    \meanm \norm{\ell(\x_m \vert \theta)}_2^2,
\end{align*}
which is $\ordp{1}$ by \assuref{bayes_clt} and \lemref{ulln} for sufficiently
small $\delta$.  Since, by \lemref{bclt_okay_t}, $\thetahat(\t) \in
\thetaball{\delta}$ with probability approaching one for any $\delta > 0$,
uniformly in $\t \in [0,1]$, it follows that   $\phi^{\g n m}(\theta)$ is
fifth-order BCLT-okay.

%%%%%%%%%%%%%%%%%%%%%%%%%%%%%%%%%%%%%%%%%%

Finally, assume that $(\infltil_n)^2$ is uniformly integrable
with respect to $\fdist(\x_n)$ as $N \rightarrow \infty$.  Then
\begin{align*}
    \frac{1}{\sqrt{N}} \meann \infltil_n - \expect{\fdist{\x_n}}{\infltil_n}
    \dlim 
    \normdist\left(0, \lim_{N\rightarrow\infty} \var{\fdist(\x_n)}{\infltil_n}
    \right)
\end{align*}
by the FCLT, using the fact that $\infltil_n$ depens on $\x_n$ only
through the argument $\ell(\x_n \vert \theta)$ and not through the
posterior expectation, which is evaluated at $\fdist$.

By \lemref{bclt_okay_t}, \thmref{ij_consistent} applies
to $\infltil_n$ as well as to $\infl_n$, and that the two
converge to the same limit, since $\thetahat(0)$ and $\thetahat(1)$
both converge in probability to the same quantity.  From this, it follows
that
\begin{align*}
    \lim_{N\rightarrow\infty} \covhat{[N]}{\infltil_n} \rightarrow
    \gcovtrue = \lim_{N\rightarrow\infty} \var{\fdist(\x_n)}{\infltil_n}.
\end{align*}

Finally, the limiting distribution of $\sqrt{N}\left(\expect{\post}{\g(\theta)}
- \expect{\postf}{\g(\theta)} \right)$ follows from plugging into
\eqref{post_expansion} and applying Slutsky's theorem.

\end{proof}

%%%%%%%%%%%%%%%%%%%%%%%%%%%%%%%%%%%%%%%%%%

\subsection{Supporting results for the von Mises expansion}\applabel{von_mises_helpers}
%%%%%%%%%%%%%%%%%%%%%%%%%%%%%%%%%%%%%%%%%%%%%%%%%%%%%%%%%%%%%%%%%%%%%%%%%%%%%%%
%%%%%%%%%%%%%%%%%%%%%%%%%%%%%%%%%%%%%%%%%%%%%%%%%%%%%%%%%%%%%%%%%%%%%%%%%%%%%%%
%%%%%%%%%%%%%%%%%%%%%%%%%%%%%%%%%%%%%%%%%%%%%%%%%%%%%%%%%%%%%%%%%%%%%%%%%%%%%%%
\subsubsection{Centered expectations}

\begin{lem}\lemlabel{centered_expectations}
    Concentration of centered expectations.  Consider any $\post$-integrable
    functions $a(\theta)$, $b(\theta)$, and $c(\theta)$.
    Define
    \begin{align*}
    \bar{a}(\theta) := a(\theta) - \expect{\post}{a(\theta)}
    \textrm{, }\quad
    \hat{a}(\theta) := a(\theta) - a(\thetahat)
    \textrm{, and }
    \varepsilon_a := a(\thetahat) - \expect{\post}{a(\theta)}.
    \end{align*}
    with analogous definitions for $b$ and $c$.  Then
    \begin{align*}
    \expect{\post}{\bar{a}(\theta) \bar{b}(\theta)} ={}&
    \expect{\post}{\hat{a}(\theta) \hat{b}(\theta)} - \varepsilon_a \varepsilon_b.
    \end{align*}
    and
    \begin{align*}
    \MoveEqLeft
    \expect{\post}{\bar{a}(\theta) \bar{b}(\theta) \bar{c}(\theta)} =
    \expect{\post}{\hat{a}(\theta) \hat{b}(\theta) \hat{c}(\theta)} +
    \\&
        \expect{\post}{\hat{a}(\theta) \hat{b}(\theta)} \varepsilon^c +
        \expect{\post}{\hat{a}(\theta) \hat{c}(\theta)} \varepsilon^b +
        \expect{\post}{\hat{b}(\theta) \hat{c}(\theta)} \varepsilon^a -
        2 \varepsilon^a \varepsilon^b \varepsilon^c.
    \end{align*}
    \begin{proof}
    Formally adding and subtracting,
    $\bar{a}(\theta) = \hat{a}(\theta) + \varepsilon_a$, so
    \begin{align*}
    \expect{\post}{\bar{a}(\theta) \bar{b}(\theta)}
    ={}&
    \expect{\post}{\hat{a}(\theta)} \varepsilon_b +
    \expect{\post}{\hat{a}(\theta) \hat{b}(\theta)} +
    \expect{\post}{\hat{b}(\theta)} \varepsilon_a +
    \varepsilon_a \varepsilon_b
    \\={}&
    \expect{\post}{\hat{a}(\theta) \hat{b}(\theta)} - \varepsilon_a \varepsilon_b.
    \end{align*}
    Similarly,
    \begin{align*}
    \MoveEqLeft
    \expect{\post}{\bar{a}(\theta) \bar{b}(\theta) \bar{c}(\theta)} =
    \\&
    \expect{\post}{\hat{a}(\theta) \hat{b}(\theta) \hat{c}(\theta)} +
    % \\&
        \expect{\post}{\hat{a}(\theta) \hat{b}(\theta)} \varepsilon^c +
        \expect{\post}{\hat{a}(\theta) \hat{c}(\theta)} \varepsilon^b +
        \expect{\post}{\hat{b}(\theta) \hat{c}(\theta)} \varepsilon^a +
    \\&
        \varepsilon^a \varepsilon^b \expect{\post}{\hat{c}(\theta) } +
        \varepsilon^a \varepsilon^c \expect{\post}{\hat{b}(\theta) } +
        \varepsilon^b \varepsilon^c \expect{\post}{\hat{a}(\theta) } +
        \varepsilon^a \varepsilon^b \varepsilon^c =
    \\\MoveEqLeft
    \expect{\post}{\hat{a}(\theta) \hat{b}(\theta) \hat{c}(\theta)} +
    \\&
        \expect{\post}{\hat{a}(\theta) \hat{b}(\theta)} \varepsilon^c +
        \expect{\post}{\hat{a}(\theta) \hat{c}(\theta)} \varepsilon^b +
        \expect{\post}{\hat{b}(\theta) \hat{c}(\theta)} \varepsilon^a -
        2 \varepsilon^a \varepsilon^b \varepsilon^c.
    \end{align*}
    \end{proof}
    \end{lem}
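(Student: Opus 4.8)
The plan is to reduce both identities to a single elementary observation: $\bar{a}(\theta)$ and $\hat{a}(\theta)$ differ by the \emph{constant} $\varepsilon_a$. Adding and subtracting $a(\thetahat)$ gives
\begin{align*}
\bar{a}(\theta) = a(\theta) - \expect{\post}{a(\theta)} = \bigl(a(\theta) - a(\thetahat)\bigr) + \bigl(a(\thetahat) - \expect{\post}{a(\theta)}\bigr) = \hat{a}(\theta) + \varepsilon_a,
\end{align*}
and likewise for $b$ and $c$. Taking a posterior expectation of this identity yields the companion fact $\expect{\post}{\hat{a}(\theta)} = \expect{\post}{a(\theta)} - a(\thetahat) = -\varepsilon_a$, and this is precisely what makes the cross terms collapse below.

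For the two-factor identity I would substitute $\bar{a} = \hat{a} + \varepsilon_a$ and $\bar{b} = \hat{b} + \varepsilon_b$, expand the product by linearity of the formal expectation operator into the four terms $\expect{\post}{\hat{a}\hat{b}}$, $\varepsilon_b \expect{\post}{\hat{a}}$, $\varepsilon_a \expect{\post}{\hat{b}}$, and $\varepsilon_a\varepsilon_b$, and then replace $\expect{\post}{\hat{a}} = -\varepsilon_a$ and $\expect{\post}{\hat{b}} = -\varepsilon_b$. The three $\varepsilon$-terms then sum to $-\varepsilon_a\varepsilon_b - \varepsilon_a\varepsilon_b + \varepsilon_a\varepsilon_b = -\varepsilon_a\varepsilon_b$, giving $\expect{\post}{\bar{a}\bar{b}} = \expect{\post}{\hat{a}\hat{b}} - \varepsilon_a\varepsilon_b$.

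For the three-factor identity the same substitution expands $(\hat{a}+\varepsilon_a)(\hat{b}+\varepsilon_b)(\hat{c}+\varepsilon_c)$ into eight terms. Under $\expect{\post}{\cdot}$, the term with no $\varepsilon$'s is $\expect{\post}{\hat{a}\hat{b}\hat{c}}$; the three terms carrying a single $\varepsilon$ are exactly $\varepsilon_c\expect{\post}{\hat{a}\hat{b}} + \varepsilon_b\expect{\post}{\hat{a}\hat{c}} + \varepsilon_a\expect{\post}{\hat{b}\hat{c}}$; the three terms carrying two $\varepsilon$'s become $-3\varepsilon_a\varepsilon_b\varepsilon_c$ after using $\expect{\post}{\hat{a}}=-\varepsilon_a$ and its analogues; and the constant term contributes $+\varepsilon_a\varepsilon_b\varepsilon_c$. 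Combining the last two groups gives $-2\varepsilon_a\varepsilon_b\varepsilon_c$, which is the claimed expression. The entire argument is routine bookkeeping of signs and combinatorial coefficients; there is no real obstacle. The only point deserving a remark is that each displayed expectation must be well defined, i.e.\ the products $\hat{a}\hat{b}$ and $\hat{a}\hat{b}\hat{c}$ should be $\post$-integrable — this holds in every application of the lemma in the paper and may be taken as an implicit hypothesis, or deduced from square- (resp.\ cube-) integrability of $a,b,c$ via H\"older's inequality.
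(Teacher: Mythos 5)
Your proposal is correct and follows essentially the same route as the paper: substitute $\bar{a}(\theta) = \hat{a}(\theta) + \varepsilon_a$ (and likewise for $b$, $c$), expand by linearity, and collapse the cross terms using $\expect{\post}{\hat{a}(\theta)} = -\varepsilon_a$. The sign bookkeeping in both the two- and three-factor identities checks out, and your remark on integrability of the products is a reasonable (implicit) hypothesis consistent with how the lemma is used.
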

    %%%%%%%%%%%%%%%%%%%%%%%%%%%%%%%%%%%%%%%%%%%%%%%%%%%%%%%%%%%%%%%%%%%%%%%%%%%%%%%

%%%%%%%%%%%%%%%%%%%%%%%%%%%%%%%%%%%%%%%%%%%%%%%%%%%%%%%%%%%%%%%%%%%%%%%
%%%%%%%%%%%%%%%%%%%%%%%%%%%%%%%%%%%%%%%%%%%%%%%%%%%%%%%%%%%%%%%%%%%%%%%
%%%%%%%%%%%%%%%%%%%%%%%%%%%%%%%%%%%%%%%%%%%%%%%%%%%%%%%%%%%%%%%%%%%%%%%
%%%%%%%%%%%%%%%%%%%%%%%%%%%%%%%%%%%%%%%%%%%%%%%%%%%%%%%%%%%%%%%%%%%%%%%

\subsubsection{Applicability of \thmref{bayes_clt_main} to $\ftdist$}\applabel{ftdist}

For a fixed $\t$, we can define the analogues of \defref{loglik_def} as
\begin{align*}
\likhat(\theta \vert \t) :={}& \t \likhat(\theta) + (1 - \t)\lik(\theta)
\quad\textrm{and}\quad
\thetahat(\t) :={} \argmax_{\theta \in \thetadom} \likhat(\theta \vert \t),
\end{align*}
and so on.  Note that $\expect{\fdist(\xn)}{\ell(\xn \vert \theta, \t)} =
\lik(\theta)$, so $\lik(\theta)$ and its derived quantities need no $\t$
dependence.  
An advantage of our chosen parameterization is that,
for a fixed $\t$, we can apply \thmref{bayes_clt_main} to $\postft$ for a
suitably chosen likelihood. Specifically, for a fixed $\t$,
\begin{align*}
\int \ell(\xn \vert \theta) \ftdist(d \xn) ={}&
\t \meann \ell(\x_n \vert \theta) + (1 - \t) \lik(\theta)
=: \meann \ell(\xn \vert \theta, \t)
\\\textrm{where }
\ell(\xn \vert \theta, \t) :={}& \t \ell(\xn \vert \theta) + (\t - 1)
\lik(\theta).
\end{align*}

Using \lemref{bclt_okay_t}, we can apply \thmref{bayes_clt_main} to $\p(\theta
\vert \ftdist, N)$ with no modification other than to replace the finite-sample
quantities of \defref{loglik_def} with their analogues for $\ell(\xn \vert
\theta, \t)$, which will depend on $\t$ but will converge to their population
quantities uniformly in $\t \in [0, 1]$.

We then have the following lemma.

%%%%%%%%%%%%%%%%%%%%%%%%%%%%%%%%%%%%%%%%%%%%%%%%%%%%%%%%%%%%%%%%%%%%%%%
%%%%%%%%%%%%%%%%%%%%%%%%%%%%%%%%%%%%%%%%%%%%%%%%%%%%%%%%%%%%%%%%%%%%%%%
%%%%%%%%%%%%%%%%%%%%%%%%%%%%%%%%%%%%%%%%%%%%%%%%%%%%%%%%%%%%%%%%%%%%%%%

\begin{lem}\lemlabel{bclt_okay_t}
Assume that both $\ell(\xn \vert \theta)$ and $\lik(\theta)$ satisfy the
conditions of \assuref{bayes_clt} and are $K$-th order BCLT-okay with index set
$\indexset = [N]$.  Then, for any $\t \in [0,1]$, $\ell(\xn \vert \theta, \t)$
also satisfies \assuref{bayes_clt} and is $K$-th order BCLT-okay for $\indexset =
[N]$.
Further, $\sup_{\t \in [0,1]} \thetahat(\t) \plim{} \thetatrue$.
\begin{proof}
\Assuref{bayes_clt} \itemref{finite_dim,prior_smooth,prior_proper} do not depend
on the particular form $\ell(\xn \vert \theta, \t)$, and \assuref{bayes_clt}
\itemref{loglik_smooth,loglik_ulln,strict_opt} hold since $\ell(\xn \vert
\theta, \t)$ is a linear combination of  $\ell(\xn \vert \theta)$ and
$\lik(\theta)$.  Therefore \assuref{bayes_clt} holds for $\ell(\xn \vert \theta,
\t)$.

Since $\thetadim$ is finite and remains fixed by \assuref{bayes_clt},
\begin{align*}
\expect{\prior(\theta)}{\norm{\ellgrad{k}(\xn \vert \theta, \t)}_2} \le{}&
\thetadim \expect{\prior(\theta)}{\norm{\ellgrad{k}(\xn \vert \theta, \t)}_\infty}
\\\le{}&
\sqrt{\thetadim}\left(
\expect{\prior(\theta)}{\norm{\ellgrad{k}(\xn \vert \theta)}_\infty} +
\expect{\prior(\theta)}{\norm{\likk{k}(\theta)}_\infty}
\right)< \infty,
\end{align*}
so \defref{no_data_bclt_okay} \itemref{eprior_finite} is satisfied.
\Defref{no_data_bclt_okay} \itemref{cont_diffable} is satisfied by linearity,
and so \defref{bclt_okay} \itemref{bclt_okay_as} is also satisfied.

As above, we can write
\begin{align*}
\MoveEqLeft
\norm{
    \t \ellgrad{k}(\xn \vert \theta) + (1- \t)\likk{k}(\theta)
    }^2_2
\le{}\\&
\thetadim\left(
\means \norm{
    \ellgrad{k}(\xn \vert \theta) }_\infty +
\means \norm{
    \likk{k}(\theta)}^\infty
\right),
\end{align*}
from which it follows that \defref{bclt_okay} \itemref{prior_op1,grad_op1}
are satisfied.

For the convergence of $\sup_{\t \in [0,1]} \thetahat(\t)$, reason as in
\lemref{thetahat_consistent} to see that
\begin{align*}
\norm{\thetahat(\t) - \thetatrue}_2
    \le{}& 2 \infoev^{-1} \norm{\likhatk{1}(\thetatrue | \t)}_2
 ={}
2 \infoev^{-1} \t
\norm{\likhatk{1}(\thetatrue)}_2 \plim 0.
\end{align*}
\end{proof}
\end{lem}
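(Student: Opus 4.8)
The plan is to exploit the single structural fact that $\ell(\xn\vert\theta,\t)$ is an affine combination, with coefficients $\t,1-\t\in[0,1]$, of the random log-likelihood $\ell(\xn\vert\theta)$ and the deterministic limiting log-likelihood $\lik(\theta)$, and that this second summand contributes no stochastic fluctuation while carrying its maximizer at $\thetatrue$ with negative Hessian $\info$ positive definite. Consequently the first assertion is a checklist. In \assuref{bayes_clt}, clauses \itemref{finite_dim}, \itemref{prior_smooth}, \itemref{prior_proper} concern only the dimension and the prior and are untouched; \itemref{loglik_smooth} is inherited because an affine combination of $C^4$ maps is $C^4$ and the differentiation/expectation interchange passes termwise; \itemref{loglik_ulln} follows from $\sup_{\theta\in\thetaball{\deltaulln}}\norm{\ellgrad{4}(\xn\vert\theta,\t)}_2\le M(\xn)+C$ with $C:=\sup_{\theta\in\thetaball{\deltaulln}}\norm{\likk{4}(\theta)}_2<\infty$ deterministic and $\expect{\fdist(\xn)}{(M(\xn)+C)^2}<\infty$; and \itemref{strict_opt} is inherited because the population log-likelihood of $\ell(\cdot\vert\cdot,\t)$ is again $\lik(\theta)$ (same unique maximizer $\thetatrue$, same positive-definite $\info$), while $\meann\ell(\x_n\vert\thetatrue,\t)-\meann\ell(\x_n\vert\theta,\t)=\t\bigl(\meann\ell(\x_n\vert\thetatrue)-\meann\ell(\x_n\vert\theta)\bigr)+(1-\t)\bigl(\lik(\thetatrue)-\lik(\theta)\bigr)$ is a sum of a term whose supremum over $\thetadom\setminus\thetaball{\delta}$ is positive with probability approaching one and a deterministic nonnegative term.

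For the BCLT-okay conditions ($K$-th order, index set $\indexset=[N]$), I would first reduce to the equivalence of the $\ell_2$ and $\ell_\infty$ norms on $\mathbb{R}^{\thetadim}$ (valid since $\thetadim$ is fixed), then dominate every derivative norm of $\ell(\xn\vert\theta,\t)$ via $\norm{\ellgrad{k}(\xn\vert\theta,\t)}_2\le\norm{\ellgrad{k}(\xn\vert\theta)}_2+\norm{\likk{k}(\theta)}_2$ together with $\norm{a+b}_2^2\le 2\norm{a}_2^2+2\norm{b}_2^2$. Since $\ell(\xn\vert\theta)$ is $K$-th order BCLT-okay by hypothesis and $\lik(\theta)$ is deterministic, continuously differentiable to order $K$, and has finite prior expectation, the $\fdist$-almost-sure finiteness and smoothness requirements of \defref{no_data_bclt_okay} are inherited (hence the almost-sure clause of \defref{bclt_okay}), and the two $\ordp{1}$-sum requirements of \defref{bclt_okay} — on $\meann\expect{\prior(\theta)}{\norm{\ell(\x_n\vert\theta,\t)}_2^2}$ and on $\sup_{\theta\in\thetaball{\delta}}\meann\norm{\ellgrad{k}(\x_n\vert\theta,\t)}_2^2$ — follow because a sum of an $\ordp{1}$ term and an $\ord{1}$ term is $\ordp{1}$.

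For $\sup_{\t\in[0,1]}\thetahat(\t)\plim\thetatrue$, I would rerun the argument of \lemref{thetahat_consistent} while carrying the supremum over $\t$. A rough-consistency step, based on the $\t$-uniform strict separation just verified, places $\thetahat(\t)$ in a fixed small ball around $\thetatrue$ on a single event of probability approaching one. On that event, Taylor-expand $\likhatk{1}(\cdot\vert\t)$ about $\thetahat(\t)$ and evaluate at $\thetatrue$; using $\likhatk{1}(\thetahat(\t)\vert\t)=0$ and invertibility of $\tresid{2}{\likhat(\cdot\vert\t),\thetahat(\t),\thetatrue}$, whose minimum eigenvalue is at least $\infoev/2$ for all $\t$ simultaneously (by Weyl's inequality, since $-\likhatk{2}(\theta\vert\t)=\t(-\likhatk{2}(\theta))+(1-\t)(-\likk{2}(\theta))$ and each summand has minimum eigenvalue bounded below by $\infoev/2$ on the ball with probability approaching one), gives $\norm{\thetahat(\t)-\thetatrue}_2\le 2\infoev^{-1}\norm{\likhatk{1}(\thetatrue\vert\t)}_2$. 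The decisive simplification is $\likhatk{1}(\thetatrue\vert\t)=\t\likhatk{1}(\thetatrue)+(1-\t)\likk{1}(\thetatrue)=\t\,\likhatk{1}(\thetatrue)$, because $\likk{1}(\thetatrue)=0$ by first-order optimality of $\thetatrue$ for the smooth $\lik$; hence $\sup_{\t\in[0,1]}\norm{\thetahat(\t)-\thetatrue}_2\le 2\infoev^{-1}\norm{\likhatk{1}(\thetatrue)}_2\plim 0$ by a law of large numbers applied to the IID scores $\ellgrad{1}(\x_n\vert\thetatrue)$ at the fixed point $\thetatrue$.

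The main obstacle is purely the uniformity in $\t$ of the consistency step: one must arrange both the rough-consistency/strict-separation bound and the lower bound on the minimum eigenvalue of the Taylor-residual Hessian with constants that do not depend on $\t$, so that a single supremum can be pulled outside and the whole deviation controlled by the stochastically small quantity $\norm{\likhatk{1}(\thetatrue)}_2$. Once that is in place, everything else reduces to the linearity of $\ell(\xn\vert\theta,\t)$ in the pair $(\ell(\xn\vert\theta),\lik(\theta))$ and the deterministic nature of the second component — routine bookkeeping with triangle inequalities and the hypotheses already supplied by \assuref{bayes_clt} and the $K$-th order BCLT-okayness of $\ell(\xn\vert\theta)$ and $\lik(\theta)$.
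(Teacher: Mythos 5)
Your proposal is correct and follows essentially the same route as the paper: a checklist verification of \assuref{bayes_clt} via linearity, norm-equivalence plus the triangle inequality for the BCLT-okay conditions, and the Taylor-expansion argument of \lemref{thetahat_consistent} combined with the identity $\likhatk{1}(\thetatrue\vert\t)=\t\,\likhatk{1}(\thetatrue)$ (since $\likk{1}(\thetatrue)=0$) for the uniform consistency. If anything, your treatment of the uniformity in $\t$ (a single high-probability event for the separation and eigenvalue bounds) is more explicit than the paper's, which simply defers to the earlier lemma.
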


%%%%%%%%%%%%%%%%%%%%%%%%%%%%%%%%%%%%%%%%%%%%%%%%%%%%%%%%%%%%%%%%%%%%%%%
%

\subsection{Supporting results for high--dimensional exponential families}

%%%%%%%%%%%%%%%%%%%%%%%%%%%%%%%%%%%%%%%%%%

We begin by defining some additional quantities.  

\begin{defn}\deflabel{high_dim_exp_terms_more}
First, define the stacked vectors
\begin{align*}
\eta(\gamma, \lambda) = 
    \begin{pmatrix}
        \eta_1(\gamma, \lambda) \\
        \vdots \\
        \eta_G(\gamma, \lambda)
    \end{pmatrix} \in \rdom{\ydim G}
\quad\textrm{and}\quad
\z_n =
\begin{pmatrix}
    a_{n1} y_n \\
    \vdots\\
    a_{nG} y_n
\end{pmatrix} =
\a_n \otimes \y_n
\in \rdom{\ydim G},
\end{align*}
so that the log likelihood of a single observation can be written as
\begin{align*}
\ell(\x_n \vert \gamma, \lambda) ={}& 
    \sumg a_{ng} \y_n^\trans \eta_g(\gamma, \lambda) ={}
    \z_n^\trans \eta(\gamma, \lambda).
\end{align*}
Then, define
    \begin{align*}
        \m :={}& (m_1^\trans, \ldots, m_G^\trans )^\trans \in \rdom{D G}\\
        \S :={}& \diag{\S_1, \ldots, \S_G} \in \rdom{D G} \times \rdom{D G} \\
        \mu(\gamma) :={}& (\mu_1(\gamma)^\trans, \ldots, \mu_G^\trans(\gamma) )^\trans \in \rdom{D G} \\
        \M  :={}&
            N^2 \expect{\postgf}{\gbar(\gamma) \mubar(\gamma) \mubar(\gamma)^\trans} 
            \in \rdom{D G} \times \rdom{D G} 
            \\
        \L :={}& 
            N \expect{\postgf}{\gbar(\gamma) \cov{\postlt}{\eta(\gamma, \lambda)} } 
            \in \rdom{D G} \times \rdom{D G} 
    \end{align*}.
\end{defn}

%%%%%%%%%%%%%%%%%%%%%%%%%%%%%%%%%%%%%

In terms of \defref{high_dim_exp_terms_more}, we have
\begin{align*}
    \expect{\fdist(\x_n)}{\a_{ng} \y_n} ={}& 
        \expect{\fdist(\a_n)}{\expect{\y_n \vert \a_{ng} = 1}{\y_n}} ={} 
        \frac{1}{G} m_g \Rightarrow \\
    \expect{\fdist(\x_n)}{\z_n} ={}& \frac{1}{G} m
    \quad\textrm{and}\quad
    \zbar_n = \z_n - \frac{1}{G}m  \Rightarrow \\
    \expect{\fdist(\x_n)}{\zbar_n \zbar_n^\trans } ={}&
    \expect{\fdist(\x_n)}{\left( \z_n - \frac{1}{G} m \right) \left( \z_n - \frac{1}{G} m \right)^\trans } 
    \\={}&
    \expect{\fdist(\x_n)}{\z_n \z_n^\trans} - \frac{1}{G^2}  m m^\trans 
    \\={}&
    \frac{1}{G} \S - \frac{1}{G^2}  m m^\trans 
\end{align*}
Also,
\begin{align*}
\expect{\postglt}{\eta(\gl)} 
    % = \expect{\postgt}{\eta(\gl)} + \mu(\gamma) - \mu(\gamma) 
    ={}& \expect{\postgt}{\mu(\gamma)} \Rightarrow \\
\etabar(\gl) ={}&
    \eta(\gl) - \expect{\postgt}{\mu(\gamma)}  
\\={}&
    \eta(\gl) - \mu(\gamma) + \mu(\gamma) - \expect{\postgt}{\mu(\gamma)}  
\\={}&
    \eta(\gl) - \mu(\gamma) + \mubar(\gamma).
% \cov{\ftdist}{\x_n \x_n^\trans} ={}& \frac{1}{G} \S - \frac{1}{G^2} m m^\trans.
\end{align*}
So we can write
$\ellbarbar(\x_n \vert \gamma, \lambda) = 
\zbar_n^\trans \etabar(\gamma, \lambda)$.

Plugging this and \defref{high_dim_exp_terms_more} into \eqref{highdim-resid} gives
\begin{align*}
\htil(\x_n, \x_m) ={}&
N^2 \expect{\postglt}{
    \gbar(\gamma)
    \ellbarbar(\x_n \vert \gl)
    \ellbarbar(\x_m \vert \gl)    
    }
\\={}&
N^2 \expect{\postglt}{
    \gbar(\gamma)
    \left( \zbar_n^\trans \etabar(\gamma, \lambda) \right)
    \left( \zbar_m^\trans \etabar(\gamma, \lambda) \right)
}
\\={}&
N^2 \zbar_m^\trans
\expect{\postglt}{
    \gbar(\gamma)
    \etabar(\gamma, \lambda)
    \etabar(\gamma, \lambda)^\trans 
} \zbar_n 
\\={}&
N^2 \zbar_m^\trans
\expect{\postgt}{
    \gbar(\gamma)
    \expect{\postlt}{
        \left( \eta(\gl) - \mu(\gamma) + \mubar(\gamma) \right)
        \left( \eta(\gl) - \mu(\gamma) + \mubar(\gamma) \right)^\trans
    }
} \zbar_n 
\\={}&
N^2 \zbar_m^\trans
\expect{\postgt}{
    \gbar(\gamma)
    \left( 
        \cov{\postlt}{\eta(\gl)} + \mubar(\gamma)\mubar(\gamma)^\trans
    \right)
} \zbar_n 
\\={}&
\zbar_m^\trans \left(N \L + \M \right) \zbar_n.
\end{align*}

We thus have
\begin{align}\eqlabel{resid_nm}
    \resid{T}(1) ={}& 
        \frac{1}{N^2} \sumnm \zbar_m^\trans \left(N \L + \M \right) \zbar_n
    \nonumber \\={}&
    \frac{1}{N^2} \sumn N \zbar_n^\trans \L \zbar_n +
    \frac{1}{N^2} \sum_{n \ne m} N \zbar_n^\trans \L \zbar_m +
    \frac{1}{N^2} \sumn \zbar_n^\trans \M \zbar_n +
    \frac{1}{N^2} \sum_{n \ne m} \zbar_n^\trans \M \zbar_m.
\end{align}

% \subsubsection{Interpreting assumptions as matrix norms}

% For a symmetric matrix $A$, let $\norm{A}_{F}$ denote the Frobenius norm, with
% %
% \begin{align*}
%     \norm{A}_{F}^2 := \sum_{ij} A_{ij}^2 = \trace{A A^\trans}.
% \end{align*}

% Thus, in terms of \defref{high_dim_exp_terms_more}, the items of
% \assuref{high_dim_resid_assu} are equivalent to:
% %
% \begin{itemize}
% % \item[] \itemref{mean_norm}:
% %     $\frac{1}{G} \norm{m}^2 = \ordlog{1}$.
% \item[] \itemref{m_matrix_frob}:
%     $\frac{1}{G^2} \norm{ \S^{1/2} \M \S^{1/2} }_F^2 = 
%     \frac{1}{G^2} \trace{\M \S \M \S } =
%     \ordlog{1}$.
% \item[] \itemref{l_matrix_frob}:
%     $\frac{1}{G} \norm{\S^{1/2} \L \S^{1/2} }_F^2 =
%     \frac{1}{G} \trace{\L \S \L \S} =
%     \ordlog{1}$.
% \item[] \itemref{m_matrix_trace}:
%     $\frac{1}{G} \trace{\S^{1/2} \M \S^{1/2}} = 
%     \frac{1}{G} \trace{\M \S} = 
%     \ordlog{1}$.
% % \item[] \itemref{l_matrix_trace}:
% %     $\frac{1}{G} \trace{S^{1/2} \L \S^{1/2}} = 
% %     \frac{1}{G} \trace{\L \S} =
% %     \kappa + \littleop{1} \textrm{ for some }\kappa < \infty$.
% \end{itemize}
% %
% The preceding expressions are more compact and more useful in the proof,
% but the form of \assuref{high_dim_resid_assu} makes it clearer why one
% would expect them to hold.  When we refer below to items in 
% \assuref{high_dim_resid_assu}, it will typically be most useful to refer
% to the matrix versions.

%%%%%%%%%%%%%%%%%%%%%%%%%%%%%%%%%%%%%
%%%%%%%%%%%%%%%%%%%%%%%%%%%%%%%%%%%%%
%%%%%%%%%%%%%%%%%%%%%%%%%%%%%%%%%%%%%

\subsection{Proof of \thmref{resid_divergence}}\applabel{resid_divergence}

The result of \thmref{resid_divergence} follows from the following lemmas.

%%%%%%%%%%%%%%%%%%%%%%%%%%%%%%%%%%
% M term expected absolute value

%
\begin{lem}\lemlabel{high_dim_resid_order}
Under the conditions of \thmref{resid_divergence},
\begin{align*}
    \resid{T}(1) = 
    \frac{1}{N^2} \sumn N \zbar_n^\trans \L \zbar_n +
    \frac{1}{N^2} \sum_{n \ne m} N \zbar_n^\trans \L \zbar_m +
    \ordlog{N^{-1}} 
    =
    \frac{1}{N^2} \sumn N \zbar_n^\trans \L \zbar_n +
    \ordlog{G^{-1/2}} +
    \ordlog{N^{-1}}.
\end{align*}
\end{lem}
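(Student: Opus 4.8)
The plan is to start from the explicit decomposition \eqref{resid_nm}, which already exhibits $\resid{T}(1) = (A)+(B)+(C)+(D)$ with $(A):=\frac{1}{N^2}\sumn N\zbar_n^\trans\L\zbar_n$, $(B):=\frac{1}{N^2}\sum_{n\ne m}N\zbar_n^\trans\L\zbar_m$, $(C):=\frac{1}{N^2}\sumn\zbar_n^\trans\M\zbar_n$ and $(D):=\frac{1}{N^2}\sum_{n\ne m}\zbar_n^\trans\M\zbar_m$. Throughout, $\L$ and $\M$ are deterministic (functions of $\fdist$, $N$, $G$), and the only randomness is through the IID mean-zero vectors $\zbar_n$, for which $\expect{\fdist(\x_n)}{\zbar_n\zbar_n^\trans} = \Sigma_z := G^{-1}\S - G^{-2}\m\m^\trans$, so $0\preceq\Sigma_z\preceq G^{-1}\S$. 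The lemma follows once we show $(C)=\ordlog{N^{-1}}$, $(D)=\ordlog{N^{-1}}$ and $(B)=\ordlog{G^{-1/2}}$; the term $(A)$ is carried through unchanged.

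For the two off-diagonal bilinear terms I would use Chebyshev's inequality. Each is centered by independence and $\expect{\fdist(\x_n)}{\zbar_n}=0$, so it suffices to bound the variance. Expanding $\var{}{(D)}$ over quadruples $(n,m,n',m')$ with $n\ne m$, $n'\ne m'$, only the matched pairs $\{n,m\}=\{n',m'\}$ survive (an unmatched index factors out with mean zero), giving $\var{}{(D)} = \tfrac{2N(N-1)}{N^4}\trace{(\Sigma_z\M)^2}$, where we use that $\M=N^2\expect{\postgf}{\gbar(\gamma)\mubar(\gamma)\mubar(\gamma)^\trans}$ is symmetric. Now $\trace{(\Sigma_z\M)^2}=\norm{\Sigma_z^{1/2}\M\Sigma_z^{1/2}}_F^2=\trace{\M\Sigma_z\M\Sigma_z}$, and $A\mapsto\trace{\M A\M A}$ is monotone on the PSD cone for symmetric $\M$ (because $\M A\M=(\M A^{1/2})(\M A^{1/2})^\trans\succeq 0$, so the cross terms in $\trace{\M(A+B)\M(A+B)}$ are nonnegative for $B\succeq 0$); hence, using $\Sigma_z\preceq G^{-1}\S$ and the block-diagonal form $\S=\diag{\S_1,\dots,\S_G}$,
\[
\trace{(\Sigma_z\M)^2}\;\le\;\frac{1}{G^2}\trace{\S^{1/2}\M\S^{1/2}\S^{1/2}\M\S^{1/2}}\;=\;\frac{1}{G^2}\sum_{g,h}\norm{\S_g^{1/2}\M_{gh}\S_h^{1/2}}_F^2,
\]
which is $\ordlog{1}$ by \assuref{high_dim_resid_assu}~\itemref{m_matrix_frob}. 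Thus $\var{}{(D)}=\ordlog{N^{-2}}$ and $(D)=\ordlog{N^{-1}}$. The identical computation gives $\var{}{(B)}=\tfrac{2N(N-1)}{N^2}\trace{(\Sigma_z\L)^2}\le\tfrac{2}{G^2}\sum_{g,h}\norm{\S_g^{1/2}\L_{gh}\S_h^{1/2}}_F^2$, which by \assuref{high_dim_resid_assu}~\itemref{l_matrix_frob} is $\ordlog{G^{-1}}$ — here the $1/G$ rather than $1/G^2$ normalization in \itemref{l_matrix_frob} is exactly what is needed — so $(B)=\ordlog{G^{-1/2}}$.

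For the diagonal term $(C)=N^{-1}\meann\zbar_n^\trans\M\zbar_n$ I would instead use Markov's inequality on $\expect{}{\abs{(C)}}\le N^{-1}\expect{}{\abs{\zbar_n^\trans\M\zbar_n}}$, so it suffices that $\expect{}{\abs{\zbar_n^\trans\M\zbar_n}}=\ordlog{1}$. Writing $\zbar_n=G^{-1/2}\tilde z_n$ with $\tilde z_n$ the stack of the $\tilde{y}_{ng}$, split $\zbar_n^\trans\M\zbar_n=G^{-1}\sum_{g,h}\tilde{y}_{ng}^\trans\M_{gh}\tilde{y}_{nh}$ into its $g=h$ and $g\ne h$ parts. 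Since $\expect{}{\tilde{y}_{ng}\tilde{y}_{ng}^\trans}\preceq\S_g$, Cauchy--Schwarz gives $\expect{}{\abs{\tilde{y}_{ng}^\trans\M_{gg}\tilde{y}_{ng}}}\le\ydim\norm{\S_g^{1/2}\M_{gg}\S_g^{1/2}}_F\le\ydim\trace{\S_g}\norm{\M_{gg}}_F$, so the $g=h$ part has absolute expectation at most $\tfrac{\ydim}{G}\sumg\trace{\S_g}\norm{\M_{gg}}_F=\ordlog{1}$ by \itemref{m_matrix_avg_op}. For $g\ne h$ the key point is that, because each observation lies in a single group, at least one of $\tilde{y}_{ng},\tilde{y}_{nh}$ equals the deterministic $O(G^{-1/2})$ quantity $-\m_g/\sqrt G$ or $-\m_h/\sqrt G$ except on the probability-$2/G$ event $\{\a_{ng}=1\}\cup\{\a_{nh}=1\}$; conditioning on which group $n$ belongs to and bounding $\norm{\S_g^{-1/2}\m_g}_2\le 1$ (and likewise for $h$) together with the conditional second moments of $\S_g^{-1/2}\tilde{y}_{ng}$ yields $\expect{}{\abs{\tilde{y}_{ng}^\trans\M_{gh}\tilde{y}_{nh}}}\lesssim G^{-1}\norm{\S_g^{1/2}\M_{gh}\S_h^{1/2}}_F$. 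The extra factor $G^{-1}$ turns the off-diagonal sum into $\tfrac{1}{G^2}\sum_{g,h}\norm{\S_g^{1/2}\M_{gh}\S_h^{1/2}}_F\le\big(\tfrac{1}{G^2}\sum_{g,h}\norm{\S_g^{1/2}\M_{gh}\S_h^{1/2}}_F^2\big)^{1/2}=\ordlog{1}$ by \itemref{m_matrix_frob}. Hence $\expect{}{\abs{\zbar_n^\trans\M\zbar_n}}=\ordlog{1}$ and $(C)=\ordlog{N^{-1}}$. Combining the three estimates gives $\resid{T}(1)=(A)+(B)+\ordlog{N^{-1}}$, and absorbing $(B)=\ordlog{G^{-1/2}}$ gives the second display. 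I expect the main obstacle to be the variance bookkeeping for $(B)$ and $(D)$ — pinning down the surviving index pattern and, in particular, reducing $\trace{(\Sigma_z\L)^2}$ and $\trace{(\Sigma_z\M)^2}$ to the $\S$-weighted Frobenius quantities of \assuref{high_dim_resid_assu} via the block structure of $\S$ together with the PSD-monotonicity estimate — everything else is routine moment accounting.
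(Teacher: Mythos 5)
Your proposal is correct and follows the same skeleton as the paper's proof: the identical four-term decomposition of $\resid{T}(1)$ from \eqref{resid_nm}, a matched-pair second-moment computation for the two off-diagonal sums, and a first-absolute-moment bound for the diagonal $\M$-sum. Two of your sub-arguments take a genuinely different (and in one case more complete) route. For the off-diagonal sums, you bound $\trace{(\Sigma_z\M)^2}$ and $\trace{(\Sigma_z\L)^2}$, where $\Sigma_z = \frac{1}{G}\S - \frac{1}{G^2}\m\m^\trans$ is the covariance of $\zbar_n$, by dominating $\Sigma_z \preceq \frac{1}{G}\S$ and invoking monotonicity of $A \mapsto \trace{\M A \M A}$ on the PSD cone; the paper instead expands the product into four traces and controls the $\m\m^\trans$ cross-terms one by one via \lemref{m_rescaling}. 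Your route is shorter, reduces directly to the Frobenius quantities of \assuref{high_dim_resid_assu} \itemref{m_matrix_frob,l_matrix_frob}, and bypasses \lemref{m_rescaling} for these terms; both yield the same $\ordlog{N^{-2}}$ and $\ordlog{G^{-1}}$ variance rates and hence the same conclusion. For the diagonal term $\frac{1}{N^2}\sumn\zbar_n^\trans\M\zbar_n$, you split the quadratic form into its $g=h$ and $g\ne h$ block contributions, handling the latter by noting that for $g \ne h$ at least one block of $\zbar_n$ equals the deterministic vector $-\m_h/G$, which supplies the extra factor of $G^{-1}$ needed to close the bound with \itemref{m_matrix_frob}. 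The paper's displayed computation conditions on $\a_{ng}=1$ and reduces $\expect{\fdist(\x_n)}{\abs{\zbar_n^\trans\M\zbar_n}}$ to a quadratic form in the single block $\M_{gg}$ alone, which quietly drops exactly those off-diagonal block contributions; your version fills in that step. No gaps.
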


\begin{proof}
We bound the expected absolute value of the terms in \eqref{resid_nm}
one--by--one; then the result follows by the triangle inequality.
\begin{align*}
\expect{\xfdist}{\abs{ 
    \frac{1}{N^2} \sumn \zbar_n^\trans \M \zbar_n
    }} \le {}&
\expect{\xfdist}{
    \frac{1}{N^2} \sumn \abs{ \zbar_n^\trans \M \zbar_n
    }} 
\\= {}&
\frac{1}{N} \expect{\fdist(\x_n)}{
    \abs{ \zbar_n^\trans \M \zbar_n }} 
\\= {}&
\frac{1}{N} \sumg \expect{\fdist(\y_n | \a_{ng} = 1)}{
    \abs{ \zbar_n^\trans \M \zbar_n }} \p(\a_{ng} = 1)
\\= {}&
\frac{1}{N} \frac{1}{G} \sumg 
    \expect{\fdist(\y_n \vert \a_{ng} = 1)}{
    \abs{ \left(y_n - m_g / G \right)^\trans \M_g \left( y_n - m_g / G \right) }} 
\\\le {}&
\frac{1}{N} \frac{1}{G} \sumg 
    \expect{\fdist(\y_n \vert \a_{ng} = 1)}{
    \left(y_n - m_g / G \right)^\trans \left( y_n - m_g / G \right) 
    } \sqrt{\norm{\M_g}_F^2} 
\\={}&
\frac{1}{N} \frac{1}{G} \sumg 
\left(\trace{\S_g} - 2 m_g^2 / G + m_g^2 / G^2 \right)
     \sqrt{\norm{\M_g}_F^2} 
\\={}&
\frac{1}{N} \frac{1}{G} \sumg 
\left(\trace{\S_g} - m_g^2 \left( 2 / G - 1 / G^2\right) \right)
    \sqrt{\norm{\M_g}_F^2} 
\\\le{}&
\frac{1}{N} \frac{1}{G} \sumg  \trace{\S_g} \sqrt{\norm{\M_g}_F^2} 
\\={}& \ordlog{N^{-1}} 
\quad\textrm{by \assuref{high_dim_resid_assu} \itemref{m_matrix_avg_op}}.
\end{align*}

Next, we control the terms with $\sum_{n \ne m}$, beginning again 
with the $\M$ term.  By Jensen's inequality,
\begin{align*}
\expect{\xfdist}{\abs{
    \frac{1}{N^2} \sum_{n \ne m} \zbar_n^\trans \M \zbar_m
    }}
\le{}&
    \sqrt{
    \expect{\xfdist}{\left( 
    \frac{1}{N^2} \sum_{n \ne m} \zbar_n^\trans \M \zbar_m
    \right)^2}}
\\={}&
\sqrt{
\expect{\xfdist}{ 
    \frac{1}{N^4} 
        \sum_{n \ne m} \sum_{n \ne m}
        \zbar_n^\trans \M \zbar_m         \zbar_{n'}^\trans \M \zbar_{m'}
}}.
\end{align*}
Since $\z_n \perp \z_m$ for $n \ne m$, and $\expect{\fdist(\x_n)}{\zbar_n} = 0$,
terms in the preceding sum are zero unless $n = n'$ and $m = m'$ or $n = m'$ and
$m = n'$.  Since $n \ne m$ and $n' \ne m'$, there are $N$ choices for $n$, and
then $N - 1$ choices for $m$, from which there are two nonzero choices for
$n'$ and $m'$.  From this and the fact that $\z_n$ are IID, we get that
\begin{align*}
\expect{\xfdist}{\left( 
    \frac{1}{N^2} \sum_{n \ne m} \zbar_n^\trans \M \zbar_m
    \right)^2} 
={}&
\frac{2 N (N - 1)}{N^4} 
    \expect{\fdistnm}{\zbar_n^\trans \M \zbar_m \zbar_m^\trans \M \zbar_n}
\\={}&
\ord{N^{-2}} \trace{
    \expect{\fdistnm}{\zbar_n \zbar_n^\trans}
    \M 
    \expect{\fdistnm}{\zbar_n \zbar_n^\trans}
    \M }
\\={}&
\ord{N^{-2}} \trace{
    \left(\frac{1}{G} \S - \frac{1}{G^2} \m \m^\trans \right)
    \M 
    \left(\frac{1}{G} \S - \frac{1}{G^2} \m \m^\trans \right)
    \M }
\\={}&
\ord{N^{-2}} \left( 
\frac{1}{G^2} \trace{\S \M \S \M} -
\frac{1}{G^3} \trace{\S \M \m \m^\trans \M} -
\frac{1}{G^3} \trace{\m \m^\trans \M \S \M} +
\frac{1}{G^4} \trace{\m \m^\trans \M \m \m^\trans \M}
\right)
\\={}&
\ord{N^{-2}} \left( 
\frac{1}{G^2} \norm{\S^{1/2} \M \S^{1/2} }_F^2 -
2 \frac{1}{G^3} \m^\trans \M \S \M \m +
\left(\frac{1}{G^2}  \m^\trans \M \m \right)^2
\right) 
\\={}& \ordlog{N^{-2}},
\end{align*}
since
\begin{align*}
\frac{1}{G^2} \norm{\S^{1/2} \M \S^{1/2} }_F^2 ={}& \ordlog{1}
    & \textrm{ by \assuref{high_dim_resid_assu} \itemref{m_matrix_frob}} \\
\frac{1}{G^3} \m^\trans \M \S \M \m \le{}
    \frac{1}{G^2} \sqrt{\norm{\S^{1/2} \M \S \M \S^{1/2}}_F^2}
    \le{}
    \frac{1}{G^2} \norm{\S^{1/2} \M \S^{1/2}}_F^2 ={}& \ordlog{1}
    & \textrm{ by \assuref{high_dim_resid_assu} \itemref{m_matrix_frob}, \lemref{m_rescaling}} \\
\frac{1}{G^2}  \m^\trans \M \m \le{}
    \frac{1}{\sqrt{G}} \sqrt{ \frac{1}{G}\norm{\S^{1/2} \M \S^{1/2}}_F^2 } ={}& \ordlog{G^{-1/2}}.
    & \textrm{ by \assuref{high_dim_resid_assu} \itemref{m_matrix_frob}, \lemref{m_rescaling}} \\
\end{align*}
Therefore
\begin{align*}
    \expect{\xfdist}{\abs{
        \frac{1}{N^2} \sum_{n \ne m} \zbar_n^\trans \M \zbar_m
        }} = \ordlog{N^{-1}}.
\end{align*}

Analogously, we have
\begin{align*}
\expect{\xfdist}{\abs{ 
    \frac{1}{N^2} \sum_{n \ne m} N \zbar_n^\trans \L \zbar_m
    }}^2
\le{}&
\expect{\xfdist}{\left( 
    \frac{1}{N^2} \sum_{n \ne m} N \zbar_n^\trans \L \zbar_m
    \right)^2} 
\\={}&
\ord{N^{-2}} \trace{
    \left(\frac{1}{G} \S - \frac{1}{G^2} \m \m^\trans \right)
    N \L 
    \left(\frac{1}{G} \S - \frac{1}{G^2} \m \m^\trans \right)
    N \L }
\\={}&
\ord{1} \left( 
    \frac{1}{G^2} \trace{\S \L \S \L} -
    \frac{1}{G^3} \trace{\S \L \m \m^\trans \L} -
    \frac{1}{G^3} \trace{\m \m^\trans \L \S \L} +
    \frac{1}{G^4} \trace{\m \m^\trans \L \m \m^\trans \L}
\right)
\\={}&
    \ord{1} \left( 
    \frac{1}{G^2} \norm{\S^{1/2} \L \S^{1/2} }_F^2 -
    2 \frac{1}{G^3} \m^\trans \L \S \L \m +
    \left(\frac{1}{G^2} \m^\trans \L \m \right)^2
    \right) 
\\={}& 
\ordlog{G^{-1}},
\end{align*}
since
\begin{align*}
\frac{1}{G^2} \norm{\S^{1/2} \L \S^{1/2} }_F^2 ={}& \ordlog{G^{-1}}
    & \textrm{ by \assuref{high_dim_resid_assu} \itemref{l_matrix_frob}} \\
\frac{1}{G^3} \m^\trans \L \S \L \m \le{}
    \frac{1}{G^2} \sqrt{\norm{\S^{1/2} \L \S \L \S^{1/2}}_F^2}
    \le{}
    \frac{1}{G^2} \norm{\S^{1/2} \L \S^{1/2}}_F^2 ={}& \ordlog{G^{-1}}
    & \textrm{ by \assuref{high_dim_resid_assu} \itemref{l_matrix_frob}, \lemref{m_rescaling}} \\
\frac{1}{G^2}  \m^\trans \L \m \le{}
\frac{1}{\sqrt{G}}\sqrt{ \frac{1}{G}\norm{\S^{1/2} \L \S^{1/2}}_F^2 } ={}& \ordlog{G^{-1/2}}.
    & \textrm{ by \assuref{high_dim_resid_assu} \itemref{l_matrix_frob}, \lemref{m_rescaling}} \\
\end{align*}
Therefore
\begin{align*}
    \expect{\xfdist}{\abs{
        \frac{1}{N^2} \sum_{n \ne m} \zbar_n^\trans N \L \zbar_m
        }} = \ordlog{G^{-1/2}}.
\end{align*}
\end{proof}

%%%%%%%%%%%%%%%%%%%%%%%%%%%%%%%%%%%%%%%%%%%%%%%%%%%%%%%
%%%%%%%%%%%%%%%%%%%%%%%%%%%%%%%%%%%%%%%%%%%%%%%%%%%%%%%
%%%%%%%%%%%%%%%%%%%%%%%%%%%%%%%%%%%%%%%%%%%%%%%%%%%%%%%

\begin{lem}\lemlabel{high_dim_rho_moments}
Under the conditions of \thmref{resid_divergence},
\begin{align*}
    \expect{\fdist(\x_n)}{\rho_{nn}} =
    \expect{\fdist(\x_n)}{\zbar_n^\trans \L \zbar_n} =
    \kappa + \ordlog{G^{-1/2}}
    \quad\textrm{and}\quad
    \expect{\fdist(\x_n)}{\rho_{nn}^2} =
    \expect{\fdist(\x_n)}{\left( \zbar_n^\trans \L \zbar_n \right)^2}
    = \ordlog{1}.
\end{align*}
\end{lem}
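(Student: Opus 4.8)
The plan is to reduce both claims to moment computations for the quadratic form $\zbar_n^\trans\L\zbar_n$, exploiting two structural facts about $\z_n=\a_n\otimes\y_n$: exactly one block of $\z_n$ is nonzero (the active group $g$ with $\a_{ng}=1$), so $\a_{ng}\a_{nh}=0$ for $g\ne h$ and $\a_{ng}^2=\a_{ng}$; and the $g$-th block of $\zbar_n=\z_n-\tfrac1G\m$ equals $G^{-1/2}\tilde{y}_{ng}$, whence $\zbar_n^\trans\L\zbar_n=\tfrac1G\sum_{g,h}\tilde{y}_{ng}^\trans\L_{gh}\tilde{y}_{nh}$ and $\rho_{nn}=\tfrac1G\sumg\tilde{y}_{ng}^\trans\L_{gg}\tilde{y}_{ng}$ is exactly its diagonal ($g=h$) part. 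Expanding $\zbar_n=\z_n-\tfrac1G\m$ and using the block structure, I would write $\zbar_n^\trans\L\zbar_n=a_n-b_n+c$ with $a_n:=\sumg\a_{ng}\,\y_n^\trans\L_{gg}\y_n$, $b_n:=\tfrac2G\sum_{g,h}\a_{ng}\,\y_n^\trans\L_{gh}m_h$, and $c:=\tfrac1{G^2}\m^\trans\L\m$ deterministic, and note that $\rho_{nn}$ is the same expression with only the $g=h$ terms retained in $b_n$ and $c$, so that $\zbar_n^\trans\L\zbar_n-\rho_{nn}=\tfrac1G\sum_{g\ne h}\tilde{y}_{ng}^\trans\L_{gh}\tilde{y}_{nh}$.

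The recurring workhorse is the estimate $|m_g^\trans\L_{gh}m_h|\le\norm{\S_g^{-1/2}m_g}_2\,\norm{\S_g^{1/2}\L_{gh}\S_h^{1/2}}_F\,\norm{\S_h^{-1/2}m_h}_2\le\norm{\S_g^{1/2}\L_{gh}\S_h^{1/2}}_F$, where $\norm{\S_g^{-1/2}m_g}_2\le1$ follows from $\S_g\succeq m_gm_g^\trans$ (interpreting $\S_g^{-1/2}$ as a pseudoinverse if $\S_g$ is singular; cf.\ \lemref{m_rescaling}), combined with Cauchy--Schwarz over the $G^2$ index pairs, $\sum_{g,h}c_{gh}\le G\big(\sum_{g,h}c_{gh}^2\big)^{1/2}$. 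This immediately gives $|c|\le\tfrac1{G^2}\sum_{g,h}\norm{\S_g^{1/2}\L_{gh}\S_h^{1/2}}_F\le\tfrac1G\big(\sum_{g,h}\norm{\S_g^{1/2}\L_{gh}\S_h^{1/2}}_F^2\big)^{1/2}=\ordlog{G^{-1/2}}$ by \assuref{high_dim_resid_assu} \itemref{l_matrix_frob}, and likewise $\expect{\fdist(\x_n)}{\zbar_n^\trans\L\zbar_n-\rho_{nn}}=-\tfrac1{G^2}\sum_{g\ne h}m_g^\trans\L_{gh}m_h=\ordlog{G^{-1/2}}$. For the first moment proper I would compute $\expect{\fdist(\x_n)}{\zbar_n^\trans\L\zbar_n}=\trace{\L\,\expect{\fdist(\x_n)}{\zbar_n\zbar_n^\trans}}=\trace{\L\big(\tfrac1G\S-\tfrac1{G^2}\m\m^\trans\big)}=\tfrac1G\sumg\trace{\S_g^{1/2}\L_{gg}\S_g^{1/2}}-c$, using the block-diagonal form of $\S$ and the already-derived identity $\expect{\fdist(\x_n)}{\zbar_n\zbar_n^\trans}=\tfrac1G\S-\tfrac1{G^2}\m\m^\trans$; the first term is $\kappa+\ordlog{G^{-1/2}}$ by \assuref{high_dim_resid_moments} and $c=\ordlog{G^{-1/2}}$, so $\expect{\fdist(\x_n)}{\zbar_n^\trans\L\zbar_n}=\kappa+\ordlog{G^{-1/2}}$, and hence also $\expect{\fdist(\x_n)}{\rho_{nn}}=\kappa+\ordlog{G^{-1/2}}$.

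For the second moment, by $\expect{(a_n-b_n+c)^2}\le3\big(\expect{a_n^2}+\expect{b_n^2}+c^2\big)$ and the analogous bound for $\rho_{nn}$ (which shares the leading term $a_n$), it suffices to control the three pieces separately. Mutual exclusivity of the group indicators gives $a_n^2=\sumg\a_{ng}(\y_n^\trans\L_{gg}\y_n)^2$ with no cross-group products, so $\expect{\fdist(\x_n)}{a_n^2}=\tfrac1G\sumg\expect{\fdist(\y_n\vert\a_{ng}=1)}{(\y_n^\trans\L_{gg}\y_n)^2}\le\tfrac1G\sumg\expect{\fdist(\y_n\vert\a_{ng}=1)}{\norm{\y_n}_2^4}\,\norm{\L_{gg}}_F^2=\ordlog{1}$ by \assuref{high_dim_resid_moments}. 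Similarly $b_n^2=\tfrac4{G^2}\sumg\a_{ng}(\y_n^\trans v_g)^2$ with $v_g:=\sum_h\L_{gh}m_h$, whence $\expect{\fdist(\x_n)}{b_n^2}=\tfrac4{G^3}\sumg v_g^\trans\S_g v_g\le\tfrac4{G^3}\sumg\big(\sum_h\norm{\S_g^{1/2}\L_{gh}\S_h^{1/2}}_F\big)^2\le\tfrac4{G^2}\sum_{g,h}\norm{\S_g^{1/2}\L_{gh}\S_h^{1/2}}_F^2=\ordlog{G^{-1}}$ (again using $\norm{\S_h^{-1/2}m_h}_2\le1$ and Cauchy--Schwarz), and $c^2=\ordlog{G^{-1}}$. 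Therefore $\expect{\fdist(\x_n)}{(\zbar_n^\trans\L\zbar_n)^2}=\expect{\fdist(\x_n)}{\rho_{nn}^2}=\ordlog{1}$.

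I expect the main obstacle to be bookkeeping rather than ideas: correctly enumerating the lower-order terms generated by expanding the quadratic form and its square, and, for each, finding the two-sided $\S_g^{1/2}$-conjugation that turns it into exactly one of the quantities assumed to be $\ordlog{1}$ in \assuref{high_dim_resid_assu} \itemref{l_matrix_frob} and \assuref{high_dim_resid_moments}; keeping the powers of $G$ straight in the Cauchy--Schwarz steps (a factor $G=\sqrt{G^2}$ when summing over index pairs, versus $\sqrt{G}$ when summing over a single index) is where an error is most likely to creep in.
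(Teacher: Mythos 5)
Your proof is correct, and for the second moment it takes a genuinely different (and arguably cleaner) route than the paper. The first-moment computation is essentially identical to the paper's: both reduce to $\trace{\L\,\expect{\fdist(\x_n)}{\zbar_n\zbar_n^\trans}} = \frac{1}{G}\sumg\trace{\S_g^{1/2}\L_{gg}\S_g^{1/2}} - \frac{1}{G^2}\m^\trans\L\m$ and kill the second term at rate $\ordlog{G^{-1/2}}$; the only difference is that the paper controls $\frac{1}{G^2}\m^\trans\L\m$ via \lemref{m_rescaling} applied to the full stacked vector, while you work blockwise with the per-block bound $m_g^\trans\S_g^{\dagger}m_g\le 1$, which is equivalent up to constants. (You also explicitly bound the gap between $\expect{\fdist(\x_n)}{\rho_{nn}}$ and $\expect{\fdist(\x_n)}{\zbar_n^\trans\L\zbar_n}$ coming from the off-diagonal blocks $\L_{gh}$, $g\ne h$, which the paper's proof passes over silently; this is a point in your favor since the two quantities coincide only when $\L$ is block diagonal.) For the second moment, the paper expands $\left(\zbar_n^\trans\L\zbar_n\right)^2$ into five terms $T_0,\dots,T_4$ indexed by the power of $\z_n$ and bounds each, including the cross term $T_3$, which is where the third-moment condition $\frac{1}{G}\sumg\expect{\fdist(\y_n\vert\a_{ng}=1)}{\norm{\y_n}_2^3}\norm{\m_g}_2\norm{\L_{gg}}_F^2 = \ordlog{1}$ of \assuref{high_dim_resid_moments} is consumed. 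You instead write $\zbar_n^\trans\L\zbar_n = a_n - b_n + c$ and apply $(a-b+c)^2\le 3(a^2+b^2+c^2)$, so the cross terms never appear: you need only the fourth-moment condition (for $a_n^2$, which is the paper's $T_4$) and \assuref{high_dim_resid_assu} \itemref{l_matrix_frob} (for $b_n^2$ and $c^2$). Your argument therefore does not use the third-moment part of \assuref{high_dim_resid_moments} at all, at the cost only of a harmless factor of $3$; it also sidesteps a small sloppiness in the paper's $T_3$, where the factor $\m^\trans\L\z_n$ is written with only the diagonal blocks $\L_{gg}$ retained. All the individual estimates you use — the mutual exclusivity $\a_{ng}\a_{nh}=0$, the identity $\expect{\fdist(\x_n)}{\zbar_n\zbar_n^\trans}=\frac{1}{G}\S-\frac{1}{G^2}\m\m^\trans$, the conjugation trick $m_h = \S_h^{1/2}\S_h^{\dagger/2}m_h$, and the Cauchy--Schwarz steps over $G$ versus $G^2$ indices — check out, and the powers of $G$ are all correct.
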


\begin{proof}
We have
\begin{align*}
\expect{\fdist(\x_n)}{\zbar_n^\trans \L \zbar_n} ={}&
    \trace{ \L  \expect{\fdist(\x_n)}{\zbar_n \zbar_n^\trans}}
\\={}&
    \trace{ \L  \left(\frac{1}{G} \S - \frac{1}{G^2} \m \m^\trans \right)}
\\={}&
    \frac{1}{G} \sumg \trace{\L_g \S_g} -
    \frac{1}{G^2} \m^\trans \L \m
\\={}&
    \frac{1}{G} \sumg \trace{\L_g \S_g} + \ordlog{G^{-1/2}}
\end{align*}
because
\begin{align*}
\abs{\frac{1}{G^2} \m^\trans \L \m}
\frac{1}{G} \sqrt{\norm{\S^{1/2}\L \S^{1/2}}_F^2} = \ordlog{1}
\frac{1}{\sqrt{G}} \sqrt{\frac{1}{G} \norm{\S^{1/2}\L \S^{1/2}}_F^2} = \ordlog{1}
\end{align*}
by \assuref{high_dim_resid_assu} \itemref{l_matrix_frob} and \lemref{m_rescaling}.

%%%%%%%%%%%%%% control the square

Next, 
\begin{align*}
\expect{\fdist(\x_n)}{\zbar_n^\trans \L \zbar_n \zbar_n^\trans \L \zbar_n}
    ={}&
\expect{\fdist(\x_n)}{
    \trace{
    \left(\z_n - \m / G\right)^\trans \L 
    \left(\z_n - \m / G\right) \left(\z_n - \m / G\right)^\trans \L 
    \left(\z_n - \m / G\right)
    }}.
\end{align*}
Expanding the previous expression will give integer multiples of 
the following terms, which are named according to the order
of $\z_n$ included in the expectation:
\begin{align*}
T_4 :=& \expect{\fdist(\x_n)}{\trace{
    \L \z_n \z_n^\trans \L \z_n \z_n^\trans
}} \\
T_3 :=& \frac{1}{G} \expect{\fdist(\x_n)}{\trace{
    \L \z_n \z_n^\trans \L \z_n \m^\trans
}} \\
T_{2} :=& \frac{1}{G^2} \expect{\fdist(\x_n)}{\trace{
    \L \z_n \z_n^\trans \L \m \m^\trans
}} \\
T_{1} :=& \frac{1}{G^3} \expect{\fdist(\x_n)}{\trace{
    \L \z_n \m^\trans \L \m \m^\trans
}} \\
T_{0} :=& \frac{1}{G^4} \trace{
    \L \m \m^\trans \L \m \m^\trans
}.
\end{align*}
(Note that $\z_n^\trans \L \m = \m^\trans \L \z_n$, so the $\z_n$ terms can
always be grouped together in $T_2$.) 

We analyze the terms one by one using
\assuref{high_dim_resid_assu} and \lemref{m_rescaling}:
\begin{align*}
T_0 ={}& \frac{1}{G} \left(\frac{1}{G \sqrt{G}} \m^\trans \L \m \right)^2
    ={} \ordlog{G^{-1}} & \textrm{\itemref{l_matrix_frob}, \lemref{m_rescaling}} \\
\end{align*}
\begin{align*}
T_{1} =& \frac{1}{G^3} \trace{
    \L \expect{\fdist(\x_n)}{\z_n} \m^\trans \L \m \m^\trans}
={}
\frac{1}{G^4} \trace{
    \L \m \m^\trans \L \m \m^\trans} = T_{0} = \ordlog{G^{-1}}.
\end{align*}

\begin{align*}
\abs{T_{2}} ={}& 
\abs{\frac{1}{G^2} \trace{
        \L \expect{\fdist(\x_n)}{\z_n \z_n^\trans} \L \m \m^\trans}}
\\={}&
\abs{\frac{1}{G^3} \trace{\L \S \L \m \m^\trans}}
\\\le{}&
\frac{1}{G^2} \sqrt{\norm{\S^{1/2} \L \S \L \S^{1/2}}_F^2}
\\\le{}&
\frac{1}{G^2} \norm{\S^{1/2} \L \S^{1/2}}_F^2
\\={}& \ordlog{G^{-1}}
\textrm{by \itemref{l_matrix_frob}}.
\end{align*}
\begin{align*}
T_{3} ={}& \frac{1}{G} \expect{\fdist(\x_n)}{\trace{
    \L \z_n \z_n^\trans \L \z_n \m^\trans
}} 
\\={}&
\frac{1}{G} \expect{\fdist(\x_n)}{
    \left(\sumg \a_{ng}^2 \y_n^\trans \L_{gg} \y_n \right)
    \left(\sumg \a_{ng} \y_n^\trans \L_{gg} \m_g \right)
}
\\={}&
\frac{1}{G} \expect{\fdist(\x_n)}{
    \sumg \a_{ng}^3 \left(\y_n^\trans \L_{gg} \y_n \right)
    \left(\y_n^\trans \L_{gg} \m_g \right)
}
&\textrm{(since }\a_{ng}\a_{nh} = 0\textrm{ if }g \ne h\textrm{)}
\\={}&
\frac{1}{G} 
    \sumg \expect{\fdist(\y_n \vert \a_{ng} = 1)}{
        \left(\y_n^\trans \L_{gg} \y_n \right)
        \left(\y_n^\trans \L_{gg} \m_g \right)
}
\end{align*}
We can then further simplify to
\begin{align*}
\abs{T_3} \le{}&
\frac{1}{G} 
    \sumg \expect{\fdist(\y_n \vert \a_{ng} = 1)}{
        \norm{\y_n}_2^3 }
    \norm{\m_g}_2 \norm{\L_{gg}}_F^2
    = \ordlogp{1}
    & \textrm{ by \assuref{high_dim_resid_moments}}.
\end{align*}
\begin{align*}
T_{4} ={}&
    \expect{\fdist(\x_n)}{\trace{
    \L \z_n \z_n^\trans \L \z_n \z_n^\trans
}} 
\\={}&
\expect{\fdist(\x_n)}{
    \left(\sumg \a_{ng}^2 \y_n^\trans \L_{gg} \y_n \right)^2
}
\\={}&
\expect{\fdist(\x_n)}{
    \sumg \a_{ng}^4 \left(\y_n^\trans \L_{gg} \y_n \right)^2
}
&\textrm{(since }\a_{ng}\a_{nh} = 0\textrm{ if }g \ne h\textrm{)}
\\={}&
\frac{1}{G} \sumg \expect{\fdist(\y_n \vert \a_{ng} = 1)}{
    \left(\y_n^\trans \L_{gg} \y_n \right)^2
}.
\end{align*}
We can then further simplify to
\begin{align*}
\abs{T_4} \le{}&
\frac{1}{G} 
    \sumg \expect{\fdist(\y_n \vert \a_{ng} = 1)}{
        \norm{\y_n}_2^4 } \norm{\L_{gg}}_F^2
        = \ordlogp{1}
        & \textrm{ by \assuref{high_dim_resid_moments}}.
\end{align*}
It follows that $\expect{\fdist(\x_n)}{\rho_{nn}^2} = \ordlog{1}$.

\end{proof}

%%%%%%%%%%%%%%%%%%%%%%%%%%%%%%%%%%%%%%%%%%%%%%%%%%%%%%%
%%%%%%%%%%%%%%%%%%%%%%%%%%%%%%%%%%%%%%%%%%%%%%%%%%%%%%%
%%%%%%%%%%%%%%%%%%%%%%%%%%%%%%%%%%%%%%%%%%%%%%%%%%%%%%%

\begin{lem}
Under the conditions of \thmref{resid_divergence},
$\resid{T}(1)  \plim \kappa$ and
$\sqrt{N}\abs{\resid{T}(1)} \rightarrow \infty$ if $\kappa \ne 0$.
\end{lem}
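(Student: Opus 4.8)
The plan is to identify $\resid{T}(1)$ with a sample average of i.i.d.\ scalars (conditionally on the dimensions $N,G$) and then apply Chebyshev's inequality. By \lemref{high_dim_resid_order},
\[
\resid{T}(1) = \meann \zbar_n^\trans \L \zbar_n + r_{N},
\]
where $\expect{\xfdist}{\abs{r_{N}}} = \ordlog{G^{-1/2}} + \ordlog{N^{-1}} \to 0$, so that $r_{N} \plim 0$ by Markov's inequality. It therefore suffices to prove $\meann \zbar_n^\trans \L \zbar_n \plim \kappa$.

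First I would record the relevant moments. The matrix $\L$ of \defref{high_dim_exp_terms_more} is a deterministic population quantity (it is an expectation under the posterior conditioned on $\fdist$), so for each fixed $N$ and $G$ the scalars $\zbar_n^\trans \L \zbar_n$, $n = 1,\ldots,N$, are i.i.d.; and by \lemref{high_dim_rho_moments} they have common mean $\mu_{N} := \expect{\fdist(\x_n)}{\zbar_n^\trans \L \zbar_n} = \kappa + \ordlog{G^{-1/2}}$ and second moment $\expect{\fdist(\x_n)}{(\zbar_n^\trans \L \zbar_n)^2} = \ordlog{1}$, hence variance $\sigma_N^2 = \ordlog{1}$. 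Since $G \to \infty$ jointly with $N$ under the conditions of \thmref{resid_divergence}, $\mu_{N} \to \kappa$ (the bias $\ordlog{G^{-1/2}}$ being exactly the discrepancy isolated in \assuref{high_dim_resid_moments}).

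Then I would apply Chebyshev's inequality to the sample mean: since the $\zbar_n^\trans \L \zbar_n$ are i.i.d., $\var{\xfdist}{\meann \zbar_n^\trans \L \zbar_n} = \sigma_N^2 / N = \ordlog{N^{-1}} \to 0$, so $\meann \zbar_n^\trans \L \zbar_n - \mu_{N} \plim 0$, and combining with $\mu_{N} \to \kappa$ gives $\meann \zbar_n^\trans \L \zbar_n \plim \kappa$, hence $\resid{T}(1) \plim \kappa$. (The same three estimates — the $L^2$ bound on the centered average, the deterministic bias $\mu_N - \kappa$, and the $L^1$ bound on $r_N$ — in fact yield $\expect{\xfdist}{\abs{\resid{T}(1) - \kappa}} \to 0$, the slightly stronger statement invoked in \thmref{resid_divergence}.) Finally, if $\kappa \ne 0$ then $\abs{\resid{T}(1)} \plim \abs{\kappa} > 0$ by the continuous mapping theorem, so for any fixed $M > 0$, once $N$ is large enough that $M / \sqrt{N} < \abs{\kappa} / 2$ we have $\p(\sqrt{N}\abs{\resid{T}(1)} \le M) = \p(\abs{\resid{T}(1)} \le M/\sqrt{N}) \le \p(\abs{\resid{T}(1)} \le \abs{\kappa}/2) \to 0$, i.e.\ $\sqrt{N}\abs{\resid{T}(1)} \to \infty$ in probability.

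The argument is essentially bookkeeping on top of \lemref{high_dim_resid_order,high_dim_rho_moments}; the only point requiring a little care is that $\{\zbar_n^\trans \L \zbar_n\}$ is a triangular array whose per-term variance $\sigma_N^2$ is not uniformly bounded but only $\ordlog{1}$, so one must check $\sigma_N^2/N \to 0$ (which it is, since $\ordlog{1}/N = \ordlog{N^{-1}}$) before invoking the weak law. I do not expect any substantive obstacle beyond this.
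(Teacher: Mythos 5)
Your proposal is correct and follows essentially the same route as the paper: reduce via \lemref{high_dim_resid_order} to the diagonal average $\meann \zbar_n^\trans \L \zbar_n$, invoke the moment bounds of \lemref{high_dim_rho_moments}, conclude $\resid{T}(1) \plim \kappa$ by a second-moment/Markov argument, and then use the identical elementary tail bound for $\sqrt{N}\abs{\resid{T}(1)} \rightarrow \infty$. Your explicit separation of the bias $\mu_N - \kappa = \ordlog{G^{-1/2}}$ from the Chebyshev fluctuation is a slightly cleaner bookkeeping of the same calculation the paper performs directly in $L^1$.
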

\begin{proof}
By \lemref{high_dim_resid_order},
\begin{align*}
    \expect{\xfdist}{\abs{\resid{T}(1) - \kappa}} \le
    \abs{\frac{1}{N} \sumn\left(\rho_{nn} - \kappa \right)} +
    \ordlog{G^{-1/2}} + \ordlog{N^{-1}}.
\end{align*}
Then
\begin{align*}
    \expect{\xfdist}{\abs{\frac{1}{N} \sumn\left(\rho_{nn} - \kappa \right)}}^2
    \le{}&
    \expect{\xfdist}{\left(\frac{1}{N} \sumn\left(\rho_{nn} - \kappa \right)\right)^2}
    \\={}&
    \frac{1}{N} \left(
        \expect{\fdist(\x_n)}{\rho_{nn}^2} - 
        \kappa \expect{\fdist(\x_n)}{\rho_{nn}} + \kappa^2
    \right)
    \\={}&
    \frac{1}{N} \left(
        \expect{\fdist(\x_n)}{\rho_{nn}^2} - 
        \kappa \left( \expect{\fdist(\x_n)}{\rho_{nn}} - \kappa \right)
    \right)
    \\={}& 
    \frac{1}{N} \left(\ordlog{1} + \kappa \ordlog{G^{-1/2}}\right)
    &\textrm{by \lemref{high_dim_rho_moments}}.
    \\={}& 
    \ord{N^{-1}}.
\end{align*}
Since convergence in expectation implies convergence in probability,
$\resid{T}(1) \plim \kappa$.

Finally, assume that $\kappa \ne 0$.  For any $M > 0$,
take $N \ge (2 M / \kappa)^2$, so
\begin{align*}
    \p\left( \sqrt{N} \abs{\resid{T}(1)} \le M\right) ={}&
    \p\left( -\frac{M}{\sqrt{N}} - \kappa \le 
        \resid{T}(1) - \kappa \le 
        \frac{M}{\sqrt{N}} - \kappa\right)
    \\\le{}&
    \p\left( -\abs{\kappa / 2} - \kappa \le 
        \resid{T}(1) - \kappa \le 
        \abs{\kappa / 2} - \kappa\right) 
        \rightarrow 0,
\end{align*}
where the final line follows from $\resid{T}(1) - \kappa \plim 0$, and the
fact that $\kappa > 0 \rightarrow \abs{\kappa / 2} - \kappa \le -\abs{\kappa / 2}$,
and $\kappa < 0 \rightarrow -\abs{\kappa / 2} - \kappa \ge \abs{\kappa / 2}$.
\end{proof}

%%%%%%%%%%%%%%%%%%%%%%%%%%%%%%%%%%
% S and m

\begin{lem}\lemlabel{m_rescaling}
For a symmetric matrix $A$, 
$\frac{1}{G} \abs{m^\trans A m} \le  \sqrt{\norm{\S^{1/2} A \S^{1/2}}_F^2}.$
\begin{proof}
\def\U{U}
Write the eigenvalue decomposition of $\S$ as
\begin{align*}
    \S = \U \diag{\sigma(\S)_1, \ldots, \sigma(\S)_G} \U^\trans
\end{align*}
where $\U$ is orthonormal and $\sigma(\S)_1$ are the eigenvalues.
Let $\U_k$ denote the $k$--th eigenvector, i.e. the $k$--th column 
of $\U$.  We can write the square root and pseudo--inverse square root as
\begin{align*}
    \S^{1/2} ={}& \U \diag{\sigma(\S)_1^{1/2}, \ldots, \sigma(\S)_G^{1/2}} \U^\trans \\
    \S^{\dagger/2} ={}& \U \diag{\sigma(\S)_1, \ldots, \sigma(\S)_G}^{\dagger / 2} \U^\trans
\end{align*}
where the $k$--diagonal of the pseudo-inverse 
$\diag{\sigma(\S)_1, \ldots, \sigma(\S)_G}^{\dagger / 2}$ is $0$ if $\sigma(\S)_k= 0$
and $\sigma(\S)_k^{-1/2}$ otherwise.

We first show that $\S^{1/2} \S^{\dagger/2} m = m$. Note that if $\sigma(\S)_k =
0$, then $\U_k^\trans m = 0$, because
\begin{align*}
    0 = \sigma(\S)_k = \U_k^\trans \S \U_k = 
    \U_k^\trans \left( \cov{\fdist(\x_n \vert \a_{ng})}{\y_n} + m m^\trans \right) \U_k \ge
    \U_k^\trans m m^\trans \U_k.
\end{align*}
Therefore,
\begin{align*}
\norm{\S^{\dagger/2} m}_2^2 ={}&
    m^\trans \U \diag{\sigma(\S)_1, \ldots, \sigma(\S)_G}^{\dagger / 2} 
                \diag{\sigma(\S)_1, \ldots, \sigma(\S)_G}^{\dagger / 2} \U^\trans m 
\\={}&
\sum_{k: \sigma(\S)_k > 0} \frac{m^\trans \U_k \U_k^\trans m}{\sigma(\S)_k}
\\={}&
\sum_{k: \sigma(\S)_k > 0} 
    \frac{m^\trans \U_k \U_k^\trans m}
            {\U_k^\trans(\cov{\fdist(\x_n \vert \a_{ng})}{\y_n} + m m^\trans)\U_k}
\\\le{}& G.
\end{align*}
Thus we have that $m = S^{1/2} \S^{\dagger/2} m$ and
$\frac{1}{G} \norm{\S^{\dagger/2} m}  \le 1$.  
The final result then follows from
\begin{align*}
\frac{1}{G} m^\trans A m 
={}& 
    m^\trans \S^{\dagger/2} S^{1/2} A S^{1/2} \S^{\dagger/2} m 
\\\le{}&
    \norm{\S^{\dagger/2} m}_2^2 \max_k \abs{\sigma_k(S^{1/2} A S^{1/2})}
\\\le{}&
    \max_k \sqrt{\sigma_k(S^{1/2} A S^{1/2})^2}
\\\le{}&
        \sqrt{\sum_k \sigma_k(S^{1/2} A S^{1/2})^2}
\\={}&
        \sqrt{\norm{S^{1/2} A S^{1/2}}_F^2}.
\end{align*}
Here, we have used the fact that the absolute value of the maximum eigenvalue of
a matrix is upper bounded by the square root of its Frobenius norm.
\end{proof}

\end{lem}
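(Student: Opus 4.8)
\textbf{Proof proposal for \lemref{m_rescaling}.}
The plan is to realize $m$ as $\S^{1/2}$ applied to a vector of squared Euclidean norm at most $G$, and then pass from the operator norm to the Frobenius norm of the symmetric matrix $\S^{1/2} A \S^{1/2}$. The key structural fact is that within each block $\S_g - m_g m_g^\trans = \cov{\fdist(\y_n \vert \a_{ng} = 1)}{\y_n} \succeq 0$, so $\S_g \succeq m_g m_g^\trans$. First I would use this to observe that any null vector $v$ of $\S_g$ satisfies $0 = v^\trans \S_g v \ge (v^\trans m_g)^2$, hence $m_g$ is orthogonal to the null space of $\S_g$, i.e. $m_g \in \mathrm{range}(\S_g)$. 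Since $\S = \diag{\S_1, \ldots, \S_G}$ is block diagonal, it follows that $m \in \mathrm{range}(\S)$ and that $\S^{1/2}\S^{\dagger/2} m = m$, where $\S^{\dagger/2}$ denotes the block-diagonal matrix of pseudo-inverse square roots.

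Next I would control $\norm{\S^{\dagger/2} m}_2^2 = \sumg m_g^\trans \S_g^\dagger m_g$ one block at a time. Setting $w_g := \S_g^\dagger m_g$, the defining identities of the pseudo-inverse give $w_g^\trans \S_g w_g = m_g^\trans w_g = m_g^\trans \S_g^\dagger m_g =: s_g \ge 0$, while $\S_g \succeq m_g m_g^\trans$ gives $w_g^\trans \S_g w_g \ge (m_g^\trans w_g)^2 = s_g^2$; combining yields $s_g \ge s_g^2$ and hence $s_g \le 1$. Summing over $g$ then gives $\norm{\S^{\dagger/2} m}_2^2 \le G$.

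Finally, I would write $u := \S^{\dagger/2} m$, so that $m = \S^{1/2} u$ with $\norm{u}_2^2 \le G$, and estimate
\[
\frac{1}{G}\abs{m^\trans A m} = \frac{1}{G}\abs{u^\trans \S^{1/2} A \S^{1/2} u} \le \frac{1}{G}\norm{u}_2^2\,\norm{\S^{1/2} A \S^{1/2}}_{op} \le \norm{\S^{1/2} A \S^{1/2}}_{op}.
\]
Since $A$ is symmetric, $\S^{1/2} A \S^{1/2}$ is symmetric, so its operator norm equals its largest eigenvalue in absolute value, which is bounded by the square root of the sum of the squared eigenvalues, i.e. by $\norm{\S^{1/2} A \S^{1/2}}_F = \sqrt{\norm{\S^{1/2} A \S^{1/2}}_F^2}$, giving the claim. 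The only step needing care is the pseudo-inverse bookkeeping on the degenerate subspace when $\S$ is singular (establishing $m \in \mathrm{range}(\S)$ and $m_g^\trans \S_g^\dagger m_g \le 1$), but as noted this is an immediate consequence of $\S_g \succeq m_g m_g^\trans$; everything else is routine linear algebra, so I do not anticipate a genuine obstacle.
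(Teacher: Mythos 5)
Your proposal is correct and follows the same overall strategy as the paper: write $m = \S^{1/2}\S^{\dagger/2}m$ (justified by $\S \succeq m m^\trans$, which forces $m$ into the range of $\S$), bound $\norm{\S^{\dagger/2}m}_2^2 \le G$, and then pass from the quadratic form to the operator norm and finally the Frobenius norm of the symmetric matrix $\S^{1/2}A\S^{1/2}$. The one place you genuinely diverge is the middle step: the paper bounds each eigendirection term $(\U_k^\trans m)^2 / \U_k^\trans(\mathrm{Cov} + mm^\trans)\U_k$ by one and counts the nonzero eigenvalues (writing the count as $G$, which is literally correct only when $\ydim = 1$, since $\S$ is $\ydim G \times \ydim G$), whereas you exploit the block-diagonal structure of $\S = \diag{\S_1,\ldots,\S_G}$ and show $m_g^\trans \S_g^\dagger m_g \le 1$ per block via the identity $w_g^\trans \S_g w_g = m_g^\trans \S_g^\dagger m_g$ and the inequality $s_g \ge s_g^2$. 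Your version buys a bound of exactly $G$ (one per block) independent of $\ydim$, so it is slightly tighter and cleaner in the multivariate-response case and sidesteps the dimensional bookkeeping that the paper's eigenvalue-count argument glosses over; the paper's version is marginally shorter and does not need the block structure, only $\S \succeq mm^\trans$. The remaining steps (pseudo-inverse identities, $m \in \mathrm{range}(\S)$, and operator norm bounded by Frobenius norm) coincide with the paper's.
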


\subsection{Intuition from V--statistics}\applabel{v_stats}

%%%%%%%%%%%%%%%%%%%%%%%%%%%%%%%%%%%%%%%%%%%%%%%%%%%%%%%%%%%%%%%%%%%%%%%%%%%
%%%%%%%%%%%%%%%%%%%%%%%%%%%%%%%%%%%%%%%%%%%%%%%%%%%%%%%%%%%%%%%%%%%%%%%%%%%
%%%%%%%%%%%%%%%%%%%%%%%%%%%%%%%%%%%%%%%%%%%%%%%%%%%%%%%%%%%%%%%%%%%%%%%%%%%

To provide some additional intuition for \thmref{finite_dim_resid}, it is
helpful to heuristically imagine even stronger assumptions which allow us to
study $\resid{T}(\t)$ through the lens of V--statistics.  
Referring to \eqref{v_def}, observe that the term $\htil(\x_n, \x_m)$ depends on
$\x_n$ and $\x_m$ both through $\ellbarbar(\x_n \vert \theta)\ellbarbar(\x_m
\vert \theta)$ and through the posterior $\postft$.  If we neglect the
dependence through $\postft$, then $\sqrt{N} \resid{T}(\t)$ takes the form of a
V--statistic, with
\begin{align*}
\expect{\fdistnm}{\htil(\x_n, \x_m)} =
\expect{\fdist(\x_n \vert \x_m)}{\htil(\x_n, \x_m)} =
\expect{\fdist(\x_m \vert \x_n)}{\htil(\x_n, \x_m)} = 0.
\end{align*}
If we further assume that all variances are finite, and that
\assuref{postg_conc} applies, then
\begin{align*}
\htil(\x_n, \x_m) =
N^2
\expect{\postft}{
    \gbar(\theta)
    \ellbarbar(\x_n \vert \theta)
    \ellbarbar(\x_m \vert \theta)} = \ordlogp{1},
\end{align*}
with error square summable in $n$ and $m$.
We can then heuristically apply standard results for V--statistics.
Specifically, in the notation of Section 5.2.1 of
\citet{serfling:1980:approximation}, $\resid{T}(\t)$ is approximately V-statistic with
(Serfling's) $m=2$, $\zeta_1 = 0$, and $\zeta_2 ={} \expect{\fdistnm}{\htil[1](\x_n, \x_m)^2}$.
By Lemma A of the same section, the variance of the corresponding U-statistic is
\begin{align*}
    U :={}& \frac{1}{N(N-1)} \sumn \sum_{m\ne n}^N \htil[1](\x_n, \x_m)
    &
    \var{\fdist(\x_n) \fdist(\x_m)}{U} = 
    \frac{2 \expect{\fdistnm}{\htil[1](\x_n, \x_m)^2}}{N (N-1)}.
\end{align*}
Then the difference between the V--statistic and U--statistic is
order $\frac{1}{N^2} \sumn \htil(\x_n, \x_n) = \ordlogp{N^{-1}}$:
\begin{align*}
    \frac{1}{N^2} \sumn \htil(\x_n, \x_n) = 
        \ord{
            \sqrt{
                \frac{\expect{\fdist(\x_n)}{\htil(\x_n, \x_n)^2}}
                        {N^2}
                }
        } =
        \ord{
            \sqrt{
                \frac{\expect{\fdist(\x_n)}{\htil(\x_n, \x_n)^2}}
                        {N^2}
                }
        } = \ord{N^{-1}}.
\end{align*}
Similarly, the variance of the U--statistic is $\ordlogp{N^{-1}}$:
\begin{align*}
\var{\fdistnm}{\sqrt{N} \resid{T}(\t)} \approx 
&
N \ordlogp{ \frac{\expect{\fdistnm}{\htil(\x_n, \x_m)^2}}{N^2}} =
% \ordlogp{ \frac{
%     N^4 \expect{\fdistnm}{
%         \expect{\postft}{
%             \gbar(\theta)
%             \ellbarbar(\x_n \vert \theta)
%             \ellbarbar(\x_m \vert \theta)
%         }^2
%     }}{N}}   =  
\ordlogp{N^{-1}}.
\end{align*}
Though the assumptions needed to formally justify the preceding argument would be
stronger than what is needed for \thmref{finite_dim_resid}, the V--statistic
perspective shows clearly the role of the $N^{-2}$ rate of convergence of
posterior expectations of three centered quantities given by
\assuref{postg_conc}.  In the next section,
we will see that, in high--dimensional models, we lose this $N^{-2}$
rate, and the residual will fail to vanish in general as a consequence.

\subsection{Details for \exref{poisson_re_resid}}\applabel{poisson_re_resid}

In this section, we provide more details for \exref{poisson_re_resid}.

%%%%%%%%%%%%%%%%%%%%%%%%%%%%%%%%%%%%%%%%%%%%%%
%%%%%%%%%%%%%%%%%%%%%%%%%%%%%%%%%%%%%%%%%%%%%%
%%%%%%%%%%%%%%%%%%%%%%%%%%%%%%%%%%%%%%%%%%%%%%

Recall that the Gamma density is given by
\begin{align*}
\log \Gamma(\lambda \vert \alpha, \beta) ={}&
    (\alpha - 1) \log \lambda - \beta \lambda + \alpha \log \beta - \log \Gamma(\alpha) \\
A(\alpha, \beta) ={}& -\alpha \log \beta + \log \Gamma(\alpha) \\
    %%%%%%%%%
\frac{\partial A}{\partial \alpha} ={}& 
    \expect{}{\log \lambda} = \mathrm{Digamma}(\alpha) - \log \beta \\
\frac{\partial A}{\partial (-\beta)} ={}& 
    \expect{}{\lambda} = \alpha / \beta \\
    %%%%%%%%%
\frac{\partial^2 A}{\partial \alpha^2} ={}& 
    \cov{}{\log \lambda} = \mathrm{Trigamma}(\alpha) \\
\frac{\partial A}{\partial (-\beta)^2} ={}& 
    \cov{}{\lambda} = \alpha / \beta^2 \\
\frac{\partial^2 A}{\partial \alpha \partial (-\beta)} ={}& 
    \cov{}{\log \lambda, \lambda} = 1 / \beta.
\end{align*}

The posterior $\postlf$ is given by
\begin{align*}
\expect{\fdist(\x_n)}{\ell(\x_n \vert \gamma, \lambda)} ={}& 
    \frac{1}{G}  \sumg \left( \rho_g (\log(\gamma) + \log(\lambda_g)) - \gamma \lambda_g\right) \\
\expect{\fdist(\xvec)}{\ell(\xvec \vert \gamma, \lambda)} ={}& 
    \frac{N}{G}  \sumg \left( \rho_g (\log(\gamma) + \log(\lambda_g)) - \gamma \lambda_g\right) \\
%%%%%%%%%%
\p\left( \lambda_g \vert \fdist, N, \gamma\right) ={}& 
    \textrm{Gamma}\left(
        \alpha + \frac{N}{G} \rho_g, 
        \beta + \frac{N}{G} \gamma \right) \\
%%%%%%%%%%
\cov{\postlf}{\eta(\gamma, \lambda_g), \eta(\gamma, \lambda_g)^\trans} ={}&
    \begin{pmatrix}
        \mathrm{Trigamma}(\alpha + \frac{N}{G} \rho_g) & 
        -\gamma / \left(\beta + \frac{N}{G} \gamma\right)\\ 
        -\gamma / \left(\beta + \frac{N}{G} \gamma\right) & 
        \gamma^2 \frac{\alpha + \frac{N}{G} \rho_g}{\left( \beta + \frac{N}{G} \gamma \right)^2}\\ 
    \end{pmatrix} \\
%%%%%%%%%%
\expect{\postlf}{\eta(\gamma, \lambda_g)} ={}&
    \begin{pmatrix}
        \log(\gamma) + 
            \mathrm{Digamma}(\alpha + \frac{N}{G} \rho_g) - 
            \log \left( \beta + \frac{N}{G} \gamma\right) \\ 
        - \gamma \frac{\alpha + \frac{N}{G} \rho_g}{\beta + \frac{N}{G} \gamma}
    \end{pmatrix}\\
\end{align*}
Let
\begin{align*}
    \rho_g :={}& \expect{\fdist(\y_n \vert \a_{ng}=1)}{\r_n}
    &
    \v_g :={}& \var{\fdist(\y_n \vert \a_{ng}=1)}{\r_n}.
\end{align*}
We then have
\begin{align*}
\m_g ={}& \expect{\fdist(\y_n \vert \a_{ng}=1)}{\y_n} = 
    \begin{pmatrix}
        \rho_g \\ 
        1
    \end{pmatrix}
&
\S_g ={}&
    \expect{\fdist(\y_n \vert \a_{ng}=1)}{\y_n \y_n^\trans} = 
    \begin{pmatrix}
        \v_g + \rho_g^2 & \rho_g \\ 
        \rho_g & 1
    \end{pmatrix}.
\end{align*}
The fact that $\fdist$ may be misspecified means the values for $\rho_g$ and
$\v_g$ are arbitrary.  A correctly specified model would take them to be an
infinite sequence with $\rho_g = \v_g$, drawn from a $\mathrm{Gamma}(\alpha,
\beta)$ distribution and then conditioned on as we take $G \rightarrow \infty$.

\def\digamma{\Psi}
\def\trigamma{\Psi_1}
\def\noverg{N_G}

Let $\noverg := N / G$ denote the average number of observations per group.
By standard properties of the Gamma distribution (see \appref{poisson_re_resid}),
we can evaluate $\postlf$ and derive
\begin{align*}
\mu_g(\gamma) ={}&
    \begin{pmatrix}
        \log(\gamma) + 
            \digamma(\alpha + \noverg \rho_g) - 
            \log \left( \beta + \noverg \gamma\right) \\ 
        - \gamma \frac{\alpha + \noverg \rho_g}{\beta + \noverg \gamma}
    \end{pmatrix}
    &
\J_{gg}(\gamma) ={}&
    \begin{pmatrix}
        \trigamma(\alpha + \noverg \rho_g) & 
        -\frac{\gamma}{\beta + \noverg \gamma} \\
        -\frac{\gamma}{\beta + \noverg \gamma} &
        \gamma^2 \frac{\alpha + \noverg \rho_g}{\left( \beta + \noverg \gamma \right)^2}\\ 
    \end{pmatrix},
\end{align*}
where $\digamma$ and $\trigamma$ are the digamma and trigamma functions, respectively,
and $\J_{gh}(\gamma) ={} 0$ for $g \ne h$ since $\lambda_g$ are a posteriori independent
given $\gamma$.

Let us see how $\M_{gh}$ and $\L_{gh}$ behave in this case.  
As argued in \exref{re_gamma}, $\expect{\postgf}{\gamma} \rightarrow \gamma_0$
for some $\gamma_0$ determined by the sequence of $\rho_g$ and $\v_g$, as long
as the sequence is sufficiently regular.  Furthermore, by
\assuref{postg_conc}, we can evaluate
\begin{align*}
\mubar_{g\,(1)}(\gamma_0) ={}&
    \begin{pmatrix}
        \gamma_0^{-1} - \frac{\noverg}{\beta + \noverg \gamma_0} \\ 
        - \frac{\alpha + \noverg \rho_g}{\beta + \noverg \gamma_0}
        + \noverg \gamma_0 
            \frac{\alpha + \noverg \rho_g}{\left(\beta + \noverg \gamma_0\right)^2}
    \end{pmatrix} \\
\J_{gg\,(1)}(\gamma_0) ={}&
\begin{pmatrix}
    0 & 
    \frac{\noverg \gamma_0}{\left( \beta + \noverg \gamma\right)^2} 
    -\frac{1}{\beta + \noverg \gamma_0} 
    \\
    \frac{\noverg \gamma_0}{\left( \beta + \noverg \gamma\right)^2} 
    -\frac{1}{\beta + \noverg \gamma_0} 
    &
    2 \gamma_0 \frac{\alpha + \noverg \rho_g}{\left( \beta + \noverg \gamma_0 \right)^2} -
    2 \noverg \gamma_0^2 \frac{\alpha + \noverg \rho_g}{\left( \beta + \noverg \gamma_0 \right)^3}
        \\ 
\end{pmatrix}.
\end{align*}
From these expressions we can see that the assumptions in
\assuref{high_dim_resid_assu} are reasonable.
For example,
\begin{itemize}
\item Irrespective of $\rho_g$ and $\v_g$, there will typically be $\ydim^2 = 4$
    nonzero entries in each $\M_{gh}$, and so $\sum_{g=1}^G \sum_{h=1}^G \norm{
    \S_g^{1/2} \M_{gh} \S_h^{1/2}}_F^2$ has $4 G^2$ nonzero
    entries, as assumed in \assuref{high_dim_resid_assu} \itemref{m_matrix_frob}.
\item Similarly, $\J_{gh} = 0$ for $g \ne h$, and $\J_{gg}$ has $3$ nonzero
    entries so $\sum_{g=1}^G \sum_{h=1}^G \norm{\S_g^{1/2} \L_{gh}
    \S_h^{1/2}}_F^2$ has $3 G$ nonzero entries, which matches the rate assumed
    in \assuref{high_dim_resid_assu} \itemref{l_matrix_frob}.
\item Given these, whether \assuref{high_dim_resid_assu} is in fact satisfied as
    $G \rightarrow \infty$ depends on the tail behavior of the sequence of
    $\rho_g$ and $\v_g$.  For example, it is easy to see that, if $\rho_g$ and
    $\v_g$ are bounded, then \assuref{high_dim_resid_assu} is satisfied.
\end{itemize}
%

%%%%%%%%%%%%%%%%%%%%%%%%%%%%%%%%%%%%%%%%%%%%%%%%%%%
%%%%%%%%%%%%%%%%%%%%%%%%%%%%%%%%%%%%%%%%%%%%%%%%%%%

% \subsection{What is random?  Marginal and conditional sampling}
% \seclabel{what_is_random}
% \input{what_is_random}

% \section{Selected examples}

% \subsection{An example of a degenerate joint MAP approximation}
% \applabel{singular_example}
% \input{experiments_singular.tex}

% \subsection{Simple examples}\applabel{simple_examples}
% \input{simple_examples/examples.tex}

\ifbool{jrssb}{
    \end{appendices}
}

\end{document}